\def\doi{9(3:3)2013}
\newcommand{\mylabel}[1]{\label{#1}}
\newcommand{\NI}	{\mathbb N\cup\{\infty\}}
\newcommand{\osharp}	{{\omega\sharp}}
\newcommand{\val}	{\mathit{value}}
\newcommand{\unf}	{\text{unfold}}
\newcommand{\computation}	{\text{computation}}
\newcommand{\ignore}[1] {}
\newcommand{\ign}[1]{}
\newcommand{\set}[1]	{\left\{{#1}\right\}}
\newcommand{\nats}	{\mathbb{N}}
\newcommand{\alphabet}		{\mathbb{A}}
\newcommand{\alphabetB}		{\mathbb{B}}
\newcommand{\monoid}		{\mathbf{M}}
\newcommand{\sem}[1]	{[\![#1]\!]}
\newcommand{\semigroup}		{\mathbf{S}}
\newcommand{\gL}	{\mathcal{L}}
\newcommand{\gJ}	{\mathcal{J}}
\newcommand{\gR}	{\mathcal{R}}
\newcommand{\gH}	{\mathcal{H}}
\newtheorem{theorem}{Theorem}[section]
\newtheorem{definition}[theorem]{Definition}
\newtheorem{lemma}[theorem]{Lemma}
\newtheorem{proposition}{Proposition}
\newtheorem{corollary}[theorem]{Corollary}
\newtheorem{myfact}{Fact}[theorem] 
\newtheorem{remark}[myfact]{Remark}
\newtheorem{example}[myfact]{Example}
\newcommand{\ideal}		{{\downarrow}}
\newcommand{\coideal}		{{\uparrow}}
\newcommand{\FV}	{\mathrm{FV}}
\newcommand{\valuation}	{\mathrm{v}}
\newcommand{\structure}	{\mathcal{S}}
\newcommand{\intro}[1]	{\emph{#1}}
\begin{document}


\title[Regular cost functions, Part I]{Regular cost functions, Part I:\\logic and algebra over words}
\author[T.~Colcombet]{Thomas Colcombet}	
\address{Universit\'e Sorbonne Paris Cit\'e, {\sc Cnrs}, {\sc Liafa}}
\email{thomas.colcombet@liafa.univ-paris-diderot.fr}  
\thanks{Supported
	by the {\sc Anr} project {\sc Jade}: `Jeux et Automates, D\'ecidabilit\'e et Extensions'.
	The research leading to these results has received funding from the European Union's Seventh
	Framework Programme (FP7/2007-2013) under grant agreement n° 259454.
}

\keywords{Monadic second-order logic, Regular languages,
  Recognizability, Monoids, Quantitative automata, Boundedness}
\subjclass{F.1.1, F.4.3}
\ACMCCS{[{\bf Theory of computation}]:  Formal languages and automata theory---Automata extensions---Quantitative automata; Formal languages and automata theory---Regular languages; Formal languages and automata theory---Automata over infinite objects}

\begin{abstract}
The theory of regular cost functions is a quantitative
extension to the classical notion of regularity.
A cost function associates to each input a non-negative integer value (or infinity),
as opposed to languages which only associate to each input
the two values ``inside'' and ``outside''.
This theory is a continuation of the works on distance automata
and similar models. These models of automata 
have been successfully used for  solving the star-height problem,
the finite power property, the finite
substitution problem, the relative inclusion star-height problem and
the boundedness problem for monadic second-order logic over words.
Our notion of regularity can be -- as in the classical theory
of regular languages --
equivalently defined in terms of automata, expressions, algebraic recognisability,
and by a variant of the monadic second-order logic.
These equivalences are strict extensions of the corresponding
classical results.

The present paper introduces the cost monadic logic, the quantitative extension
to the notion of monadic second-order logic we use, and show that some~problems
of existence of bounds are decidable for this logic.
This is achieved by introducing the corresponding algebraic formalism:
stabilisation monoids.
\end{abstract}

\maketitle

\section{Introduction}

This paper introduces and studies a quantitative extension to the standard theory
of regular languages of words. It is the only quantitative extension (in which
quantitative means that the function described can take infinitely many values)
known to the author  in which the milestone equivalence for regular languages:
\begin{center}
accepted by automata = recognisable by monoids\\ = definable in monadic second-order logic =
definable by regular expressions
\end{center}
can be faithfully extended.

This theory is developed in several papers. The objective of the present one is
the introduction of the logical formalism, and its resolution using algebraic tools.
However, in this introduction, we try to give a broader panorama.

\subsection{Related works.}	
The theory of regular cost functions involves the use of automata (called $\mathtt B$- and $\mathtt S$-automata),
algebraic structures (called stabilisation monoids), a logic (called cost monadic logic),
and suitable regular expressions
(called $\mathtt B$- and $\mathtt S$-regular expressions). 
All these models happen to be of same expressiveness.
Though most of these concepts are new, some are very close to objects
known from the literature. As such, the present work is the continuation of several branches of research.

The general idea behind these works is that we want to represent functions, {\it i.e.}, quantitative
variants of languages, and that, ideally we want to keep strong decision results.
Works related to cost functions go in this direction,
where the quantitative notion is the ability to count, and the decidability results are concerned
with the existence/non-existence of bounds.

A prominent question in this theory is the star-height problem.
This story begins in 1963 when Eggan formulates the star-height decision problem \cite{eggan63}:
\begin{description}
\item[Input] A regular language of words~$L$ and a non-negative integer~$k$.
\item[Output] Yes, if there exists a regular expression\footnote{Regular
		expressions are built on top of letters using the language operation of
		concatenation, union, and Kleene star. This problem is sometime referred
		to as the restricted star-height problem, while the version also allowing
		complement is the generalised star-height problem, and has a very 
		different status.}
	using at most $k$ nesting of Kleene stars which defines $L$. No, otherwise.
\end{description}
Eggan proved that the hierarchy induced by~$k$ does not collapse,
but the decision problem itself was quickly considered as central in language theory, and as
the most difficult problem in the area.

Though some partial results were obtained by McNaughton, Dejean
and Sch\"utzenberger \cite{McNaughton67,DejeanS66},
it took twenty-five years before Hashiguchi came up
with a proof of decidability spread
over four papers \cite{Hashiguchi82a,Hashiguchi82b,Hashiguchi83,Hashiguchi88}.
This proof is notoriously difficult, and no clean exposition of it has ever
been presented.

Hashiguchi used in his proof the model of \emph{distance automata}.
A distance automaton is a finite state non-deterministic automaton running over words
which can count the number of occurrences of some ``special'' states.
Such an automaton associates to each word a natural number,
which is the least number of occurrences of special states among all the accepting runs
(or nothing if there is no accepting run over this input). The proof of Hashiguchi
relies on a very difficult reduction to the following \emph{limitedness} problem:
\begin{description}
\item[Input] A distance automaton.
\item[Output] Yes, if the automaton is \emph{limited}, {\it i.e.}, if the function it computes is
	bounded over its domain. No, otherwise.
\end{description}
Hashiguchi established the decidability of this problem \cite{Hashiguchi82b}.
The notion of distance automata and its relationship with the
tropical semiring (distance automata can be seen as automata over the tropical semiring,
{\it i.e.}, the semiring $(\NI,\min,+)$) has been the source of many investigations
\cite{Hashiguchi82a,Hashiguchi90,Hashiguchi00,Leung91,LeungP04,Simon78,Simon88,Simon94,Weber93,Weber94}.

Despite this research, the star-height problem itself remained not so well understood
for seventeen more years.
In 2005, Kirsten gave a much simpler and self-contained proof \cite{Kirsten05}.
The principle is to use a reduction to the limitedness problem for a
form of automata more general than distance automata,
called \emph{nested distance desert automata}.
To understand this extension, let us first look again at distance automata:
we can see a distance automaton as an automaton that has a counter
which is incremented each time a ``special'' state is encountered. The value attached
to a word by such an automaton is the minimum over all accepting runs
of the maximal value assumed by the counter.
Presented like this, a nested distance desert automaton is nothing but a distance automaton
in which multiple counters and reset of the counters are allowed
(with a certain constraint of nesting of counters).
Kirsten performed a reduction of the star-height problem to the limitedness of
nested distance desert automata which is much easier than the reduction of Hashiguchi.
He also proves that the limitedness problem of nested distance desert automata
is decidable. For this, he generalises the proof methods developed previously
by Hashiguchi, Simon and Leung for distance automata.
This work closes the story of the star-height problem itself.

The star-height problem is the king among the problems solved using this method.
But there are many other (difficult) questions that can be reduced to the limitedness
of distance automata and variants. Some of the solutions to these problems paved the
way to the solution of the star-height problem.

The \emph{finite power property} takes as input a regular language~$L$ and asks whether
there exists some positive integer~$n$ such that~$(L+\varepsilon)^n=L^*$.
It was raised by Brzozowski in 1966, and it took twelve years before being 
independently solved by Simon and Hashiguchi \cite{Simon78,Hashiguchi79}.
This problem is easily reduced to the limitedness problem for distance automata. 

The \emph{finite substitution problem} takes as input two regular languages~$L,K$, and
asks whether it is possible to find a finite substitution~$\sigma$ ({\it i.e.}, a morphism mapping each letter
of the alphabet of~$L$ to a finite language over the alphabet of~$K$) such that~$\sigma(L)=K$.
This problem was shown decidable independently by Bala and Kirsten by a reduction
to the limitedness of desert automata (a form of automata weaker than nested distance desert automata, but incomparable
to distance automata), and a proof of decidability of this latter problem \cite{Bala04,Kirsten04a}.

The \emph{relative inclusion star-height problem} is an extension
of the star height problem introduced and shown decidable
by Hashiguchi using his techniques \cite{Hashiguchi91}.
Still using nested distance desert automata, Kirsten gave
another, more elegant proof of this result \cite{Kirsten09}.

The \emph{boundedness problem} is a problem of model theory.
It consists of deciding if there exists a bound on the number of iterations
that are necessary for the fixpoint of a logical formula to be reached.
The existence of a bound means that the fixpoint can be eliminated by
unfolding its definition sufficiently many times. 
The boundedness problem is usually parameterised by the logic chosen
and by the class of models over which the formula is studied.
The boundedness problem for monadic second-order formulae over the class of finite words
was solved by a reduction to the limitedness problem of distance automata by Blumensath,
Otto and Weyer~\cite{BlumensathOttoWeyer09}.

One can also cite applications of distance automata in speech recognition~\cite{Mohri97,MohriPR02}, databases
\cite{GrahneThomo01}, and image compression \cite{CulikK93}. In the context of verification,
Abdulla, Krc\`al and Yi have introduced $\mathtt R$-automata, which correspond to nested distance desert
automata in which the nesting of counters is not required anymore \cite{AbdullaKY08}. They prove the
decidability of the limitedness problem for this model of automata.

Finally, L\"oding and the author have also pursued this branch of researches
in the direction of extended models.
In \cite{CSL08:colcombet-loeding-depth-mu-calculus}, the \emph{star-height problem over trees} has been solved,
by a reduction to the limitedness problem of nested distance desert automata over trees.
The latter problem was shown decidable in the more general case of alternating automata.
In \cite{ICALP08:colcombet-loeding-mostowski} a similar attempt has been tried for deciding the
Mostowski hierarchy of non-deterministic automata over infinite trees (the hierarchy induced by the
alternation of fixpoints). The authors show that it is possible to reduce this problem to
the limitedness problem for a form of automata that unifies nested distance desert automata
and parity tree automata. The latter problem is an important open question.

Boja\'nczyk and the author have introduced the notion of~$\mathtt B$-automata in \cite{LICS06:bojanczyk-colcombet},
a model which resembles much (and is prior to) $\mathtt R$-automata. The context was to show the decidability
of some fragments of the logic MSO+$\mathbb U$ over infinite words, in which MSO+$\mathbb U$ is the extension
of the monadic second order logic extended with the quantifier $\mathbb U X.\varphi$ meaning
``for all integers~$n$, there exists a set~$X$ of cardinality at least~$n$
such that~$\varphi$ holds''. 
From the decidability results in this work, it is possible to derive every other limitedness results over finite words.
However, the constructions are complicated and of non-elementary complexity.
Nevertheless, the new notion of $\mathtt S$-automata was introduced, a model dual to~$\mathtt B$-automata.
Recall that the semantics of distance automata
and their variants can be expressed as a minimum over all runs of the maximum of the value taken
by counters. The semantics of $\mathtt S$-automata is dual: it is defined as the maximum
over all runs of the minimum of the value taken by the counters at the moment of their reset.
Unfortunately, it is quite hard to compare in detail this work with all others.
Indeed, since it was oriented toward the study of a logic over infinite words,
the central automata are in fact~$\omega\mathtt B$ and $\omega\mathtt S$-automata: automata accepting languages of infinite
words that have an infinitary accepting condition constraining the asymptotic behaviour of the counters along the run.
This makes these automata very different\footnote{One must be careful: these automata are not related
	to $\mathtt B$ and $\mathtt S$-automata as, say, B\"uchi automata are related to automata over finite words.
	We warn the reader that these models cannot be thought as the extension of cost functions to infinite words.}.
Indeed, the automata in~\cite{LICS06:bojanczyk-colcombet}
accept languages while the automata in study here define functions.
For achieving this, the automata use an extra mechanism 
involving the asymptotic behaviors of counters for deciding whether an infinite
word should be accepted or not.
This extra mechanism has no equivalent in distance automata,
and is in some sense ``orthogonal'' to the machinery involved in cost functions.
For this reason, $\mathtt B$-automata and $\mathtt S$-automata in \cite{LICS06:bojanczyk-colcombet}
are just intermediate objects that do not have all the properties we would like.
In particular $\mathtt B$- and $\mathtt S$-automata in  \cite{LICS06:bojanczyk-colcombet}
are not equivalent. However, the principle of using two dual forms of automata
is an important concept in the theory of regular cost functions.
The study of MSO+$\mathbb U$ has been pursued in several directions.
Indeed, the general problem of the satisfaction of MSO+$\mathbb U$ is a challenging
open problem. One partial result concerns the decision of WMSO+$\mathbb U$
(the weak fragment in which only quantifiers over finite sets are allowed)
which is decidable \cite{Bojanczyk11}. However, the techniques involved in this
work are not directly related to cost functions.

The proof methods for showing the decidability of the limitedness problem
of distance automata and their variants, are also of much interest by themselves.
While the original proof of Hashiguchi is quite complex, a major advance has been achieved by
Leung who introduced the notion of stabilisation \cite{Leung87,Leung88}
	(see also \cite{Simon88} for an early overview).
The principle is to abstract the behaviour of the
distance automaton in a monoid, and further describe the semantics of the counter using an operator of stabilisation, {\it i.e.},
an operator which describes, given an element of the monoid, what would be the effect of iterating it
a ``lot of times''. This key idea was further used and refined by Simon, Leung, Kirsten, Abdulla,
Krc\`al and Yi. This idea was not present in \cite{LICS06:bojanczyk-colcombet},
and this is one explanation for the bad complexity of the constructions.

Another theory related to cost functions is the one developed by Szymon Toru\'nczyk in his
thesis \cite{PhDTorunczyk11}.
The author proposes a notion of recognisable languages of \emph{profinite words} which happen to be equivalent
to cost functions. Indeed, profinite words are infinite sequences of finite words (which are
convergent in a precise topology, the profinite topology). As such, a single profinite word
can be used as a witness that a function is not bounded. Following the principle of this correspondence,
one can see a cost function as a set of profinite words: the profinite words corresponding
to infinite sequences of words over which the function is bounded. 
This correspondence makes Toru\'ncyk's approach equi-expressive
with cost functions over finite words as far as decision questions are concerned.
Seen like this, this approach can be seen as the theory of cost functions
presented in a more abstract setting. Still, some differences have to be underlined.
On one side, the profinite approach, being more abstract, loses some precision.
For instance in the present work, we have a good understanding of the
precision of the constructions: namely each operation can be performed doing an
at most ``polynomial approximation\footnote{The notion of approximation may be misleading: the results are exact,
	but, since we are only interested in boundedness questions, we allow ourselves to perform
	some harmless distortions of the functions. This distortion is measured  by an  approximation parameter
	called the correction function.}''.
On the other side, the presentation in terms
of profinite languages eliminates the corresponding annoying details in the development
of cost functions: namely there is no more need to control the approximation at each step.
Another interesting point is that the profinite presentation points naturally to
extensions, which are orthogonal to cost functions, and are highly related to MSO+$\mathbb U$.
For the moment, the profinite approach has been developed for finite words only.
It is not clear for now how easy this abstract presentation can be used  for treating more complex
models, as it has been done for cost functions, {\it e.g.}, over finite trees \cite{LICS10:colcombet-loeding}.

\subsection{Survey of the theory.}
The theory of regular cost functions gives a unified and general
framework for explaining all objects, results and constructions
presented above (apart from the results in \cite{LICS06:bojanczyk-colcombet}
that are of a slightly different nature). It also allows to derive
new results.

Let us describe the contributions in more details.
\smallskip

\noindent
\emph{Cost functions.}
The standard notion of language is replaced by the new notion of cost function.
For this, we consider mappings from a set~$E$ to~$\NI$ (in practice~$E$
is the set of finite words over some finite alphabet)
and the equivalence relation~$\approx$ defined by~$f\approx g$ if:
\begin{center}
for all $X\subseteq E$, $f$ restricted to~$X$ is bounded iff $g$ restricted to~$X$ is bounded.
\end{center}
Hence two functions are equivalent if
it is not possible to distinguish them using arguments of existence of bounds.
A \intro{cost function} is an equivalence class for~$\approx$.
The notion of cost functions is what we use
as a quantitative extension to languages. Indeed, every language~$L$ can be identified
with (the equivalence class of) the function mapping words in~$L$ to the value~$0$, 
and words outside~$L$ to~$\infty$.
All the theory is presented in terms of cost functions. This means that all
equivalences are considered modulo the relation~$\approx$.
\medskip

\noindent
\emph{Cost automata.} A first way to define regular cost functions is to use cost automata, 
which come in two flavours, $\mathtt B$- and~$\mathtt S$-automata.
The $\mathtt B$-automata correspond in their simple form to~$\mathtt R$-automata \cite{AbdullaKY08} and in their simple
and hierarchical form to nested distance desert automata in~\cite{Kirsten06,Kirsten06habilitation}.
Those are also very close to $\mathtt B$-automata in \cite{LICS06:bojanczyk-colcombet}.
Following the ideas in \cite{LICS06:bojanczyk-colcombet}, we also use the dual variant of~$\mathtt S$-automata.
The two forms of automata, $\mathtt B$-automata and $\mathtt S$-automata, are equi-expressive
in all their variants, an equivalence that we call the \emph{duality theorem}.
Automata are not introduced in this paper.
\medskip

\noindent
\emph{Stabilisation monoids.}
The corresponding algebraic characterisation makes use of the new notion of stabilisation monoids.
A stabilisation monoid is a finite ordered monoid together with a stabilisation operation.
This stabilisation operation expresses what it means
to iterate ``a lot of times'' some element. The operator of stabilisation was introduced by
Leung \cite{Leung87,Leung88} and used also by Simon, Kirsten, Abdulla, Krc\`al and Yi
as a tool for analysing the behaviour of distance automata and their variants.
The novelty here lies in the fact that in our case, stabilisation is now part of the
definition of a stabilisation monoid.
We prove that it is possible to associate unique semantics to all stabilisation monoids.
These semantics are represented by means of computations.
A computation is an object describing how a word consisting of elements
of the stabilisation monoid can be evaluated into a value in the stabilisation
monoid.
This key result shows that the notion of stabilisation monoid has a ``meaning''
independent from the existence of cost automata (in the same way a monoid
can be used for recognising a language, independently
from the fact that it comes from a finite state automaton).
This notion of computations is easier to handle than the notion
of compatible mappings used in the conference version of this work
\cite{ICALP09:colcombet-cost-function}.
\medskip

\noindent
\emph{Recognisable cost functions.}
We use stabilisation monoids for defining the new notion of recognisable cost functions.
We show the closure of recognisable cost functions under min, max, and new operations
called inf-projection and sup-projection (which are counterparts to projection in the theory
of regular languages).
We also prove that the relation $\approx$ (in fact the correspoding preorder~$\preccurlyeq$)
is decidable over recognisable cost functions. This decidability result
subsumes many limitedness results from the literature.
This notion of recognisability for cost functions is equivalent to 
being accepted by the cost automata introduced above.
\medskip

\noindent
\emph{Extension of regular expressions.}
It is possible to define two forms of expressions, $\mathtt B$- and~$\mathtt S$-regular expressions,
and show that these are equivalent to cost automata.
These expressions were  already introduced in \cite{LICS06:bojanczyk-colcombet}
in which a similar result was established.
\medskip

\noindent
\emph{Cost monadic logic.}
The cost monadic (second-order) logic is a quantitative extension to monadic (second-order) logic.
It is for instance possible to define the diameter of a graph in cost monadic logic.
The cost functions over words definable in this logic
coincide with the regular cost functions presented above.
This equivalence is essentially the consequence of the closure properties of regular cost functions
(as in the case of regular languages), and no new ideas are required here.
The interest lies in the logic itself.
Of course, the decision procedure for recognisable cost function entails decidability results for
cost monadic logic.
In this paper, cost monadic logic is the starting point of our presentation,
and our central decidability result is Theorem~\ref{theorem:main} stating
the decidability of this logic.
%
\bigskip

\subsection{Content of this paper.}
This paper does not cover the whole theory of regular cost functions over words.
The line followed in this paper is to start from the logic ``cost monadic logic'',
and to introduce the necessary material for ``solving it over words''. This requires the
complete development of the algebraic formalism.


In Section~\ref{section:logic}, we introduce the new formalism of cost monadic logic,
and show what is required to solve it. In particular, we introduce the notion of
cost function, and advocate that it is useful to consider the logic under this view.
We state there our main decision result, Theorem~\ref{theorem:main}.

In Section~\ref{section:stabilisation-monoid} we present the underlying algebraic structure: stabilisation monoids.
We then introduce computations,
and establish the key results of existence (Theorem~\ref{theorem:exists-computation})
and uniqueness (Theorem~\ref{theorem:unicity-computations}) of the value computed
by computations.

In Section~\ref{section:recognisability}, we use stabilisation monoids for defining recognisable cost functions.
We show various closure results for recognisable cost functions as well as decision procedures.
Those results happen to fulfill the conditions required in Section~\ref{section:logic} for showing
the decidability of cost monadic logic over words.

In Section~\ref{section:conclusion} some arguments are given on the relationship with the models of automata, which
are not described in this document, and on how these different notions interact in the big picture.




\section{Logic}
\label{section:logic}

\subsection{Cost monadic logic}

Let us recall that monadic second-order logic (monadic logic for short)
is the extension of first-order logic with the ability to quantify over
sets ({\it i.e.}, monadic relations). 
Formally monadic formulae use \intro{first-order variables} ($x,y,\dots$),
and \intro{monadic variables} ($X,Y,\dots$), and it is allowed in
such formulae to quantify existentially and universally over both
first-order and monadic variables, to use every boolean connective, 
to use the membership predicate ($x\in X$),
and every predicate of the relational structure.
We expect from the reader basic knowledge concerning monadic logic.
\begin{example}\label{example:MSO}
The monadic formula~$\mathtt{reach}(x,y,X)$ over the signature containing the single binary predicate $\mathbf{edge}$ (signature of a digraph):
\begin{multline*}
\mathtt{reach}(x,y,X)::=\quad x=y\quad\vee\quad\forall Z\\
	\left(x\in Z\wedge\forall z,z'~(z\in Z\wedge z'\in X\wedge \mathbf{edge}(z,z')\rightarrow z'\in Z)\right)
		\quad\rightarrow\quad y\in Z
\end{multline*}
describes the existence of a path in a digraph from vertex~$x$ to vertex~$y$ such that
all edges appearing in the path end in~$X$.
Indeed, it expresses that either the path is empty, or every sets~$Z$ containing~$x$
and closed under taking edges ending in~$X$, also contains~$y$.
\end{example}

In \intro{cost monadic logic},
one uses a single extra variable~$N$ of a new kind, called the
\intro{bound variable}. It ranges over non-negative integers.
\intro {Cost monadic} logic is obtained from monadic logic by allowing the extra predicate $|X|\leq N$ -- in which $X$
is some monadic variable and $N$ the bound variable -- if and only if it appears \emph{positively} in the formula
({\it i.e.}, under the scope of an even number of negations).
The semantic of~$|X|\leq N$ is, as one may expect, to be
satisfied if (the valuation of) $X$ has cardinality at most (the valuation of) $N$. 
Given a formula~$\varphi$, we denote by~$\FV(\varphi)$ its free variables, the bound variable excluded.
A formula that has no free-variables--it may still use the bound variable--is called a \intro{sentence}.

We now have to provide a meaning to the formulae of cost monadic logic.
We assume some familiarity of the reader with logic terminology.
A \intro{signature} consists of a set of symbols~$R,S,\dots$. To each symbol is attached
a non-negative integer called its \intro{arity}.
A (relational) \intro{structure} (over the above signature)
$\structure=\langle U_\structure,R^\structure,\dots,R^\structure\rangle$ consists of a set~$U_\structure$
called the \intro{universe}, and for each symbol~$R$ of arity~$n$ of a
relation~$R^\structure\subseteq U_\structure^n$.
Given a set of variables~$F$, a valuation of~$F$ (over~$\structure$)
is a mapping~$\valuation$ which to each monadic variable~$X\in F$
associates a set~$\valuation(X)\subseteq U_\structure$, and to each
first-order variable~$x\in F$ associates an element~$\valuation(x)\in U_\structure$.
We denote by~$\valuation,X=E$ the valuation~$\valuation$ in which~$X$ is further mapped to~$E$.
Given a cost monadic formula~$\varphi$, a valuation~$\valuation$ of its free variable over a structure~$\structure$
and a non-negative integer~$n$,
we express by $\structure,\valuation,n\models\varphi$ the fact that the formula~$\varphi$
is satisfied over the structure~$\structure$ with valuation~$\valuation$ when the variable~$N$ takes the value~$n$.
Of course, if~$\varphi$ is simply a sentence, we just write~$\structure,n\models\varphi$.
We also omit the parameter~$n$ when~$\varphi$ is a monadic formula.

The positivity assumption required when using the predicate~$|X|\leq N$ has straightforward
consequences.
Namely, for all cost monadic sentences $\varphi$, all relational structures~$\structure$,
and all valuations~$\valuation$,
$\structure,\valuation,n\models\varphi$ implies $\structure,\valuation,m\models\varphi$ for all~$m\geq n$.

Instead of evaluating as true or false as done above,
we see a formula of cost monadic logic~$\varphi$ of free variables~$F$
as associating to each relational structure~$\structure$ and each valuation~$\valuation$ 
of the free variables a value in~$\NI$ defined by:
$$
\sem{\varphi}(\structure,\valuation)=\inf\{n~:~\structure,\valuation,n\models\varphi\}\ .
$$
This value can be either a non-negative integer, or~$\infty$ if no valuation of~$N$
makes the sentence true.
In case of a sentence~$\varphi$, we omit the valuation and simply write~$\sem\varphi(\structure)$.
Let us stress the link with standard monadic logic in the following fact:
\begin{myfact}
For all monadic formula~$\varphi$, and all relational structures~$\structure$,
\begin{align*}
\sem\varphi(\structure)&=\begin{cases}
	0&\text{if}~\structure\models\varphi\\
	\infty&\text{otherwise\ .}
	\end{cases}
\end{align*}
\end{myfact}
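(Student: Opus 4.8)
The plan is to simply unfold the definition of $\sem\varphi$ and to exploit the fact that a \emph{monadic} formula, as opposed to a genuinely cost monadic one, contains neither the bound variable~$N$ nor any occurrence of the predicate $|X|\leq N$. First I would observe that, because $N$ does not occur in~$\varphi$, the value of the parameter~$n$ is irrelevant to the satisfaction relation: for every~$n$ we have $\structure,n\models\varphi$ if and only if $\structure\models\varphi$ in the ordinary sense of monadic logic. This is the only structural input needed, and it is immediate from the syntax of~$\varphi$.

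Next I would evaluate the infimum $\sem\varphi(\structure)=\inf\set{n~:~\structure,n\models\varphi}$ by distinguishing the two possibilities for the truth value of $\structure\models\varphi$. If $\structure\models\varphi$, then by the previous observation $\structure,n\models\varphi$ holds for every $n\in\nats$, so the set $\set{n~:~\structure,n\models\varphi}$ is all of~$\nats$ and its infimum is~$0$. If on the contrary $\structure\not\models\varphi$, then $\structure,n\models\varphi$ fails for every~$n$, the set is empty, and the infimum of the empty set in~$\NI$ is~$\infty$. Combining the two cases yields exactly the claimed case distinction.

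There is no real obstacle here: the statement is a sanity check confirming that cost monadic logic conservatively extends monadic logic under the chosen quantitative semantics, identifying the boolean values ``true'' and ``false'' with the cost values~$0$ and~$\infty$. The only point deserving explicit mention is the convention $\inf\emptyset=\infty$ in~$\NI$, which is precisely what makes the ``otherwise'' branch evaluate to~$\infty$ and which is already built into the codomain.
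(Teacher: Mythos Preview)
Your proposal is correct; the paper states this as a fact without proof, and your argument is precisely the routine unfolding of the definition $\sem\varphi(\structure)=\inf\{n:\structure,n\models\varphi\}$ together with the observation that a purely monadic formula ignores the bound variable~$N$.
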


\begin{example}\label{example:MSOc}
The sentence $\forall X~|X|\leq N$ calculates the
size of a structure. More formally $\sem{\forall X~|X|\leq N}(\structure)$
equals $|U_\structure|$.

A more interesting example makes use of Example~\ref{example:MSO}.
Again over the signature of digraphs, the cost monadic sentence:
$$
\mathtt{diameter}::=\forall x,y~\exists X~|X|\leq N\wedge\mathtt{reach(x,y,X)}.
$$
defines the diameter of the di-graph: indeed, the diameter of a graph is the least~$n$
such that for all pairs of states~$x,y$, there exists a set of size at most~$n$
allowing to reach $y$ from~$x$ (recall that in the definition of~$\mathtt{reach}(x,y,X)$,
$x$ does not necessarily belong to~$X$, hence this is the diameter in the standard sense).
\end{example}

From now on, for avoiding some irrelevant considerations, we will consider the variant of
cost monadic logic in which a) only monadic variables are allowed,
b) the inclusion relation $X\subseteq Y$ is allowed, and c) each relation over elements is
raised to a relation over singleton sets.
Keeping in mind that each element can be identified with the unique singleton set containing
it, it is easy to translate cost monadic logic into this variant.
In this presentation, it is also natural to see the inclusion relation as any other relation.
We will also assume that the negations are pushed to the leaves of formulae as is usual.
Overall a formula can be of one of the following forms:
$$
R(X_1,\dots X_n)\quad|\quad\neg R(X_1,\dots X_n)\quad|\quad|X|\leq N\quad|\quad\varphi\wedge\psi\quad|\quad\varphi\vee\psi\quad|\quad 
	\exists X.\varphi\quad|\quad\forall X.\varphi
$$
in which~$\varphi$ and~$\psi$ are formulas, $R$ is some symbol of arity~$n$ which can possibly be~$\subseteq$
(of arity~$2$), and~$X,X_1,\dots,X_n$ are monadic variables.

So far, we have described the semantic of cost monadic logic from the standard notion of model.
There is another equivalent way to describe the meaning of formulae, by induction
on the structure.
The equations are disclosed in the following fact.
\begin{myfact}\label{fact:induction-logic}
Over a structure~$\structure$ and a valuation~$\valuation$, the following equalities hold:
\begin{align*}
\sem{R(X_1,\dots,X_n)}(\structure,\valuation)		&=\begin{cases}0 & \text{if}~R^\structure(\valuation(X_1),\dots,\valuation(X_n))\\
												\infty&\text{otherwise}\end{cases}\\
\sem{\neg R(X_1,\dots,X_n)}(\structure,\valuation)	&=\begin{cases}\infty & \text{if}~R^\structure(\valuation(X_1),\dots,\valuation(X_n))\\
												0&\text{otherwise}\end{cases}\\
\sem{|X|\leq N}(\structure,\valuation)				&=|\valuation(X)|\\
\sem{\varphi\vee\psi}(\structure,\valuation)			&=\min(\sem\varphi(\structure,\valuation),\sem\psi(\structure,\valuation))\\
\sem{\varphi\wedge\psi}(\structure,\valuation)		&=\max(\sem\varphi(\structure,\valuation),\sem\psi(\structure,\valuation))\\
\sem{\exists X~\varphi}(\structure,\valuation)			&=\inf\{\sem\varphi(\structure,\valuation,X=E)~:~E\subseteq U_\structure\}\\
\sem{\forall X~\varphi}(\structure,\valuation)			&=\sup\{\sem\varphi(\structure,\valuation,X=E)~:~E\subseteq U_\structure\}
\end{align*}
\end{myfact}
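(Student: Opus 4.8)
The plan is to prove all seven equalities simultaneously by structural induction on~$\varphi$, in each case comparing the inductive right-hand side against the original definition $\sem\varphi(\structure,\valuation)=\inf\{n:\structure,\valuation,n\models\varphi\}$. Writing $S_\varphi$ for the satisfaction set $\{n:\structure,\valuation,n\models\varphi\}$, so that $\sem\varphi(\structure,\valuation)=\inf S_\varphi$, the whole argument hinges on one structural observation already recorded just above the statement: because the predicate $|X|\leq N$ may occur only positively, each $S_\varphi$ is \emph{upward closed} in~$\mathbb N$, that is, either empty or of the form $\{m:m\geq\sem\varphi(\structure,\valuation)\}$. I would isolate this as a preliminary remark, since it is exactly what turns the set-theoretic manipulations below into the claimed $\min$/$\max$/$\inf$/$\sup$ formulas.

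For the base cases the satisfaction set is read off directly. For $R(X_1,\dots,X_n)$ and for its negation, membership of~$n$ in~$S_\varphi$ does not depend on~$n$, so $S_\varphi$ is either all of~$\mathbb N$ (infimum~$0$) or empty (infimum~$\infty$), matching the two displayed alternatives. For $|X|\leq N$ one has $n\in S_\varphi$ iff $|\valuation(X)|\leq n$, hence $S_\varphi=\{n:n\geq|\valuation(X)|\}$, whose infimum is~$|\valuation(X)|$.

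For the inductive cases I would translate the semantics of each connective into a Boolean operation on satisfaction sets. Disjunction gives $S_{\varphi\vee\psi}=S_\varphi\cup S_\psi$, and since $\inf(A\cup B)=\min(\inf A,\inf B)$ holds for arbitrary subsets of~$\NI$, the $\min$ formula follows; likewise $\exists X~\varphi$ yields a union $\bigcup_{E\subseteq U_\structure}\{n:\structure,\valuation,X=E,n\models\varphi\}$, whose infimum is the stated $\inf$ over~$E$. Conjunction gives the intersection $S_{\varphi\wedge\psi}=S_\varphi\cap S_\psi$, and universal quantification the intersection $\bigcap_{E\subseteq U_\structure}\{n:\structure,\valuation,X=E,n\models\varphi\}$. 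Here the naive identity $\inf(A\cap B)=\max(\inf A,\inf B)$ is \emph{false} in general, and this is the one delicate point: it becomes correct precisely because every satisfaction set is upward closed, so that a (possibly infinite) intersection of upward-closed sets is again upward closed with left endpoint the supremum of the individual endpoints. Invoking the preliminary remark thus delivers $\max$ for~$\wedge$ and $\sup$ for~$\forall$, and the induction hypothesis rewrites each $\inf\{n:\structure,\valuation,X=E,n\models\varphi\}$ as $\sem\varphi(\structure,\valuation,X=E)$, closing these two cases.

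The main obstacle, and really the only non-routine step, is this intersection analysis for $\wedge$ and $\forall$; everything else is a direct unfolding of definitions. I expect to spend the care on verifying the upward closure of~$S_\varphi$ and on the edge cases where some set $\{n:\structure,\valuation,X=E,n\models\varphi\}$ is empty (infimum~$\infty$), checking that such cases are correctly absorbed by $\max$ and by $\sup$ taken in~$\NI$, including the situation where the supremum over an infinite family of values is~$\infty$.
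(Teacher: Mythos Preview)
Your argument is correct. The paper states this as a ``fact'' without proof, so there is no paper proof to compare against; your write-up is simply more thorough than the paper's own treatment.

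One small framing remark: you present the argument as a structural induction on~$\varphi$, but the induction hypothesis is never actually invoked. Each equality is established directly from the definition $\sem\varphi(\structure,\valuation)=\inf S_\varphi$ together with the upward closure of the sets~$S_\varphi$; when at the end you ``use the induction hypothesis'' to rewrite $\inf\{n:\structure,\valuation,X=E,n\models\varphi\}$ as $\sem\varphi(\structure,\valuation,X=E)$, that is just the definition again. The genuine induction sits inside the upward-closure property itself (which the paper records separately, just before the fact). This does not affect correctness, but you may want to drop the induction scaffolding and present each clause as a direct computation.
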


As it is the case for monadic logic, no property (if not trivial) is decidable for monadic logic in general.
Since cost monadic logic is an extension of monadic logic, one cannot expect anything to be
better in this framework.
However we are interested, as in the standard setting, to decide properties over a restricted class
$\mathcal{C}$ of structures. The class~$\mathcal C$ can typically be the class of
finite words, of finite trees, of infinite words (of length~$\omega$, or beyond) or of infinite trees.
The subject of this paper is to consider the case of finite words over a fixed finite alphabet.

We are interested in deciding properties concerning the function described by cost monadic formulae
over~$\mathcal C$. But what kind of properties?
It is quite easy to see that, given a cost monadic sentence~$\varphi$ and~$n\in\nats$,
one can effectively produce a monadic formula~$\varphi^n$ such that for all
structures~$\structure$,
$\structure\models\varphi^n$ iff $\sem\varphi(\structure)=n$ (such a translation
would be possible even without assuming the positivity requirement in the use of the
predicates $|X|\leq N$).
Hence, deciding questions of the form~``$\sem\varphi=n$'' can be reduced to the
standard theory.

Properties that cannot be reduced to the standard theory, and that we are
interested in, involve the existence of bounds. One says below that a function
$f$ is \intro{bounded over} some set~$X$ if there is some 
integer~$n$ such that~$f(x)\leq n$
for all~$x\in X$. We are interested in the following generic problems:
\begin{description}	
\item[Boundedness] Is the function~$\sem\varphi$ bounded over~$\mathcal{C}$?\\
	Or (variant), is~$\sem\varphi$ bounded over a regular subset of~$\mathcal{C}$?\\
	Or (limitedness),  is~$\sem\varphi$ bounded over~$\{\structure~:~\sem{\varphi}(\structure)\neq\infty\}$?
\item[Divergence] For all~$n$, do only finitely many~$\structure\in\mathcal{C}$ satisfy~$\sem\varphi(\structure)\leq n$?
	\\Said differently, are all sets over which~$\sem\varphi$ is bounded of finite
	cardinality?
\item[Domination] For all~$E\subseteq\mathcal{C}$, does $\sem\varphi$ bounded over~$E$ imply that $\sem\psi$
	is also bounded over~$E$?
\end{description}
All these questions cannot be reduced (at least simply) to questions in the standard theory.
Furthermore, all these questions become undecidable for very standard reasons as
soon as the requirement of positivity in the use of the new predicate~$|X|\leq N$ is removed.
In this paper, we introduce suitable material for proving their
decidability over the class~$\mathcal C$ of words.

 One easily sees that the domination question is in fact a joint extension of the boundedness question
(if one sets $\varphi$ to be always true, {\it i.e.}, to compute the constant function~$0$),
and the divergence question (if one sets~$\psi$ to be measuring the size of the structure,
{\it i.e.}, $\forall X~|X|\leq N$).
Let us remark finally that if~$\varphi$ is a formula of monadic logic, then the boundedness question
corresponds to deciding if~$\varphi$ is a tautology. If furthermore~$\psi$ is also monadic, then the domination
consists of deciding whether~$\varphi$ implies~$\psi$.

In the following section, we introduce the notion of cost functions, {\it i.e.}, equivalence classes
over functions allowing to omit discrepancies of the function described, while preserving
sufficient information for working with the above questions.

\subsection{Cost functions}
\label{subsection:cost-function}

In this section, we introduce the equivalence relation~$\approx$ over functions,
and the central notion of cost function.

A \intro{correction function} $\alpha$ is a non-decreasing mapping from~$\nats$ to~$\nats$
such that~$\alpha(n)\geq n$ for all~$n$.
From now on, the symbols $\alpha,$ $\alpha'\dots$ implicitly designate correction functions.
Given $x,y$ in~$\NI$, $x\preccurlyeq_\alpha y$
holds if~$x\leq \overline\alpha(y)$ in which~$\overline\alpha$ is the extension of~$\alpha$
with~$\overline\alpha(\infty)=\infty$. For every set~$E$, $\preccurlyeq_\alpha$
is extended to~$(\NI)^E$ in a natural way by~$f \preccurlyeq_\alpha g$
if $f(x)\preccurlyeq_\alpha g(x)$ for all~$x\in E$, or equivalently $f\leq\overline\alpha \circ g$.
Intuitively, $f$ is dominated by~$g$ after it has been ``stretched''
by~$\alpha$.
One also writes $f\approx_\alpha g$ if~$f\preccurlyeq_\alpha g$ and~$g\preccurlyeq_\alpha f$.
Finally, one writes $f \preccurlyeq g$ (resp. $f\approx g$) if~$f\preccurlyeq_\alpha g$
(resp. $f\approx_\alpha g$) for some~$\alpha$.
A \intro{cost function} (over a set~$E$) is an equivalence class of~$\approx$
({\it i.e.}, a set of mappings from~$E$ to~$\NI$).

Some elementary properties of~$\preccurlyeq_\alpha$ are:
\begin{myfact}\label{fact:preccurlyeq-elem}
If~$\alpha\leq\alpha'$ and~$f\preccurlyeq_\alpha g$, then~$f\preccurlyeq_{\alpha'} g$.
If~$f\preccurlyeq_\alpha g\preccurlyeq_{\alpha'} h$, then $f\preccurlyeq_{\alpha\circ\alpha'} h$.
\end{myfact}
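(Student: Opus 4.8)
The plan is to reduce both assertions to the pointwise characterisation already recorded in the text, namely that $f\preccurlyeq_\alpha g$ is equivalent to $f\leq\overline\alpha\circ g$, and then to exploit two elementary properties of the extended maps $\overline\alpha$: that $\overline\alpha$ is non-decreasing on $\NI$, and that composing two such extensions behaves well, i.e.\ $\overline\alpha\circ\overline{\alpha'}=\overline{\alpha\circ\alpha'}$. Since the whole statement is a pair of monotonicity-and-transitivity facts about a preorder defined through these maps, everything should follow by chaining inequalities, with the care concentrated on the value~$\infty$.

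For the first claim, I would first observe that $\alpha\leq\alpha'$ on~$\nats$ propagates to the extensions, so that $\overline\alpha(y)\leq\overline{\alpha'}(y)$ for every $y\in\NI$ (the inequality is the hypothesis for $y\in\nats$, and both sides equal~$\infty$ when $y=\infty$). Then, assuming $f\preccurlyeq_\alpha g$, i.e.\ $f\leq\overline\alpha\circ g$, I would simply compose on the right with $g$ to get
$$
f\ \leq\ \overline\alpha\circ g\ \leq\ \overline{\alpha'}\circ g\ ,
$$
which is exactly $f\preccurlyeq_{\alpha'} g$. No further work is needed here beyond noting that the comparison is taken pointwise over~$E$.

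For the second claim, the key ingredient is the monotonicity of $\overline\alpha$. Starting from $g\preccurlyeq_{\alpha'} h$, that is $g\leq\overline{\alpha'}\circ h$, I would apply the non-decreasing map $\overline\alpha$ to both sides and then prepend the inequality $f\leq\overline\alpha\circ g$ coming from $f\preccurlyeq_\alpha g$, obtaining
$$
f\ \leq\ \overline\alpha\circ g\ \leq\ \overline\alpha\circ\overline{\alpha'}\circ h\ =\ \overline{\alpha\circ\alpha'}\circ h\ ,
$$
where the final equality is the composition identity for the extensions (checked on $\nats$ by definition and on~$\infty$ because both sides return~$\infty$). This yields $f\preccurlyeq_{\alpha\circ\alpha'} h$.

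The arguments are genuinely routine, so the only point deserving attention—the part I would flag as the main obstacle, modest as it is—is the consistent handling of~$\infty$: I must confirm that $\overline\alpha$ is monotone as a map $\NI\to\NI$ and that the composition identity holds at~$\infty$, and, to make the notation $\preccurlyeq_{\alpha\circ\alpha'}$ legitimate, that $\alpha\circ\alpha'$ is again a correction function. The latter is immediate, since $\alpha\circ\alpha'$ is non-decreasing as a composition of non-decreasing maps and satisfies $\alpha(\alpha'(n))\geq\alpha'(n)\geq n$.
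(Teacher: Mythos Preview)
Your proof is correct. The paper does not actually supply a proof of this fact: it is stated as an elementary property of~$\preccurlyeq_\alpha$ and immediately used, with no argument given. Your write-up fills in precisely the routine verifications the paper leaves implicit, including the well-definedness of $\alpha\circ\alpha'$ as a correction function and the behaviour at~$\infty$.
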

The above fact allows to work with a single correction function at a time. 
Indeed, as soon as two correction functions~$\alpha$ and~$\alpha'$ are involved in the same proof,
we can consider the correction function~$\alpha''=\max(\alpha,\alpha')$. By the above fact,
it satisfies that $f\preccurlyeq_\alpha g$ implies~$f\preccurlyeq_{\alpha''} g$,
and~$f\preccurlyeq_{\alpha'} g$ implies~$f\preccurlyeq_{\alpha''} g$.

\begin{example}\label{example:approx}
Over~$\nats\times\nats$, maximum and sum are equivalent for the doubling correction function (for short, $(\max)\approx_{\times2}(+)$). 
Indeed, for all~$x,y\in\omega$,
$$\max(x,y)\leq x+y\leq 2\times\max(x,y)\ .$$

\noindent
Our next examples concern mappings from sequences of words to~$\nats$.
We have 
\begin{align*}
|~|_a&\preccurlyeq |~|\ ,&\text{and}\qquad|~|_a\not\preccurlyeq |~|_b\ ,
\end{align*}
where $a$ and $b$ are distinct letters, $|~|$ is the function mapping each word to its length
and $|~|_a$ the function mapping each word to the number of occurrences of the letter $a$ it contains.
Indeed we have $|~|_a\leq|~|$
but the set of words $a^*$ is a witness that $|~|_a\preccurlyeq_\alpha|~|_b$
cannot hold whatever is $\alpha$.

\noindent
Given words $u_1,\dots,u_k\in\{a,b\}^*$, we have
$$|u_1\dots u_k|_a\approx_\alpha \max(|K|,\max_{i=1\dots k}|u_i|_a)
\qquad\text{where}\quad K=\{i\in\{1,\dots,k\}~:~|u_i|_a\geq 1\}
$$
in which~$\alpha$ is the squaring function.
 Indeed, for one direction we just have to remark:
\begin{align*}
\max(|K|,\max_{i=1\dots k}|u_i|_a)&\leq |u_1\dots u_k|_a,
\end{align*}
and for the other direction we use:
\begin{align*}
|u_1\dots u_k|_a&\leq \sum_{i\in K}|u_i|_a\leq(\max(|K|,\max_{i=1\dots k}|u_i|_a))^2\ .
\end{align*}
\end{example}
The relation~$\preccurlyeq$ has other characterisations:
\begin{proposition}\label{proposition:preccurlyeq-characterisation}
For all $f,g$ from~$E$ to~$\NI$, the following items are equivalent:
\begin{enumerate}[(1)]
\item $f\preccurlyeq g$, \label{item:dominiation-preccurly}
\item $\forall n\in\nats\ \exists m\in\nats\ \forall x\in E~g(x)\leq n\rightarrow f(x)\leq m\ ,$ and;\label{item:domination-quantif}
\item for all $X\subseteq E$, $g|_X$ is bounded implies $f|_X$ is bounded.\label{item:domination-bounded}
\end{enumerate}
\end{proposition}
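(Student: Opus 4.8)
The plan is to prove the three statements equivalent by the cycle $(\ref{item:dominiation-preccurly}) \Rightarrow (\ref{item:domination-quantif}) \Rightarrow (\ref{item:domination-bounded}) \Rightarrow (\ref{item:dominiation-preccurly})$. The first two implications are direct unwindings of the definitions and should be short; the whole content of the proposition sits in the closing implication, which is where I expect the only real work.

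For $(\ref{item:dominiation-preccurly}) \Rightarrow (\ref{item:domination-quantif})$ I would start from a witness $f \preccurlyeq_\alpha g$ and, given $n$, simply take $m = \alpha(n)$: when $g(x) \leq n$ the value $g(x)$ is finite, so $f(x) \leq \overline\alpha(g(x)) = \alpha(g(x)) \leq \alpha(n)$ by monotonicity of $\alpha$. For $(\ref{item:domination-quantif}) \Rightarrow (\ref{item:domination-bounded})$ I would fix $X$ with $g|_X$ bounded by some $n$, obtain the corresponding $m$ from $(\ref{item:domination-quantif})$, and note that then $f(x) \leq m$ for every $x \in X$, so $f|_X$ is bounded.

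The crux is $(\ref{item:domination-bounded}) \Rightarrow (\ref{item:dominiation-preccurly})$, where a correction function must be manufactured from the purely qualitative boundedness hypothesis. The idea is to read $\alpha$ off from the level sets of $g$: for each $n$ put $X_n = \{x \in E : g(x) \leq n\}$ and set $\alpha(n) = \sup\{f(x) : x \in X_n\}$ (with the empty supremum read as $0$). Since $g|_{X_n}$ is bounded by $n$, hypothesis $(\ref{item:domination-bounded})$ forces $f|_{X_n}$ to be bounded, so $\alpha(n)$ is a genuine element of $\nats$ and not $\infty$; this single finiteness step is the only place the assumption is really used. Monotonicity of $\alpha$ is immediate from $X_n \subseteq X_{n+1}$, and to also secure $\alpha(n) \geq n$ I would pass to $n \mapsto \max(n, \alpha(n))$, which remains non-decreasing and finite-valued and only enlarges $\alpha$, so that nothing is lost by the first part of Fact~\ref{fact:preccurlyeq-elem}. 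Finally I would verify $f \preccurlyeq_\alpha g$ pointwise: if $g(x) = \infty$ then $\overline\alpha(g(x)) = \infty \geq f(x)$, and if $g(x)$ is finite then $x \in X_{g(x)}$, whence $f(x) \leq \alpha(g(x)) = \overline\alpha(g(x))$. The main obstacle is thus conceptual rather than computational: recognising that boundedness on every level set of $g$ is exactly what is needed to define a finite-valued, monotone $\alpha$ witnessing $f \preccurlyeq g$.
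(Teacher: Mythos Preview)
Your proposal is correct and follows essentially the same route as the paper's proof: the same cycle of implications, the same level sets $X_n=\{x:g(x)\leq n\}$, and the same correction function $\alpha(n)=\max(n,\sup f(X_n))$ with the same pointwise verification. The only cosmetic difference is that you first define $\alpha$ without the $\max$ and then adjust, whereas the paper builds the $\max$ in from the start.
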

\begin{proof}
\emph{From (\ref{item:dominiation-preccurly}) to (\ref{item:domination-quantif}).}
Let us assume $f\preccurlyeq g$, {\it i.e.}, $f\preccurlyeq_\alpha g$ for some~$\alpha$.
Let~$n$ be some non-negative integer, and~$m=\alpha(n)$.
We have for all~$x$, that~$g(x)\leq n$
implies~$f(x)\leq(\alpha\circ g)(x)\leq \alpha(n)=m$, thus establishing the second statement.

\emph{From (\ref{item:domination-quantif}) to (\ref{item:domination-bounded}).}
Let~$X\subseteq E$ be such that~$g|_X$ is bounded. Let~$n$ be a bound of~$g$ over~$X$.
Item~(\ref{item:domination-quantif}) states the existence of~$m$ such
that~$\forall x\in E\ g(x)\leq n\rightarrow f(x)\leq m$.
In particular, for all~$x\in E$, we have~$g(x)\leq n$ by choice of~$n$, and hence $f(x)\leq m$.
Hence~$f|_X$ is bounded by~$m$.

\emph{From (\ref{item:domination-bounded}) to (\ref{item:dominiation-preccurly}).}
Let~$n\in\nats$, consider the set~$X_n=\{x~:~g(x)\leq n\}$.
The mapping $g$ is bounded over~$X_n$ (by~$n$), and hence by (\ref{item:domination-bounded}), $f$ is also bounded.
We set~$\alpha(n)=\max(n,\sup f(X_n))$. Since $X_n\subseteq X_{n+1}$, the  function~$\alpha$
is non-decreasing. Since furthermore~$\alpha(n)\geq n$, $\alpha$ is a correction function.
Let now~$x\in X$. If~$g(x)<\infty$, we have that~$x\in X_{g(x)}$ by definition of the $X$'s.
Hence~$f(x)\leq\sup f(X_{g(x)})=\alpha(g(x))$. 
Otherwise~$g(x)=\infty$, and we have~$f(x)\leq\overline\alpha(g(x))=\infty$.
Hence~$f\preccurlyeq_\alpha g$.
\end{proof}

The last characterisation shows that the relation~$\approx$
is an equivalence relation that preserves the existence of bounds.
Indeed, all this theory can be seen as a method for
proving the existence/non-existence of bounds.
One can also remark that the questions of boundedness, divergence,
and domination presented in the previous section,
are preserved under replacing the semantic of a formula
by an~$\approx$-equivalent function. Furthermore, the domination question
can be simply reformulated as~$\sem\varphi\succcurlyeq\sem\psi$.

We conclude this section by some remarks on the structure of the $\preccurlyeq$ relation.
Cost functions over some set~$E$  ordered by~$\preccurlyeq$  form a lattice.
Let us show how this lattice refines the lattice of subsets of~$E$ ordered by inclusion.
The following elementary fact shows that we can identify a subset of~$E$
with the cost function of its characteristic function
(given a subset~$X\subseteq E$, one denotes by~\intro{$\chi$}$_X$
its \intro{characteristic mapping} defined by~$\chi_X(x)=0$ if~$x\in X$, and $\infty$ otherwise):
\begin{myfact}
For all $X,Y\subseteq E$, $\chi_X\preccurlyeq\chi_Y$ iff $X\supseteq Y$.
\end{myfact}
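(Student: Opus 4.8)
The plan is to prove both implications directly from the definition of $\preccurlyeq$, exploiting the fact that $\chi_X$ takes only the two values $0$ and $\infty$, so that no correction function can ever ``stretch'' an $\infty$ into a finite value.

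First I would dispatch the easy implication, from $X\supseteq Y$ to $\chi_X\preccurlyeq\chi_Y$. Here I claim the stronger pointwise inequality $\chi_X\leq\chi_Y$, which immediately yields $\chi_X\preccurlyeq_\alpha\chi_Y$ with $\alpha$ the identity (a correction function). Indeed, fix $x\in E$: if $x\in X$ then $\chi_X(x)=0\leq\chi_Y(x)$; if $x\notin X$, then $x\notin Y$ as $Y\subseteq X$, so both sides equal $\infty$. For the converse, I would assume $\chi_X\preccurlyeq\chi_Y$, say $\chi_X\preccurlyeq_\alpha\chi_Y$, and argue by contradiction: if $Y\not\subseteq X$, pick a witness $y\in Y\setminus X$. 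Then $\chi_Y(y)=0$ while $\chi_X(y)=\infty$, so the required inequality $\chi_X(y)\leq\overline\alpha(\chi_Y(y))=\alpha(0)$ would force $\infty\leq\alpha(0)$, which is impossible since $\alpha(0)$ is a non-negative integer. Hence $Y\subseteq X$, that is, $X\supseteq Y$.

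Alternatively, I would derive the statement from Proposition~\ref{proposition:preccurlyeq-characterisation}(\ref{item:domination-bounded}), observing that for a characteristic mapping $\chi_Z$ the restriction $\chi_Z|_{W}$ is bounded precisely when $W\subseteq Z$ (it is bounded exactly when $\chi_Z$ avoids $\infty$ on $W$). Under this reading, $\chi_X\preccurlyeq\chi_Y$ unfolds to the condition that for every $W\subseteq E$, $W\subseteq Y$ implies $W\subseteq X$; specialising to $W=Y$ gives $Y\subseteq X$, and the reverse direction is trivial, so the condition is equivalent to $X\supseteq Y$.

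I do not anticipate any genuine obstacle: the whole content is the interaction between the inclusion order on subsets and the two-valued nature of $\chi$. The only point meriting care is the direction of the inclusion, namely the contravariance $\chi_X\preccurlyeq\chi_Y\iff X\supseteq Y$, which is exactly what makes the characteristic mappings order-embed $(\powerset(E),\supseteq)$ into the lattice of cost functions.
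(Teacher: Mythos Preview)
Your proof is correct; both the direct argument and the alternative via Proposition~\ref{proposition:preccurlyeq-characterisation} are valid. The paper states this as an elementary fact without proof, so there is nothing further to compare.
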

In this respect, the lattice of cost functions is a refinement of the lattice of subsets
of~$E$ equipped with the superset ordering. Let us show that this refinement is strict.
Indeed, there is only one language~$L$ such that~$\chi_L$
does not have~$\infty$ in its range, namely~$L=E$, however, we will
show in Proposition~\ref{proposition:cost-functions-uncountable}
that, as soon as $E$ is infinite, there are uncountably many cost functions
which have this property of not using the value~$\infty$.
\begin{proposition}\label{proposition:cost-functions-uncountable}
If~$E$ is infinite, then there exist at least continuum many different cost functions from~$E$ to~$\nats$.
\end{proposition}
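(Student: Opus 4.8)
The plan is to exhibit an injection from $\powerset(\nats)$ into the set of cost functions from $E$ to $\nats$; since $\powerset(\nats)$ has continuum many elements, producing such an injection immediately yields the lower bound.

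First I would use that $E$ is infinite to fix a countably infinite subset of $E$ and, via a bijection, single out a \emph{grid} $\set{e_{i,j} : i,j\in\nats}$ of distinct elements of $E$. To each $S\subseteq\nats$ I then associate the function $f_S\colon E\to\nats$ defined by $f_S(e_{i,j})=j$ when $i\in S$, and $f_S(x)=0$ for every other $x\in E$ (including the grid points $e_{i,j}$ with $i\notin S$). The point of this encoding is that each index $i\in S$ leaves exactly one unbounded ``row'' $\set{e_{i,j}:j\in\nats}$, while for $i\notin S$ the corresponding row is constantly $0$. Note that $f_S$ never takes the value $\infty$, so it is genuinely a map into $\nats$.

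The heart of the argument is to show that $S\mapsto f_S$ separates distinct subsets up to $\approx$, i.e. that $S\neq S'$ implies $f_S\not\approx f_{S'}$. Here I would invoke the characterisation of $\preccurlyeq$ in terms of bounded sets, namely item~(\ref{item:domination-bounded}) of Proposition~\ref{proposition:preccurlyeq-characterisation}. If $S\neq S'$ then, after possibly exchanging the two sets, there is some $i_0\in S\setminus S'$. Take the single row $X=\set{e_{i_0,j}:j\in\nats}$. Since $i_0\notin S'$, the function $f_{S'}$ is identically $0$ on $X$, hence bounded on $X$; whereas $f_S(e_{i_0,j})=j$ is unbounded on $X$. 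By Proposition~\ref{proposition:preccurlyeq-characterisation} this witnesses $f_S\not\preccurlyeq f_{S'}$, and therefore $f_S\not\approx f_{S'}$. As the two sets play symmetric roles, $S\neq S'$ always forces non-equivalence in at least one direction.

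Consequently the $f_S$ fall into pairwise distinct equivalence classes, giving an injection from $\powerset(\nats)$ into the cost functions from $E$ to $\nats$, and hence at least continuum many of them. I expect the only genuinely delicate point to be this separation step: the relation $\approx$ is coarse, remembering only \emph{which} sets are bounded, so one must build the functions so that the differing information between $S$ and $S'$ survives this quotient. The grid design does exactly that, since a single row on which one function stays bounded and the other does not already certifies non-domination; the remaining claims (that each $f_S$ is a well-defined map into $\nats$, and that $\powerset(\nats)$ has cardinality continuum) are routine.
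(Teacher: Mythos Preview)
Your proof is correct and follows essentially the same approach as the paper's: both construct a family $(f_S)_{S\subseteq\nats}$ of $\nats$-valued functions and distinguish $f_S$ from $f_{S'}$ by exhibiting a single ``row'' on which one is bounded and the other is not, invoking Proposition~\ref{proposition:preccurlyeq-characterisation}. The only cosmetic difference is the encoding of the grid: you fix an explicit bijection with $\nats\times\nats$, whereas the paper identifies $E$ with $\nats\setminus\{0\}$ and uses prime factorisation, setting $f_I(n)=\max\{n_i:i\in I\}$ for $n=\prod_k p_k^{n_k}$ and taking the witness set $X=\{p_i^k:k\in\nats\}$; the underlying idea is identical, and your version is arguably a bit more transparent.
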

\begin{proof}
Without loss of generality, we can assume $E$ countable, and even, up
to bijection, that $E=\nats\setminus\{0\}$.
Let~$p_0,p_1,\dots$ be the sequence of all prime numbers. Every $n\in E$
is decomposed in a unique way as~$p_1^{n_1}p_2^{n_2}\dots$ in which all~$n_i$'s
are null but finitely many (with an obvious meaning of the infinite product).
For all~$I\subseteq\nats$, one defines the function~$f_I$ from~$\nats\setminus\{0\}$ to~$\nats$
for all~$n\in\nats\setminus\{0\}$ by:
\begin{align*}
f_I(n)&=\max\{n_i~:~i\in I,~n=p_1^{n_1}p_2^{n_2}\dots\}\ .
\end{align*}
Consider now two different sets~$I,J\subseteq\nats$.
This means---up to a possible exchange of the roles of~$I$ and~$J$---that there exists~$i\in I\setminus J$.
Consider now the set~$X=\{p_i^k~:~k\in\nats\}$.
Then, by construction,~$f_I(p_i^k)=k$ and hence~$f_I$ is not bounded over~$X$.
However, $f_J(p_i^k)=0$ and hence~$f_J$ is bounded over~$X$.
It follows by Proposition~\ref{proposition:preccurlyeq-characterisation} that~$f_I$ and~$f_J$
are not equivalent for~$\approx$. We can finally conclude that---since there exist continuum many subsets of~$\nats$---
there is at least continuum many cost functions over~$E$ which do not use value $\infty$.
\end{proof}



\subsection{Solving cost monadic logic over words using cost functions}

As usual, we see a word as a structure, the universe of
which is the set of positions in the word (numbered from~$1$),
equipped with the ordering relation~$\leq$,
and with a unary relation for each letter of the alphabet that we interpret as the set of
positions at which the letter occur.
Given a set of monadic variables~$F$, and a valuation~$\valuation$ of~$F$
over a word~$u=a_1\dots a_k\in\alphabet^*$, we denote by~$\langle u,\valuation\rangle$
the word~$c_1\dots c_k$ over the alphabet~$\alphabet_F=\alphabet\times\{0,1\}^F$ such
that for all position~$i=1\dots k$, $c_i=(a_i,\delta_i)$
in which~$\delta_i$ maps~$X\in F$ to~$1$ if~$i\in\valuation(X)$,
and to $0$ otherwise.

It is classical that given a monadic formula~$\varphi$ with free variables~$F$, the language
$$
L_\varphi=\{\langle u,\valuation\rangle~:~u,\valuation\models\varphi\}\subseteq\alphabet_F^*
$$
is regular. The proof is done by induction on the formula.
It amounts to remark that to the constructions of the logic, namely
disjunction, conjunction, negation and existential quantification,
correspond naturally some language theoretic operations, namely union, intersection,
complementation and projection.
The base cases are obtained by remarking that the relations of ordering,
inclusion, and letter, also correspond to regular languages.

We use a similar approach. To each cost monadic formula~$\varphi$ with
free variables~$F$ over the signature of words over~$\alphabet$,
we associate the cost function~$f_\varphi$ over~$\alphabet_F$ defined by
$$
f_\varphi(\langle u,\valuation\rangle)=\sem\varphi(u,\valuation)\ .
$$
We aim at solving cost monadic logic by providing an explicit representation to the
cost functions $f_\varphi$.
For reaching this goal, we need to define a family
of cost functions~$\mathcal F$
that contains suitable constants, has effective closure properties
and decision procedures.

The first assumption we make is the closure under \intro{composition with a morphism}.
{\it I.e.}, let~$f$ be a cost function in~$\mathcal F$ over~$\alphabet^*$
and~$h$ be a morphism from~$\alphabetB^*$ ($\alphabetB$ being another alphabet)
to~$\alphabet^*$, we require~$f\circ h$ to also belong to~$\mathcal F$.
In particular, this operation allows us to change the alphabet, and hence to add
new variables when required. It corresponds to the closure under
inverse morphism for regular languages.

Fact~\ref{fact:induction-logic} gives us a very precise idea of the constants we need.
The constants correspond to the formulae of the form~$R(X_1,\dots,X_n)$ as well as their
negation. As mentioned above, for such a formula~$\varphi$, $L_\varphi$ is regular.
Hence, it is sufficient for us to require that the characteristic
function~$\chi_L$ belongs to~$\mathcal F$ for each regular language~$L$.
The remaining constants correspond to the formula~$|X|\leq N$.
We have that~$f_{|X|\leq N}(\langle u,X=E\rangle)=|E|$.
This corresponds to counting the number of occurrences of letters from~$\alphabet\times\{1\}$
in a word over~$\alphabet\times\{0,1\}$. Up to a change of alphabet (thanks to the closure under composition with a morphism)
it will be sufficient for us that~$\mathcal F$ contains
the function~``$\mathrm{size}$'' which maps each word~$u\in\{a,b\}^*$ to $|u|_a$.

Fact~\ref{fact:induction-logic} also gives us a very precise idea of the closure properties we need.
We need the closure under $\min$ and~$\max$ for disjunctions and conjunctions.
For dealing with existential and universal quantification, we need the new operations
of \intro{$\inf$-projection} and \intro{$\sup$-projection}.
Given a mapping $f$ from~$\alphabet^*$ to~$\NI$ and a mapping~$h$ from~$\alphabet$
to~$\mathbb{B}$ that we extend into a morphism from~$\alphabet^*$ to~$\mathbb{B}^*$
($\mathbb{B}$ being another alphabet)
the \intro{inf-projection} of~$f$ with respect to~$h$ is
the mapping $f_{\inf,h}$ from~$\mathbb{B}^*$ to~$\NI$ defined for all~$v\in\mathbb{B}^*$ by:
\begin{align*}
f_{\inf,h}(v)&=\inf\{f(u)~:~h(u)=v\}\ .
\end{align*}
Similarly, the \intro{sup-projection} of~$f$ with respect to~$h$ is
the mapping $f_{\sup,h}$ from~$\mathbb{B}^*$ to~$\NI$ defined for all~$v\in\mathbb{B}^*$ by:
\begin{align*}
f_{\sup,h}(v)&=\sup\{f(u)~:~h(u)=v\}\ .
\end{align*}

We summarise all the requirements in the following fact.
\begin{myfact}\label{fact:schema-cost-monadic}
Let~$\mathcal F$ be a class of cost functions  over words such that:
\begin{enumerate}[(1)]
\item for all regular languages $L$, $\chi_L$ belongs to~$\mathcal F$, \label{item:F-reg}
\item $\mathcal F$ contains the cost function ``$\mathrm{size}$'',\label{item:F-size}
\item $\mathcal F$ is effectively closed under composition with a morphism, $\min$, $\max$, $\inf$-projection and $\sup$-projection,
	\label{item:F-closure}
\item $\preccurlyeq$ is decidable over~$\mathcal F$,
	\label{item:F-decidability}
\end{enumerate}
then the boundedness, divergence and domination problems are decidable for cost monadic logic over words.
\end{myfact}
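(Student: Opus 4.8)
The plan is to imitate the classical translation of monadic logic into regular languages recalled above, but to transport cost functions in place of languages, and then to read each decision problem directly off the preorder~$\preccurlyeq$.

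First I would prove, by structural induction on the negation-normal cost monadic formula~$\varphi$ with free variables~$F$, that the cost function~$f_\varphi$ over~$\alphabet_F$ belongs effectively to~$\mathcal F$. Fact~\ref{fact:induction-logic} both drives the induction and matches every syntactic construct to a closure operation granted by the hypotheses: an atom~$R(X_1,\dots,X_n)$ or its negation produces~$\chi_{L_\varphi}$ for the regular language~$L_\varphi$, which lies in~$\mathcal F$ by item~\ref{item:F-reg}; the atom~$|X|\leq N$ produces, up to composition with a morphism, the function~$\mathrm{size}$ supplied by item~\ref{item:F-size}; a disjunction gives~$\min$, a conjunction gives~$\max$, an existential quantifier an~$\inf$-projection, and a universal quantifier a~$\sup$-projection, all furnished by item~\ref{item:F-closure}. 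The only real bookkeeping is the alignment of alphabets: when two subformulas have different free-variable sets, or when a variable is quantified away, I would first lift the relevant function to the common alphabet~$\alphabet_F$ by composing with the coordinate-forgetting morphism (again item~\ref{item:F-closure}), after which the semantic identities of Fact~\ref{fact:induction-logic} hold exactly, so the inductively built object equals~$f_\varphi$ on the nose and not merely up to~$\approx$. As each step is effective, this yields an effective representative of~$f_\varphi$ in~$\mathcal F$.

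Next I would reduce each problem to an instance of~$\preccurlyeq$ between two members of~$\mathcal F$, decidable by item~\ref{item:F-decidability}. For boundedness over~$\mathcal C$, observe that~$f_\varphi$ is bounded on its whole domain iff~$f_\varphi\preccurlyeq\chi_{\alphabet^*}$ (the constant~$0$), with~$\chi_{\alphabet^*}\in\mathcal F$ by item~\ref{item:F-reg}; the variant over a regular subset~$K$ becomes~$f_\varphi\preccurlyeq\chi_K$; and for limitedness one first checks that the domain~$D=\{u:\sem\varphi(u)<\infty\}$ is regular -- replacing each atom~$|X|\leq N$ by~$\mathtt{true}$ yields a monadic formula defining~$D$, since over finite words every set satisfies~$|X|\leq N$ once~$N$ is at least the length and~$\sem\varphi$ is monotone in~$N$ -- so that limitedness is~$f_\varphi\preccurlyeq\chi_D$. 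For domination of~$\psi$ by~$\varphi$, Proposition~\ref{proposition:preccurlyeq-characterisation} (the equivalence of its items~(\ref{item:dominiation-preccurly}) and~(\ref{item:domination-bounded})) recasts the quantification over all~$E\subseteq\mathcal C$ exactly as~$f_\psi\preccurlyeq f_\varphi$, after lifting both functions to a common alphabet. Divergence is then the special case~$\psi:=\forall X~|X|\leq N$, whose cost function is the length function -- itself in~$\mathcal F$, by composing~$\mathrm{size}$ with the morphism collapsing the alphabet to a single letter -- because length is bounded over~$E$ precisely when~$E$ is finite, so divergence is the instance~$(\text{length})\preccurlyeq f_\varphi$.

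Essentially all the mathematical weight sits in the four hypotheses, and ultimately in item~\ref{item:F-decidability}, the decidability of~$\preccurlyeq$, which is the deep result established later; consequently no single step here is genuinely difficult. The part that needs the most care, and where I expect the only friction, is the alphabet and free-variable bookkeeping in the induction (ensuring each combining step returns exactly~$f_\varphi$), together with the small lemma that the limitedness domain~$D$ is regular and the observation -- via Proposition~\ref{proposition:preccurlyeq-characterisation} -- that ``boundedness passes from~$E$ to~$E$'' is literally the preorder~$\preccurlyeq$. Once these are in place, the theorem follows by invoking item~\ref{item:F-decidability}.
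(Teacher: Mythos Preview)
Your proposal is correct and follows precisely the approach the paper intends: the fact is presented there as a summary of the preceding discussion, and your structural induction via Fact~\ref{fact:induction-logic} together with the reduction of each decision problem to an instance of~$\preccurlyeq$ is exactly that discussion made explicit. Your treatment of limitedness (showing the domain~$D$ is regular by replacing each~$|X|\leq N$ with~$\mathtt{true}$) is a detail the paper does not spell out, and it is handled correctly.
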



The remainder of the paper is devoted to the introduction of the class of recognisable cost functions,
and showing that this class satisfies all the assumptions of Fact~\ref{fact:schema-cost-monadic}.
In particular, Item~\ref{item:F-reg} is established as Example~\ref{example:language-recognisable}.
Item~\ref{item:F-size} is achieved in Example~\ref{example:recognisable-size}.
Item~\ref{item:F-closure} is the subject of Fact~\ref{fact:recognisable-composition-morphism},
Corollary~\ref{corollary:rec-min-max} and
Theorems~\ref{theorem:inf-projection} and \ref{theorem:sup-projection}.
Finally, Item~\ref{item:F-decidability} is established in Theorem~\ref{theorem:rec-domination}.

Thus we deduce our main result.
\begin{theorem}\label{theorem:main}
The domination relation is decidable for cost-monadic logic over finite words.
\end{theorem}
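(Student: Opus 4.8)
The plan is to apply Fact~\ref{fact:schema-cost-monadic}, instantiating the abstract class~$\mathcal F$ with the concrete class of \emph{recognisable cost functions}, namely those cost functions computed by stabilisation monoids. Once this class is shown to satisfy the four conditions~\ref{item:F-reg}--\ref{item:F-decidability} of that fact, the boundedness, divergence and domination problems are all decidable for cost monadic logic over words, which in particular yields the theorem. Thus the whole argument reduces to (a)~defining recognisability via stabilisation monoids, and (b)~checking the four conditions; there is no further interaction with the logic itself, since Fact~\ref{fact:schema-cost-monadic} has already carried out the translation from formulae to operations on cost functions.

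The first three conditions are the ``qualitative'' analogues of the classical theory and I expect them to follow the familiar pattern. For condition~\ref{item:F-reg}, every regular language~$L$ is recognised by a finite monoid, and equipping that monoid with the trivial stabilisation (which adds no counting power) should recognise~$\chi_L$; I plan to record this as Example~\ref{example:language-recognisable}. For condition~\ref{item:F-size}, I would construct by hand a small stabilisation monoid whose stabilisation operation counts occurrences of the letter~$a$, giving the function~$\mathrm{size}$ as Example~\ref{example:recognisable-size}. For the closure operations in condition~\ref{item:F-closure}, composition with a morphism is a direct relabelling of the input and should be immediate (Fact~\ref{fact:recognisable-composition-morphism}), while~$\min$ and~$\max$ should arise from a product construction on stabilisation monoids, exactly as intersection and union arise from products of recognising monoids (Corollary~\ref{corollary:rec-min-max}).

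The genuinely new work lies in the $\inf$- and $\sup$-projections (the remaining part of condition~\ref{item:F-closure}) and in the decidability of~$\preccurlyeq$ (condition~\ref{item:F-decidability}). Projection is the algebraic counterpart of existential and universal quantification, and, unlike the Boolean case, the two are no longer symmetric: the value of a quantifier is an~$\inf$ or a~$\sup$ over exponentially many valuations, so the naive powerset construction does not preserve the quantitative information. I expect to need the full machinery of computations here, relying on the existence (Theorem~\ref{theorem:exists-computation}) and uniqueness (Theorem~\ref{theorem:unicity-computations}) of the value they compute, in order to show that each projected function is again recognised by a (possibly differently shaped) stabilisation monoid; this is the content I would isolate as Theorems~\ref{theorem:inf-projection} and~\ref{theorem:sup-projection}.

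The main obstacle, and the deepest single result, is the decidability of the domination preorder~$\preccurlyeq$ over recognisable cost functions (Theorem~\ref{theorem:rec-domination}). By Proposition~\ref{proposition:preccurlyeq-characterisation}, deciding~$f\preccurlyeq g$ amounts to deciding whether boundedness of~$g$ on a set forces boundedness of~$f$ on that set, so this is precisely the abstract form of the classical limitedness problem, which it subsumes. I would attack it through the stabilisation-monoid semantics: reduce~$f\preccurlyeq g$ to an analysable property of a suitable product stabilisation monoid, and then decide that property by a finite computation over the monoid, the stabilisation operation supplying exactly the information about ``iterating an element a lot of times'' that is needed to detect unboundedness. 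Establishing that this reduction is both sound and complete with respect to~$\approx$, while keeping the correction functions under control, is where I expect the real difficulty to be concentrated.
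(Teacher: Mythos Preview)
Your proposal is correct and matches the paper's own proof essentially line for line: the paper also derives Theorem~\ref{theorem:main} by instantiating Fact~\ref{fact:schema-cost-monadic} with the class of recognisable cost functions and discharging conditions~\ref{item:F-reg}--\ref{item:F-decidability} via precisely the results you name (Examples~\ref{example:language-recognisable} and~\ref{example:recognisable-size}, Fact~\ref{fact:recognisable-composition-morphism}, Corollary~\ref{corollary:rec-min-max}, Theorems~\ref{theorem:inf-projection}, \ref{theorem:sup-projection} and~\ref{theorem:rec-domination}). Your assessment of where the real work lies---the two projection theorems and the decidability of~$\preccurlyeq$, all resting on Theorems~\ref{theorem:exists-computation} and~\ref{theorem:unicity-computations}---is also exactly how the paper structures the effort.
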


All these results are established in Section~\ref{section:recognisability}.
However, we need first to introduce the notion of stabilisation monoids, as well as some of
its key properties. This is the subject of Section~\ref{section:stabilisation-monoid}.


\section{The algebraic model: stabilisation monoids}
\label{section:stabilisation-monoid}

The purpose of this section is to describe the algebraic model of stabilisation monoids.
This model has, a priori, no relation with the previous
section. However, in Section~\ref{section:recognisability}, in which we define the
notion of a recognisable cost function, we will use this model of describing cost functions.

The key idea---an idea directly inspired from the work of Leung, Simon and Kirsten---is
to develop an algebraic notion (the stabilisation monoid) in which
a special operator (called the stabilisation, $\sharp$) allows to express what happens
when we iterate ``a lot of times'' some element. In particular, it says whether we
should count or not the number of iterations of this element. The terminology
``a lot of times'' is very vague, and for this reason such a formalism cannot describe
precisely functions. However, it is perfectly suitable for describing cost functions.

The remaining part of the section is organised as follows.
We first introduce the notion of stabilisation monoids in Section~\ref{subsection:stabilisation-monoid},
paying a special attention to give it an intuitive meaning.
In Section~\ref{subsection:computations}, we introduce
the key notions of computations, under-computations and over-computations,
as well as the two central results of existence of computations (Theorem~\ref{theorem:exists-computation})
and ``unicity'' of their values (Theorem~\ref{theorem:unicity-computations}).
These notions and results form the main technical core of this work.
Then Section \ref{subsection:proof-exists-computation} is devoted to the proof of Theorem~\ref{theorem:exists-computation},
and Section~\ref{subsection:proof-unicity-computations} to the proof of Theorem~\ref{theorem:unicity-computations}.

\subsection{Stabilisation monoids}
\label{subsection:stabilisation-monoid}

A \intro{semigroup}~$\semigroup=\langle S,\cdot\rangle$  is a set~$S$ equipped
with an associative operation~`$\cdot$'.
A \intro{monoid} is a semigroup such that the
product has a \intro{neutral element~$1$}, {\it i.e.}, such that~$1\cdot x=x\cdot 1=x$
for all~$x\in S$.
Given a semigroup~$\semigroup=\langle S,\cdot\rangle$,
we extend the product to products of arbitrary length by defining~\intro{$\pi$} from~$S^+$
to~$S$ by~$\pi(a)=a$ and~$\pi(ua)=\pi(u)\cdot a$. If the semigroup is a monoid of neutral element~$1$,
we further set~$\pi(\varepsilon)=1$.
All semigroups are monoids, and conversely it is sometimes convenient to transform a semigroup $\semigroup$ into
a monoid $\semigroup^1$ simply by the adjunction of a new neutral element $1$.

An idempotent in~$\semigroup$ is an element~$e\in S$ such that~$e\cdot e=e$.
We denote by $E(\semigroup)$ the set of idempotents in~$\semigroup$.
An \intro{ordered semigroup} $\langle S,\cdot,\leq\rangle$ is a semigroup $\langle S,\cdot\rangle$
together with an order~$\leq$ over~$S$ such that the product~$\cdot$ is \intro{compatible} with~$\leq$;
{\it i.e.}, $a\leq a'$ and~$b\leq b'$ implies ~$a\cdot b\leq a'\cdot b'$.
An \intro{ordered monoid} is an ordered semigroup, the underlying semigroup of which is a monoid.

We are now ready to introduce the new notions of stabilisation semigroups and stabilisation monoids.
\begin{definition}
A  \intro{stabilisation semigroup} $\langle S,\cdot,\leq,\sharp\rangle$
is a \emph{finite} ordered semigroup $\langle S,\cdot,\leq\rangle$
together with an operator~\intro{$\sharp$}$:E(\semigroup)\rightarrow E(\semigroup)$ (called the \intro{stabilisation}) such that:
\begin{iteMize}{$\bullet$}
\item for all~$e\leq f$ in~$E(\semigroup)$, $e^\sharp\leq f^\sharp$;
\item for all~$a,b\in S$ with~$a\cdot b\in E(\semigroup)$ and $b\cdot a\in E(\semigroup)$,
	 $(a\cdot b)^\sharp=a\cdot(b\cdot a)^\sharp\cdot b$;\footnote{This
	equation states that~$\sharp$ is a \intro{consistent mapping} in the sense of \cite{Kirsten05,Kirsten06habilitation}.}
\item for all~$e\in E(\semigroup)$, $e^\sharp\leq e$;
\item for all~$e\in E(\semigroup)$, $(e^\sharp)^\sharp=e^\sharp$.
\end{iteMize}
It is called a \intro{stabilisation monoid} if furthermore $\langle S,\cdot\rangle$ is a monoid and $1^\sharp=1$ 
in which $1$ is the neutral element of the monoid.
\end{definition}
The intuition is that $e^\sharp$ represents what is the value of~$e^n$
when~$n$ becomes ``very large''. Some consequences of the definitions, namely%
	\footnote{Indeed, $e^\sharp = (e\cdot 1)^\sharp = e\cdot (1\cdot e)^\sharp\cdot 1=e\cdot e^\sharp$
	using consistency. In the same way $e^\sharp = e^\sharp\cdot e$.
	Since $\sharp$ maps idempotents to idempotents,
	$e^\sharp=e^\sharp\cdot e^\sharp$ is obvious, and $(e^\sharp)^\sharp$ is also by definition.}
$$
\text{for all}~e\in E(\semigroup),\qquad
e^\sharp=e\cdot e^\sharp=e^\sharp\cdot e=e^\sharp\cdot e^\sharp=(e^\sharp)^\sharp\ ,
$$
make perfect sense in this respect: repeating ``a lot of~$e$'s'' is equivalent to seeing one~$e$ followed
by ``a lot of~$e$'s'', etc\dots This meaning of $e^\sharp$ is in some sense a limit behaviour.
This is an intuitive reason why $\sharp$ is not used for non-idempotent elements. Consider
for instance the element $1$ in $\mathbb Z/2\mathbb  Z$. Then iterating it yields $0$ at even iterations,
and $1$ at odd ones. This alternation prevents to giving a clear meaning to what is the result
of ``iterating a lot of times'' $1$.

However, this view is incompatible with the classical view on monoids, in which by induction,
if~$e\cdot e=e$, then~$e^n=e$ for all~$n\geq 1$.
The idea in stabilisation monoids is that the product is something that
cannot be iterated ``a lot of times''. For this reason, considering that for all~$n\geq 1$, $e^n=e$
is correct for ``small values of $n$'', but becomes ``incorrect'' for ``large values of $n$''.
The value of~$e^n$ is $e$ if~$n$ is ``small'', and it is~$e^\sharp$
if~$n$ is ``big''.
Most of the remainder of the section is devoted to the formalisation of this intuition, via the use of 
the notion of computations.

Even if the material necessary for  working with stabilisation monoids
has not been yet provided, it is already possible to give some examples of stabilisation
monoids that are constructed from an informal idea of their intended meaning.
\begin{example}\label{example:stabilisation-monoid-counta}
In this example we start from an informal idea
of what we would like to compute, and construct a stabilisation monoid from it.
The explanations have to remain informal at this point in the exposition of the theory.
However, this example should illustrate how we can already reason easily at this level
of understanding. 

Imagine you want, among words over~$a$ and~$b$, to separate the ones that
possess ``a lot of occurrences of~$a$'s'' from the ones that have only
``a few occurrences of~$a$'s'', {\it i.e.}, imagine you want to
describe a stabilisation monoid that ``counts'' the number of occurrences of~$a$'s.

For doing this, we should separate three ``kinds'' of words:
\begin{iteMize}{$\bullet$}
\item the kind of words with no occurrence of~$a$; let~$b$ be the corresponding element in the stabilisation monoid (since
the word~$b$ is of this kind),
\item the kind of words with at least one occurrence of~$a$, but only ``a few'' such occurrences;
		let $a$ be the corresponding element in the stabilisation monoid (since the word~$a$ is of this kind),
\item the kind of words with ``a lot of occurrences of~$a$'s''; let~$0$ be the corresponding element in the stabilisation monoid.
\end{iteMize}
The words that we intend to separate---the ones with a lot of~$a$'s---are the ones of kind~$0$.
With these three elements known, let us complete the definition of the stabilisation monoid.

Of course, iterating twice, or ``many times'', words which contain no occurrences of~$a$ yields
words with no occurrences of the letter~$a$. We capture this with the equalities~$b=b\cdot b=b^\sharp$.

Now words that have at least one~$a$, but only a few number of occurrences of~$a$, should not
be affected by appending~$b$ letters to their left or to their right. {\it I.e.}, we set~$a=b\cdot a=a\cdot b$.
Even more, appending a word with ``few $a$'s'' to another word with ``few $a$'s'' does also give
a word with ``few $a$'s''. {\it I.e.}, we set~$a\cdot a=a$.

However, if we iterate ``a lot of times'' a word with at least one occurrence of~$a$, it yields a word
with ``a lot of $a$'s''. Hence we set~$a^\sharp=0$.
We also easily get the equations $b\cdot 0=0\cdot b= a\cdot 0=0\cdot a=0\cdot 0=0^\sharp=0$
by inspecting all situations. We reach the following description of the stabilisation monoid:
\begin{center}
$\begin{array}[b]{l|cccc|c}
\cdot&~b~&~a~&~0~&~~~&~\sharp~\\
\hline
b~	&b&a&0&&b\\
a	&a&a&0&&0\\
0	&0&0&0&&0
\end{array}$
\end{center}
The left part describes the product operation~`$\cdot$', while the rightmost column gives the value of stabilisation~$\sharp$
(in general, this column may be partially defined since stabilisation is defined only for idempotents).

To complete the definition, we need to define the ordering over~$\{b,a,0\}$.
The least we can do is setting~$0\leq a$ and~$x\leq x$ for all~$x\in\{b,a,0\}$.
This is mandatory, since by definition of a stabilisation monoid~$a^\sharp\leq a$,
and we have $a^\sharp=0$. The reader can check that all the properties that we
expect from a stabilisation monoid are now satisfied.

The intuition behind the ordering is that depending on what we mean by ``a lot of $a$'s'', the same word
can be of kind~$a$ or of kind~$0$. For instance the word~$a^{100}$ is of kind~$a$ if we consider that~$100$ 
is ``a few'', while it is of kind~$0$ if we consider that~$100$ is ``a lot''.
For this reason, there is a form of continuum that allows to
go from~$a$ to~$0$. The order~$\leq$ captures this relationship between elements. 

It is sometimes convenient to present a stabilisation monoid by a form of Cayley graph:
\begin{center}
\begin{tikzpicture}[node distance=1.4cm,auto]
  \tikzstyle{every state}=[initial text=,very thick,draw=red!50,minimum size=2mm]
  \node[state] (b)					{$b$};
  \node[state] (a)	[right of=b]		{$a$};
  \node[state] (o)	[right of=a]		{$0$};
  \path[->] (b)	edge [in=60,out=120,loop]	node {$b$} ();
  \path[->] (b)	edge [in=180,out=240,loop,double] ();
  \path[->] (b)	edge			node {$a$} (a);
  \path[->] (b)	edge [bend angle=60,bend right] node {$0$} (o);
  \path[->] (a)	edge [in=60,out=120,loop]	node {$a,b$} ();
  \path[->] (a)	edge [bend right]			node {$0$} (o);
  \path[->] (o)	edge [in=50,out=110,loop]	node {$0,a,b$} ();
  \path[->] (a)	edge [bend left,bend angle=30,double] (o);
  \path[->] (o)	edge [in=-60,out=0,loop,double] ();
\end{tikzpicture}
\end{center}
As in a standard Cayley graph, there is an edge labeled by~$y$ going
from every vertex~$x$ to vertex~$x\cdot y$. Furthermore, there is a double arrow linking every
idempotent~$x$ to its stabilised version~$x^\sharp$.
\end{example}

\begin{example}\label{example:stabilisation-monoid-sega}
Imagine we want to compute the size of the longest sequence of consecutive $a$'s
in words over the alphabet~$\{a,b\}$.
Then we would separate four ``kinds'' of words:
\begin{iteMize}{$\bullet$}
\item the kind consisting only of the empty word; let it be~$1$,
\item the kind of words, containing only occurrences of $a$,
	at least one occurrence of it, but only ``a few'' of them; let the corresponding element be~$a$,
\item the kind of words containing at least one~$b$, but no long sequence of consecutive~$a$'s; let the corresponding element be~$b$,
\item the kind of words that contain a long sequence of consecutive~$a$'s; let the corresponding element be~$0$.
\end{iteMize}
Computing the size of the longest sequence of consecutive $a$'s means
identifying the words containing a ``long'' sequence of this type, {\it i.e.}, it means to separate
words of kind $0$ from words of kind $a$ or $b$.

The table of product and stabilisation is then naturally the following:
\begin{align*}
\begin{array}{c|ccccc|c}
\cdot	&~1~&~a~&~b~&~0~&~~~&~\sharp~\\
\hline
1~	&1&a&b&0&&1\\
a	&a&a&b&0&&0\\
b	&b&b&b&0&&b\\
0	&0&0&0&0&&0
\end{array}
\end{align*}

We complete the definition of this stabilisation monoid by defining the ordering.
For this, we let~$x\leq x$ hold for all~$x\in\{1,a,b,0\}$, and we further set~$0\leq a$ since~$a^\sharp=0$. 
Since $0 = 0\cdot b$, $0\leq a$ and~$a\cdot b=b$, we need also to set~$0\leq b$
for ensuring the compatibility of the product with the order.
Once more the ordering corresponds to the intuition that there exist words that can be of kind~$a$ (e.g., the word $a^{100}$)
or~$b$ (e.g., the word $ba^{100}$), and that have kind~$0$ if we change what we mean by ``a lot of''.
\end{example}

\begin{remark}\label{remark:standard-monoid}
The notion of stabilisation monoids (or stabilisation semigroups)
extends the notion of standard monoids (or semigroups).
Many standard results concerning monoids have natural counterparts
in the world of stabilisation monoids. For making this relationship more precise,
let us describe the canonical  way to translate a monoid into a stabilisation monoid.
Let~$\monoid=\langle M,\cdot\rangle$ be a monoid. The corresponding stabilisation monoid is:
\begin{align*}
\monoid_\sharp=\langle M, \cdot, =, \mathit{id}_{E(M)}\rangle\ .
\end{align*}
In other words, the monoid is extended with a trivial ordering (the equality),
and the stabilisation is simply the identity over idempotents.
The reader can easily check that this object indeed respects the
definition of a stabilisation monoid.

If we refer to the intuition we gave above, we extend the monoid by an identity stabilisation.
This means that for all idempotents, we do not make the distinction between
iterating it ``a few times'' or ``a lot of times''. Said differently, we never have to count the
number of occurrences of the idempotents.  This is consistent with the principle that 
a standard monoid has no counting capabilities.
\end{remark}

\begin{remark}
The order plays an important role, even if it is sometimes hidden.
Let us first remark that given a stabilisation monoid, it may happen that
changing the order yields again another valid stabilisation monoid
(as for ordered monoids). In general, there is a least order such that the
structure is a valid stabilisation monoid. It is the intersection of all the
``valid orders'', and can be computed by a least fix-point. However,
there is no maximal ``valid order'' in general.

More interestingly, there exist structures $\langle M,\cdot,\sharp\rangle$
which have no order, which satisfy the definition of a stabilisation monoid,
excepting for the rules involving the order, and such that it is not possible to
construct an order for making them a valid stabilisation monoid.
An example is the 10 elements structure which would be obtained
for describing the property ``there is an even number of small maximal segments of $a$'s''.
But this is what we want. Indeed, a closer inspection would reveal that this
property does not have the monotonic behaviour that we could use for defining a function.
Consider for instance a word of the form $a^1ba^2ba^3b\dots ba^n$,
and assume a \intro{small maximal segment of consecutive $a$'s} means a segment of length at most $m$,
then, if $m$ takes an even values at most equal to $n$, the word should be considered as in the language,
while if it takes an odd value at most equal to $m$,  the word should be thought outside the language.
Thus, when $m$ ranges in the interval $\{0,\dots,n\}$, the word is alternatively thought as in the language or
outside the language. This is typically a non-monotonic behaviour.
Keeping in mind cost monadic logic from the
previous section, we see that no formula would be able to express such a property.
Requiring an order in the definition of stabilisation monoids 
rules out such situations.
\end{remark}

We have seen through the above examples how easy it is to work
with stabilisation monoids at an informal level. An important part of the rest of the section
is dedicated to providing formal definitions for this informal reasoning.
In the above explanations, we worked with the imprecise terminology
``a few'' and ``a lot of''. Of course, the value (what we referred to as ``the kind'' in the examples)
of a word depends on what is the frontier we fix for separating ``a few'' from ``a lot''.

We continue the description of stabilisation monoids by introducing the key notion
of computations. These objects describe how to evaluate a long ``product'' in
a stabilisation monoid.


\subsection{Computations, under-computations and over-computations}
\mylabel{subsection:computations}

Our goal is now to provide a formal meaning for the notion of
stabilisation semigroups and stabilisation monoids,
allowing to avoid terms such as ``a lot'' or ``a few''.
More precisely, we develop in this section the notion of computations.
A computation is a tree which is used as a witness that a word
evaluates to a given value.

We fix ourselves for the rest of the section a stabilisation semigroup
$\semigroup=\langle S,\cdot,\sharp,\leq\rangle$.
We develop first the notion for semigroups, and then see how to
use it for monoids in Section~\ref{subsection:monoid-computations}
(we will see that the notions are in close correspondence).

Let us consider a word $u\in S^+$ (it is a word over $S$, seen as an alphabet).
Our objective is to define a ``value'' for this word. 
In standard semigroups, the ``value'' of $u$ is simply~$\pi(u)$, the product of
the elements appearing in the word. But, what should the ``value'' be for a stabilisation
semigroup?
All the informal semantics we have seen so far were
based on the distinction between ``a few'' and
``a lot''. This means that the value the word has depends on what is
considered as ``a few'', and what is considered as ``a lot''. This is captured
by the fact that the value is parameterised by a positive integer~$n$
which can be understood as a \intro{threshold} separating what is considered
as ``a few'' from what is considered as ``a lot''. For each choice of~$n$,
the word~$u$ is subject to have a different value in the stabilisation
semigroup.


Let us assume a threshold value~$n$ is fixed. We still lack a general mechanism
for associating to each word~$u$ over~$S$ a value in~$S$.
This is the purpose of \intro{computations}.
Computations are proofs (taking the form of a tree) that a
word should evaluate to a given value.
Indeed, in the case of usual semigroups, the fact that a word~$u$ evaluates
to~$\pi(u)$ can be witnessed by a binary tree, the leaves of which, read
from left to right, yield the word~$u$, and such that each inner node is
labelled by the product of the label of its children. Clearly, the root
of such a tree is labelled by~$\pi(u)$, and the tree can be seen as
a proof of correctness for this value.

The notion of $n$-computation that we define now is a variation around
this principle. For more ease in its use, it comes in three variants:
under-computations, over-computations and computations.
\begin{figure}
\begin{center}
\includegraphics[scale=0.5]{./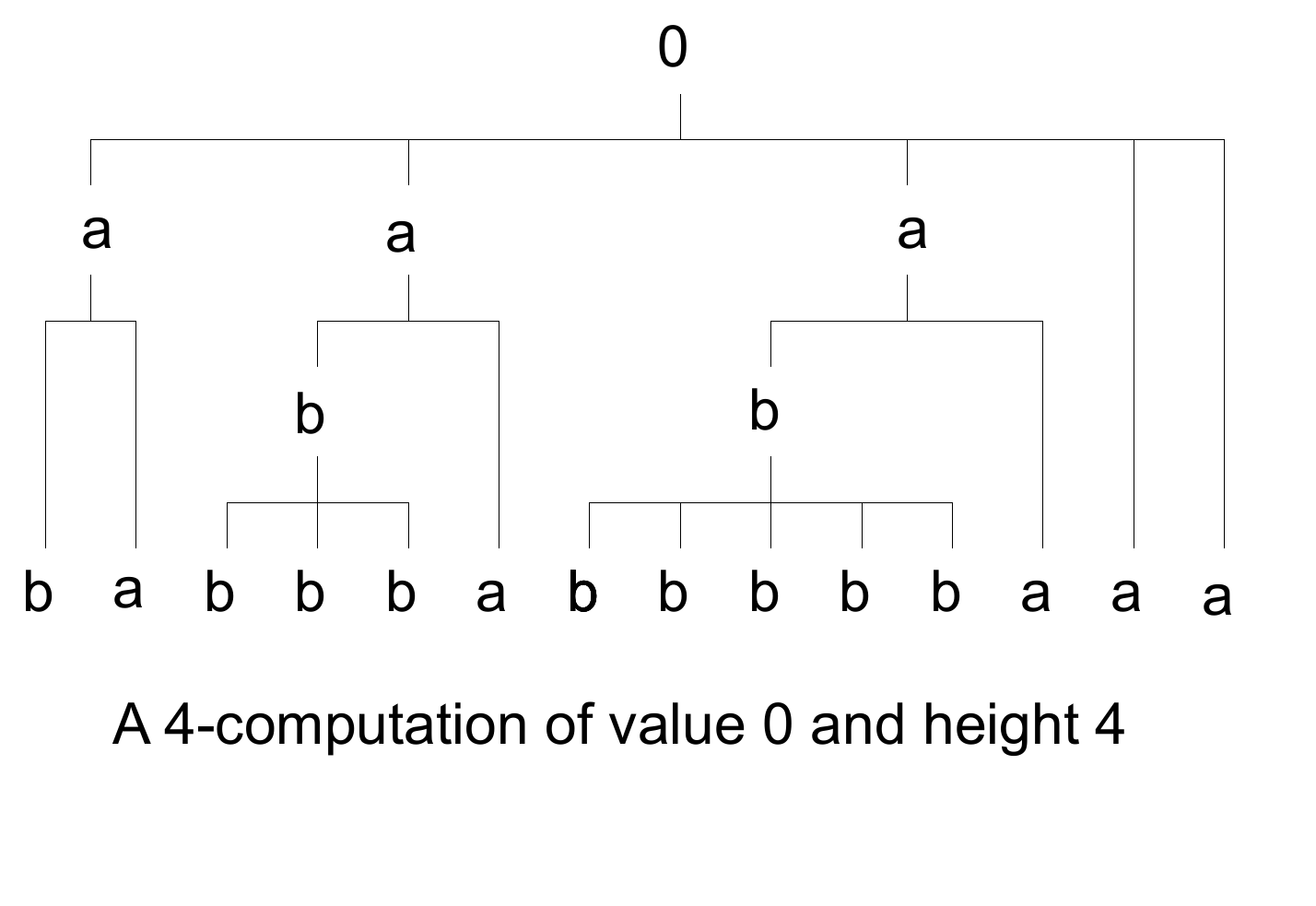}

\includegraphics[scale=0.5]{./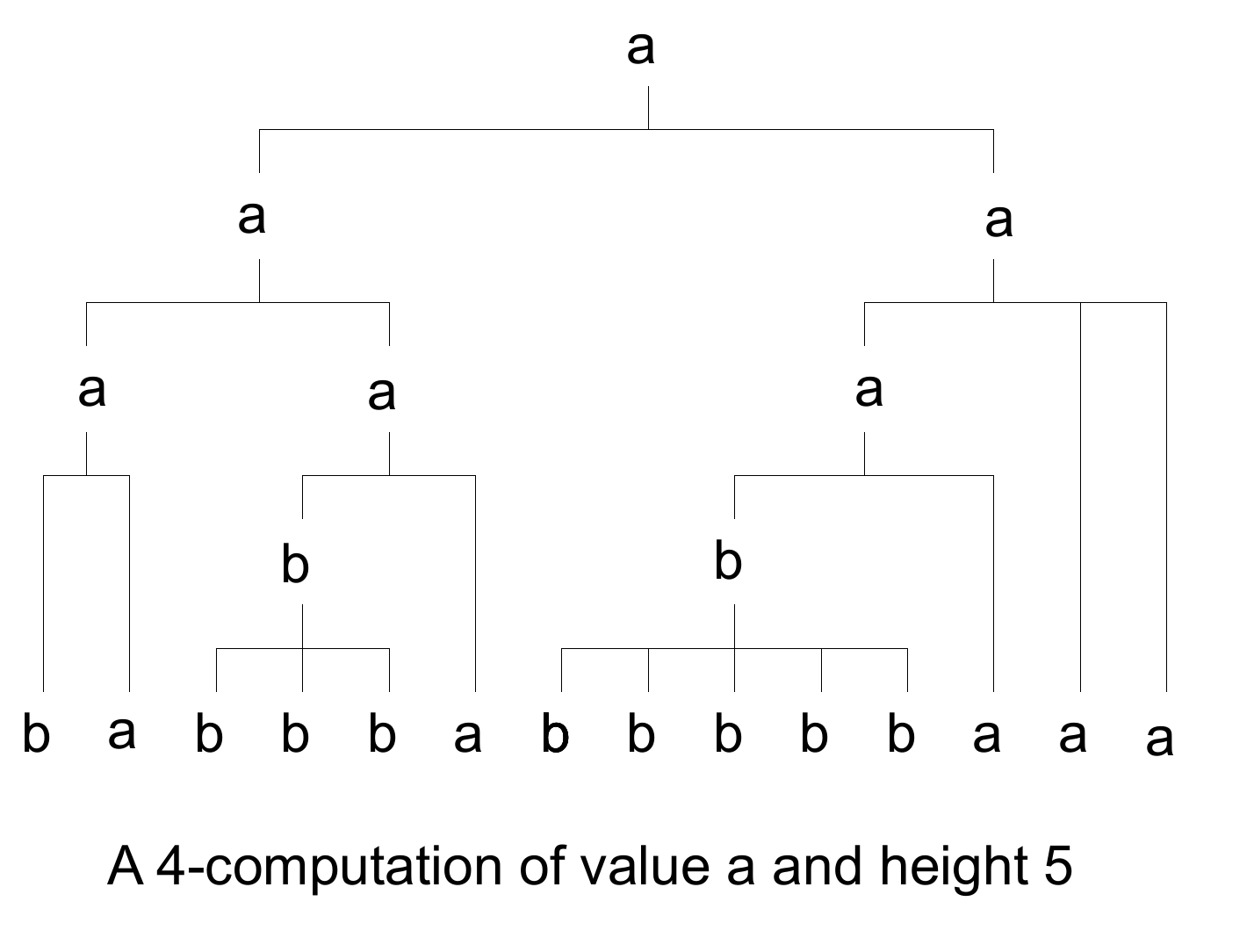}
\end{center}
\caption{Two $4$-computations for the word $\mathit{babbbabbbbbaaa}$ in the
monoid of Example~\ref{example:stabilisation-monoid-counta}.}
\mylabel{fig:calculs}
\end{figure}

\begin{definition} An \intro{$n$-under-computation~$T$ for the word~$u=a_1\dots a_l\in S^+$}
is an ordered unranked tree with $l$ leaves, each node~$x$ of which is labelled by
an element~$v(x)\in S$ called the \intro{value of~$x$}, and such that
for all nodes~$x$ of children $y_1,\dots,y_k$ (read from left to right), one
of the following cases holds:
\begin{description}
\item[Leaf] $k=0$, and $v(x)\leq a_m$ where $x$ is the $m$th leave of $T$ (read from left to right),
\item[Binary node] $k=2$, and~$v(x)\leq v(y_1)\cdot v(y_2)$,
\item[Idempotent node] $2\leq k\leq n$ and~$v(x)\leq e$ where $e=v(y_1)=\dots=v(y_k)\in E(M)$,
\item[Stabilisation node] $k>n$ and $v(x)\leq e^\sharp$ where $e=v(y_1)=\dots=v(y_k)\in E(M)$.
\end{description}
An \intro{$n$-over computation} is obtained by replacing everywhere ``$v(x)\leq$''
	by ``$v(x)\geq$''.
An \intro{$n$-computation} is obtained by replacing everywhere ``$v(x)\leq$'' by ``$v(x)=$'', {\it i.e.},
a $n$-computation is a tree which is at the same time
an $n$-under computation and an $n$-over computation.

The \intro{value} of a [under-/over-]computation is the value of its root.

We also use the following notations for easily denoting constructions of [under/over]-computations.
Given a non-leaf $S$-labelled tree $T$, denote by $T^i$ the subtree of $T$
rooted at the $i^\text{th}$ children of the root.
For $a$ in $S$, we note as $a$ the tree restricted
to a single leaf of value $a$. If furthermore $T_1,\dots,T_k$ are also $S$-labelled trees,
then $a[T_1,\dots,T_k]$ denotes the tree of root labelled $a$, of degree $k$,
and such that $T^i=T_i$ for all $i=1\dots k$.
\end{definition}

It should be immediately clear that these notions have
to be manipulated with care, as shown by the following example.
\begin{example} 
Two examples of computations are given in Figure~\ref{fig:calculs}.
Both correspond to the stabilisation semigroup (in fact monoid)
of Example~\ref{example:stabilisation-monoid-counta},
the aim of which is to count the number of occurrences of the letter~$a$ 
in a word. Both correspond to the evaluation of the same word.
Both correspond to the same threshold value~$n$.
However, these two computations do not have the same value.
We will see below how to compare computations and overcome
this problem.
\end{example}

There is another problem. Indeed, it is straightforward to construct an $n$-computation
for some word, simply by constructing a computation which is a binary tree, and
would use no idempotent nodes nor stabilisation nodes.
However, such a computation would of course not be satisfactory since 
every word $u$ would be evaluated in this way as $\pi(u)$.
We do not want that. This would mean that the quantitative aspect contained in the stabilisation has been lost.
We need to determine what is a relevant computation in order to rule out such computations.

Thus we need to answer the following questions:
\begin{enumerate}[(1)]
\item What are the relevant computations?
\item Can we construct a relevant $n$-computation for all words and all~$n$?
\item How do we relate the different values that $n$-computations may have on
	the same word?
\end{enumerate}

The answer to the first question is that we are only interested in computations
of small height, meaning of height bounded by some function of the semigroup.
With such a restriction, it is not possible to use binary trees as computations.
However, this choice makes the answer to the second question less obvious:
does there always exist a computation?
\begin{theorem}\mylabel{theorem:exists-computation}
For all words~$u\in S^+$ and all non-negative integers~$n$,
there exists an $n$-computation of height at
most\footnote{When measuring the height of a tree, the convention
	is that leaves do not count. With this convention, substituting a tree for a leaf of another tree
	results in a tree of height the sum of the heights of the two trees.} $3|S|$.
\end{theorem}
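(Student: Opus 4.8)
The plan is to prove existence of a low-height $n$-computation by induction, using the structure theory of finite semigroups (Green's relations) to control the height of the tree. The key observation is that the height bound $3|S|$ suggests a three-phase decomposition, and the factor $|S|$ points to an induction on the number of elements the semigroup needs to ``see'' while processing the word.

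\medskip

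First I would set up the induction carefully. The naive approach---split $u$ in half and recurse---gives only logarithmic depth in terms of word length, but that is not a bound in terms of $|S|$ alone, so I must be cleverer. The standard tool for turning ``long product'' into ``bounded-height evaluation'' is to repeatedly factor the word according to which subsemigroup is generated by its factors, so that each descent in the tree strictly decreases some invariant bounded by $|S|$. Concretely, I would consider the set of elements of $S$ that actually occur (or are generated) within the factor being evaluated and argue that a single ``level'' of the computation either reduces the alphabet of occurring elements, or lands inside a single $\mathcal{J}$-class (or regular $\mathcal{D}$-class) where the idempotent structure lets me apply an idempotent or stabilisation node. Since $\mathcal{J}$-classes are linearly ordered by the $\leq_{\mathcal{J}}$ preorder and there are at most $|S|$ of them, and within a fixed $\mathcal{J}$-class one can factor through idempotents using the $\sharp$ operation, the three nested phases (descend to a $\mathcal{J}$-class, use idempotent/stabilisation structure, recurse on strictly smaller data) naturally yield the multiplicative constant $3$ and the linear dependence $|S|$.

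\medskip

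The heart of the construction is the idempotent/stabilisation step. Given a word whose elements all lie in a single regular $\mathcal{J}$-class, I would group maximal blocks of equal idempotent values: a block of length between $2$ and $n$ gets collapsed by an \textbf{idempotent node}, while a block of length exceeding $n$ gets collapsed by a \textbf{stabilisation node}, yielding $e$ or $e^\sharp$ respectively. The consistency axiom $(a\cdot b)^\sharp = a\cdot(b\cdot a)^\sharp\cdot b$ together with $e^\sharp \cdot e = e \cdot e^\sharp = e^\sharp$ is exactly what guarantees these local collapses compose coherently into a genuine $n$-computation (value equalities, not just inequalities). After collapsing such blocks, the resulting sequence of values is strictly simpler---either shorter, or lying in a strictly lower $\mathcal{J}$-class once products are taken---so the recursion makes progress and I can bound the added height by a constant per $\mathcal{J}$-class.

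\medskip

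The main obstacle I anticipate is bookkeeping the height bound to land exactly on $3|S|$ rather than some larger constant multiple, and ensuring the factorisation always produces blocks of \emph{equal} idempotent value (not merely $\mathcal{J}$-equivalent elements) so that the idempotent and stabilisation node rules literally apply. Making a word of arbitrary products collapse into a run of a single idempotent $e$ typically requires first replacing a factor by its product (a binary node) and recognising idempotency, which is where finite-semigroup facts---every sufficiently long product over a fixed $\mathcal{J}$-class hits an idempotent, and the number of distinct $\mathcal{J}$-classes traversed is at most $|S|$---carry the argument. I expect the delicate part is arranging the three phases so their depths add rather than multiply across the $\le |S|$ many $\mathcal{J}$-classes, which is precisely what a careful induction on $|S|$ (or on the $\mathcal{J}$-depth of the elements appearing in $u$) will deliver.
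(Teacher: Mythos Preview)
Your high-level strategy---induct on the $\mathcal J$-class structure, contribute a bounded amount of height per class, and sum to $3|S|$---matches the paper's. But the concrete mechanism you propose for the within-class step is where the gap lies.

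You write that for a word whose letters lie in a single regular $\mathcal J$-class $J$ you would ``group maximal blocks of equal idempotent values'' and collapse each block by an idempotent or stabilisation node according to its length. That does not work: the letters of a $J$-smooth word are arbitrary elements of $J$, neither equal nor idempotent in general, so there are no such blocks to group. Getting from an arbitrary $J$-smooth word to a tree of height $O(|J|)$ whose wide nodes all have children of a \emph{common} idempotent value is exactly the content of Simon's factorisation forest theorem, and it is not something you can recover from the vague observation that ``long products eventually hit an idempotent.'' The paper imports the $J$-class case of that theorem as a black box (a Ramsey factorisation of height at most $3|J|-1$) and then does a small amount of surgery: if $J$ is stable ($e^\sharp=e$ for all idempotents $e\in J$) the Ramsey factorisation is already an $n$-computation; if $J$ is unstable, one locates a node of degree greater than $n$, relabels it from $e$ to $e^\sharp$, and uses that $e^\sharp<_{\mathcal J} J$ to drop strictly below the class. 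This stable/unstable dichotomy, which drives the descent in the induction, is also absent from your sketch.

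The outer induction in the paper is on left-right ideals $Z\subseteq S$ (not directly on $\mathcal J$-depth): one peels off a maximal $\mathcal J$-class $J\subseteq Z$, uses the above to cut the word into pieces whose computed values lie in $Z'=Z\setminus J$ (at cost $3|J|$ in height), applies the inductive hypothesis to the word of those values (cost $3|Z'|$), and substitutes. Since $|J|+|Z'|=|Z|$ the heights add to $3|Z|$, and taking $Z=S$ gives the bound. Your intuition that the contributions add rather than multiply is correct, but it is the factorisation forest theorem that makes the per-class cost $3|J|$ rather than something uncontrolled.
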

This result is an extension of the forest factorisation theorem of Simon~\cite{Simon90}
(which corresponds to the case of a semigroup). Its proof,
which is independent from the  rest of this work, is presented in
Section~\ref{subsection:proof-exists-computation}.

\begin{figure}[ht]
\begin{center}
\null\hspace{-5cm}
\includegraphics[scale=1]{./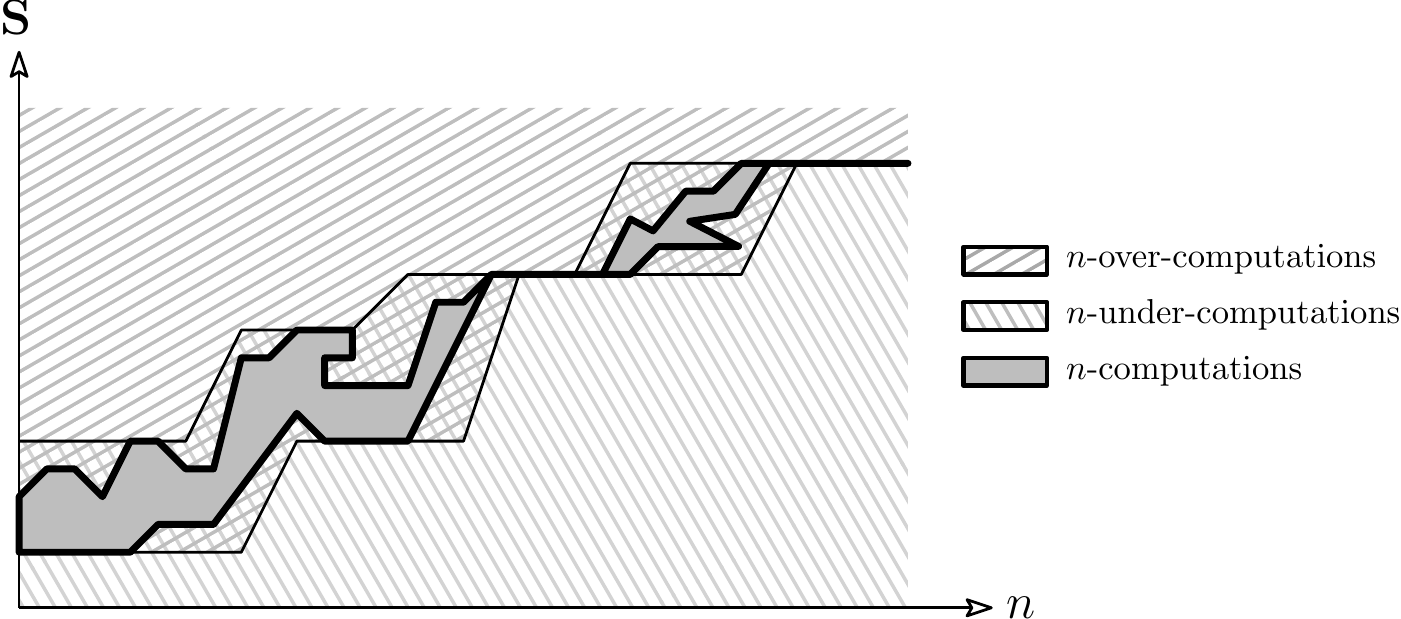}
\hspace{-5cm}\null
\end{center}
\caption{Structure of computations, under-computations, and over-computations.\mylabel{figure:structure-computations}}
\end{figure}

The third question remains: how to compare the values of different
computations over the same word? An answer to this question  in its full generality makes use
of under- and over-computations.
\begin{theorem}\mylabel{theorem:unicity-computations}
For all non-negative integers~$p$, there exists a polynomial~$\alpha:\nats\rightarrow\nats$
such that for all $n$-under-computations over some word of value~$a$ and
height at most~$p$, and all $\alpha(n)$-over computations of value~$b$
over the same word~$u$,
$$a\leq b\ .$$ 
\end{theorem}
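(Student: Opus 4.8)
The plan is to prove Theorem~\ref{theorem:unicity-computations} by induction on the height bound $p$, showing that if $n$ is chosen large enough (as a polynomial in the target threshold) then every $n$-under-computation of height at most $p$ has value below every $\alpha(n)$-over-computation on the same word. The statement is naturally asymmetric: under-computations are constrained by $v(x)\leq$ and over-computations by $v(x)\geq$, and stabilisation nodes fire only when the number of equal idempotent children \emph{exceeds} the threshold. The key tension is that an under-computation may use a small threshold $n$ while the matching over-computation uses a much larger threshold $\alpha(n)$, so a node that counts as a \emph{stabilisation} node in the under-computation (degree $>n$) might still count as a plain \emph{idempotent} node in the over-computation (degree $\leq\alpha(n)$), or vice versa. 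The whole difficulty is controlling how these thresholds propagate under composition.

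First I would fix $p$ and set up the induction so that the polynomial $\alpha$ depends on $p$ (and on $|S|$), building $\alpha$ as an iterated composition of a single ``one-level'' polynomial. For the base case $p=0$ both trees are single leaves over a one-letter word, and the leaf conditions give $v(x)\leq a_1$ for the under-computation and $b_1\leq v(x')$ for the over-computation, whence $a\leq a_1 = b_1 \leq b$ directly. For the inductive step I would look at the roots of the two trees. The hard part is that the two trees need not have the same branching structure at the root, so I cannot simply compare children pairwise. The natural device is to reorganise one computation against the shape of the other: given the under-computation $T$ of value $a$ and the over-computation $T'$ of value $b$ over the same word $u$, I would cut $u$ according to the factorisation induced by the root children of $T'$, then use the induction hypothesis on each factor (each subtree has height at most $p-1$) to compare the value that $T$ assigns to that factor against the corresponding child value of $T'$, and finally combine using the root rule of $T'$ together with compatibility of $\cdot$ with $\leq$.

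The main obstacle, and where the polynomial blow-up in $\alpha$ comes from, is the interaction between the threshold at the root of $T'$ and the thresholds inside $T$. When the root of $T'$ is a stabilisation node with $k>\alpha(n)$ equal idempotent children $e$ and value $b\geq e^\sharp$, I must show the under-computation's value $a$ on $u$ lies below $e^\sharp$; this is where I would invoke the stabilisation-monoid axioms—$e^\sharp\leq e$, consistency $(a\cdot b)^\sharp = a\cdot(b\cdot a)^\sharp\cdot b$, monotonicity $e\leq f\Rightarrow e^\sharp\leq f^\sharp$, and the derived identities $e^\sharp=e\cdot e^\sharp = e^\sharp\cdot e = (e^\sharp)^\sharp$—to argue that a large enough product of elements each bounded above by $e$ (obtained from the induction hypothesis on the factors) is itself bounded above by $e^\sharp$. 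The quantitative point is that to force the under-computation to ``see'' enough repetitions to collapse to $e^\sharp$, I need $\alpha(n)$ to exceed $n$ by a factor absorbing the height: roughly, to guarantee that $n$-under-computation factors can be grouped into more than $n$ equal blocks one needs the over-threshold $k$ to be polynomially larger, and since the height contributes a bounded number of such grouping levels, a fixed-degree polynomial composed $O(p)$ times suffices.

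Dually, when the root of $T'$ is an ordinary idempotent or binary node I expect the argument to be easier, since there the over-computation imposes a weaker constraint and the induction hypothesis on the (at most $\alpha(n)$ many) factors, combined with idempotency $e\cdot e = e$ and compatibility of the order with the product, yields $a\leq b$ without needing to invoke $\sharp$ at all. Throughout, I would lean on Theorem~\ref{theorem:exists-computation} only implicitly, since existence of computations is not needed here—the statement quantifies over given under- and over-computations—so the real content is purely the comparison lemma driven by the axioms of stabilisation. I anticipate the most delicate bookkeeping to be making the single polynomial work uniformly across all four node types simultaneously, so I would first isolate a clean ``collapse'' lemma stating that a product of more than $n$ factors each $\leq e\in E(\semigroup)$ is $\leq e^\sharp$, prove it from the axioms, and then apply it as a black box in the stabilisation case of the induction.
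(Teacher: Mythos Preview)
Your overall induction-on-$p$ scheme with $\alpha$ an iterated polynomial is exactly the paper's, but the decomposition step is set up backwards, and this is a genuine gap. You propose to cut $u$ according to the root children of the over-computation $T'$; but the height bound $p$ in the statement applies only to the under-computation $T$, so cutting by $T'$ gives you subtrees of $T'$ with no height control and no corresponding subtrees of $T$ at all---the phrase ``the value that $T$ assigns to that factor'' is undefined, since $T$'s tree structure need not respect $T'$'s factorisation. The induction parameter does not decrease this way.

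The paper cuts $u$ according to the root children of the \emph{under}-computation $T$, yielding subtrees of $T$ of height $\leq p-1$. The real work, which your proposal does not identify, is then to refactor the over-computation against this externally imposed decomposition: this is Lemma~\ref{lemma:over-computation-decomposition}, stating that if $u_1\cdots u_k$ $\alpha(n)$-evaluates (via some over-computation) to $b$, then each $u_i$ $n$-evaluates to some $b_i$ and the word $b_1\cdots b_k$ itself $n$-evaluates to $b$. Its proof is the technical heart and needs a Ramsey argument (Lemma~\ref{lemma:ordered-consistency}) together with a conjugacy lemma (Lemma~\ref{lemma:conjugacy}) that explicitly invokes Theorem~\ref{theorem:exists-computation} to manufacture a stabilisation node---so contrary to your remark, existence of computations is genuinely used. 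Once that decomposition lemma is available, the induction step is a short case split on the root type of $T$, and what you call the ``collapse lemma'' (essentially Lemma~\ref{lemma:over-computation-above-sharp}: any over-computation over letters each $\geq e$ has value $\geq e^\sharp$) is a two-line induction on height, not the crux.
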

Remark first that since computations are special instances of under- and
over-compu\-tations, Theorem~\ref{theorem:unicity-computations}
holds in particular for comparing the values of compu\-tations.
The proof of Theorem~\ref{theorem:unicity-computations} is the subject of
Section~\ref{subsection:proof-unicity-computations}. 

We have illustrated the above results in Figure~\ref{figure:structure-computations}.
It depicts the relationship between computations in some idealised stabilisation monoid $\semigroup$.
In this drawing, assume some word over some stabilisation semigroup is fixed,
as well as some integer~$p\geq 3|\semigroup|$.
We aim at representing for each~$n$ the possible values of
an $n$-computation, an $n$-under computation or an $n$-over
computation for this word of height at most $p$.
In all the explanations below, all computations are supposed to
not exceed height~$p$.

The horizontal axis represents the $n$-coordinate.
The values in the stabilisation semigroup being ordered,
the vertical axis represents the values in the stabilisation semigroup (for the picture, we
assume the values in the stabilisation semigroup totally ordered).
Thus an $n$-computation (or $n$-under or $n$-over-computation) is placed
at a point of horizontal coordinate $n$ and vertical coordinate the value
of the computation.

We can now interpret the properties of the computations in terms of this figure.
First of all, under-computations as well as over-computations, and as opposed to
computations, enjoy certain forms of monotonicity
as shown by the fact below.
\begin{myfact}\mylabel{fact:under-over-computation-monotonicity}
For~$m\leq n$, all $m$-under-computations are also $n$-under-computations,
and all $n$-over-computations are also $m$-over-computations (using the fact that $e^\sharp\leq e$
for all idempotents $e$).

Any $n$-under-computation of value~$a$ can be turned into an $n$-under-computation
of value $b$ for all $b\leq a$ (by changing the root label from $a$ to $b$).
Similarly any $n$-over-computation of value~$a$ can be turned into an
$n$-over-computation of value $b$ for all $b\geq a$.
\end{myfact}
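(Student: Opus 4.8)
The plan is to prove Fact~\ref{fact:under-over-computation-monotonicity}, which has two independent parts, each of which follows directly from the definitions by structural induction on the computation tree.

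\paragraph{First part: monotonicity in $n$.}
I would argue that increasing the threshold turns under-computations into under-computations (and symmetrically for over-computations). Fix an $m$-under-computation $T$ and let $n\geq m$. I claim that $T$, with \emph{exactly the same labelling} $v$, is also an $n$-under-computation. The verification is node-by-node, checking that each node still satisfies one of the four clauses in the definition. The Leaf and Binary cases do not mention the threshold at all, so they transfer verbatim. The only subtlety is the boundary between Idempotent nodes and Stabilisation nodes, since the threshold is what decides which clause applies. Concretely, consider a node $x$ with $k$ children all of common value $e\in E(\semigroup)$.
\begin{iteMize}{$\bullet$}
\item If $x$ was a Binary or Idempotent node in $T$ (so $2\leq k\leq m$), then since $m\leq n$ we still have $k\leq n$, so the same clause applies and $v(x)\leq e$ is unchanged (for Binary nodes nothing changes since they are threshold-independent).
\item If $x$ was a Stabilisation node (so $k>m$), there are two subcases. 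If $k>n$ as well, $x$ remains a Stabilisation node and the condition $v(x)\leq e^\sharp$ is untouched. If instead $m<k\leq n$, then $x$ must now be read as an Idempotent node, which requires $v(x)\leq e$. But by hypothesis $v(x)\leq e^\sharp$, and the axiom $e^\sharp\leq e$ for idempotents (explicitly invoked in the statement) gives $v(x)\leq e^\sharp\leq e$, so the Idempotent clause is satisfied.
\end{iteMize}
Thus every node meets the corresponding $n$-clause and $T$ is an $n$-under-computation. The over-computation statement is the mirror image: an $n$-over-computation has $v(x)\geq e^\sharp$ at stabilisation nodes, and when lowering the threshold from $n$ to $m\leq n$ a former stabilisation node with $m<k\leq n$ becomes an idempotent node requiring $v(x)\geq e$; here the direction of $e^\sharp\leq e$ does \emph{not} help, so I would instead note that lowering the threshold can only turn idempotent nodes into stabilisation nodes (never the reverse), and a former idempotent node $x$ with $k\leq n$ that now has $k>m$ becomes a stabilisation node requiring $v(x)\geq e^\sharp$, which follows from $v(x)\geq e\geq e^\sharp$. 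This asymmetry in which axiom is used is the one point deserving care.

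\paragraph{Second part: lowering/raising the root value.}
This part is immediate and does not even require induction. Given an $n$-under-computation $T$ of value $a$ and any $b\leq a$, I would form $T'$ by changing only the root label from $a$ to $b$, keeping everything else identical. Every non-root node still satisfies its clause, since nothing about it changed. For the root, whichever of the four clauses it obeyed had the form $a\leq(\text{something})$; since $b\leq a$, transitivity of $\leq$ gives $b\leq a\leq(\text{something})$, so the same clause holds for $b$. (The root cannot be a leaf unless $u$ is a single letter, in which case the Leaf clause $b\leq a\leq a_1$ still holds.) Hence $T'$ is an $n$-under-computation of value $b$. The over-computation case is symmetric, using $b\geq a$ and the reversed inequalities.

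\paragraph{Main obstacle.}
The only genuine subtlety, as flagged above, is the reclassification of nodes at the idempotent/stabilisation boundary when the threshold changes, and in particular making sure the correct direction of the axiom $e^\sharp\leq e$ is available in each of the under- and over-computation cases. Everything else is a routine transfer of the defining inequalities through a monotone composition. I expect the whole argument to be short once the boundary bookkeeping is stated cleanly.
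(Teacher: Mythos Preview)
Your proposal is correct and is exactly the expansion of the two parenthetical hints the paper gives in lieu of a proof (the paper states this as a \emph{Fact} with no separate proof, relying on ``using $e^\sharp\leq e$'' and ``by changing the root label''). One cosmetic remark: in the over-computation half of the first part, your detour through ``a former stabilisation node with $m<k\leq n$'' describes a case that cannot occur (stabilisation nodes have $k>n\geq m$, so they stay stabilisation nodes); you immediately self-correct to the right observation that only idempotent nodes can be reclassified, but you could simply delete the false start.
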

Fact~\ref{fact:under-over-computation-monotonicity} is illustrated
by Figure~\ref{figure:structure-computations}.
It means that over-computations define a left and upward-closed area, while the under-computations
define a right and downward-closed area. Hence, in particular, the delimiting lines
are non-decreasing.
Furthermore, since computations are at the same-time
over-computations and under-computations, the area of computations lie inside
the intersection of under-compu\-ta\-tions and over-com\-pu\-ta\-tions. 
Since the height~$p$ is chosen to be at least $3|\semigroup|$,
Theorem~\ref{theorem:exists-computation} provides for us even more information.
Namely, for each value of $n$, there exists an $n$-computation.
This means in the picture that the area of computations crosses every column.
However, since computations do not enjoy monotonicity properties, the shape
of the area of computations can be quite complicated.
Finally Theorem~\ref{theorem:unicity-computations} states that the frontier
of under-computations and the frontier of over-computations are not far one
from each other. More precisely, if we choose an element $a$
of the stabilisation semigroup, and we draw an horizontal line at altitude~$a$,
if the frontier of under-computations is above or at $a$ for threshold $n$,
then the frontier of over-computations is also above or at $a$
at threshold $\alpha(n)$. Hence the frontier of over-computations is always below
the one of under-computations, but it essentially grows at the same
speed, with a delay of at most $\alpha$.

\begin{remark}\mylabel{remark:computations-standard}
In the case of standard semigroups or monoids (which can be seen as
stabilisation monoids or semigroups according to Remark~\ref{remark:standard-monoid}),
the notions of computations, under-computations and
over-computations coincide (since the order is trivial),
and the value of the threshold $n$ becomes also
irrelevant. This means that the value of all $n$-[under/over-]computations
over a word~$u$ coincide with $\pi(u)$. (Such computations coincide with
the ``Ramsey factorisations'' of the factorisation forest theorem.)
\end{remark}

Let us finally remark that Theorem~\ref{theorem:unicity-computations}, which is a consequence
of the axioms of stabilisation semigroups, is also sufficient for deducing them.
This is formalised by the following proposition.
\begin{proposition}\mylabel{proposition:unicity-to-axioms}
Let $\semigroup=\langle S,\cdot,\leq,\sharp\rangle$ be a a structure consisting of a finite set $S$,
a binary operation $\cdot$ from $S^2$ to $S$, $\leq$ be a partial order, and $\sharp$
from $S$ to $S$ be a mapping defined over the idempotents of $S$.
Assume furthermore that there exists $\alpha$ such that for all $n$-under-computations for
some word~$u$ of value $a$ of height at most $3$
and all $\alpha(n)$-over-computation over $u$ of
value $b$ of height at most $3$,
$$a\leq b\ .$$

Then $\semigroup$ is a stabilisation semigroup.
\end{proposition}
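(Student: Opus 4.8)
The plan is to verify each of the four stabilisation-semigroup axioms in turn, in each case by exhibiting a small $n$-under-computation and a corresponding $\alpha(n)$-over-computation for a common word and invoking the hypothesis to extract the required inequality. Since the axioms are about idempotents and their stabilisations, the natural words to feed the hypothesis are short products such as $ee$, $eee\cdots e$ (with $n$ or $\alpha(n)$ copies of $e$), and $ab$, $ba$. The governing idea is that the definition of a computation at a node gives one inequality ``for free'' (the under-computation rule gives $v(x)\le\cdots$, the over-computation rule gives $v(x)\ge\cdots$), so by building an under-computation whose root is forced to equal some target element and an over-computation whose root is forced to equal another, the hypothesis $a\le b$ turns into exactly the inequality the axiom asks for. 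Throughout I will use Fact~\ref{fact:under-over-computation-monotonicity} freely to relabel roots and to move between thresholds.

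First I would establish $e^\sharp\le e$ for every idempotent $e$. Take the word $u=e^{\,\alpha(n)+1}$ for $n$ large. An $n$-over-computation of height $1$ whose root has the $\alpha(n)+1>n$ children all equal to $e$ is a stabilisation node, forcing the root value $\ge e^\sharp$; so $b=e^\sharp$ is achievable as an over-computation value (using relabelling). Meanwhile a single binary-style bracketing, or an idempotent node with a small number of children, yields an under-computation of value $e$. Feeding these into the hypothesis gives $e^\sharp\le e$. The monotonicity axiom $e\le f\Rightarrow e^\sharp\le f^\sharp$ is handled similarly: over the word $f^{\,\alpha(n)+1}$ one builds an under-computation of value $e^\sharp$ (each leaf $f$ dominates $e$, a stabilisation node over the $e$'s gives $e^\sharp$) and an over-computation of value $f^\sharp$ (a stabilisation node directly over the $f$'s), whence $e^\sharp\le f^\sharp$. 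The idempotence $(e^\sharp)^\sharp=e^\sharp$ splits into two inequalities, each obtained by comparing a two-level computation (an outer stabilisation node over inner stabilisation nodes, giving $(e^\sharp)^\sharp$) against a single stabilisation node (giving $e^\sharp$), in both orientations; here height $2$ or $3$ is needed, which is exactly why the hypothesis is stated up to height $3$.

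The consistency axiom $(a\cdot b)^\sharp=a\cdot(b\cdot a)^\sharp\cdot b$ is the step I expect to be the main obstacle, both because it is an equality (requiring the argument to be run in both directions) and because it mixes stabilisation with ordinary products, so the witnessing computations genuinely need height $3$. The idea is to evaluate the word $(ab)^{\alpha(n)+1}$, or equivalently $a(ba)^{k}b$ for suitable $k$: regrouping the same word either as a stabilisation node over copies of the idempotent $ab$, or as $a$ followed by a stabilisation node over copies of $ba$ followed by $b$, produces under- and over-computations whose root values are the two sides of the desired equality. Running the hypothesis once in each orientation yields $\le$ and $\ge$, hence equality. The delicate points are bookkeeping the number of copies so that both the $n$-threshold and the $\alpha(n)$-threshold land on the correct side of their respective stabilisation/idempotent cutoffs, and confirming that the required trees stay within height $3$; I would organise this by first noting that $a,b$ with $ab,ba$ idempotent forces the relevant subproducts to behave as in a single $\mathcal H$-class, so the two bracketings of the common word are legitimate height-$3$ computations on the nose.
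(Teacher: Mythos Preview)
Your overall strategy---build matching small under- and over-computations for a common word and read off the axiom from the hypothesis---is exactly what the paper does. However, there are two genuine gaps.

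First, you have overlooked that the proposition does \emph{not} assume $\langle S,\cdot,\leq\rangle$ is already an ordered semigroup, nor that $\sharp$ lands in $E(\semigroup)$. The hypothesis gives you only a finite set, a binary operation, a partial order, and a map on idempotents. So before the four bulleted axioms you must also prove: (i) associativity of $\cdot$; (ii) compatibility of $\cdot$ with $\leq$; and (iii) that $e^\sharp$ is itself idempotent. The paper handles each of these by the same trick (e.g.\ compare $((a\cdot b)\cdot c)[(a\cdot b)[a,b],c]$ against $(a\cdot(b\cdot c))[a,(b\cdot c)[b,c]]$ for associativity). Without (iii), your later step ``apply $\sharp$ to $e^\sharp$'' in the $(e^\sharp)^\sharp=e^\sharp$ argument is not even well-typed.

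Second, your argument for $e^\sharp\leq e$ is inverted. You make $e^\sharp$ the value of an over-computation and $e$ the value of an under-computation; the hypothesis then yields $e\leq e^\sharp$, the wrong direction. Moreover, your proposed under-computation of value $e$ over the word $e^{\alpha(n)+1}$ cannot be realised within height~$3$: a binary bracketing has height $\alpha(n)$, and a single idempotent node with $\alpha(n)+1>n$ children is a stabilisation node, not an idempotent node, in an $n$-under-computation. The correct orientation (as in the paper) uses a \emph{short} word: take $u=eee$ with $n=2$, so $e^\sharp[e,e,e]$ is a $2$-under-computation (stabilisation node, since $3>2$) and $e[e,e,e]$ is an $\alpha(2)$-over-computation (idempotent node, provided $\alpha(2)\geq 3$, which one may assume without loss). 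Your monotonicity and consistency sketches, by contrast, are oriented correctly and essentially match the paper.
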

\begin{proof}
Let us first prove that $\cdot$ is compatible with $\leq$.
Assume $a\leq a'$ and $b\leq b'$, then $(a\cdot b)[a,b]$ is a $0$-under-computation
over $ab$ and $(a'\cdot b')[a',b']$ is an $\alpha(0)$-over-computation over the same
word $ab$. It follows that $a\cdot b\leq a'\cdot b'$.

Let us now prove that $\cdot$ is associative.
Let $a,b,c$ in $n$. Then $((a\cdot b)\cdot c)[(a\cdot b)[a,b],c]$ is a $0$-computation
for the word $abc$, and $(a\cdot(b\cdot c))[a,(b\cdot c)[b,c]]$ is an $\alpha(0)$-computation
for the same word. It follows that $(a\cdot b)\cdot c\leq a\cdot(b\cdot c)$.
The other inequality is symmetric.

Let $e$ be an idempotent. The tree $e^\sharp[\overbrace{e,\dots,e}^{2\alpha(0)+2}]$
is both a $0$ and $\alpha(0)$-computation over the word $e^{2\alpha(0)+1}$.
Furthermore, the tree $(e^\sharp\cdot e^\sharp)[e^\sharp[\overbrace{e,\dots,e}^{\alpha(0)+1}],e^\sharp[\overbrace{e,\dots,e}^{\alpha(0)+1}]]$ is also both a $0$ and an $\alpha(0)$-computation for the same
word. It follows that $e^\sharp\cdot e^\sharp=e^\sharp$, {\it i.e.}, that $\sharp$ maps idempotents to
idempotents.

Let us show that $e^\sharp\leq e$ for all idempotents~$e$.
The tree $e^\sharp[e,e,e]$ is a $2$-computation over the word $eee$,
and $e[e,e,e]$ is a $max(3,\alpha(2))$-computation over the same word.
It follows that $e^\sharp\leq e$.

Let us show that stabilisation is compatible with the order.
Let $e\leq f$ be idempotents.
Then $e^\sharp[\overbrace{e,\dots,e}^{\alpha(0)+1}]$
and $f^\sharp[\overbrace{f,\dots,f}^{\alpha(0)+1}]$ are respectively
a $0$-computation for the word $e^{\alpha(0)+1}$ and 
an $\alpha(0)$-over-computation for the same word. It follows that $e^\sharp\leq f^\sharp$.

Let us prove that stabilisation is idempotent.
Let $e$ be an idempotent. We already know that $(e^\sharp)^\sharp\leq e^\sharp$ (this makes sense since we have
seen that $e^\sharp$ is idempotent). Let
us prove the opposite inequality. 
Consider the $0$-computation $e^\sharp[\overbrace{e,\dots,e}^{(\alpha(0)+1)^2}]$ for the
word $e^{(\alpha(0)+1)^2}$, and the $\alpha(0)$-computation
$(e^\sharp)^\sharp[\overbrace{e^\sharp[\overbrace{e,\dots,e}^{\alpha(0)+1}],
	\dots,e^\sharp[\overbrace{e,\dots,e}^{\alpha(0)+1}]}^{\alpha(0)+1}]$
for the same word. It follows that $e^\sharp\leq (e^\sharp)^\sharp$.

Let us finally prove the consistency of stabilisation.
Assume that both $a\cdot b$ and $b\cdot a$ are idempotents.
Let $t_{ab}$ be $(a\cdot b)[a,b]$ (and similarly for $t_{ba}$), {\it i.e.}, computations for $ab$ and $ba$
respectively.
Define now:
\begin{align*}
	t_{(a\cdot b)^\sharp}&=
		(a\cdot b)^\sharp[\overbrace{t_{ab},\dots,t_{ab}}^{\alpha(0)+2~\text{times}}]\ ,\\
\text{and} \qquad
t_{a\cdot (b\cdot a)^\sharp\cdot b}&=(a\cdot (b\cdot a)^\sharp\cdot b)[
		a,((b\cdot a)^\sharp\cdot b)[
			(b\cdot a)^\sharp[\overbrace{t_{ba},\dots,t_{ba}}^{\alpha(0)+1~\text{times}}]
			,b]]\ .\\
\end{align*}
Then both $t_{(a\cdot b)^\sharp}$ and $t_{a\cdot (b\cdot a)^\sharp\cdot b}$
are at the same time $0$ and $\alpha(0)$-computations over the same word $(ab)^{\alpha(0)+2}$
of height at most~$3$.
Since their respective values are $(a\cdot b)^\sharp$ and $a\cdot (b\cdot a)^\sharp\cdot b$,
it follows by our assumption that $(a\cdot b)^\sharp=a\cdot (b\cdot a)^\sharp\cdot b$.
\end{proof}
This result is particularly useful. Indeed, when constructing a new stabilisation semigroup,
we usually aim at proving that it ``recognises'' some function (to be defined in the next chapter). 
It involves proving the hypothesis of Proposition~\ref{proposition:unicity-to-axioms}. Thanks to
Proposition~\ref{proposition:unicity-to-axioms}, the syntactic correctness is then for free.
This situation occurs in particular in Section~\ref{subsection:closure-inf-projection}
and \ref{subsection:closure-sup-projection} when the closure of recognisable
cost-functions under inf-projection and sup-projection is established.

\subsection{Specificities of stabilisation monoids}
\mylabel{subsection:monoid-computations}

We have presented so far the notion of computations in the case of stabilisation semigroups.
We are in fact interested in the study of stabilisation monoids.
Monoids differ from semigroups by the presence of a unit element~$1$.
This element is used for modelling the empty word.
We present in this section the natural variant of the notions of computations
for the case of stabilisation monoids. As is often the case, results from stabilisation
semigroups transfer naturally to stabilisation monoids.
The definition is highly related to the one for stabilisation semigroups,
and we see through this section that it is easy to go from the notion
for stabilisation monoid to the one of stabilisation semigroup case, and backward.
The result is that we use the same name ``computation'' for the two notions elsewhere
in the paper.
\begin{definition}
Let $\monoid$ be a stabilisation monoid.
Given a word $u\in M^*$, a \intro{stabilisation monoid $n$-[under/over]-computation} (\intro{sm}-[under/over]-computation
for short) for $u$
is an $n$-[under/over]-computation for some $v\in M^+$,
such that it is possible to obtain $u$ from $v$ by deleting some occurrences of the
letter $1$. All have value $1$.
\end{definition}
Thus, the definition deals with the implicit presence of arbitrary many copies of the
empty word (the unit) interleaved with a given word.
This definition allows us to work in a transparent way with the empty word (this saves us
case distinctions in proofs). In particular the empty word has an 
sm-$n$-computation which is simply $1$, of value $1$. 
There are many others, like $1[1,1,1,1[1,1]]$ for instance.

Since each $n$-computation is also an sm-n-computation over the same word,
it is clear that Theorem \ref{theorem:exists-computation} can be extended to this situation (just the
obvious case of the empty word needs to be treated separately):
\begin{myfact}\mylabel{fact:exists-SM-computation}
There exists an sm-$n$-computation for all words in $M^*$
of size at most $3|M|$.
 \end{myfact}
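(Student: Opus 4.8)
The plan is to transfer Theorem~\ref{theorem:exists-computation} from the semigroup setting to the monoid setting, treating the empty word as the only genuinely separate case, exactly as the surrounding discussion anticipates. The carrier of the underlying stabilisation semigroup of $\monoid$ is again $M$, so the height bound $3|S|$ of Theorem~\ref{theorem:exists-computation} specialises to $3|M|$ here. I would therefore split on whether the input word $u\in M^*$ is empty.

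For a nonempty word $u\in M^+$, I would simply apply Theorem~\ref{theorem:exists-computation} to $u$, viewing $\monoid$ as a stabilisation semigroup. This produces an $n$-computation $T$ for $u$ of height at most $3|M|$. It then only remains to observe that $T$ is already an sm-$n$-computation for $u$: taking $v=u\in M^+$ in the definition, one recovers $u$ from $v$ by deleting no occurrences of the letter $1$, so that $T$ witnesses $u$ with the required height bound.

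For the empty word $u=\varepsilon$, I would exhibit the single-leaf tree of value $1$. Choosing $v=1\in M^+$, this leaf is an $n$-computation for $v$, and deleting the unique occurrence of $1$ in $v$ yields $\varepsilon$; hence it is an sm-$n$-computation for $\varepsilon$. Under the convention that leaves do not contribute to height, its height is $0\leq 3|M|$, so the bound is trivial.

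I expect no real obstacle: all the substantive work lies in Theorem~\ref{theorem:exists-computation}, and this fact is merely its lifting to stabilisation monoids. The only point demanding any attention is the bookkeeping in the definition of an sm-$n$-computation — verifying that the witnessing word $v\in M^+$ and the deleted copies of $1$ are correctly instantiated in each case — and this becomes immediate once the empty and nonempty cases are separated.
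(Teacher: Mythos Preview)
Your proposal is correct and matches the paper's own argument essentially verbatim: the paper simply notes that every $n$-computation is already an sm-$n$-computation over the same word, so Theorem~\ref{theorem:exists-computation} handles all nonempty words, and only the empty word needs the separate one-leaf witness you describe.
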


The following lemma shows that sm-[under/over]-computations are not more expressive than
[under/over]-computations. It is also elementary to prove.
\begin{lemma}\mylabel{lemma:remove-epsilon-computation}
Given an sm-$n$-computation
({\it resp.}, sm-$n$-under-computation, sm-$n$-over-com\-putation)
of value $a$ for the empty word,  then $a=1$ ({\it resp.} $a\leq 1$, $a\geq 1$).

For all non-empty words $u$ and all sm-$n$-computations $T$
({\it resp.}, sm-$n$-under-computa\-tions, sm-$n$-over-computations)
for $u$ of value $a$, there exists an  $n$-computation
({\it resp.}, $n$-under-computations, $n$-over-computations) for $u$ of value $a$. Furthermore, its height is at most the height of $T$.
\end{lemma}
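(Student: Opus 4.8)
The plan is to prove both parts of Lemma~\ref{lemma:remove-epsilon-computation} essentially by structural induction on the given sm-computation, exploiting the fact that an sm-$n$-computation for $u$ is, by definition, an ordinary $n$-computation for some $v \in M^+$ obtainable from $u$ by inserting copies of the unit $1$. The first statement, concerning the empty word, is the base-case-flavoured part: if $u = \varepsilon$, then $v \in M^+$ consists entirely of $1$'s, so we must show that any $n$-computation over a word $1^l$ has value $1$ (resp.\ $\leq 1$ for under-, $\geq 1$ for over-computations). I would argue this by induction on the tree: each leaf has value (at most/exactly/at least) $1$; for a binary node the children values are (by induction) comparable to $1$, and since $1 \cdot 1 = 1$ the node value is comparable to $1$ in the appropriate direction; for an idempotent node the common child value is an idempotent comparable to $1$ — the only idempotent $\leq 1$ relevant is $1$ itself in the exact/over case — and for a stabilisation node I would use $1^\sharp = 1$ (part of the definition of a stabilisation monoid) together with monotonicity of $\sharp$. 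The directions of the inequalities track exactly the under/over/exact distinction, so the three cases are handled uniformly.

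For the second statement, the strategy is to take an sm-$n$-computation $T$ for a \emph{non-empty} word $u$ and transform it into a genuine $n$-computation for $u$ (no interspersed units) of value $a$ and no greater height. Concretely, $T$ is an $n$-computation for some $v$ obtained from $u$ by inserting units. I would process $T$ by pruning the inserted-unit leaves: repeatedly, whenever a subtree corresponds entirely to inserted $1$'s, I delete it; by the first part its root value is comparable to $1$ in the right direction, and I must argue that deleting it does not invalidate the parent node and does not increase the value inappropriately. The cleanest way is to note that a binary node with one child being an all-units subtree can be contracted to its other child (since $a \cdot 1 = 1 \cdot a = a$ and the order is compatible with the product by Fact-level monotonicity), replacing the node by the surviving subtree; idempotent and stabilisation nodes with some all-unit children reduce their arity, and one must check the reduced arity still meets the node's numeric constraint or else the node itself was over units and is deleted. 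Since $u$ is non-empty, at least one genuine leaf survives, and each contraction does not raise the height, giving the height bound.

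The main obstacle I anticipate is the bookkeeping in the arity constraints of idempotent and stabilisation nodes when deleting all-unit children: an idempotent node requires $2 \leq k \leq n$ equal idempotent children, and a stabilisation node requires $k > n$; deleting children can drop $k$ below the threshold, potentially turning a stabilisation node into something that is no longer a valid node of any type. The resolution is that a node all of whose children evaluate (comparably) to the idempotent $1$ is itself an all-unit subtree, so it gets deleted wholesale by the first part rather than left with too few children; and a node mixing genuine children with unit children has its unit children removed only down to where the remaining genuine children already justify its value through the binary/product structure after contraction. I would phrase the induction so that the invariant maintained is precisely ``the subtree is either entirely over inserted units (hence removed) or is a valid $n$-[under/over]-computation for the corresponding non-empty subword of $u$,'' which makes the two statements of the lemma prove one another as the induction proceeds. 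Because each deletion or contraction only removes nodes, the height can only decrease, securing the final claim that the height is at most that of $T$.
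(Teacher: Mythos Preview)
Your approach is essentially the same as the paper's, which also proceeds by local rewrites that strip out the extra unit leaves; the paper's proof is in fact sketchier than yours (it lists three rewrite rules and declares the result obvious). You are right to flag the arity constraint on idempotent and stabilisation nodes as the one place requiring care.

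However, your proposed resolution of that obstacle does not quite work. The claim ``a node all of whose children evaluate (comparably) to the idempotent~$1$ is itself an all-unit subtree'' is false: at an idempotent or stabilisation node all children share the common value~$e$, so if one child is an all-unit subtree (forcing $e\leq 1$, $e=1$, or $e\geq 1$ in the three variants), the \emph{other} children also have value~$e$ yet may sit over genuine letters of~$u$. You therefore really do have to drop the unit-only children and keep the others, possibly changing the node type. The delicate case is an \emph{over}-computation where a stabilisation node (value $v(x)\geq e^\sharp$) loses enough children to fall to arity $k'\leq n$: to relabel it as an idempotent node you need $v(x)\geq e$, and $v(x)\geq e^\sharp$ together with $e^\sharp\leq e$ does not give that. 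The missing observation is that the presence of a unit-only child forces $e\geq 1$, whence $e^\sharp\geq 1^\sharp=1$ by monotonicity of~$\sharp$, and then $e^\sharp=e\cdot e^\sharp\geq e\cdot 1=e$; combined with $e^\sharp\leq e$ this yields $e^\sharp=e$, so the constraint $v(x)\geq e$ is automatic. (For under-computations the direction $e^\sharp\leq e$ already suffices; for exact computations $e=1$ and $1^\sharp=1$.) With this fact in hand your induction goes through cleanly, and the height bound follows since every step deletes nodes.
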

\begin{proof}
It is simple to eliminate each occurrence of an extra $1$ by local modifications of the structure of the sm-computation:
replace subtrees of the form $1[1,\dots,1]$ by $1$, subtrees of the form $a[T,1]$ by $T$, and subtrees of the form  $a[1,T]$ by $T$,
up to elimination of all occurrences of $1$.
For the empty word, this results in the first part of the lemma. For non-empty words, the resulting simplified
sm-computation is a computation. The argument works identically for the under/over variants.
\end{proof}
A corollary is that Theorem~\ref{theorem:unicity-computations} extends to sm-computations.
\begin{corollary}
For all non-negative integers~$p$, there exists a polynomial~$\alpha:\nats\rightarrow\nats$
such that for all sm-$n$-under-computations over some word $u\in M^*$ of value~$a$ and
height at most~$p$, and all sm-$\alpha(n)$-over computations of value~$b$
over the same word~$u$,
$$a\leq b\ .$$ 
\end{corollary}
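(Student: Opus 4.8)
The plan is to reduce the corollary directly to Theorem~\ref{theorem:unicity-computations} by using Lemma~\ref{lemma:remove-epsilon-computation} as a bridge, so that the only real work is bookkeeping with the empty-word case. The statement I want to prove asserts: for all $p$ there is a polynomial $\alpha$ such that whenever I have an sm-$n$-under-computation over $u\in M^*$ of value $a$ and height at most $p$, and an sm-$\alpha(n)$-over-computation over the same $u$ of value $b$, then $a\leq b$. The natural choice is to take \emph{the same} polynomial $\alpha$ that Theorem~\ref{theorem:unicity-computations} provides for this value of $p$ (noting that by Lemma~\ref{lemma:remove-epsilon-computation} the height does not increase when I pass from the sm-version to the ordinary version, so $p$ is preserved).

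First I would dispose of the empty-word case. If $u=\varepsilon$, then by the first part of Lemma~\ref{lemma:remove-epsilon-computation} the sm-$n$-under-computation of value $a$ forces $a\leq 1$, while the sm-$\alpha(n)$-over-computation of value $b$ forces $b\geq 1$; hence $a\leq 1\leq b$ and we are done without even invoking Theorem~\ref{theorem:unicity-computations}.

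For the main case, assume $u$ is nonempty. Applying the second part of Lemma~\ref{lemma:remove-epsilon-computation} to the sm-$n$-under-computation yields an ordinary $n$-under-computation for $u$ of the same value $a$, whose height is at most the height of the original, hence at most $p$. Likewise, applying the lemma to the sm-$\alpha(n)$-over-computation yields an ordinary $\alpha(n)$-over-computation for $u$ of value $b$. Now Theorem~\ref{theorem:unicity-computations}, invoked with this very $p$ and $\alpha$, compares these two ordinary computations over the same word $u$ and concludes $a\leq b$, which is exactly what is required.

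I do not expect any genuine obstacle here; the corollary is deliberately engineered so that Lemma~\ref{lemma:remove-epsilon-computation} absorbs all the difficulty of handling the interleaved unit letters. The only point demanding a moment's care is that the height bound is preserved under the translation — this is guaranteed by the ``Furthermore, its height is at most the height of $T$'' clause of the lemma — so that the hypothesis ``height at most $p$'' of Theorem~\ref{theorem:unicity-computations} is met and the same $\alpha$ associated with $p$ can be reused verbatim. Consequently the proof is a short two-case argument: the empty word handled by the first part of the lemma, and the nonempty word handled by the second part of the lemma followed by a direct appeal to Theorem~\ref{theorem:unicity-computations}.
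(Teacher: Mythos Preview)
Your proposal is correct and follows essentially the same approach as the paper: convert the sm-under- and sm-over-computations into ordinary under- and over-computations of the same values via Lemma~\ref{lemma:remove-epsilon-computation}, then invoke Theorem~\ref{theorem:unicity-computations}. Your version is more explicit than the paper's two-sentence proof---in particular you spell out the empty-word case and the height-preservation check---but the underlying argument is identical.
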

\begin{proof}
Indeed, the sm-under-computations and sm-over-computations can be turned into
under-computations and over-computations of the same respective values by
Lemma~\ref{lemma:remove-epsilon-computation}.
The inequality holds for these under and over-computations by  Theorem~\ref{theorem:unicity-computations}.
\end{proof}

There is a last lemma which is related and will prove useful.
\begin{lemma}\mylabel{lemma:SM-extension}
Let $u$ be a word in $M^*$ and $v$ be obtained from $u$ by eliminating some of its $1$ letters,
then all $n$-[under/over]-computations for $v$ can be turned into an $n$-[under/over]-computation for $u$
of same value. Furthermore, the height increase is at most $3$.
\end{lemma}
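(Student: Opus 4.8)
The plan is to transform the given computation for~$v$ into one for~$u$ by purely local surgery at the leaves, reinserting the deleted unit letters in \emph{grouped} form so that the height grows by at most~$3$. Write $v=b_1\cdots b_m$ (with $m\geq 1$, since for $v=\varepsilon$ there is no $n$-computation and the statement is vacuous), and accordingly decompose $u=w_0\,b_1\,w_1\,b_2\cdots b_m\,w_m$, where each $w_j\in\{1\}^*$ is the (possibly empty) block of deleted unit letters, say $w_j=1^{p_j}$ with $p_j\geq 0$. Let $T$ be the given $n$-[under/over]-computation for~$v$, and for $i=1,\dots,m$ let $\ell_i$ be its $i$-th leaf (read left to right), carrying value $c_i$; by the leaf rule $c_i\bowtie b_i$, where $\bowtie$ stands for $\leq$, $\geq$, or $=$ according to the variant. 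I would leave the internal structure of~$T$ untouched and replace each $\ell_i$ by a small subtree of root value~$c_i$ whose leaves spell the appropriate extended block, so that the overall value is unchanged and the leaves of the new tree read exactly~$u$.

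The key gadget exploits that $1$ is idempotent with $1\cdot 1=1$ and $1^\sharp=1$. Hence a block $1^{p}$ with $p\geq 2$ is realised by a single node with $p$ leaf-children all equal to~$1$ and root value~$1$: this is an idempotent node when $2\leq p\leq n$ (the condition $1\bowtie 1$ holds) and a stabilisation node when $p>n$ (the condition $1\bowtie 1^\sharp=1$ holds); for $p=1$ a bare leaf of value~$1$ suffices, and for $p=0$ nothing is inserted. Call the resulting tree $U_{p}$; it has height at most~$1$ and value~$1$. To prepend a left block to a leaf of value~$c_i$ I would form a binary node of value~$c_i$ with children $U_{p_{i-1}}$ and the leaf $\ell_i$, which is legal since $c_i\bowtie 1\cdot c_i=c_i$; appending a right block is symmetric. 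All three variants are checked uniformly, since every inequality that arises reduces either to $1\bowtie 1$, to $1\bowtie 1^\sharp$, or to $c_i\bowtie 1\cdot c_i$ and $c_i\bowtie c_i\cdot 1$ (both equal to~$c_i$), all of which hold by neutrality of~$1$ and $1^\sharp=1$.

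Concretely, for $i<m$ I would replace $\ell_i$ by the subtree for $w_{i-1}b_i$ (one binary node over $U_{p_{i-1}}$ and $\ell_i$), of height at most~$2$; and I would replace $\ell_m$ by the subtree for $w_{m-1}b_m w_m$, built by nesting two binary nodes~--- an inner one over $\ell_m$ and $U_{p_m}$, and an outer one over $U_{p_{m-1}}$ and that inner node~--- of height at most~$3$. Concatenating these replacements left to right spells $w_0 b_1 w_1\cdots b_m w_m=u$, and since each replacement keeps the root value~$c_i$, the root value of the whole tree is preserved. Because substituting a replacement subtree into a leaf at depth~$d$ yields a tree whose corresponding paths have length $d$ plus the height of that subtree, and every replacement subtree has height at most~$3$, the new computation has height at most $\mathrm{height}(T)+3$, as required. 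The only genuine point to watch~--- the main obstacle if one is careless~--- is precisely this height bound: inserting the unit letters one at a time would blow the height up linearly, so it is essential to collapse each maximal block of~$1$'s into a single idempotent or stabilisation node and to merge the two blocks adjacent to~$\ell_m$ within at most three nested levels.
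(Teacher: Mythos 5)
Your proof is correct and follows essentially the same route as the paper's: decompose $u$ so that each letter of $v$ absorbs its adjacent blocks of $1$'s, realise each such block by a single (idempotent or stabilisation) node of value $1$, wrap this with at most two binary nodes into a gadget of height at most $3$ and root value equal to that of the corresponding leaf of $T$, and substitute these gadgets for the leaves. The paper leaves the construction of these height-$3$ subtrees as "easy"; you have simply supplied the details, including the correct observation that the $1$'s must be grouped rather than inserted one at a time to keep the height increase constant.
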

\begin{proof}
Let $v=a_1\dots a_n$, then $u=u_1\dots u_n$ for $u_i\in 1^*a_i1^*$.
Let $T$ be the $n$-[under/over]-computation for $v$ of value $a$.
It is easy to construct an $n$-[under/over]-computation $T_i$ for $u_i$ of height at most $3$
of value $a_i$. It is then sufficient to plug in $T$ each $T_i$ for the $i$th leave of $T$.
\end{proof}

The consequence of these results is that we can work with sm-[under/over]-computations
as with [under-over]-computations.
For this reason we shall not distinguish further between the two notions unless necessary.

\subsection{Existence of computations: the proof of Theorem~\ref{theorem:exists-computation}}
\mylabel{subsection:proof-exists-computation}

In this section, we establish Theorem~\ref{theorem:exists-computation} which states
that for all words~$u$ over a stabilisation semigroup~$\semigroup$ and all non-negative integers~$n$,
there exists an $n$-computation for~$u$ of height at most~$3|S|$.
Remark that the convention in this context is to measure the height of a tree without counting the leaves.
This result is a form of extension of the factorisation forest theorem
due to Simon \cite{Simon90}:
\begin{theorem}[Simon \cite{Simon90,Simon92}]\mylabel{theorem:factorisation-forest}
Define a \intro{Ramsey factorisation} to be an $n$-computation in the pathological case $n=\infty$
({\it i.e.}, there are no stabilisation nodes, and idempotent nodes are allowed to have arbitrary degree).

\noindent
For all non-empty words~$u$ over a finite semigroup~$\semigroup$, there exists a Ramsey factorisation
for~$u$ of height\footnote{The exact bound of $3|S|-1$ is due to Kufleitner \cite{Kufleitner08}.
It is likely that the same bound could be achieved for Theorem~\ref{theorem:exists-computation}.
We prefer here a simpler proof with a bound of $3|S|$.} at most~$3|\semigroup|-1$.
\end{theorem}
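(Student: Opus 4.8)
The plan is to prove this by induction on $|\semigroup|$, with Green's relations serving as the main structural tool. First I would recall the Green preorders $\leq_{\gR}$ and $\leq_{\gL}$ (where $a\leq_{\gR} b$ means $a\semigroup^1\subseteq b\semigroup^1$), together with the standard Rees ``egg-box'' description of a $\gJ$-class, the fact that $\gJ=\gD$ in a finite semigroup, and the fact that every $\gH$-class containing an idempotent is a group. The one computation I really need is a monotonicity observation: for a word $u=a_1\cdots a_k$ the prefix products $p_i=\pi(a_1\cdots a_i)$ form an $\gR$-descending chain $p_1\geq_{\gR}\cdots\geq_{\gR}p_k$, since appending a letter can only shrink the right ideal generated; dually, suffix products form an $\gL$-descending chain. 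These chains are what let me cut the word.

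The base case $|\semigroup|=1$ is immediate: the single element is idempotent, so the whole word sits under one idempotent node, of height $1\leq 3|\semigroup|-1$. For the inductive step I would build the factorisation in three nested layers, which is exactly where the factor $3$ comes from. \emph{Layer one ($\gR$-drops):} cut $u$ at the positions where the $\gR$-class of the running prefix product strictly decreases. Since these classes form a descending chain, the cuts split $u$ into blocks inside each of which the $\gR$-class of the running product is constant, the block boundaries strictly descending in the $\gR$-order. \emph{Layer two (dual cut):} inside a block I apply the symmetric argument on suffix products to stabilise the $\gL$-coordinate as well, so that the running product stays in a single $\gH$-class.

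\emph{Layer three (the group):} once the running product stays in a fixed $\gH$-class, Green's lemma shows that the relevant right multiplications act as permutations of a group $\gH$-class (the Schützenberger group). A pigeonhole over the boundedly many group elements then produces, in any sufficiently long stretch, two equal prefix products; the product of the intervening factor is the group identity, hence an idempotent $e$. Grouping maximal runs of such factors under a single idempotent node of unbounded arity discharges the group, and the remaining data live in a strictly smaller subsemigroup (a proper ideal $<_{\gJ}$ the current class, or a local monoid $e\semigroup e$), to which the induction hypothesis applies. Composing the three layers, the heights telescope to $\leq 3|\semigroup|$; the extra saving to $3|\semigroup|-1$ comes, as in Kufleitner, from merging one boundary level of the recursion with the layer above it.

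The step I expect to be the main obstacle is the height bookkeeping rather than any single structural fact. Specifically, I must combine the \emph{boundedly many} blocks produced by a layer into the forest at a constant additive cost, not a logarithmic one; this is precisely why idempotent nodes of arbitrary arity (rather than repeated binary products) are indispensable, and why each of the three layers must be shown to shrink the active semigroup by at least one element so that the total stays linear in $|\semigroup|$. Getting these three reductions to line up without double-counting an element---and thereby hitting $3|\semigroup|-1$ exactly---is the delicate part to which Simon's original analysis and Kufleitner's refinement are devoted.
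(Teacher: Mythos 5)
You have correctly reproduced the skeleton common to all known proofs (Green's relations, a factor of $3$ coming from three nested mechanisms, a group pigeonhole), but the way you assemble the layers contains a genuine error at the decisive step. In Layer 3, from two equal prefix products $p_i=p_j$ you conclude that the intervening factor $x=\pi(a_{i+1}\cdots a_j)$ ``is the group identity, hence an idempotent''. This is false: $p_i\cdot x=p_i$ only says that right multiplication by $x$ induces the identity permutation of the $\gH$-class of $p_i$; it does not make $x$ itself an idempotent, nor even an element of the relevant $\gJ$-class. Concretely, take $\semigroup=\{a,0\}$ with $a\cdot a=0$ and $0$ absorbing, and the word $aaa\cdots$: the prefix products are $a,0,0,\dots$, so from position $2$ on they are constant and the $\gH$-class $\{0\}$ is a (trivial) group, yet the intervening one-letter factors all have value $a$, which is not idempotent, so they cannot be gathered under an idempotent node. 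The conclusion ``equal prefixes $\Rightarrow$ idempotent infix'' is only valid once the infix products are known to lie in the group $\gH$-class, i.e.\ once the word is \emph{$J$-smooth} (every infix product in a single $\gJ$-class $J$). Your Layers 1--2 do not provide this: cutting at $\gR$-drops of prefix products and $\gL$-drops of suffix products constrains the \emph{running} products, not the \emph{infix} products, which may live in arbitrarily higher $\gJ$-classes. (A side defect: stabilising the $\gR$-class of prefixes and the $\gL$-class of suffixes does not pin the running product to one $\gH$-class either, since the $\gL$-class of a prefix product is not monotone along the word and is not controlled by suffix products.)

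The bookkeeping obstacle you flag at the end is real, but the fix you sketch is not the one that works: after cutting, the intervening factors are words over all of $\semigroup$, so ``the remaining data live in a strictly smaller subsemigroup (a proper ideal or $e\semigroup e$)'' has no justification. The standard architecture---visible in this paper's proof of Theorem~\ref{theorem:exists-computation}, which extends the present statement to stabilisation---is different: an outer induction on left-right ideals $Z$ splits the word into an \emph{unbounded} number of maximal $J$-smooth blocks (for $J$ a $\leq_\gJ$-maximal class in $Z$) whose values drop into $Z'=Z\setminus J$, then applies the induction hypothesis to the derived one-letter-per-block word $a_1\cdots a_k$ over $Z'$ and composes by substitution, so heights add as $3|J|+3|Z'|=3|Z|$; separately, an inner induction \emph{inside} a single $\gJ$-class (on the set of cut colours, pairs of $\gL$-/$\gR$-classes of prefix/suffix products, ending in the Sch\"utzenberger-group pigeonhole) gives the smooth case, which is exactly the black box Lemma~\ref{lemma:factorisation-J} that this paper imports. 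So blocks are neither boundedly many nor merged at constant cost; their \emph{values} are re-fed to the induction as a shorter word. Note finally that the paper does not prove this theorem at all---it cites Simon and Kufleitner and uses the single-$\gJ$-class lemma as a black box---so the fair comparison is with those proofs, and until your Layer 3 is preceded by a reduction to $J$-smooth words and your recursion is set up on the derived word over a smaller ideal, the argument does not go through.
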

Some proofs of the factorisation forest theorem can be found in \cite{Kufleitner08,TCS10:colcombet-fct,LATA11:colcombet}.
Our proof could follow similar lines as the above one. Instead of that,
we try to reuse as much lemmas as possible from the above constructions.

For proving Theorem~\ref{theorem:exists-computation}, we will need one of Green's relations,
namely the $\gJ$-relation (while there are five relations in general).
Let us fix ourselves a semigroup~$\semigroup$.
We denote by $\semigroup^1$ the semigroup extended (if necessary)
with a neutral element $1$ (this transforms $\semigroup$ into a monoid).
Given two elements $a,b\in S$, $a \leq_\gJ b$ if~$a=x\cdot b\cdot y$ for some~$x,y\in S^1$.
If $a\leq_\gJ b$ and~$b\leq_\gJ b$, then~$a \gJ b$.
We write~$a<_\gJ b$ to denote~$a\leq_\gJ b$ and~$b\not\leq_\gJ a$.
The interested reader can see, {\it e.g.}, \cite{LATA11:colcombet}
for an introduction to the relations of Green (with a proof of the factorisation forest theorem),
or monographs such as~\cite{Lallement79}, \cite{Grillet95} or \cite{Pin86} for deep presentations
of this theory.
Finally, let us call a \intro{regular element}
in a semigroup an element~$a$ such that~$a\cdot x\cdot a=a$ for some~$x\in S^1$.

The next lemma gathers some classical results concerning finite semigroups.
\begin{lemma}\mylabel{lemma:J-regular}
Given a $\gJ$-class $J$ in a finite semigroup, the following facts are equivalent:
\begin{iteMize}{$\bullet$}
\item $J$ contains an idempotent,
\item $J$ contains a regular element,
\item there exist $a,b\in J$ such that $a\cdot b\in J$,
\item all elements in $J$ are regular,
\item all elements in $J$ can be written as $e\cdot c$ for some idempotent $e\in J$,
\item all elements in $J$ can be written as $c\cdot e$ for some idempotent $e\in J$.
\end{iteMize}
Such~$\gJ$-classes are called \intro{regular}. 
\end{lemma}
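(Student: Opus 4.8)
The plan is to prove the six conditions equivalent by establishing a short cycle together with a few one-line implications, isolating the single genuinely non-trivial step. Observe first that conditions (5) and (6) each \emph{assert} the existence of an idempotent $e\in J$ (the witness in the factorisations $e\cdot c$ and $c\cdot e$), so each trivially implies (1); conversely (4) immediately gives (2) since $J$ is non-empty. Hence it suffices to establish the cycle $(1)\Rightarrow(2)\Rightarrow(3)\Rightarrow(1)$ together with the implications $(1)\Rightarrow(4)$, $(1)\Rightarrow(5)$ and $(1)\Rightarrow(6)$.

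The two easy edges of the cycle are direct. For $(1)\Rightarrow(2)$: an idempotent $e$ satisfies $e\cdot e\cdot e=e$, so it is regular. For $(2)\Rightarrow(3)$: given a regular $a\in J$ with $a\cdot x\cdot a=a$ for some $x\in S^1$, set $b=x\cdot a$; then $a=a\cdot b$ gives $a\leq_\gJ b$ while $b=x\cdot a$ gives $b\leq_\gJ a$, so $b\in J$, and $a\cdot b=a\in J$ witnesses (3).

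The heart of the argument is $(3)\Rightarrow(1)$, which I expect to be the main obstacle, as it is the only step genuinely using finiteness and a substantive appeal to Green's theory. Here I would invoke two classical structural facts about finite semigroups. First, \emph{stability}: in a finite semigroup $a\cdot b\gJ a$ forces $a\cdot b\,\gR\,a$, and $a\cdot b\gJ b$ forces $a\cdot b\,\gL\,b$. Applied to the hypothesis $a,b,a\cdot b\in J$, this yields $a\cdot b\,\gR\,a$ and $a\cdot b\,\gL\,b$. Second, the Clifford--Miller location lemma: whenever $a\cdot b\,\gR\,a$ and $a\cdot b\,\gL\,b$, the intersection $R_b\cap L_a$ contains an idempotent $e$. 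Since $e\,\gR\,b$ and $b\in J$, this idempotent lies in $J$, which proves (1).

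For the remaining implications, fix an idempotent $e\in J$. For $(1)\Rightarrow(4)$ I would use the standard fact that in a finite semigroup regularity is a property of the $\gD=\gJ$-class: a $\gJ$-class containing one idempotent has an idempotent in every $\gR$-class and in every $\gL$-class, and consequently every element of $J$ is regular. For $(1)\Rightarrow(5)$, given any $a\in J$, pick an idempotent $e$ in its $\gR$-class $R_a$ (which exists by the same fact and lies in $J$); then $a\,\gR\,e$ together with $e\cdot e=e$ gives $e\cdot a=a$, so $a=e\cdot a$ has the desired form. The implication $(1)\Rightarrow(6)$ is symmetric, using an idempotent $f$ in $L_a$ and the relation $a\cdot f=a$.
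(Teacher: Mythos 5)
The paper does not actually prove this lemma: it is stated with the sentence ``The next lemma gathers some classical results concerning finite semigroups'' and no argument is given, the reader being referred implicitly to the standard literature (Lallement, Grillet, Pin). So there is no proof in the paper to compare yours against; what matters is whether your blind reconstruction is correct, and it is. The logical skeleton is complete: every condition reaches (1) (via $(5)\Rightarrow(1)$, $(6)\Rightarrow(1)$, $(4)\Rightarrow(2)\Rightarrow(3)\Rightarrow(1)$), and (1) reaches every other condition. Each edge checks out: in $(2)\Rightarrow(3)$ the element $b=x\cdot a$ indeed satisfies $b\leq_\gJ a$ and $a=a\cdot b$, hence $b\in J$ and $a\cdot b=a\in J$; in $(3)\Rightarrow(1)$ stability correctly upgrades $a\cdot b\gJ a$ and $a\cdot b\gJ b$ to $a\cdot b\in R_a\cap L_b$, and the Clifford--Miller location theorem then places an idempotent in $R_b\cap L_a\subseteq J$; and in $(1)\Rightarrow(5)$ the identity $e\cdot a=a$ for an idempotent $e\gR a$ follows from writing $a=e\cdot u$. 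Two small remarks. First, your claim that $(3)\Rightarrow(1)$ is the \emph{only} step using finiteness is not quite accurate: $(1)\Rightarrow(4)$ also needs $\gD=\gJ$ (equivalently, stability) to treat the $\gJ$-class $J$ as a $\gD$-class before invoking the fact that regularity and the presence of idempotents in each $\gR$- and $\gL$-class propagate across a $\gD$-class. Second, your proof leans on three classical ingredients (stability, the location theorem, regularity of $\gD$-classes) that are themselves at the same level of ``folklore'' as the lemma being proved; this is entirely in the spirit of the paper, which treats the whole statement as a black box, but a fully self-contained proof would require unfolding those as well.
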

We will use the following technical lemma.
\begin{lemma}\mylabel{lemma:J-idempotent}
If $f=e\cdot x\cdot e$ for $e\gJ f$ two idempotents, then $e=f$.
\end{lemma}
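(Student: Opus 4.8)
The plan is to reduce the statement to a stability property of finite semigroups, after which it collapses to a one-line computation. First I would record two elementary identities that follow purely from $e\cdot e=e$ and the hypothesis $f=e\cdot x\cdot e$. Multiplying on the left and on the right by $e$ and cancelling the duplicated idempotent gives
$$e\cdot f = f = f\cdot e,$$
since $e\cdot f=(e\cdot e)\cdot x\cdot e=e\cdot x\cdot e=f$ and $f\cdot e=e\cdot x\cdot(e\cdot e)=e\cdot x\cdot e=f$. Second, I would observe that $f$ lies in the right ideal generated by $e$: writing $f=e\cdot(x\cdot e)\in e\cdot S^1$ we get $f\cdot S^1\subseteq e\cdot S^1$, that is, $f\leq_\gR e$ in the sense of Green.

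The decisive step is then to upgrade this one-sided domination to an $\gR$-equivalence using finiteness. In a finite semigroup $\semigroup$ the relations $\gJ$ and $\gD$ coincide and the semigroup is \emph{stable}: whenever two elements are $\gJ$-equivalent and one lies in the right ideal generated by the other, they are already $\gR$-equivalent. Applying this to the hypothesis $f\gJ e$ together with $f\leq_\gR e$ yields $f\gR e$, and in particular $e\in f\cdot S^1$, say $e=f\cdot s$ for some $s\in S^1$. Now I would simply combine the two facts obtained so far: from $e=f\cdot s$ and $f\cdot f=f$ I compute $f\cdot e=f\cdot f\cdot s=f\cdot s=e$, whereas the first identity above gives $f\cdot e=f$; comparing the two forces $e=f$, as required.

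The only genuinely non-routine ingredient is the stability of finite semigroups invoked in the middle step (equivalently, the classical fact that inside a single $\gJ$-class the preorder $\leq_\gR$ collapses to the equivalence $\gR$). This is standard and can be quoted from the monographs already cited in the text \cite{Lallement79,Grillet95,Pin86}, or derived from $\gJ=\gD$ in the finite case. If a more self-contained route is preferred, this step can be avoided by the symmetric version of the second observation, namely $f\leq_\gL e$ (since $f=(e\cdot x)\cdot e\in S^1\cdot e$): together with $f\gR e$ this places $e$ and $f$ in a common $\gH$-class, which—containing the idempotent $e$—is a group and therefore has a unique idempotent, again giving $e=f$.
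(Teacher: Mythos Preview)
Your proof is correct and rests on the same key ingredient as the paper's argument: the stability of finite semigroups, i.e., that $f\leq_\gR e$ (or $f\leq_\gL e$) together with $f\gJ e$ forces $f\gR e$ (resp.\ $f\gL e$). The paper applies stability on both sides to obtain $f\gH e$ and then invokes the uniqueness of idempotents in an $\gH$-class---exactly the alternative route you sketch at the end---whereas your main line applies stability only once and finishes by the direct computation $f\cdot e=e$ versus $f\cdot e=f$; both conclusions are equally short once stability is in hand.
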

\begin{proof}
We use some standard results concerning finite semigroups.
The interested reader can find the necessary material for instance in \cite{Pin86}.
Let us just recall that the relations $\leq_\gL$, $\leq_\gR$ and $\gL$ and $\gR$
are the one-sided variants of $\leq_\gJ$ and $\gJ$ ($\gL$ stands for ``left'' and $\gR$ for ``right'').
Namely, $a\leq_\gL b$ ({\it resp.} $a\leq_\gR b$) holds if $a=x\cdot b$ for some $x\in S^1$ ({\it resp.} $a=b\cdot x$),
and $\gL=\leq_\gL\cap\geq_\gL$ ({\it resp.} $\gR=\leq_\gR\cap\geq_\gR$).
Finally, $\gH=\gL\cap\gR$.

The proof is very short. By definition $f\leq_\gL e$ since $e\cdot x\cdot e=f$.
Since by assumption $f\gJ e$, we obtain $f\gL e$
(a classical result in \emph{finite} semigroups).
In a symmetric way $f\gR e$. Thus $f\gH e$. Since an $\gH$-class contains at most
one idempotent, $f=e$ (it is classical than any $\gH$-class, when containing an idempotent, has a group structure; since
groups contain exactly one idempotent element, this is the only one).
\end{proof}
The next lemma shows that the stabilisation operation behaves in a very uniform way
inside $\gJ$-classes (similar arguments can be found in the works of Leung, Simon and Kirsten).
\begin{lemma}\mylabel{lemma:stabilisation-preserve-J}
If $e\gJ f$ are idempotents, then $e^\sharp \gJ f^\sharp$. Furthermore, if $e=x\cdot f\cdot y$
for some $x,y$, then $e^\sharp=x\cdot f^\sharp\cdot y$.
\end{lemma}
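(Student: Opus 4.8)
The plan is to derive the ``furthermore'' equation from the consistency axiom of the definition of a stabilisation semigroup, and then obtain $e^\sharp\gJ f^\sharp$ as an easy symmetric corollary. Suppose $e=x\cdot f\cdot y$ with $e\gJ f$ idempotents, and set $p=x\cdot f$ and $q=f\cdot y$. Then immediately $p\cdot q=x\cdot f\cdot f\cdot y=x\cdot f\cdot y=e$. Consistency states that if $p\cdot q$ and $q\cdot p$ are both idempotent then $(p\cdot q)^\sharp=p\cdot(q\cdot p)^\sharp\cdot q$; so if I can show $q\cdot p=f$, I will get $e^\sharp=p\cdot f^\sharp\cdot q=(x\cdot f)\cdot f^\sharp\cdot(f\cdot y)=x\cdot f^\sharp\cdot y$, using $f\cdot f^\sharp\cdot f=f^\sharp$. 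Thus the whole statement rests on the single identity $q\cdot p=f$.

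The main work, and the main obstacle, is establishing $q\cdot p=f$. Note that $q\cdot p=f\cdot(y\cdot x)\cdot f$, i.e.\ it already has the shape $f\cdot z\cdot f$ required by Lemma~\ref{lemma:J-idempotent}; hence it suffices to prove that $q\cdot p$ is an idempotent lying in the $\gJ$-class of $f$, and the lemma then forces $q\cdot p=f$. For idempotency I would first use stability in the finite semigroup: since $e=p\cdot y$ gives $e\leq_\gR p$, while $p=x\cdot f\leq_\gJ f\gJ e$ shows $p\gJ e$, stability yields $e\gR p$, so the idempotent $e$ is a left identity for $p$, i.e.\ $e\cdot p=p$. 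With this, $(q\cdot p)\cdot(q\cdot p)=q\cdot(p\cdot q)\cdot p=q\cdot e\cdot p=q\cdot p$, so $q\cdot p$ is idempotent. For the $\gJ$-equivalence I would argue directly, avoiding further structure theory: from $q\cdot p=q\cdot e\cdot p$ I read $q\cdot p\leq_\gJ e$, and from $e=e\cdot e=(p\cdot q)\cdot(p\cdot q)=p\cdot(q\cdot p)\cdot q$ I read $e\leq_\gJ q\cdot p$; hence $q\cdot p\gJ e\gJ f$, as required.

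With $p\cdot q=e$ and $q\cdot p=f$ both idempotent, the consistency axiom yields $e^\sharp=x\cdot f^\sharp\cdot y$ exactly as computed above, proving the second assertion. For the first assertion, $e\gJ f$ provides witnesses $e=x\cdot f\cdot y$ and $f=x'\cdot e\cdot y'$; applying the just-proved equation in both directions gives $e^\sharp=x\cdot f^\sharp\cdot y\leq_\gJ f^\sharp$ and $f^\sharp=x'\cdot e^\sharp\cdot y'\leq_\gJ e^\sharp$, whence $e^\sharp\gJ f^\sharp$. I expect the delicate point to be precisely the idempotency and $\gJ$-membership of $q\cdot p$, which is where finiteness (through stability and Lemma~\ref{lemma:J-idempotent}) is genuinely used; everything else is formal manipulation with the consistency equation.
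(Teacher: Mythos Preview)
Your argument is correct and follows essentially the same strategy as the paper: split $e=(x\cdot f)\cdot(f\cdot y)$, identify the reversed product with $f$ via Lemma~\ref{lemma:J-idempotent}, and then apply the consistency axiom. The only notable difference is in how the idempotency of the reversed product is obtained: the paper works with $f'=f\cdot y\cdot e\cdot x\cdot f$ and checks $f'\cdot f'=f'$ by a direct computation, whereas you work with $q\cdot p=f\cdot y\cdot x\cdot f$ and deduce $(q\cdot p)^2=q\cdot e\cdot p=q\cdot p$ after first establishing $e\cdot p=p$ through stability ($e\gR p$). Your route is slightly more structural but equally short, and it has the advantage that $q\cdot p$ is exactly the element appearing in the consistency axiom, so no further identification is needed.
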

\begin{proof}
For the second part,
assume $e=x\cdot f\cdot y$ and $e\gJ f$. Let $f'=(f\cdot y\cdot e\cdot x\cdot f)$.
We easily check $f'\cdot f'=f'$.
Furthermore $f\gJ e=(x\cdot f\cdot y)\cdot e\cdot  (x\cdot f\cdot y)\leq_\gJ f'\leq_\gJ f$. Hence $f\gJ f'$.
It follows by Lemma~\ref{lemma:J-idempotent} that $f'=f$.
We now compute $e^\sharp = (x\cdot f\cdot f\cdot y)^\sharp=x\cdot f\cdot(f\cdot x\cdot y\cdot f)^\sharp\cdot f\cdot y=x\cdot f\cdot f^\sharp\cdot f\cdot y=x\cdot f^\sharp\cdot y$ (using consistency and $f=f'$).

This proves that $e\gJ f$ implies $e^\sharp\leq_\gJ f^\sharp$. Using symmetry, we obtain $e^\sharp\gJ f^\sharp$.
\end{proof}
Hence, if $J$ is a regular $\gJ$-class, there exists a unique $\gJ$-class $J^\sharp$
which contains $e^\sharp$ for one/all idempotents $e\in J$. If $J=J^\sharp$,
then $J$ is called \intro{stable}, otherwise, it is called \intro{unstable}.
The following lemma shows that stabilisation is trivial over stable $\gJ$-classes.
\begin{lemma}\mylabel{lemma:stable}
If $J$ is a stable $\gJ$-class, then $e^\sharp=e$ for all idempotents $e\in J$.
\end{lemma}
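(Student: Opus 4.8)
The plan is to reduce the statement to a single application of Lemma~\ref{lemma:J-idempotent}. First I would record two elementary facts about the stabilisation operator. Since $\sharp$ maps $E(\semigroup)$ into $E(\semigroup)$, the element $e^\sharp$ is itself an idempotent. Moreover, among the consequences of the axioms already noted in the discussion following the definition of stabilisation semigroups, we have $e^\sharp=e\cdot e^\sharp=e^\sharp\cdot e$; chaining these two equalities yields the key identity
$$e^\sharp=e\cdot e^\sharp\cdot e\ .$$

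Next I would invoke the hypothesis that $J$ is stable. By the definition introduced just before the lemma, stability of $J$ means exactly that $e^\sharp\gJ e$, so $e$ and $e^\sharp$ are two idempotents lying in the same $\gJ$-class. The identity above now presents $e^\sharp$ in precisely the shape $e\cdot x\cdot e$ demanded by Lemma~\ref{lemma:J-idempotent}, taking $x=e^\sharp$ and letting the element called $f$ in that lemma be $e^\sharp$.

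Applying Lemma~\ref{lemma:J-idempotent} to the idempotents $e$ and $f=e^\sharp$, which satisfy $f=e\cdot x\cdot e$ and $e\gJ f$, then forces $e=e^\sharp$, which is exactly the claimed equality for an arbitrary idempotent $e\in J$.

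I do not foresee any real obstacle here: the whole content is carried by Lemma~\ref{lemma:J-idempotent}, and the only observations to make are that stability supplies the $\gJ$-equivalence $e\gJ e^\sharp$, while the basic consequences of the stabilisation axioms supply the factorisation $e^\sharp=e\cdot e^\sharp\cdot e$ that matches the hypothesis of that lemma. The one point worth stating explicitly is that $e^\sharp$ is indeed idempotent, so that Lemma~\ref{lemma:J-idempotent} applies with $f=e^\sharp$.
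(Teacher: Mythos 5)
Your proof is correct and is essentially the paper's own argument: the paper's one-line proof also derives $e^\sharp=e\cdot e^\sharp\cdot e$ and applies Lemma~\ref{lemma:J-idempotent}, leaving implicit the points you spell out (that $e^\sharp$ is idempotent and that stability supplies $e\gJ e^\sharp$). Nothing to add.
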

\begin{proof}
Indeed, we have $e^\sharp=e\cdot e^\sharp\cdot e$ and thus by Lemma~\ref{lemma:J-idempotent}, $e^\sharp=e$. 
\end{proof}
The situation is different for unstable $\gJ$-classes.
In this case, the stabilisation always goes down in the $\gJ$-order.
\begin{lemma}\mylabel{lemma:unstable}
If $J$ is an unstable $\gJ$-class, then $e^\sharp<_\gJ e$ for all idempotents $e\in J$.
\end{lemma}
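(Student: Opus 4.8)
The plan is to observe that one of the two $\gJ$-inequalities is automatic, so that only the strictness needs attention, and that strictness is forced by the very definition of instability. First I would recall the consequences of the axioms noted earlier, namely that $e^\sharp=e\cdot e^\sharp=e^\sharp\cdot e$ for every idempotent $e$, and hence $e^\sharp=e\cdot e^\sharp\cdot e$. This algebraic factorisation directly witnesses $e^\sharp\leq_\gJ e$ (take $x=y=e$ in the definition of $\leq_\gJ$). It is worth stressing that this $\gJ$-comparison comes from the factorisation, \emph{not} from the order inequality $e^\sharp\leq e$; the partial order plays no role here. After this, the whole content of the lemma reduces to ruling out $e\leq_\gJ e^\sharp$.

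For that strictness I would argue by contradiction. Suppose $e\leq_\gJ e^\sharp$; combined with $e^\sharp\leq_\gJ e$ this gives $e\gJ e^\sharp$, so $e^\sharp$ lies in the $\gJ$-class $J$ of $e$. By Lemma~\ref{lemma:stabilisation-preserve-J} the class $J^\sharp$ of $e^\sharp$ is well defined, and $e^\sharp\in J$ forces $J^\sharp=J$, i.e.\ $J$ is stable, contradicting the hypothesis. Hence $e\not\leq_\gJ e^\sharp$, and together with $e^\sharp\leq_\gJ e$ we conclude $e^\sharp<_\gJ e$. Since nothing in the argument depends on the particular idempotent beyond its lying in $J$, the conclusion holds uniformly for all idempotents $e\in J$.

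I do not expect a genuine obstacle here: the real work is already absorbed into Lemma~\ref{lemma:stabilisation-preserve-J}, which makes ``$e^\sharp\in J$'' and ``$J$ stable'' synonymous, and into the identity $e^\sharp=e\cdot e^\sharp\cdot e$. The only point I would be careful to state cleanly is that instability ($J\neq J^\sharp$) is exactly the negation of the situation $e^\sharp\gJ e$, so the strict drop in the $\gJ$-order is immediate once the trivial direction $e^\sharp\leq_\gJ e$ is in place.
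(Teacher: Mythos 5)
Your proof is correct and follows essentially the same route as the paper's: the factorisation $e^\sharp=e\cdot e^\sharp$ (or $e\cdot e^\sharp\cdot e$) gives $e^\sharp\leq_\gJ e$, and instability is precisely the failure of $e\gJ e^\sharp$, which yields strictness by the definition of $<_\gJ$. The paper states this more tersely, but the content is identical.
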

\begin{proof}
Since $e^\sharp=e\cdot e^\sharp$, it is always the
case that $e^\sharp\leq_\gJ e$. Assuming $J$ is unstable means that $e\gJ e^\sharp$
does not hold, which in turn implies $e^\sharp<_\gJ e$.
\end{proof}

We say that a word~$u=a_1\dots a_n$ in $S^+$
is \intro{$J$-smooth}, for $J$ a $\gJ$-class,
if $u\in J^+$, and $\pi(u)\in J$.
It is equivalent to say that  $\pi(a_ia_{i+1}\cdots a_j)\in J$ for all $1\leq i<j\leq n$.
Indeed for all $1\leq i<j\leq n$, $a_i\gJ \pi(a_1\dots a_n)\leq_\gJ \pi(a_ia_{i+1}\cdots a_j)\leq_\gJ a_i\in J$.
Remark that, according to Lemma~\ref{lemma:J-regular}, if $J$ is irregular,
$J$-smooth words have length at most $1$. 
We will use the following lemma from \cite{LATA11:colcombet} as a black-box.
This is an instance of the factorisation forest theorem, but restricted to
a single $\gJ$-class.
\begin{lemma}[{Lemma~14 in \cite{LATA11:colcombet}}]\mylabel{lemma:factorisation-J}
Given a finite semigroup $\semigroup$, one of its $\gJ$-classes $J$, and a $J$-smooth word~$u$,
there exists a Ramsey factorisation for $u$ of height at most $3|J|-1$. 
\end{lemma}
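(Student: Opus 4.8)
Lemma~\ref{lemma:factorisation-J} is the restriction of Simon's factorisation forest theorem (Theorem~\ref{theorem:factorisation-forest}) to a single $\gJ$-class, and I would prove it by localising the Green-relation structure rather than invoking the global bound. First I would dispose of the irregular case: by Lemma~\ref{lemma:J-regular}, if $J$ is irregular it contains no idempotent and no two of its elements multiply back into $J$, so a $J$-smooth word has length at most $1$; a single leaf is then a Ramsey factorisation of height $0\leq 3|J|-1$. So assume $J$ regular. The crucial structural input is stability of finite semigroups: since $u=a_1\cdots a_l$ is $J$-smooth, every prefix product $\pi(a_1\cdots a_i)$ stays in $J$, hence all prefix products share a common $\gR$-class $R$, and symmetrically all suffix products $\pi(a_i\cdots a_l)$ share a common $\gL$-class $L$ (this is exactly the ``$a\gJ ab\Rightarrow a\gR ab$'' phenomenon already exploited in Lemma~\ref{lemma:J-idempotent}).

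\textbf{Construction.} Working inside the fixed $\gR$-class $R$, right multiplication by a letter permutes the finitely many $\gH$-classes $R\cap L_1,\dots,R\cap L_m$ of $R$ (each a coset of the Schützenberger group of common size $g$, with $|J|$ governed by $m$, $g$ and the number of $\gR$-classes). I would track the trajectory $H(p_1),\dots,H(p_l)$ of $\gH$-classes of the prefixes $p_i=\pi(a_1\cdots a_i)$ and build the tree in layers: a first layer grouping positions according to the returns of this trajectory to a fixed $\gH$-class, a ``group'' layer in which a pigeonhole argument on the $g$ group elements produces a factor of idempotent value $e$ repeated at least twice, enabling an \emph{idempotent node}, and a recursion on the shorter quotient word, which now lives in a strictly smaller part of the Green structure of $J$; the length-$1$ base case terminates it. As a quick sanity check yielding a nearby bound, one can instead pass to the principal factor $J^0=J\cup\{0\}$ (the product of two elements of $J$ is kept if it stays in $J$ and sent to $0$ otherwise, which is readily seen to be associative), observe that a $J$-smooth word evaluates without ever using $0$, and apply Theorem~\ref{theorem:factorisation-forest} to $J^0$; this gives height $\leq 3|J^0|-1=3|J|+2$, the surplus $3$ being exactly the cost charged to the unused element $0$.

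\textbf{Main obstacle.} The real difficulty is not the existence of the factorisation but hitting the exact bound $3|J|-1$. This requires orchestrating the recursion so that each descent through the Green structure (the $\gH$-class trajectory, then the group layer) adds at most $3$ to the height, and so that the quantity on which the induction decreases is additive enough for the layered heights to sum to $3|J|-1$ rather than to $3|J|+O(1)$. Getting this accounting right is precisely the content of Lemma~14 of \cite{LATA11:colcombet}, and is the step I would expect to consume essentially all of the work; the principal-factor reduction above already shows the bound is correct up to the additive constant, and the sharp constant arises exactly from refusing to ``pay'' for the $\gJ$-classes (here, the adjoined zero) that a smooth word never visits.
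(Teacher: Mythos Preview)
The paper does not prove this lemma: it is imported verbatim as ``Lemma~14 in \cite{LATA11:colcombet}'' and used as a black box in the proof of Theorem~\ref{theorem:exists-computation}. So there is no in-paper argument to compare your proposal against.

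That said, your sketch is essentially the right one and matches the strategy of the cited source: dispose of irregular $J$ trivially, then for regular $J$ exploit stability to confine all prefix products to a single $\gR$-class and all suffix products to a single $\gL$-class, and run a layered induction through the $\gH$-classes and the Sch\"utzenberger group to manufacture idempotent nodes. Your identification of the difficulty---getting the additive bookkeeping to land exactly on $3|J|-1$ rather than $3|J|+O(1)$---is accurate; that is precisely where the work in \cite{LATA11:colcombet} (and in Kufleitner's sharpening \cite{Kufleitner08}) lies.

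One caution on your ``sanity check'' via the principal factor $J^0$: invoking Theorem~\ref{theorem:factorisation-forest} here is logically delicate, because in the standard developments the global $3|S|-1$ bound is itself obtained by summing the per-$\gJ$-class bounds of Lemma~\ref{lemma:factorisation-J} over the $\gJ$-classes visited. Within this paper there is no circularity since both statements are cited externally, but as an independent proof your principal-factor reduction would need a version of the forest theorem proved without this lemma. The argument is otherwise sound (smoothness guarantees no node value ever drops to the adjoined $0$), and the $3|J|+2$ you obtain is the correct near-miss.
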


Remark that Ramsey factorisations and $n$-computations do only differ on what is allowed for a node
of large degree, {\it i.e.}, above $n$. That is why our construction makes use of Lemma~\ref{lemma:factorisation-J}
to produce Ramsey factorisations, and then based on the presence of nodes of large degree, constructs
a computation by gluing pieces of Ramsey factorisations together.
\begin{lemma}\mylabel{lemma:computation-base}
Let $J$ be a $\gJ$-class, $u$ be a $J$-smooth word, and $n$ be some non-negative integer.
Then one of the two following items holds:
\begin{enumerate}[(1)]
\item there exists an $n$-computation for $u$ of value $\pi(u)$ and height at most $3|J|-1$, or;
\item there exists an $n$-computation for some non-empty prefix $w$ of $u$ of value\footnote{A closer
		inspection would reveal that $a\in J^\sharp$. This extra information is useless for our purpose.}
	$a<_\gJ J$ and height at most $3|J|$.
\end{enumerate}
\end{lemma}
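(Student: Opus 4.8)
The plan is to take the Ramsey factorisation supplied by Lemma~\ref{lemma:factorisation-J} and repair it into an $n$-computation, the only defect being the idempotent nodes whose degree exceeds $n$ (these are forbidden in $n$-computations and morally should become stabilisation nodes). First I would dispose of the degenerate case: if $J$ is irregular then by Lemma~\ref{lemma:J-regular} every $J$-smooth word has length at most $1$, so a single leaf is the desired computation and item~(1) holds. So assume $J$ is regular and fix a Ramsey factorisation $R$ for $u$ of height at most $3|J|-1$. Since $u$ is $J$-smooth, the value of each node of $R$ is the product of a factor of $u$ and hence lies in $J$; in particular every idempotent node carries an idempotent $e\in E(\semigroup)\cap J$.

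Next I would split on stability. If $J$ is stable, Lemma~\ref{lemma:stable} gives $e^\sharp=e$ for every idempotent $e\in J$, so I may simply relabel each idempotent node of degree $>n$ as a stabilisation node without altering any value; the result is an $n$-computation for $u$ of value $\pi(u)$ and height at most $3|J|-1$, which is item~(1). If $J$ is unstable but $R$ contains no idempotent node of degree $>n$ (degree-$2$ nodes being harmlessly read as binary nodes), then $R$ is already an $n$-computation of value $\pi(u)$, again item~(1). The genuinely interesting situation is $J$ unstable together with the presence of an idempotent node of degree exceeding $n$.

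In that case I would argue by induction on $|u|$. Choose a node $x$ of maximal depth among the idempotent nodes of degree $>n$; by maximality all proper descendants of $x$ have degree at most $n$, so each of its child subtrees is already a bona fide $n$-computation, and they share a common idempotent value $e\in J$. Relabelling the root $x$ by $e^\sharp$ turns its subtree into an $n$-computation $T_x$ of value $e^\sharp$, which by Lemma~\ref{lemma:unstable} satisfies $e^\sharp<_\gJ e$, hence $e^\sharp<_\gJ J$; its height is still at most $3|J|-1$. Let $[i,j]$ be the leaf interval of $x$, so $T_x$ computes $a_i\cdots a_j$. If $i=1$ this already witnesses item~(2) for the prefix $a_1\cdots a_j$. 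If $i>1$, the strictly shorter word $a_1\cdots a_{i-1}$ is again $J$-smooth, so the induction hypothesis applies: either it yields item~(2) for some prefix, which is at once item~(2) for $u$, or it yields an $n$-computation $P$ for $a_1\cdots a_{i-1}$ of value $p=\pi(a_1\cdots a_{i-1})\in J$ of height at most $3|J|-1$. In the latter case I glue $P$ and $T_x$ under a single binary node of value $p\cdot e^\sharp$, obtaining an $n$-computation for the prefix $a_1\cdots a_j$; since products can only descend in the $\gJ$-order, $p\cdot e^\sharp\leq_\gJ e^\sharp<_\gJ J$, and the extra node raises the height to at most $3|J|$, exactly matching item~(2).

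The step I expect to be the main obstacle is the treatment of the prefix lying to the left of the cut node $x$: the part of $R$ covering it is not itself a subtree and may still contain further large-degree nodes, so I cannot read an $n$-computation off $R$ directly. Routing it through the induction hypothesis rather than through $R$ is what makes the argument close, and the crucial quantitative point is that paying once for a stabilisation already drops the value strictly below $J$ in the $\gJ$-order, a property that survives any subsequent left-multiplication.
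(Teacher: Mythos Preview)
Your proof is correct and follows essentially the same strategy as the paper: treat the irregular and stable cases trivially, and in the unstable case locate a deepest large-degree node, turn its subtree into a stabilisation node with value $e^\sharp<_\gJ J$, and glue it to an $n$-computation for the part to the left. The one difference is how you obtain the latter: you recurse via induction on $|u|$, whereas the paper instead chooses at the outset the \emph{minimal} prefix admitting a Ramsey factorisation with a large node, so that by minimality the segment to the left of the cut automatically has a big-node-free factorisation and no recursion is needed; the two arguments are interchangeable.
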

\begin{proof}
Remark that if $J$ is irregular, then $u$ has length $1$ by Lemma~\ref{lemma:J-regular},
and the result is straightforward.
Remark also that if $J$ is stable, and since the stabilisation is trivial in stable $\gJ$-classes
(Lemma~\ref{lemma:stable}),
every Ramsey factorisations for $u$ of height at most
$3|J|-1$ (which exist by Lemma~\ref{lemma:factorisation-J}) is in fact $n$-computations for $u$.

The case of $J$ unstable remains.
Let us say that a node in a factorisation is \intro{big} if its degree is more than $n$.
Our goal is to ``correct'' the value of big nodes.
If there is a Ramsey factorisation for $u$ which has no big node, then it can be seen
as an $n$-computation, and once more the first conclusion of the lemma holds.

Otherwise, consider the least non-empty prefix $u'$ of $u$ for which there is a  Ramsey factorisation of height 
at most $3|J|-1$ which contains a big node. Let $F$ be such a factorisation and $x$ be a big node in $F$
which is maximal for the descendant relation (there are no other big nodes below).
Let $F'$ be the subtree of $F$ rooted in $x$.
This decomposes $u'$ into $vv'v''$ where $v'$ is the factor of $u'$ for which $F'$ is a Ramsey factorisation.
For this $v'$, it is easy to transform $F'$ into an $n$-computation $T'$ for $v'$: just replace the label $e$ of the
root of $F'$ by $e^\sharp$. Indeed, since there are no other big nodes in $F'$ than the root, the root is the only place
which prevents $F'$ from being an $n$-computation. Remark that from Lemma~\ref{lemma:unstable},
the value of $F'$ is $<_\gJ J$.

If $v$ is empty, then $v'$ is a prefix of $u$, and $F'$ an $n$-computation for it. The second conclusion of the lemma holds.

Otherwise, by the minimality assumption and Lemma~\ref{lemma:factorisation-J},
there exists a Ramsey factorisation $T$ for $v$ of height at most $3|J|-1$ which contains no
big node. Both $T$ and $T'$ being $n$-computations of height at most $3|J|-1$,
it is easy to combine them into an $n$-computation of height at most $3|J|$ for $vv'$.
This is an $n$-computation for $vv'$, which inherits from $F'$ the property that its value
is $<_\gJ J$. It proves that the second conclusion of the lemma holds.
\end{proof}

We are now ready to establish Theorem~\ref{theorem:exists-computation}.
 \begin{proof}
 The proof is by induction on the size of a left-right-ideal $Z\subseteq S$, {\it i.e.}, 
$S^1\cdot Z\cdot S^1\subseteq Z$ (remark that a left-right-ideal is a union of $\gJ$-classes).
We establish by induction on the size of $Z$ the following induction hypothesis:
\begin{center}
\emph{IH:} for all words $u\in Z^++Z^*S$ there exists an $n$-computation
of height at most $3|Z|$ for $u$.
 \end{center}
 Of course, for $Z=S$, this proves  Theorem~\ref{theorem:exists-computation}.
 
The base case is when $Z$ is empty, then $u$ has length $1$, and a single node tree
establish the first conclusion of the induction hypothesis 
(recall that the convention is that the leaves do not count in the height,
and as a consequence a single node tree has height $0$).

Otherwise, assume $Z$ non-empty. There exists a maximal $\gJ$-class $J$
(maximal for $\leq_\gJ$) included in $Z$.
From the maximality assumption, we can check that $Z'=Z\setminus J$ is again a left-right-ideal.
Remark also that since $Z$ is a left-right-ideal, it is downward closed for $\leq_\gJ$.
This means in particular that every element $a$ such that $a<_\gJ J$ belongs to $Z'$.

\emph{Claim:} We claim ($\star$) that for all words $u\in Z^++Z^*S$,
\begin{enumerate}[(1)]
\item either there exists an $n$-computation of height $3|J|$ for $u$, or;
\item there exists an $n$-computation of height at most $3|J|$ for some non-empty prefix
of $u$ of value in $Z'$.
\end{enumerate}

Let $w$ be the longest $J$-smooth prefix of $u$. If there exists no such non-empty prefix, this
means that the first letter $a$  of $u$ does not belong to $J$.
Two subcases can happen.
If $u$ has length $1$, this means that $u=a$, and thus $a$ is an $n$-computation witnessing
the first conclusion of ($\star$). Otherwise $u$ has length at least $2$, and thus $a$ belongs to $Z$.
Since furthermore it does not belong to $J$, it belongs to $Z'$. In this case, $a$ is an $n$-computation
witnessing the second conclusion of ($\star$).

Otherwise, according to Lemma~\ref{lemma:computation-base} applied to $w$, two situations can occur.
The first case is when there is an $n$-computation $T$ for $w$ of value $\pi(w)$ and height
at most $3|J|-1$.
There are several sub-cases.  If $u=w$, of course, the $n$-computation $T$ is a witness
that the first conclusion of ($\star$) holds.
Otherwise, there is a letter $a$ such that $wa$ is a prefix of $u$. 
If $wa=u$, then $\pi(wa)[T,a]$ is an $n$-computation for $wa$ of height at most
$3|J|$, witnessing that the first conclusion of ($\star$) holds.
Otherwise, $a$ has to belong to $Z$ (because all letters of $u$
have to belong to $Z$ except possibly the last one).
But, by maximality of $w$ as a $J$-smooth prefix,
either $a\in Z'$, or $\pi(wa)\in Z'$. Since $Z'$ is a left-right-ideal, $a\in Z'$
implies $\pi(wa)\in Z'$. Then,  $\pi(wa)[T,a]$ is an $n$-computation
for $wa$ of height at most $3|J|$ and value $\pi(wa)\in Z'$. This time,
the second conclusion of ($\star$) holds.

The second case according to Lemma~\ref{lemma:computation-base}
is when there exists a prefix $v$ of $w$
for which there is an $n$-computation of height at most $3|J|$ of value $<_\gJ J$.
In this case, $v$ is also a prefix of $u$, and the value of this computation is in $Z'$.
Once more the second conclusion of ($\star$) holds.
This concludes the proof of Claim~($\star$).
\smallskip

As long as the second conclusion of the claim ($\star$)
applied on the word $u$ holds, this decomposes $u$ into $v_1u'$,
and we can proceed with $u'$.
In the end, we obtain that all words $u\in Z^++Z^*S$ can
be decomposed into $u_1\dots u_k$ such that
there exist $n$-computations $T_1,\dots,T_k$ of height at most $3|J|$
for $u_1,\dots,u_k$ respectively, and such that the values of $T_1,\dots,T_{k-1}$
all belong to $Z'$ (but not necessarily the value of $T_k$).
Let $a_1,\dots,a_k$ be the values of $T_1,\dots,T_k$ respectively.
The word $a_1\dots a_k$ belongs to $Z'^++Z'^*S$.
Let us apply the induction hypothesis to the word $a_1\dots a_k$. We obtain
an $n$-computation $T$ for $a_1\dots a_k$ of height at most $3|Z'|$.
By simply substituting $T_1,\dots,T_k$ to the leaves of $T$, we obtain an
$n$-computation for $u$ of height at most $3|J|+3|Z'|=3|Z|$. (Remark once more
here that the convention is to not count the leaves in the height. Hence the height after
a substitution is bounded by the sum of the heights.)
\end{proof}

\subsection{Comparing computations: the proof of Theorem~\ref{theorem:unicity-computations}}
\mylabel{subsection:proof-unicity-computations}

We now establish the second key theorem for computations, namely
Theorem~\ref{theorem:unicity-computations} which states that the result of computations
is, in some sense, unique. The proof works by a case analysis on the possible
ways the over-computations and under-computations may overlap. 
We perform this proof for stabilisation monoids, thus using sm-computations.
More precisely, all statements take as input computations, and output
sm-computations, which can be then normalised into non-sm computations.
The result for stabilisation semigroup can be derived from it.
We fix ourselves from now on a stabilisation monoid $\monoid$.

\begin{lemma}\mylabel{lemma:over-computation-above-pi}
For all $n$-over-computations of value~$a$ over a word~$u\in M^*$ of length at most~$n$, $\pi(u)\leq a$.
\end{lemma}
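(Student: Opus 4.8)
The plan is to prove the statement by a structural induction on the over-computation, after first observing that the length hypothesis rules out stabilisation nodes entirely. First I would reduce to the case of a genuine over-computation: since we are working with sm-computations, Lemma~\ref{lemma:remove-epsilon-computation} lets me dispose of the empty-word case (there $\pi(u)=1\leq a$ directly, as $a\geq 1$) and, for a non-empty $u$, turn the given sm-$n$-over-computation into an ordinary $n$-over-computation for $u$ of the same value $a$. From then on I work with a tree whose leaves, read left to right, spell out $u$.

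The key combinatorial remark is that no node of this tree can have more than $n$ children. Indeed, every subtree contains at least one leaf, so a node with $k$ children has at least $k$ leaves below it; since $u$ has length at most $n$, the whole tree has at most $n$ leaves, whence $k\leq n$. Consequently there are no stabilisation nodes: only leaves, binary nodes, and idempotent nodes occur in the computation.

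I would then show by induction on the tree that for every node $x$, writing $w_x$ for the factor of $u$ carried by the leaves below $x$, one has $\pi(w_x)\leq v(x)$. The base case is a leaf $x$, where $v(x)\geq a_m=\pi(w_x)$ by definition. For a binary node with children $y_1,y_2$, the induction hypothesis gives $\pi(w_{y_i})\leq v(y_i)$, and since $w_x=w_{y_1}w_{y_2}$ and the product is compatible with the order, $\pi(w_x)=\pi(w_{y_1})\cdot\pi(w_{y_2})\leq v(y_1)\cdot v(y_2)\leq v(x)$. For an idempotent node with children $y_1,\dots,y_k$ all of value $e\in E(M)$, the induction hypothesis gives $\pi(w_{y_i})\leq e$ for each $i$, so $\pi(w_x)\leq e^k=e\leq v(x)$, using the idempotency $e\cdot e=e$ and again compatibility of the product with the order. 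Applying this at the root yields $\pi(u)\leq a$.

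The only genuinely delicate point is the remark that forces the absence of stabilisation nodes; everything after that is a routine order-compatible bottom-up evaluation of the product. I expect no real obstacle beyond being careful that the reduction via Lemma~\ref{lemma:remove-epsilon-computation} preserves both the value $a$ and the letters of $u$, and hence keeps the number of leaves at most $n$.
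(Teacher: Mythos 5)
Your proof is correct and follows exactly the route the paper takes: the paper's own argument is a one-line induction on the height of the over-computation, resting on the same observation that an $n$-over-computation for a word of length at most $n$ cannot contain a stabilisation node. You have merely spelled out the details (the leaf/binary/idempotent cases and the reduction via Lemma~\ref{lemma:remove-epsilon-computation}) that the paper leaves implicit.
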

\begin{proof}
By induction on the height of the over-computation, using the fact that an $n$-over-computation for a word
of length at most $n$ cannot contain a stabilisation node.
\end{proof}

\begin{lemma}\mylabel{lemma:over-computation-above-sharp}
For all $n$-over-computations of value~$b$
over a word~$b_1 \dots b_k$ ($k\geq 1$) such that~$e\leq b_i$ for all~$i$,
and $e$ is an idempotent, then~$e^\sharp\leq b$.
\end{lemma}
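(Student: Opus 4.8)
The plan is to prove Lemma~\ref{lemma:over-computation-above-sharp} by induction on the height of the $n$-over-computation, mirroring the style of the preceding Lemma~\ref{lemma:over-computation-above-pi}. Recall that in an $n$-over-computation every node satisfies $v(x)\geq$ the appropriate combination of its children's values, so I want to track how the hypothesis ``$e\leq b_i$ for every leaf value $b_i$, with $e$ idempotent'' propagates upward to force $e^\sharp\leq b$ at the root.

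First I would handle the base cases and the inner-node cases by analysing the node at the root according to the four possibilities (leaf, binary, idempotent, stabilisation). The key auxiliary observation is that I should prove a slightly strengthened statement that cooperates with the induction: roughly, that for \emph{any} subtree whose underlying factor $b_{i}\dots b_{j}$ has all letters $\geq e$, the value of the subtree is $\geq e$, and moreover is $\geq e^\sharp$ \emph{as soon as} that subtree contains a stabilisation node (or the factor is ``long enough'' to have triggered one). Concretely: if a subtree of value $c$ spans letters all $\geq e$, then $e\leq c$ always (since $e=e\cdot\dots\cdot e\leq$ the product, using compatibility of $\cdot$ with $\leq$ and $e$ idempotent); and the point is to show the stronger $e^\sharp\leq c$ at the root of the whole tree. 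The root over a word of length $k\geq 1$ with all letters $\geq e$: if it is a stabilisation node with $k>n$ children each of common idempotent value $f\geq e$, then $b\geq f^\sharp\geq e^\sharp$ using monotonicity of $\sharp$ (the axiom $e\leq f\Rightarrow e^\sharp\leq f^\sharp$). That is the clean case.

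The main obstacle will be the case where the root is \emph{not} a stabilisation node — for instance a binary node, or an idempotent node with few children — so that the $\sharp$ never appears at the top. Then I cannot read off $e^\sharp$ directly and must recurse. Here the natural strategy is: at a binary node $b\geq c_1\cdot c_2$, by induction each $c_j\geq e$ (not yet $\geq e^\sharp$), giving only $b\geq e\cdot e=e$, which is too weak. To repair this I would strengthen the induction hypothesis so that the root value dominates $e^\sharp$ rather than merely $e$, and prove it by showing that whatever the shape of the tree, somewhere the product of $e$'s of length exceeding $n$ is forced, or else invoke Lemma~\ref{lemma:over-computation-above-pi} to get $b\geq\pi(u)$ when the tree is short. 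In fact the cleanest route is a case split on whether the over-computation contains a stabilisation node at all: if it does, localise a maximal one and propagate $e^\sharp$ upward through binary/idempotent nodes using $e^\sharp\cdot e=e^\sharp$, $e\cdot e^\sharp=e^\sharp$ and $e^\sharp\cdot e^\sharp=e^\sharp$ (the derived identities from the definition).

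Putting it together, the induction step for an idempotent node with $2\leq k\leq n$ children of common value $f$: each child subtree spans letters $\geq e$, so by induction $f\geq e$, and since the node value $b\geq f$, I get $b\geq e$; to reach $e^\sharp$ I would then appeal to the fact that over-computations are also $m$-over-computations for $m\leq n$ (Fact~\ref{fact:under-over-computation-monotonicity}) together with $e^\sharp\leq e$, reducing the threshold until such a node becomes a stabilisation node. I expect the delicate bookkeeping to be exactly this interplay between the threshold $n$, the number of children, and the monotonicity $e^\sharp\leq f$, and that is where I would spend the most care; the algebraic identities for $\sharp$ themselves are routine once the right node is identified.
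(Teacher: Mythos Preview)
Your proposal contains a genuine confusion about what the induction hypothesis gives you, and this leads you into an unnecessary and tangled argument. The statement you are proving is precisely ``$e^\sharp\leq b$'', so when you recurse into a child subtree (which is again an $n$-over-computation over a word whose letters are all $\geq e$), the induction hypothesis already yields $e^\sharp\leq c_j$, not merely $e\leq c_j$. Your sentence ``by induction each $c_j\geq e$ (not yet $\geq e^\sharp$)'' is simply wrong: the IH is the lemma itself, applied to a tree of smaller height.

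Once you use the correct IH, every case is immediate:
\begin{itemize}
\item Leaf: $b\geq b_1\geq e\geq e^\sharp$.
\item Binary node: $b\geq c_1\cdot c_2\geq e^\sharp\cdot e^\sharp=e^\sharp$.
\item Idempotent node: $b\geq f$ with $f=c_1\geq e^\sharp$ by IH, hence $b\geq e^\sharp$.
\item Stabilisation node: $b\geq f^\sharp$ with $f\geq e^\sharp$ by IH, hence $b\geq (e^\sharp)^\sharp=e^\sharp$.
\end{itemize}
This is exactly why the paper's proof is a single line (``by induction on the height''). Your elaborate repair strategy --- localising a stabilisation node, propagating $e^\sharp$ through the identities $e^\sharp\cdot e=e^\sharp$, or lowering the threshold via Fact~\ref{fact:under-over-computation-monotonicity} until an idempotent node becomes a stabilisation node --- is not needed, and the threshold-lowering idea in particular would not buy you anything: changing $n$ to $m<k$ reclassifies the node but does not strengthen the inequality $v(x)\geq f$ that is actually recorded in the tree.
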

\begin{proof}
By induction on the height of the over-computation.
\end{proof}

A sequence of words $u_1,\dots,u_k$ is called a \intro{decomposition} of~$u$ if~$u=u_1\dots u_k$.
We say that a non-leaf [under/over]-computation~$T$ for a word~$u$ \intro{decomposes} $u$ into~$u_1$,\dots,$u_k$
if the subtree rooted at the $i$th child of the root is an [under/over]-computation for $u_i$, for all $i=1\dots k$.
Our proof will mainly make use of over-computations.
For this reason, we introduce the following terminology.

We say that a word $u\in M^*$ $n$-\intro{evaluates} to $a\in M$ if there exists an $n$-over-computation for $u$
of value $a$. We will also say that $u_1,\dots,u_k$ $n$-evaluate to $b_1,\dots,b_k$
if $u_i$ $n$-evaluates to $b_i$ for all $i=1\dots k$.

This notion is subject to elementary reasoning such as (a) $u$ $n$-evaluates to $\pi(u)$ or (b) if $u_1,\dots,u_k$
$n$-evaluate to $b_1,\dots,b_k$ and $b_1\dots b_k$ $n$-evaluates to $b$, then $u_1\dots u_k$
$n$-evaluates to $b$.

The core of the proof is contained in the following property:

\begin{lemma}\mylabel{lemma:over-computation-decomposition}
There exists a polynomial~$\alpha$ such that for all $u_1,\dots,u_k\in M^*$,
if $u_1\dots u_k$ $\alpha(n)$-evaluates to $b$ then 
$u_1,\dots,u_k$ $n$-evaluate $b_1,\dots,b_k$, and $b_1\dots b_k$ $n$-evaluates to $b$,
for some $b_1,\dots,b_k\in M$.

\ign{
for all $\alpha(n)$-over-computations~$T$ over a word~$u$ of value~$b$,
and all decompositions of $u$ into~$u_1,\dots,u_k$, then:
\begin{enumerate}[(1)]
\item there exist $n$-over-computations~$B_1$ \dots $B_k$ for $u_1$,\dots,$u_k$ respectively,
	of value $b_1$,\dots,$b_k$ respectively, and,
\item there exists an $n$-over-computation~$B$ over~$b_1\dots b_k$ of value~$b$.
\end{enumerate}}
\end{lemma}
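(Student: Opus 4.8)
The plan is to recast the statement in the tree form of the commented-out display: from an $\alpha(n)$-over-computation $T$ for $u$ of value $b$ together with the decomposition $u=u_1\dots u_k$, I would produce $n$-over-computations $B_1,\dots,B_k$ for $u_1,\dots,u_k$ of values $b_1,\dots,b_k$, and one $n$-over-computation $B$ for $b_1\dots b_k$ of value $b$; the formulation in terms of $n$-evaluation is then immediate. I would argue by induction on the height of $T$, taking $\alpha$ to be a fixed \emph{quadratic} polynomial. The base case, where $T$ is a leaf and $u$ a single letter (so all but one $u_i$ are empty), is trivial: send the nonempty piece to $b$, the empty ones to $1$, and combine with binary nodes that absorb the units. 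Throughout I would use freely the two elementary facts for $n$-evaluation (that $u$ $n$-evaluates to $\pi(u)$, and that evaluation composes), together with Lemma~\ref{lemma:over-computation-above-pi} and Lemma~\ref{lemma:over-computation-above-sharp} to control the values attached to short words and to idempotent-dominated words respectively.

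For the inductive step, let the root of $T$ have children $T^1,\dots,T^m$, the subtrees for the factors $w_1,\dots,w_m$ with $u=w_1\dots w_m$ and values $c_1,\dots,c_m$. Every cut point of the decomposition either lies at a boundary between two factors $w_j$ or strictly inside one of them; in the latter case the piece containing it straddles consecutive children. I would push the cut points internal to $w_j$ into a recursive call on $T^j$, obtaining $n$-over-computations for the sub-pieces of $w_j$ and an $n$-over-computation combining their values back to $c_j$. A piece $u_i$ confined to a single child is then already built; a straddling piece is assembled from the $n$-over-computations of its prefix/suffix parts in the adjacent children and of the full children it covers, merged left-to-right by binary nodes, which sets $b_i$ to the corresponding product. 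Crucially, since each $T^j$ is a subtree of an $\alpha(n)$-over-computation it is itself an $\alpha(n)$-over-computation, so the recursion is invoked at the \emph{same} pair $(\alpha(n),n)$ and loses no threshold.

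What remains is to build the combining over-computation $B$ for $b_1\dots b_k$ of value $b$, and here lies the real difficulty — and the reason a linear $\alpha$ will not do. When the root is binary the values combine through one binary node and $b\ge c_1c_2$ suffices. The hard case is a \emph{stabilisation} root: then $c_1=\dots=c_m=e$ for an idempotent $e$, the degree exceeds $\alpha(n)$, and $b\ge e^\sharp$ with possibly $e^\sharp<e$. The cut points partition the $m$ children into the groups seen by the successive pieces; a group of more than $n$ children can be read as a threshold-$n$ stabilisation node and so contributes $e^\sharp$, whereas a group of at most $n$ children can only be made to evaluate to $e$, which is in general strictly above the target $e^\sharp$. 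The saving observation is that one cannot be trapped with \emph{only} small groups: as the total degree exceeds $\alpha(n)$ and $\alpha$ is quadratic, if every group has at most $n$ children then there are more than $n$ groups, so the word $b_1\dots b_k$ has length greater than $n$ while all its letters dominate $e^\sharp$; a fresh threshold-$n$ stabilisation node (equivalently, Lemma~\ref{lemma:over-computation-above-sharp} applied to the idempotent $e^\sharp$) then yields value $\ge e^\sharp$, which I raise to $b$. If instead some group is large, the value $e^\sharp$ already occurs as a factor, so a plain product over-computation reaches a value below $b$ that can be raised to $b$. I expect this stabilisation/threshold interaction, together with the bookkeeping for pieces that straddle a stabilisation node, to be the main obstacle; the other root types (leaf, binary, and idempotent of degree at most $\alpha(n)$) are routine, the idempotent case reducing to the stabilisation analysis once one notes $b\ge e\ge e^\sharp$.
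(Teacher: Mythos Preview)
Your overall plan (induction on the height of $T$, handle leaf/binary/idempotent/stabilisation separately, push internal cut points into recursive calls on the children) is the same as the paper's, and the leaf and binary cases are indeed routine. The real gap is in the stabilisation case, precisely at the ``straddling'' phenomenon that you mention but then do not treat correctly.

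Take a stabilisation root with children of common idempotent value~$e$, and suppose some cut point between $u_i$ and $u_{i+1}$ falls strictly inside a child~$w_j$. The value $b_i$ you produce is then a product $c''\cdot e_i\cdot c'$ where $c''$ and $c'$ come from the recursive calls on the straddled children and $e_i\in\{1,e,e^\sharp\}$ records the full children inside~$u_i$. These $b_i$ are in general pairwise distinct and there is no reason for them to be $\geq e^\sharp$, nor $\leq e$, nor equal to any common idempotent. Concretely, take $u=(ab)^m$ with $e=ab$ idempotent and $e^\sharp<e$; the decomposition $u_1=a,\ u_2=ba,\ \dots,\ u_{m-1}=ba,\ u_m=b$ gives $b_1=a$, $b_i=ba$, $b_m=b$, none of which need be comparable to~$e^\sharp$. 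Your ``fresh threshold-$n$ stabilisation node'' is therefore not available: a stabilisation (or idempotent) node requires all children to carry the \emph{same} idempotent value, and Lemma~\ref{lemma:over-computation-above-sharp} is a lower bound on the value of an \emph{existing} over-computation, not a construction of one. A chain of binary nodes only gives value $b_1\cdots b_k$, which you can bound by~$e$ using $c'\cdot c''\leq e$ from the recursion, but not by~$e^\sharp$; since $b$ may equal~$e^\sharp$ exactly, you cannot raise to~$b$.

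What is missing is the use of the consistency axiom $(a\cdot b)^\sharp=a\cdot(b\cdot a)^\sharp\cdot b$. In the example above, it is this identity that turns the long run of $ba$'s into $(ba)^\sharp$ and then into $a\cdot(ba)^\sharp\cdot b=(ab)^\sharp$. The paper isolates this step as Lemma~\ref{lemma:conjugacy}: from $x_h\cdot y_h\leq e$ (the recursive information on the straddled children) one shows that $(y_1x_2)(y_2x_3)\cdots(y_{m-1}x_m)$ $n$-evaluates to some~$z$ with $x_1\cdot z\cdot y_m\leq e^\sharp$ whenever $m$ is large enough; the proof of that lemma needs a Ramsey argument (Lemma~\ref{lemma:ordered-consistency}) to locate an idempotent pattern inside the conjugated sequence. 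This is also why the paper's $\alpha$ is $(n+1)\beta(n)-1$ with $\beta$ coming from Ramsey, not a bare quadratic, and why the induction hypothesis is strengthened to keep the two boundary factors $b_1,b_k$ separate (so that the conjugacy can be closed up as $b_1\cdot c\cdot b_k\leq e^\sharp$). Your pigeonhole dichotomy ``either some group is large, or there are many groups'' only resolves the case where all cut points sit on child boundaries; the conjugacy/consistency ingredient is the missing idea.
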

From this result, we can deduce Theorem~\ref{theorem:unicity-computations} as follows.
\begin{proof}[Proof of Theorem~\ref{theorem:unicity-computations}]
Let~$\alpha$ be as in Lemma~\ref{lemma:over-computation-decomposition}.
Let~$\alpha_p$ be the $p$th composition of $\alpha$ with itself. 
Let~$U$ be an $n$-under-computation of height at most $p$ for some word $u$ of value $a$,
and $T$ be an $\alpha_p(n)$-over-computation for $u$ of value $b$. We want to establish that $a\leq b$.
The proof is by induction on~$p$.

If~$p=0$, this means that $u$ has length $1$, then $T$ and $U$ are also restricted to a single leaf, and the
result obviously holds. Otherwise, $U$ decomposes $u$ into $u_1,\dots,u_k$.
Let $a_1$,\dots,$a_k$ be the values of the children of the root of $i$, read from left to right.
By applying Lemma~\ref{lemma:over-computation-decomposition}
on $T$ and the decomposition $u_1,\dots,u_k$. We construct the
$\alpha_{p-1}(n)$-over-computations $B_1,\dots,B_k$ for $u_1,\dots,u_k$ respectively,
and of respective values $b_1,\dots,b_k$,
as well as an $\alpha_{p-1}(n)$-over-computation $B$ of value $b$ for $b_1,\dots,b_k$.

For all $i=1\dots k$, we can apply the induction hypothesis on
$U^i$ (let us recall that $U^i$ is the sub-under-computation rooted at the $i$th child of the root of $U$)
and $B_i$, and obtain that $a_i\leq b_i$. Depending on $k$, three cases
have to be separated.
If $k=2$ (binary node), then $a\leq a_1\cdot a_2\leq b_1\cdot b_2\leq b$.
If $3\leq k\leq n$ (idempotent node), we have $a\leq a_1=\cdots=a_k=e$ which is an idempotent.
We have $e=a_i\leq b_i$ for all $i=1\dots k$. Hence by Lemma~\ref{lemma:over-computation-above-pi},
$e\leq b$, which means $a\leq b$.
If $k>n$  (stabilisation node), we have once more $a_1=\cdots=a_k=e$ which is an idempotent, and 
such that $a\leq e^\sharp$.
This time, by Lemma~\ref{lemma:over-computation-above-sharp}, we have $e^\sharp\leq b$. We obtain
once more $a\leq b$.
\end{proof}

The remainder of this section is dedicated to the proof of Lemma~\ref{lemma:over-computation-decomposition}.

\ign{
We first establish this result for $k$ fixed.

\begin{lemma}\mylabel{lemma:over-computation-bounded-decomposition}
For all positive integers $k$, there exists a polynomial~$\alpha$ such that
for all words $u$ which $\alpha(n)$-evaluate to $b$,
and all decompositions of $u$ into $u_1\dots u_k$,
then $u_1,\dots,u_k$ $n$-evaluate to $b_1,\dots,b_k$
such that $\pi(b_1\ldots b_k)\leq b$.
\end{lemma}
\begin{proof}
Let us treat first the case $k=2$.
Let~$\alpha(n)=2n+1$ and let $T$ be an $\alpha(n)$-over-computation of value $b$ over a word~$u=u_1u_2$.
The proof is by induction on the height of $T$.
Remark first that if $u_1$ or $u_2$ are empty, the result is straightforward.
The case of a leaf is also treated since this means that either $u_1$ or $u_2$ is empty.

Otherwise the over-computation $T$ decomposes $u$ as $v_1,\dots,v_l$.
Let $a_1,\dots,a_l$ be the values of the children of the root.
There exist some $s$ in $1\dots l$ and words $w_1,w_2$ such that:
\begin{align*}
u_1 &= v_1\ldots v_{s-1}w_1\ ,&v_s &= w_1w_2,\ &\text{and}~u_2&=w_2v_{s+1}\dots v_{l}\ .
\end{align*}
We apply the induction hypothesis on $v_s$ decomposed into $w_1,w_2$.
From the induction hypothesis, we get that $w_1,w_2$ $n$-evaluate to $c_1,c_2$
 such that $c_1\cdot c_2\leq a_s$.

If $l=2$ (binary node). Then either $s=1$ or $s=2$. The two cases being symmetric,
let us treat the case $s=1$. In this case, $u_1=w_1,u_2=w_2v_2$
$n$-evaluate to $c_1,c_2\cdot a_2$,
and furthermore $\pi(c_1(c_2\cdot a_2))=(c_1\cdot c_2)\cdot a_2\leq \pi(a_1a_2)$.

If $3\leq l\leq 2n+1$ (case of an idempotent node), then $a_1=\dots=a_l=e$ idempotent,
and $e\leq b$. 
In this case $a_1\dots a_{s-1}$ $n$-evaluates to $e$ (or $1$ if $s=1$).
We get from this that $v_1\dots v_{s-1}$ $n$-evaluates to $e$ (or $1$ if $s=1$) (since each $v_i$ $n$-evaluates
to $a_i$). Hence $u_1=v_1\dots v_{s-1}w_1$ $n$-evaluates to $b_1=e\cdot c_1$ (or $c_1$ if $s=1$).
In the same way $u_2=w_2v_{s+1}\dots v_l$ $n$-evaluates to $b_2=c_2\cdot e$ (or $c_2$ if $s=l$).
Since furthermore $c_1\cdot c_2\leq e$, we obtain $b_1\cdot b_2\leq e\leq b$.

If $l > 2n+2$ (case of a stabilisation node), then $a_1=\dots=a_l=e$ is an idempotent,
and $e^\sharp\leq b$. 
Two situations can occur depending on whether $s$
is on the left or the right half of $u$. Let us treat the case $s>n+1$.
In this case,  $v_1\dots v_{s-1}$ $n$-evaluates to $e^\sharp$,
and thus $u_1=v_1\dots v_{s-1}w_1$ $n$-evaluates to $b_1=e^\sharp\cdot c_1$.
As in the idempotent case $u_2$ $n$-evaluates to $b_2=c_2\cdot e$ (or possibly $c_2$).
Since $c_2\cdot c_2\leq e$, we obtain $b_1\cdot b_2=e^\sharp \cdot c_1 \cdot c_2 \cdot e\leq e^\sharp\le b$.

The case for $k>2$ is easily obtained by induction on $k$ from the above.
Denote by $\alpha^p$ the $p$th composition of $\alpha$ with itself.
Assume $u_1\dots u_k$ $\alpha^p(n)$-evaluates to $b$,
then $u_1\dots u_{k-1},u_k$  $\alpha^{p-1}(n)$ evaluate to $c,b_k$ such that $c\cdot b_k\leq b$
(using the case $k=2$).
Applying the induction hypothesis, we know that $u_1,\dots,u_{k-1}$
$n$-evaluate to $b_1,\dots,b_{k-1}$ such that $\pi(b_1\dots b_{k-1})\leq c$.
It follows that $u_1,\dots,u_{k}$
$n$-evaluate to $b_1,\dots,b_{k}$ with $\pi(b_1\dots b_{k})\leq c\cdot b_k\leq b$.
\end{proof}}

\begin{lemma}\mylabel{lemma:ordered-consistency}
There exists a positive integer $K$ such that for all idempotents $e,f$,
whenever $f\leq a_i\cdot b_i$ for all $i=1\dots k$, $k\geq K$,
and $b_i\cdot a_{i+1}\leq e$ for $i=1\dots k-1$, then $f^\sharp\leq a_1\cdot e^\sharp\cdot b_k$.
\end{lemma}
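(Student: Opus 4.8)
The plan is to read the statement as an ordered, approximate form of the consistency axiom, and to prove it by first reducing the whole inequality to a single statement about stabilised elements, and then manufacturing the stabilisation $e^\sharp$ through one application of consistency, the repetition needed for that being extracted from the hypothesis $k\geq K$ through the finiteness of $S$.

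First I would record two elementary ``transfer'' inequalities that follow directly from the hypotheses and the compatibility of the product with $\leq$: since $f\leq a_i b_i$ we get $f\cdot a_{i+1}\leq a_i\,(b_i a_{i+1})\leq a_i e$, and since $f\leq a_{i+1}b_{i+1}$ we get $b_i\cdot f\leq (b_i a_{i+1})\,b_{i+1}\leq e\,b_{i+1}$. Iterating these and using that $f$ and $e$ are idempotent ($f=f\cdot f$, $e\cdot e=e$) yields $f a_k\leq a_1 e$ and $b_1 f\leq e b_k$, and, picking an intermediate index $m$ with $2\leq m\leq k-2$, the long-range bound $b_1 f a_k=b_1 f\cdot f a_k\leq (e b_m)(a_{m+1}e)=e\,(b_m a_{m+1})\,e\leq e$. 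Combining this with the derived identity $f^\sharp=f\,f^\sharp f$ and with $f\leq a_1 b_1$, $f\leq a_k b_k$ gives $f^\sharp\leq a_1\,(b_1 f^\sharp a_k)\,b_k$, so the entire statement reduces to the single inequality $b_1 f^\sharp a_k\leq e^\sharp$. Note that its non-stabilised analogue $b_1 f a_k\leq e$ is already in hand; the real content is to upgrade $f$ to $f^\sharp$ and $e$ to $e^\sharp$.

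The mechanism for that upgrade is transparent in the constant case $a_i\equiv a$, $b_i\equiv b$ (so $ab\geq f$ and $ba\leq e$). Letting $\omega$ be an exponent of $S$, put $h:=(ab)^\omega\geq f^\omega=f$, an idempotent, so $f^\sharp\leq h^\sharp$ by monotonicity of $\sharp$. Applying the consistency axiom to the factorisation $h=x\cdot y$ with $x=a(ba)^{\omega-1}$ and $y=b$ (so that $xy=(ab)^\omega=h$ and $yx=(ba)^\omega$ are both idempotent) gives $h^\sharp=a(ba)^{\omega-1}\bigl((ba)^\omega\bigr)^\sharp b$; since $ba\leq e$ we have $(ba)^\omega\leq e$ and $(ba)^{\omega-1}\leq e$, whence $\bigl((ba)^\omega\bigr)^\sharp\leq e^\sharp$ and $h^\sharp\leq a\,e\,e^\sharp\,b=a\,e^\sharp b$ using $e\,e^\sharp=e^\sharp$. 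For the general (non-constant) case I would use the finiteness of $S$ together with $k\geq K$: by a Ramsey/factorisation-forest argument (Theorem~\ref{theorem:factorisation-forest}) a sufficiently long chain contains a block of consecutive full factors that recurs with one and the same value, producing a conjugate pair made of an idempotent $h\geq f$ and an idempotent $g\leq e$ lying in a common $\gJ$-class, whose connecting junctions are exactly the consecutive glues $b_i a_{i+1}\leq e$. Consistency---or, transferring the stabilisation along the $\gJ$-class, Lemma~\ref{lemma:stabilisation-preserve-J}---then rewrites $h^\sharp$ as an expression built from $g^\sharp$, and $g\leq e$ gives $g^\sharp\leq e^\sharp$; the envelope factors $b_1$ and $a_k$ attach to the recurring block through the consecutive glues $b_1 a_2\leq e$ and $b_{k-1}a_k\leq e$, and are absorbed into $e\,e^\sharp=e^\sharp$.

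The step I expect to be the main obstacle is precisely the control of this conjugate. Splitting a \emph{linear} chain into $X$ and $Y$ so that both $XY\geq f$ and $YX\leq e$ are idempotent inevitably creates a ``wrap-around'' junction pairing a block-end $b_q$ with a \emph{non-consecutive} block-start $a_{p+1}$, and such a product is simply not bounded by $e$ under the hypotheses. Requiring $k\geq K$ is exactly what defeats this: enough repetition forces a factor to recur verbatim in value, so that the junction created by the stabilisation is a genuine consecutive glue $b_i a_{i+1}$ and therefore remains $\leq e$. Turning the qualitative ``sufficiently long'' into an explicit threshold---via the relevant Ramsey number and the exponent of $S$---is what fixes the constant $K$, and I expect this quantitative bookkeeping, rather than any single algebraic identity, to be the delicate part of the argument.
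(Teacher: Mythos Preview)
Your proposal is correct and follows the same route as the paper: extract via Ramsey a pair $(a,b)$ with $ab$ and $ba$ idempotent, $f\leq ab$ and $ba\leq e$, then apply consistency $(ab)^\sharp=a(ba)^\sharp b$ together with monotonicity of $\sharp$ and the identity $f^\sharp=f\cdot f^\sharp\cdot f$ (your ``transfer inequalities''). The paper's concrete realisation of your vaguely described ``recurring block'' is to colour ordered pairs $i<j$ by $c_{i,j}=(a_i,\,\pi(b_ia_{i+1}\cdots b_{j-1}))$ and take a monochromatic triple $i<s<j$, which forces $b\cdot a\cdot b=b$ and hence delivers the conjugate idempotents directly (so your $\omega$-power detour becomes unnecessary); your preliminary reduction to $b_1 f^\sharp a_k\leq e^\sharp$ is the same computation the paper performs at the end rather than at the start.
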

\begin{proof}
To each ordered pair $i<j$, let us associates the color $c_{i,j}=(a_i,\pi(b_ia_{i+1}\dots b_{j-1}))$.
We now apply the theorem of Ramsey to this coloring, for $K$ sufficiently large,
and get that there exist $1<i<s<j<k$ such that $c_{i,s}=c_{s,j}=_{c_{i,j}}=(a,b)$.
This implies in particular that $b\cdot a\cdot b=b$, and thus $a\cdot b$ and $b\cdot a$
are idempotents. Furthermore, $f\leq a\cdot b$ and $b\cdot a\leq e$.
It follows from consistency that
$f^\sharp\leq (a\cdot b)^\sharp\leq a\cdot (b\cdot a)^\sharp\cdot b\leq a\cdot e^\sharp\cdot b= a_i\cdot e^\sharp\cdot \pi(b_s\dots b_{j-1})$.
We now have, using the assumptions that $f\leq a_h\cdot b_h$ and $b_h\cdot a_{h+1}\leq e$,
\begin{align*}
f^\sharp	&=f\cdot f^\sharp\cdot f\leq \pi(a_1\dots a_i)\cdot e^\sharp\cdot \pi(b_s\dots b_k)
		\leq a_1\cdot e\cdot e^\sharp\cdot e\cdot b_k=a_1\cdot e^\sharp\cdot  b_k\ .
\end{align*}
\end{proof}

The following lemma will be used for treating the case
of idempotent and stabilisation nodes in the proof of
Lemma~\ref{lemma:over-computation-decomposition}.
\begin{lemma}\mylabel{lemma:conjugacy}
There exists a polynomial $\beta$ such that,
if $x_1,y_1,x_2,y_2\dots,x_m,y_m$ ($m\geq 1$) are elements of $M$
and $e$ is an idempotent
such that $x_h\cdot y_h\leq e$ for all $h=1\dots m$,
then $(y_1\cdot x_2)(y_2\cdot x_3)\cdots (y_{m-1}\cdot x_m)$
$n$-evaluates to $z$ such that:
\begin{iteMize}{$\bullet$}
\item $x_1\cdot z\cdot y_m\leq e$, and;
\item if $m>\beta(n)$ or $(x_h\cdot y_h)\leq e^\sharp$ for some $h=1\dots m$ then $x_1\cdot z\cdot y_m\leq e^\sharp$.
\end{iteMize}
\end{lemma}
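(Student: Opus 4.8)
The plan is to separate the routine content from the genuinely quantitative part. Write $p_h := x_h\cdot y_h\le e$ for $h=1,\dots,m$ and $z_h:=y_h\cdot x_{h+1}$, so that the word to be evaluated is $w=z_1\cdots z_{m-1}$, and record the telescoping identity $x_1\cdot\pi(w)\cdot y_m=p_1p_2\cdots p_m$. The default choice $z=\pi(w)$, to which $w$ always $n$-evaluates, immediately gives $x_1\cdot z\cdot y_m=p_1\cdots p_m\le e^m=e$, which is the first bullet; moreover, if $p_h\le e^\sharp$ for some $h$, then, using $e\cdot e^\sharp=e^\sharp=e^\sharp\cdot e$, the same product collapses to $x_1\cdot z\cdot y_m\le e^\sharp$, settling the second bullet in that subcase. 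Thus everything reduces to producing, when $m>\beta(n)$, some $n$-evaluation value $z$ of $w$ with $x_1\cdot z\cdot y_m\le e^\sharp$.

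First I would pin down $\beta$. A bare Ramsey argument on $w$ would only yield an \emph{exponential} bound, since to force a stabilisation step one needs more than $n$ consecutive equal-idempotent blocks and Ramsey numbers are superpolynomial; this is the main obstacle. The way around it is the height bound of Theorem~\ref{theorem:exists-computation}: letting $K\ge2$ be the constant of Lemma~\ref{lemma:ordered-consistency} and setting $\beta(n):=\max(n,K)^{3|M|}+2$ (a polynomial, since $3|M|$ is constant), I apply Theorem~\ref{theorem:exists-computation} with threshold $\max(n,K)$ to obtain a $\max(n,K)$-computation $T$ for $w$ of height at most $3|M|$. Because $m-1>\max(n,K)^{3|M|}$ while every node other than a stabilisation node has degree at most $\max(n,K)$, the tree $T$ must contain a stabilisation node; its children all carry a common idempotent value $g$, the node itself has value $g^\sharp$, its degree $k'>\max(n,K)\ge K$, and it covers a contiguous factor $z_i\cdots z_{j-1}$. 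By Fact~\ref{fact:under-over-computation-monotonicity} this subtree is in particular an $n$-over-computation.

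I would then assemble the required $n$-over-computation of $w$ by flanking this subtree (value $g^\sharp$) with plain binary over-computations of $z_1\cdots z_{i-1}$ and $z_j\cdots z_{m-1}$ (values $c=\pi(z_1\cdots z_{i-1})$ and $d=\pi(z_j\cdots z_{m-1})$), giving $z=c\cdot g^\sharp\cdot d$. The same telescoping as before localises the computation: $x_1\cdot z\cdot y_m=(p_1\cdots p_{i-1})\,(x_i\cdot g^\sharp\cdot y_j)\,(p_{j+1}\cdots p_m)$, so it remains only to prove $x_i\cdot g^\sharp\cdot y_j\le e^\sharp$.

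This last inequality is where Lemma~\ref{lemma:ordered-consistency} enters, and the crucial point is choosing the right decomposition. Writing the stabilisation node's $s$-th child block as $B_s=z_{t_{s-1}}\cdots z_{t_s-1}$ (with $t_0=i$, $t_{k'}=j$), its product expands, after regrouping each adjacent pair $x_\ell y_\ell$ into $p_\ell$, as $\pi(B_s)=y_{t_{s-1}}\cdot(p_{t_{s-1}+1}\cdots p_{t_s-1})\cdot x_{t_s}$, and since the value of an under-computation never exceeds the product of its word (an immediate induction) one has $g\le\pi(B_s)$. I would therefore apply Lemma~\ref{lemma:ordered-consistency} with $f:=g$, $a_s:=y_{t_{s-1}}$ and $b_s:=(p_{t_{s-1}+1}\cdots p_{t_s-1})\cdot x_{t_s}$: the hypothesis $f\le a_s\cdot b_s$ is exactly $g\le\pi(B_s)$, while the junction $b_s\cdot a_{s+1}=(p_{t_{s-1}+1}\cdots p_{t_s-1})\,(x_{t_s}y_{t_s})=(p_{t_{s-1}+1}\cdots p_{t_s-1})\,p_{t_s}$ is a product of factors all $\le e$, hence $\le e$. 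Since $k'\ge K$, Lemma~\ref{lemma:ordered-consistency} yields $g^\sharp\le a_1\cdot e^\sharp\cdot b_{k'}$ with $a_1=y_i$ and $b_{k'}=(p_{t_{k'-1}+1}\cdots p_{j-1})\cdot x_j$; multiplying on the left by $x_i$ and on the right by $y_j$ and regrouping $x_iy_i=p_i$, $x_jy_j=p_j$ turns every flanking factor into some $p_\ell\le e$, so that $x_i\cdot g^\sharp\cdot y_j\le e^\sharp$, as required. The delicate bookkeeping—splitting each block precisely at an $x_\ell\mid y_\ell$ boundary so that the junctions become products of the controlled factors $p_\ell$—is exactly what makes the hypotheses of Lemma~\ref{lemma:ordered-consistency} fit, and is the part I expect to require the most care.
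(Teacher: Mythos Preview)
Your proof is correct and follows essentially the same approach as the paper: use Theorem~\ref{theorem:exists-computation} to obtain a bounded-height computation for $w$, locate a stabilisation node of large degree, and apply Lemma~\ref{lemma:ordered-consistency} with the $a_s,b_s$ split at the $x_\ell\!\mid\! y_\ell$ boundaries so that the junctions become products of the $p_\ell$'s. The only cosmetic differences are that the paper uses threshold $n+K$ where you use $\max(n,K)$, and that you spell out the subcase $p_h\le e^\sharp$ for small $m$ explicitly (the paper absorbs it into the line ``$x_1\cdot z\cdot y_m=\pi(x_1y_1\cdots x_my_m)$'' without comment).
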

\begin{proof}
Let us treat first the case $m\leq\beta(n)$, whatever is $\beta$.
Remark first that $(y_1\cdot x_2)\dots (y_{m-1}\cdot x_m)$
naturally $n$-evaluates to $z=\pi(y_1x_2\dots y_{m-1}x_m)$.
Thus $x_1\cdot z\cdot y_m=\pi(x_1y_1\cdots x_my_m)=e$.

Assume now that $m>\beta(n)$ for $\beta(n)=(n+K)^{3|M|}+1$
where $K$ is the constant obtained from Lemma~\ref{lemma:ordered-consistency}.
Set $d_i=(y_i\cdot x_{i+1})$ for all $i=1\dots m-1$.

We first claim ($\star$) that there exists $i<j$ such that $v=d_i\dots d_{j-1}$
$n$-evaluates to $y_i\cdot e^\sharp\cdot  x_j$.
For this, consider the word $u=d_1\dots d_{m-1}$,
and apply Theorem~\ref{theorem:exists-computation}
for producing an $(n+K)$-computation $U$ for $u$ of height at most $3|M|$.
The word $u$ has length $m-1>\beta(n)-1=(n+K)^{3|M|}$.
Thus there is a stabilisation node in $T$, say of degree $k>n+K$.
Let $S$ be a subtree of $T$ rooted at some stabilisation node.
Let $f$ be the (idempotent) value of the children of this node, the value of $S$ being $f^\sharp$.
This subtree corresponds to the factor $v=d_i\dots d_{j-1}$ of $u$.
We have to show that $v$ $n$-evaluates to $y_i\cdot e^\sharp\cdot x_j$.
The computation $S$ decomposes $v$ into $v_1,\dots, v_k$
and  each $v_h$ is of the form $d_{i_h}\dots d_{i_{h+1}-1}$
for some $h=1\dots k$ with $i=i_1<\dots<i_{k+1}=j$.
Define now $a_h$ to be $y_{i_h}$ and $b_h$ to be $\pi(x_{i_h+1} y_{i_h+1} \dots y_{i_{h+1}-1}x_{i_{h+1}})$
for all $h=1\dots k$. It is clear that $f\leq \pi(v_h)= a_h\cdot b_h$ for all $h=1\dots k$
since there is a computation over $v_h$ of value $f$.
Furthermore, $b_h\cdot a_{h+1}\leq e$ for all $h=1\dots k-1$.
Hence we can apply Lemma~\ref{lemma:ordered-consistency}, and get
that $f^\sharp\leq a_1\cdot e^\sharp\cdot b_k$.
Since furthermore $a_1=y_i$, and $b_k$ is either $\leq x_j$ or $\leq e\cdot x_j$,
it follows that $v$ $n$-evaluates to $f^\sharp\leq y_i\cdot e^\sharp\cdot x_j$.
This concludes the proof of the claim~($\star$).

Set now
$$z=\pi(y_1\dots y_i)\cdot e^\sharp\cdot \pi(x_j\dots y_m)\ .$$
Since, using the claim ($\star$), $d_1\dots d_{i-1}$, $d_i\dots d_{j-1}$, $d_j\dots d_{m-1}$
$n$-evaluate to $\pi(d_1\dots d_{i-1})$, $y_i\cdot e^\sharp\cdot x_j$,
	$\pi(d_j\dots d_{m-1})$, we get that $d_1\dots d_{m-1}$ $n$-evaluates to $z$.
Furthermore, $x_1\cdot z\cdot y_m=\pi(x_1\dots y_i)\cdot e^\sharp\cdot \pi(x_j\dots y_m)=e^\sharp$.
This proves the second conclusion of the statement.
\end{proof}

We are now ready to conclude.
\begin{proof}[Proof of Lemma~\ref{lemma:over-computation-decomposition}]
Let us set $\alpha(n)$ to be $(n+1)\beta(n)-1$,
where $\beta$ is the polynomial taken from Lemma~\ref{lemma:conjugacy}.
Lemma~\ref{lemma:over-computation-decomposition} follows from the following induction hypothesis:
\medskip

\noindent
{\em Induction hypothesis:} For all words
$u$ which $\alpha(n)$-evaluate to $b$,
and all decompositions of $u$ into $u_1,\dots,u_k$ ($k\geq 2$),
then $u_1,\dots,u_k$ $n$-evaluate to $b_1,\dots, b_k$,
and $b_2\dots b_{k-1}$ $n$-evaluate to $c$ such that $b_1\cdot c \cdot b_k\leq b$.

\noindent
{\em Induction parameter:} The height of the $\alpha(n)$-over-computation $T$
witnessing that  $\alpha(n)$-evaluates to $b$.
\medskip

It should be clear that this implies Lemma~\ref{lemma:over-computation-decomposition}
since this means that $b_1\dots b_k$ $n$-evaluate to $b_1\cdot c\cdot b_k\leq b$.

The essential idea in the proof of the induction hypothesis is that $T$
decomposes the word into $v_1,\dots,v_\ell$, and we have to study all
the possible ways the $v_i$'s and the $u_j$'s may overlap.
In practice, we will not refer much to $T$,
but simply about how it decomposes the word into $v_1,\dots,v_\ell$.
Thus, from now on, let $v_1,\dots,v_\ell$ and $u_1,\dots,u_k$ be decompositions of a word $u$
such that each of the $v_i$'s $\alpha(n)$-evaluates to $a_i$
and is subject to the application of the induction hypothesis.
\medskip 

\noindent
{\em Leaves.} This means that $\ell=1$. All the $u_h$'s should be empty, but one, say $u_h=a$
where $a$ is the letter labelling the leaf.
Three cases can occur depending on $h$. If $h=1$, then $u_1,\dots,u_k$ obviously $n$-evaluate
to $a,1,\dots,1,1$, and $1\dots1$ $n$-evaluate to $1$, and we indeed have $a\cdot 1\cdot 1\leq a$.
The case $h=k$ is symmetric. Finally, if $1<h<k$, then $u_1,\dots,u_k$ $n$-evaluate
to $1,\dots,1,a,1,\dots,1$, and $1\dots 1a1\dots 1$ $n$-evaluate to $a$, and we indeed have
$1\cdot a\cdot 1\leq a$.
\medskip

\noindent
{\em Binary nodes.}
If $\ell=2$, then there exist $s$ in $1,\dots, k$ and words $w,w'$ such that
\begin{align*}
v_1&=u_1\dots u_{s-1}w\ ,&
u_s&=ww'\ , &\text{and}\quad v_2&=w'u_{s+1}\dots u_k\ .
\end{align*}
We can apply the induction hypothesis to both $v_1$ and $v_2$.
We obtain that $u_1,\dots, u_{s-1},w$, $w',u_{s+1},\dots, u_k$ $n$-evaluate to
$b_1,\dots,b_{s-1},d,d',b_{s+1},\dots, b_k$,
and that $b_2\dots b_{s-1},b_{s+1}\dots b_k$ $n$-evaluate to $c_1,c_2$
such that $b_1\cdot c_1\cdot d\leq a_1$ and $d'\cdot c_2\cdot b_k\leq a_2$.
It follows that $u_s$ $n$-evaluates to $b_s=d\cdot d'$.
Furthermore, $b_2\dots b_{k-1}$ $n$-evaluates to $c_1\cdot b_s\cdot c_2=c$.
Overall, $u_1,\dots,u_k$ $n$-evaluate to $b_1,\dots,b_k$
and $b_2\dots b_{k-1}$ $n$-evaluates to $c$ such that $b_1\cdot c\cdot b_k=(b_1\cdot c_1\cdot d)\cdot(d'\cdot d_2\cdot b_k)\leq a_1\cdot a_2$.
\medskip

\noindent
{\em Idempotent and stabilisation nodes.}
Assume now that $v_1,\dots,v_\ell$ $\alpha(n)$-evaluate to $e,\dots,e$,
where $e$ is idempotent.
We aim at proving that $u_1,\dots,u_k$ $n$-evaluates to $b_1,\dots,b_k$,
and $b_2\dots b_{k-1}$ to $c$ such that $b_1\cdot c\cdot b_k\leq e$,
and if $\ell>\alpha(n)$, $b_1\cdot c\cdot b_k\leq e^\sharp$.

We rely on a suitable decomposition of the words:
there exist $0=i_0< i_1<\dots<i_m<i_{m+1}=\ell+1$ and $1=j_0<\dots<j_m=k$, as well as words
$\varepsilon=u'_0,u''_0,u'_1,u''_1,\dots,u'_m,u''_m=\varepsilon$ such that
\begin{align*}
v_{i_h}&=u''_{h-1}\,u_{j_{h-1}+1}\dots u_{j_{h}-1}\,u'_{h}&&
\text{for all $h=1\dots m$,}\tag{$\star$}\\
\text{and}\qquad
u_{j_h}&=u'_h\,v_{i_h+1}\dots v_{i_{h+1}-1}\,u''_h&&
\text{for all $h=0\dots m$.}\tag{$\star\star$}
\end{align*}
The best is to present it through a drawing. It is
annotated with all the variables that will be used during the proof.
The two main rows represent the two possible decompositions of the
word into $v_i$'s and $u_j$'s.
\begin{center}
\includegraphics[scale=1]{./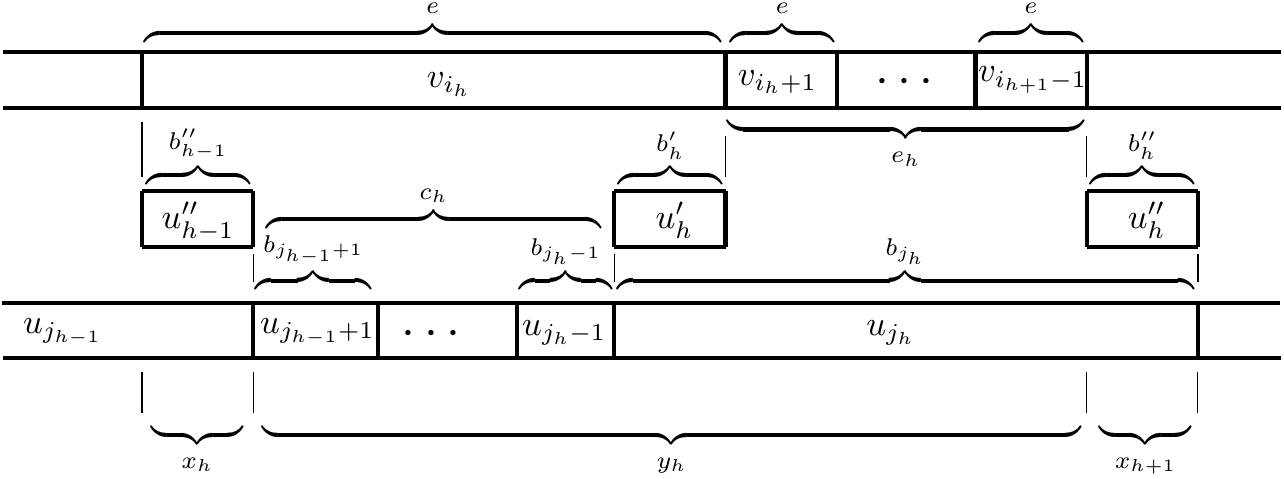}
\end{center}
Such a decomposition is not unique. It is sufficient to guarantee that each separation
between some $u_s$ and some $u_{s+1}$ fall in some $v_{i_h}$, and that
each $v_{i_h}$ contains such a separation.

We can apply the induction hypothesis on each equation $(\star\star)$.
Hence, it follows that
$u''_{h-1},u_{j_{h-1}+1},\dots, u_{j_{h}-1},u'_{h}$ $n$-evaluate to $b''_{h-1},b_{j_{h-1}+1},\dots, b_{j_{h}-1},b'_{h}$
and $b_{j_{h-1}+1} \dots b_{j_{h}-1}$ $n$-evaluates to $c_h$, such that
$b''_{h-1}\cdot c_h\cdot b'_{h}\leq e$.
Set furthermore $b'_0=b''_m=1$. We get that $u'_h,u''_h$ $n$-valuate to $b'_h,b''_h$ for all $h=0\dots m$.
Define furthermore for all $h=0\dots m$, $e_h$ as
\begin{align*}
e_h&=\begin{cases}
		1&\text{if}~i_{h+1}-i_h-1=0\\
		e&\text{if}~1\leq i_{h+1}-i_h-1\leq n\\
		e^\sharp&\text{if}~i_{h+1}-i_h-1> n
		\end{cases}
\end{align*}
Since each $v_h$ $\alpha(n)$-evauates to $e$, each $v_h$ also $n$-evaluates to $e$.
Now $e_h$ has been chosen such that $v_{i_h+1}\dots v_{i_{h+1}-1}$
$n$-evaluates to $e_h$. Thus from $(\star\star)$, $u_{j_h}$
$n$-evaluates for all $h=0\dots m$ to $b_{j_h}$ that we define
as $b_{j_h}=b'_h\cdot e_h\cdot b''_h$.
At this point, we have that
\begin{asparadesc}
\item[C1] $u_1,\dots,u_k$ $n$-evaluate to $b_1,\dots,b_k$.
\end{asparadesc}

To head toward the conclusion, we will use Lemma~\ref{lemma:conjugacy}.
Thus, let us set $x_{h}$ to be $b''_{h-1}$  and $y_{h}$ to be $c_h\cdot b'_h\cdot e_h$ for all $h=1\dots m$.
We have
\begin{align*}
x_h\cdot y_h&=(b''_{h-1}\cdot c_h\cdot b'_h)\cdot e_h=e\cdot e_h . \tag{$\dagger$}
\end{align*}
According to ($\dagger$), $x_h\cdot y_h\leq e$, and we can apply Lemma~\ref{lemma:conjugacy} to 
\begin{align*}
x_1,y_1,x_2,\dots, x_m,y_m
\end{align*}
and obtain that $(y_1\cdot x_2)\dots (y_{m-1}\cdot x_m )$ $n$-evaluates to some $z$
subject to the conclusions of the lemma (we will recall these conclusions upon need).

Let us now establish the following claims C2, C3 and C4.
\begin{asparadesc}
\item[C2] $b_2 \dots b_{k-1}$ $n$-evaluates to $(z\cdot c_m)$.\\
	Indeed, for all $h=1\dots m$, $c_h$ is chosen such that
	$b_{j_{h-1}+1} \dots b_{j_{h}-1}$ $n$-evaluates to $c_h$, thus
	$b_{j_{h-1}+1} \dots b_{j_{h}}$ $n$-evaluates to:
	\begin{align*}
	c_h\cdot b_{j_h}&=c_h\cdot b'_h\cdot e_h\cdot b''_h= y_h\cdot x_{h+1}\ ,
	\end{align*}
	by just unfolding the definitions. Since furthermore $(y_1\cdot x_2)\dots (y_{m-1}\cdot x_m )$ $n$-evaluates to $z$,
	it follows that $b_2\dots b_{j_{m-1}}$ $n$-evaluates to $z$.
	Furthermore, by choice of $c_m$, $b_{j_{m-1}+1}\dots b_{k-1}$ $n$-evaluates to $c_m$.
	Thus $b_2\dots b_{k-1}$ $n$-evaluates to $(z\cdot c_m)$ as claimed.
\item[C3] $b_1\cdot (z\cdot c_m) \cdot b_k\leq e$. \\
	Indeed, according to the conclusions of Lemma~\ref{lemma:conjugacy},
	$x_1\cdot z\cdot y_m\leq e$. Hence,
	\begin{align*}
	b_1\cdot (z\cdot c_m)\cdot b_k&=e_0\cdot x_1\cdot z \cdot y_m\leq e_0\cdot e\leq e\ . 
	\end{align*}
\item[C4] if $\ell>\alpha(n)=(n+1)\beta(n)-1$ then $b_1\cdot (z\cdot c_m) \cdot b_k\leq e^\sharp$
	(with $\beta$ taken from Lemma~\ref{lemma:conjugacy}).\\
	Since $i_0=0$ and $i_{m+1}=\ell+1$, we have
		$$\ell=m+\sum_{h=0}^m(i_{h+1}-i_h-1).$$
	Since $\ell>(n+1)\beta(n)-1$ this means that either $m>\beta(n)$,
	or $i_{h+1}-i_h-1>n$ for some $h=0\dots m$.
	This means that either $m>\beta(n)$, or $e_h=e^\sharp$ for some $h=0\dots m$.
	In all cases, 
	\begin{align*}
	b_1\cdot (z\cdot c_m)\cdot b_k&=e_0\cdot x_1\cdot z \cdot y_m\leq e^\sharp\ . 
	\end{align*}
	The last inequality can have three origins. Either $m>\beta(n)$ or $e_h=e^\sharp$
	(recall ($\dagger$) stating that $x_h\cdot y_h\leq e\cdot e_h$)
	for some $h=1\dots m$, or $e_0=e^\sharp$. In the two first cases, by Lemma~\ref{lemma:conjugacy},
	$x_1\cdot 	z\cdot y_m\leq e^\sharp$, and thus $e_0\cdot x_1\cdot z\cdot y_m\leq e^\sharp$
	(since $e_0$ is either $1$, or $e$, or $e^\sharp$).
	In the third case,
	$e_0\cdot x_1\cdot z\cdot  y_m\leq e^\sharp$ since $x_1\cdot z\cdot y_m\leq e$.
\end{asparadesc}

Gathering the claims C1, C2, C3,
we get that $u_1,\dots,u_k$ $n$-evaluate to $b_1,\dots,b_k$,
that $b_2\dots b_{k-1}$ $n$-evaluates to $c=z\cdot c_m$ and that $b_1\cdot c\cdot b_k\leq e$.
This is exactly the induction hypothesis for the idempotent node case.
If we further gather C4, we get that if the root node of $T$ is a stabilisation node, 
$b_1\cdot c\cdot b_k\leq e$. Once more the induction hypothesis is satisfied.
\end{proof}


\section{Recognisable cost functions}
\label{section:recognisability}

We have seen in the previous sections the notion of stabilisation monoids,
as well as the key technical tools for dealing with them, namely computations,
over-computations and under-computations.
In particular, we have seen Theorem~\ref{theorem:exists-computation}
and Theorem~\ref{theorem:unicity-computations} that state the existence
of computations and the ``unicity'' of their values.
In this section, we put these notions in action, and introduce the definition
of recognisable cost functions. We will see in particular that 
the hypothesis of Fact~\ref{fact:schema-cost-monadic}
is fulfilled by recognisable cost functions, and as a consequence
the domination problem for cost-monadic logic is decidable over finite
words.

\subsection{Recognisable cost functions}
\label{subsection:definition-recognisability}

Let us fix a stabilisation monoid $\monoid$.
An \intro{ideal} of~$\monoid$ is a subset~$I$ of~$M$ which is downward closed,
{\it i.e.}, such that whenever $a\leq b$ and $b\in I$ we have~$a\in I$.
Given a subset~$X\subseteq M$, we denote by~$X\ideal$
the least ideal which contains $X$, {\it i.e.}, $X\ideal=\{y~:~y\leq x,~x\in X\}$.

The three ingredients used for recognising a cost functions are
a monoid $\monoid$, a mapping~$h$ from letters of the alphabet $\alphabet$ to $M$
(that we extend into a morphism $\tilde h$ from $\alphabet^*$ to $M^*$),
and an ideal~$I$.

We then define for each non-negative integer~$p$ four functions from $\alphabet^*$ to $\NI$:
\begin{center}
\begin{tabular}{ll}
$\sem{\monoid,h,I}^{--}_p(u)$&= $\inf\{n~:~$there exists an $n$-under-computation of value in~$M\setminus I$\\
		&\qquad\qquad of height at most~$p$ for~$\tilde h(u)$ $\}$\\
$\sem{\monoid,h,I}^{-}_p(u)$&= $\inf\{n~:~$there exists an $n$-computation of value in~$M\setminus I$\\
		&\qquad\qquad of height at most~$p$ for~$\tilde h(u)$ $\}$\\
$\sem{\monoid,h,I}^{+}_p(u)$&= $\sup\{n+1~:~$there exists an $n$-computation of value in~$I$\\
		&\qquad\qquad of height at most~$p$ for~$\tilde h(u)$ $\}$\\
$\sem{\monoid,h,I}^{++}_p(u)$&= $\sup\{n+1~:~$there exists an $n$-over-computation of value in~$I$\\
		&\qquad\qquad of height at most~$p$ for~$\tilde h(u)$ $\}$
\end{tabular}
\end{center}
These four functions (for each $p$) are candidates to be recognised by $\monoid,h,I$.
The following lemma shows that if $p$ is sufficiently large, all the four functions
belong to the same cost function.
\begin{lemma}
For all~$p\geq 3|M|$, 
\begin{align*}
\sem{\monoid,h,I}^{--}_p\leq \sem{\monoid,h,I}^{-}_p\leq\sem{\monoid,h,I}^{+}_p\leq\sem{\monoid,h,I}^{++}_p\ .
\end{align*}
For all~$p$, there exists a polynomial~$\alpha$ such that
\begin{align*}
	\sem{\monoid,h,I}^{++}_p\leq\alpha\circ\sem{\monoid,h,I}^{--}_p\ .
\end{align*}
For all~$p\leq r$,
\begin{align*}
\sem{\monoid,h,I}^{--}_r&\leq\sem{\monoid,h,I}^{--}_p\ ,&\text{and}\quad\sem{\monoid,h,I}^{++}_p&\leq\sem{\monoid,h,I}^{++}_r\ .
\end{align*}
\end{lemma}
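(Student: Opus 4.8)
The plan is to treat the three assertions separately. The two outer inequalities of the first chain, and the whole of the third assertion, are immediate from the monotonicity of infima and suprema under set inclusion. The genuine content lies in the middle inequality $\sem{\monoid,h,I}^{-}_p\le\sem{\monoid,h,I}^{+}_p$ (which needs Theorem~\ref{theorem:exists-computation}) and, above all, in the polynomial bound $\sem{\monoid,h,I}^{++}_p\le\alpha\circ\sem{\monoid,h,I}^{--}_p$ (which needs Theorem~\ref{theorem:unicity-computations} together with the fact that $I$ is downward closed). I expect this last inequality to be the main obstacle.

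\textbf{The inclusion-based inequalities.} For the outer two, I would first observe that every $n$-computation is simultaneously an $n$-under-computation and an $n$-over-computation. Hence the set of thresholds $n$ admitting an $n$-computation of value in $M\setminus I$ of height at most $p$ is contained in the corresponding set for under-computations, and passing to the infimum gives $\sem{\monoid,h,I}^{--}_p\le\sem{\monoid,h,I}^{-}_p$; dually, the set of thresholds admitting an $n$-computation of value in $I$ is contained in the one for over-computations, and passing to the supremum gives $\sem{\monoid,h,I}^{+}_p\le\sem{\monoid,h,I}^{++}_p$. The same inclusion idea, now using that a computation of height at most $p$ is also of height at most $r$ whenever $p\le r$, yields the third assertion: enlarging the height bound enlarges the admissible family of under-computations (lowering the infimum) and of over-computations (raising the supremum).

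\textbf{The middle inequality.} This is the first place a real argument is needed, and it is where the hypothesis $p\ge 3|M|$ is used. Writing $m=\sem{\monoid,h,I}^{+}_p(u)$, I may assume $m$ finite. By Theorem~\ref{theorem:exists-computation} there is an $m$-computation for $\tilde h(u)$ of height at most $3|M|\le p$; let $a\in M$ be its value. If $a\in I$ then $m+1$ lies in the set whose supremum defines $\sem{\monoid,h,I}^{+}_p(u)=m$, forcing $m+1\le m$, which is absurd; hence $a\in M\setminus I$, so this computation witnesses $\sem{\monoid,h,I}^{-}_p(u)\le m$. The mechanism is the dichotomy that a single guaranteed computation must land either in $I$ or outside it.

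\textbf{The main obstacle: the polynomial bound.} Let $\alpha$ be the polynomial provided by Theorem~\ref{theorem:unicity-computations} for the height bound $p$, fix $u$, and set $N=\sem{\monoid,h,I}^{--}_p(u)$; I may assume $N$ finite, since otherwise $\alpha(N)=\infty$ and there is nothing to prove. Because the family of $n$-under-computations only grows with $n$ (Fact~\ref{fact:under-over-computation-monotonicity}), the infimum defining $N$ is attained, so there is an $N$-under-computation of some value $a\in M\setminus I$ and height at most $p$. Now take any $n$-over-computation of value $b\in I$ and height at most $p$ contributing the term $n+1$ to $\sem{\monoid,h,I}^{++}_p(u)$; I claim $n<\alpha(N)$. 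Indeed, if $n\ge\alpha(N)$ then by Fact~\ref{fact:under-over-computation-monotonicity} this is also an $\alpha(N)$-over-computation of value $b$, and Theorem~\ref{theorem:unicity-computations} applied to the $N$-under-computation and this $\alpha(N)$-over-computation over the common word $\tilde h(u)$ gives $a\le b$; since $I$ is downward closed and $b\in I$, this forces $a\in I$, contradicting $a\in M\setminus I$. Thus every contributing $n$ satisfies $n+1\le\alpha(N)$, and taking the supremum yields $\sem{\monoid,h,I}^{++}_p(u)\le\alpha\bigl(\sem{\monoid,h,I}^{--}_p(u)\bigr)$. The only points to keep track of are the degenerate cases (an empty set of admissible over-computations, where the supremum is $0$, or $I=M$, where $N=\infty$), in which the inequality holds trivially.
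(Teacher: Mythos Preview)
Your proposal is correct and follows essentially the same route as the paper's own proof, which is very terse: the paper simply points to Theorem~\ref{theorem:exists-computation} for the middle inequality (the existence of an $n$-computation for every $n$ forces the union of the two defining sets to cover $\nats$), to Theorem~\ref{theorem:unicity-computations} for the polynomial bound, and to the obvious set inclusions for the rest. You have correctly unpacked these references, in particular supplying the contrapositive argument with the downward closure of $I$ and the monotonicity from Fact~\ref{fact:under-over-computation-monotonicity} that the paper leaves implicit.
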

\begin{proof}
All the inequalities are direct consequences of the results of the previous section.

The middle inequality in the first equation simply comes from 
the fact that, thanks to Theorem~\ref{theorem:exists-computation},
the union of the set of integers over which range the infimum in the definition of $\sem{\monoid,h,I}^{+}_p$
with the set of integers over which range the supremum in the definition of $\sem{\monoid,h,I}^{-}_p$ 
is $\nats$. This suffices for justifying the middle inequality.
The first and last inequalities of the equation simply come from the fact
that each $n$-computation is in particular
an $n$-under-computation and an $n$-over-computation respectively.

The second line is directly deduced from Theorem~\ref{theorem:unicity-computations}.

The third statement  simply comes from the the fact that each $n$-under-computation
of height at most $p$ is also an $n$-under-computation of height $r$ (left inequality).
The same goes for over-computations.
\end{proof}
The main consequence of this lemma takes the form of a definition.
\begin{definition}\label{definition:recognisable}
For all stabilisation monoids $\monoid$, all mappings from an alphabet $\alphabet$
to~$M$, and all ideals $I$, there exists a unique cost function $\sem{\monoid,h,I}$
from $\alphabet^*$ to $\NI$ such that for all $p\geq 3|M|$, all the functions 
$\sem{\monoid,h,I}^{--}_p$,
$\sem{\monoid,h,I}^{-}_p$,
$\sem{\monoid,h,I}^{+}_p$ and
$\sem{\monoid,h,I}^{++}_p$ belong to $\sem{\monoid,h,I}$.
It is called the \intro{cost function recognised by $\monoid,h,I$}.
One sometimes also writes that it  is recognised by $\monoid$.
\end{definition}

\begin{example}\label{example:recognisable-size}
The cost function $|\cdot|_a$ is recognised 
by~$\monoid,\{a\mapsto a,b\mapsto b\},\set{0}$,
where $\monoid$ is the stabilisation monoid of Example~\ref{example:stabilisation-monoid-counta}.
In particular, the informal reasoning developed in the example such as ``a few+a few=a few''
now has a formal meaning: the imprecision in such arguments is absorbed in the
equivalence up to $\alpha$ of computation trees, and results in the fact that the monoid
does not define a unique function, but instead directly a cost function.
\end{example}

Another example is the case of standard regular languages.
\begin{example}\label{example:language-recognisable}
Let us recall that a monoid $\monoid$ together with $h$ from $\alphabet$ to $M$
and a subset $F\subseteq M$ is said to \intro{recognise} a language $L$ over $\alphabet$
if for all words $u$, $u\in L$ if and only if $\pi(\tilde h(u))\in F$.
The same monoid can be seen, thanks to Remark~\ref{remark:standard-monoid}
as a stabilisation monoid. In this case, thanks to Remark~\ref{remark:computations-standard},
the same $\monoid,h,F$ recognises the characteristic mapping of $L$.
\end{example}

An elementary result is also the closure under composition with a morphism.
\begin{myfact}\label{fact:recognisable-composition-morphism}
Let $\monoid,h,I$ recognise a cost function $f$ over $\alphabet^*$,
and let $z$ be a mapping from another alphabet $\alphabetB$ to $\alphabet$, then $\monoid,h\circ z,I$
recognises $f\circ \tilde z$.
\end{myfact}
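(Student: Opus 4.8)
The plan is to reduce the statement to the functorial identity $\widetilde{h\circ z}=\tilde h\circ\tilde z$ between morphisms from $\alphabetB^*$ to $M^*$. First I would verify this identity: since $h\circ z$ sends a letter $c\in\alphabetB$ to $h(z(c))\in M$, its morphism extension sends $c_1\dots c_k$ to $h(z(c_1))\dots h(z(c_k))$, which is precisely $\tilde h$ applied to $z(c_1)\dots z(c_k)=\tilde z(c_1\dots c_k)$. Consequently, for every $v\in\alphabetB^*$, the word $\widetilde{h\circ z}(v)$ over which computations are sought in the definition of $\sem{\monoid,h\circ z,I}^{\tau}_p(v)$ is literally equal to the word $\tilde h(\tilde z(v))$ appearing in $\sem{\monoid,h,I}^{\tau}_p(\tilde z(v))$. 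As the ideal $I$ and the height bound $p$ are unchanged, the families of $n$-(under/over-)computations quantified over coincide, and hence for each of the four decorations $\tau\in\{--,-,+,++\}$ and each $p$,
$$\sem{\monoid,h\circ z,I}^{\tau}_p=\sem{\monoid,h,I}^{\tau}_p\circ\tilde z\ .$$

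The second ingredient is that right composition with $\tilde z$ is compatible with the equivalence $\approx$, which makes $f\circ\tilde z$ a well-defined cost function. I would check this straight from the definitions: if $g\preccurlyeq_\alpha g'$, that is $g\leq\overline\alpha\circ g'$ pointwise on $\alphabet^*$, then composing on the right with $\tilde z$ yields $g\circ\tilde z\leq\overline\alpha\circ(g'\circ\tilde z)$ pointwise on $\alphabetB^*$, i.e.\ $g\circ\tilde z\preccurlyeq_\alpha g'\circ\tilde z$. Applying this in both directions shows that $g\approx_\alpha g'$ implies $g\circ\tilde z\approx_\alpha g'\circ\tilde z$, so the $\approx$-class of $g\circ\tilde z$ depends only on the $\approx$-class of $g$; this common class is by definition $f\circ\tilde z$.

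Finally I would combine the two steps. By Definition~\ref{definition:recognisable}, for $p\geq 3|M|$ the function $\sem{\monoid,h,I}^{-}_p$ belongs to $f$. By the first step, $\sem{\monoid,h\circ z,I}^{-}_p=\sem{\monoid,h,I}^{-}_p\circ\tilde z$, and by the second step this function belongs to $f\circ\tilde z$. But Definition~\ref{definition:recognisable} also guarantees that all four families $\sem{\monoid,h\circ z,I}^{\tau}_p$ for $p\geq 3|M|$ lie in the single cost function $\sem{\monoid,h\circ z,I}$; since one of them lies in $f\circ\tilde z$, it follows that $\sem{\monoid,h\circ z,I}=f\circ\tilde z$, which is exactly the claim.

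There is essentially no genuinely hard step here, the argument being pure bookkeeping. The only point deserving care is the well-definedness of $f\circ\tilde z$ as a cost function, i.e.\ the compatibility of $\approx$ with right composition; this is precisely why the statement is naturally phrased at the level of cost functions rather than individual functions. Everything else follows from the functoriality identity together with the uniqueness clause of Definition~\ref{definition:recognisable}.
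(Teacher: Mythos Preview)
Your proof is correct and follows essentially the same approach as the paper, which simply observes in one line that the computations involved in $\sem{\monoid,h,I}^+(\tilde z(u))$ coincide with those in $\sem{\monoid,h\circ z,I}^+(u)$. Your write-up is more thorough in making the functorial identity $\widetilde{h\circ z}=\tilde h\circ\tilde z$ explicit and in verifying that $f\circ\tilde z$ is well-defined at the level of cost functions, points the paper leaves implicit.
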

\begin{proof}
One easily checks that the computations involved in the definition of $\sem{\monoid,h,I}^+(\tilde z(u))$
are exactly the same as the one involved in the definition of  $\sem{\monoid,h\circ z,I}^+(u)$.
\end{proof}

We continue this section by developing other tools for analysing the recognisable cost functions.

\subsection{The $\sharp$-expressions}
\label{subsection:sharp-expression}

We now present the notion of $\sharp$-expressions. This provides a convenient notation
in several situations. This object was introduced by
Hashiguchi for studying distance automata \cite{Hashiguchi82b}.
The $\sharp$-expressions can be seen in two different ways.
On one side, a $\sharp$-expression allows to denote an element in a stabilisation monoid.
On the other side, a $\sharp$-expression denotes an infinite sequence of words.
Such sequences are used as witnesses, {\it e.g.}, of the non-existence of a bound
for a function (if the function tends toward infinity over this sequence),
or of the non-divergence of a function $f$ (if the function is bounded over the sequence).
More generally, $\sharp$-expressions will be used as witnesses 
of non-domination.

In a finite monoid~$\monoid$, given an element $a\in M$,
one denotes by~$a^\omega$ the only idempotent which is a power of $a$.
This element does not exist in general for infinite monoids, while it always does
for finite monoids (our case). Furthermore, when it exists, it is unique.
In particular in a finite monoid $\monoid$, $a^\omega=a^\Omega$, where
$\Omega$ is some multiple of ${|M|!}$.
This is a useful notion since the operator of stabilisation is only defined
for idempotents. In a stabilisation monoid, let us denote by
$a^\osharp$ the element $(a^\Omega)^\sharp$.
As opposed to $a^\sharp$ which is not defined if $a$ is not idempotent,
$a^\osharp$ is always defined. We consider $\Omega$ as fixed from now.

A \intro{$\sharp$-expression} over a set~$A$ is
an expression composed of letters from $A$, products, and exponents with $\osharp$.
A $\sharp$-expression~$E$ over a stabilisation monoid denotes a computation
in this stabilisation monoid. It naturally evaluates to an element of $E$,
denoted~$\val(E)$, and called the \intro{value of $E$}.
A $\sharp$-expression is called \intro{strict} if it contains at least
one occurrence of~$\osharp$.

Given a set $A\subseteq M$, call \intro{$\langle A\rangle^\sharp$}
the set of values of expressions over~$A$.
Equivalently, it is the least set which contains $A$ and is closed under product and
stabilisation of idempotents. One also denotes
\intro{$\langle A\rangle^{\sharp{+}}$} the set of values of strict $\sharp$-expressions
over~$A$.

The~\intro{$n$-unfolding} of a $\sharp$-expression over~$A$ is
a word in~$A^+$ defined inductively as follows:
\begin{align*}
\unf(a,n)	&= a\\
\unf(EF,n)	&= \unf(E,n)\unf(F,n)\\
\unf(E^\osharp,n)	&=\overbrace{\unf(E,n)\dots\unf(E,n)}^{n~\text{times}}
\end{align*}

We conclude this section by showing how
$\sharp$-expressions can be used as witnesses of the behaviour
of a regular cost function.
\begin{proposition}\label{proposition:sharp-witness}
Assume~$\monoid,h,I$ recognises $f$, and
let~$E$ be a $\sharp$-expression over~$\alphabet$ of value~$a$, then:
\begin{iteMize}{$\bullet$}
\item if~$a\in I$ then $\{f(\unf(E,\Omega n))~:~n\geq 1\}$ tends toward infinity,
\item if~$a\not\in I$ then $\{f(\unf(E,\Omega n))~:~n\geq 1\}$ is bounded.
\end{iteMize}
\end{proposition}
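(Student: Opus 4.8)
The plan is to reduce both statements to the two extreme functions $\sem{\monoid,h,I}^{--}_p$ and $\sem{\monoid,h,I}^{++}_p$, which belong to the cost function $f$ for every $p\geq 3|M|$ by Definition~\ref{definition:recognisable}. The bridge between the $\sharp$-expression $E$ and these functions is a \emph{canonical computation}: I would first prove that there is a constant $c_E$, depending only on $E$ (and on the fixed integer $\Omega$), such that for every threshold $m$ and every $n>m$ there is a genuine $m$-computation for $\tilde h(\unf(E,\Omega n))$ of value $a$ and height at most $c_E$. Taking $p=\max(3|M|,c_E)$ then makes both $\sem{\monoid,h,I}^{--}_p$ and $\sem{\monoid,h,I}^{++}_p$ equivalent to $f$ while being large enough to host this computation.

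The canonical computation is built by induction on $E$. For a letter, a single leaf of value $h(a)$ works. For a product $FG$, I would combine the canonical computations for $F$ and for $G$ (over the two halves of $\unf(FG,\Omega n)=\unf(F,\Omega n)\unf(G,\Omega n)$) under a binary node, whose value $\val(F)\cdot\val(G)=\val(FG)$ is correct. The interesting case is $E=F^\osharp$, where $\unf(E,\Omega n)$ consists of $\Omega n$ consecutive copies of $\unf(F,\Omega n)$. Writing $v=\val(F)$, I would group the copies into $n$ blocks of $\Omega$ copies each; inside a block I combine $\Omega$ copies of the canonical computation for $F$ (each of value $v$) by a binary caterpillar, obtaining a tree whose root has value $v^\Omega=v^\omega$, an idempotent which I denote $e$ (here $\Omega$ being a multiple of $|M|!$ is exactly what guarantees idempotency). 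The $n$ block-roots all carry the same idempotent value $e$; placing them under one node of degree $n$ yields, since $n>m$, a \emph{stabilisation} node of value $e^\sharp=(v^\Omega)^\sharp=a$. This is a genuine $m$-computation, and its height is $c_F+\Omega$; so $c_E$ is finite and independent of $n$ and $m$. Note that when $m\geq n$ the same tree would use an idempotent node and evaluate to $e$ instead of $a$, which is precisely why the hypothesis $n>m$ is needed, and why I only ever use thresholds strictly below $n$; the same threshold controls all, possibly nested, occurrences of $\osharp$ at once.

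With this tool the two cases are immediate, using that a computation is at the same time an under- and an over-computation. For the first bullet, suppose $a\in I$. For each $n\geq 2$ the canonical computation at threshold $m=n-1$ is an $(n-1)$-over-computation of value $a\in I$ and height at most $p$, so $\sem{\monoid,h,I}^{++}_p(\unf(E,\Omega n))\geq (n-1)+1=n$. Since $\sem{\monoid,h,I}^{++}_p\approx f$, the value $f(\unf(E,\Omega n))$ cannot stay bounded: a fixed bound $B$ would, for the relevant correction function $\alpha$, force $\sem{\monoid,h,I}^{++}_p\leq\overline\alpha(B)$ on the whole sequence, contradicting $\sem{\monoid,h,I}^{++}_p\geq n$. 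Hence it tends to infinity. For the second bullet, suppose $a\notin I$. For each $n\geq 2$ the canonical computation at the fixed threshold $m=1$ is a $1$-under-computation of value $a\notin I$ and height at most $p$, so $\sem{\monoid,h,I}^{--}_p(\unf(E,\Omega n))\leq 1$; the single value at $n=1$ does not affect boundedness. Thus $\sem{\monoid,h,I}^{--}_p$ is bounded on $\{\unf(E,\Omega n):n\geq 1\}$, and since boundedness is preserved by $\approx$ (Proposition~\ref{proposition:preccurlyeq-characterisation}), so is $f$.

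The routine verifications are the correctness of each node of the canonical computation and the bookkeeping of its height; the only point requiring genuine care is the $F^\osharp$ step, where one must check simultaneously that the block products are legitimate binary nodes, that the $\Omega$-power is idempotent, and that the top node qualifies as a stabilisation node exactly when $n>m$. This uniform control of the threshold across nested occurrences of $\osharp$ is the heart of the argument; everything else follows from the already-established equivalence of the four recognising functions and from the characterisation of $\approx$.
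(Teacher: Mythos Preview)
Your proposal is correct and follows essentially the same approach as the paper. The paper defines the identical canonical computation by induction on the expression (denoted $\computation(E,n)$, with the $\Omega$-fold product handled by a binary caterpillar just as you do), observes that it is an $m$-computation for $\unf(E,\Omega n)$ of value $\val(E)$ and bounded height for every $m<n$, and then plugs it into $\sem{\monoid,h,I}^{+}_H$ and $\sem{\monoid,h,I}^{-}_H$ rather than the $++$ and $--$ variants you chose; since a computation is simultaneously an over- and under-computation, the difference is immaterial.
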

\proof
We need, given a positive integer $n$, to produce an $n$-computation of value $a$.
It is defined as follows:
\begin{align*}
\computation(a,n)	  &= a\\
\computation(EF,n) &= \val(EF)[\computation(E,n),\computation(F,n)]\\
\computation(E^\osharp,n)&= \val(E^\osharp)[\overbrace{\computation(E^\Omega,n),\dots,\computation(E^\Omega,n)}^{n~\text{times}}]\\
\computation(E^1,n)&= \computation(E,n)\\
\computation(E^k,n)&= \val(E^k)[\computation(E,n),\computation(E^{k-1},n)]\tag{for $k>1$}
\end{align*}
It is easy to check that $\computation(E,n)$ is an~$n$-computation for~$\unf(u,\Omega n)$, that
its value is $\val(E)$ and that its height $H$ depends only upon $E$.
What we have is in fact stronger:
$\computation(E,n)$ is an $m$-computation for~$\unf(E,\Omega n)$ for all~$m\leq n$.

\begin{asparaitem}
\item
Let us suppose~$a\in I$.
We have:
\begin{align*}
\sem{\monoid,h,I}^{+}_{H}&=\sup\{m+1~:~\text{there is an $m$-computation for~$\tilde h(\unf(E,\Omega n))$ of value in~$I$}\}\\
	&\geq n\ .\tag{\text{using~$\computation(E,n)$ as a witness}}
\end{align*}
Thus~$f(\unf(E,\Omega n))$ tends toward infinity when $n$ tends to infinity.

\item 
Suppose now that if~$a\not\in  I$, one obtains:
\begin{align*}
\sem{\monoid,h,I}^{-}_{H}&=\inf\{m~:~\text{there is an $m$-computation
		for~$\tilde h(\unf(E,\Omega n))$ of value in~$M\setminus I$}\}\\
	&=0\ .\tag{\text{using~$\computation(E,n)$ as a witness}}
\end{align*}
Thus $\{f(\unf(E,n))~:~n\in\nats\}$ is bounded.\qed
\end{asparaitem}

\subsection{Quotients, sub-stabilisation monoids and products}
\label{subsection:morphisms}

We introduce here some useful definitions for manipulating and composing stabilisation
monoids. We use them for proving the closure of recognisable cost functions under
min and max (Corollary~\ref{corollary:rec-min-max}).

Let $\monoid=\langle M,\cdot,\sharp,\leq\rangle$
and $\monoid'=\langle M',\cdot',\sharp',\leq'\rangle$ be stabilisation monoids.
A \intro{morphism of stabilisation monoids}
from  $\monoid$ to $\monoid'$
is a mapping~$\mu$ from~$M$ to~$M'$ such that
\begin{iteMize}{$\bullet$}
\item $\mu(1_\monoid)=1_{\monoid'}$,
\item $ \mu(x)\cdot' \mu(y) = \mu(x\cdot y)$ for all~$x,y$ in~$M$,
\item for all~$x,y$ in~$M$, if $x\leq y$ then $\mu(x)\leq' \mu(y)$,
\item $\mu(e)^{\sharp'} = \mu(e^\sharp)$ for all~$e\in E(\monoid)$
  (in this case, $\mu(e)\in E(\monoid')$).
\end{iteMize}

\begin{remark}\label{remark:morphism}
The $n$-computations ({\it resp.}, $n$-under-computations, $n$-over-computations)
over~$\monoid$ are transformed by morphism (applied to each node of the tree)
into $n$-computations ({\it resp.}, $n$-under-computations, $n$-over-computations)
over~$\monoid'$.
In  a similar way, the image under morphism of a $\sharp$-expression
over~$\monoid$ is a~$\sharp$-expression over~$\monoid'$.
\end{remark}
We immediately obtain:
\begin{lemma}\label{lemma:morphism}
For $\mu$ a morphism of stabilisation monoids from
$\monoid$ to~$\monoid'$, $h$ a mapping 
from an alphabet~$\alphabet$ to~$\monoid$ and~$I'$ an ideal
of $\monoid'$, we have:
\begin{align*}
\sem{\monoid,h,\mu^{-1}(I')}&=\sem{\monoid',\mu\circ h,I'}\ .
\end{align*}
\end{lemma}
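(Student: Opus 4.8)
The plan is to prove the equality of the two cost functions by comparing their defining families of functions, using only the ``image under $\mu$'' direction of Remark~\ref{remark:morphism} together with the fact (established in the lemma preceding Definition~\ref{definition:recognisable}) that, for $p$ large enough, the four functions $\sem{\cdot}^{--}_p,\sem{\cdot}^{-}_p,\sem{\cdot}^{+}_p,\sem{\cdot}^{++}_p$ attached to a triple all belong to the same cost function. First I would record that $I=\mu^{-1}(I')$ is indeed an ideal of $\monoid$, so that the left-hand side is well defined: if $a\leq b$ and $b\in\mu^{-1}(I')$, then $\mu(a)\leq'\mu(b)\in I'$, and since $I'$ is downward closed we get $\mu(a)\in I'$, i.e.\ $a\in\mu^{-1}(I')$. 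Write $h'=\mu\circ h$, so that $\widetilde{h'}(u)$ is exactly the letter-by-letter image of $\tilde h(u)$ under $\mu$.

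The core observation, pointwise and valid for every $p$ and every $u\in\alphabet^*$, is as follows. By Remark~\ref{remark:morphism}, applying $\mu$ to every node of an $n$-over-computation (resp.\ $n$-under-computation) $T$ over $\monoid$ for $\tilde h(u)$ yields an $n$-over-computation (resp.\ $n$-under-computation) over $\monoid'$ for $\widetilde{h'}(u)$, of the same height and of value $\mu$ applied to the value of $T$. Since the value of $T$ lies in $\mu^{-1}(I')$ exactly when its $\mu$-image lies in $I'$, and lies in $M\setminus\mu^{-1}(I')$ exactly when its image lies in $M'\setminus I'$, the witness sets defining the left-hand functions inject into those defining the right-hand functions. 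Taking suprema and infima over these nested witness sets gives, pointwise,
\[
\sem{\monoid,h,\mu^{-1}(I')}^{++}_p\ \leq\ \sem{\monoid',h',I'}^{++}_p
\qquad\text{and}\qquad
\sem{\monoid',h',I'}^{--}_p\ \leq\ \sem{\monoid,h,\mu^{-1}(I')}^{--}_p\ .
\]
Note that the two inequalities point in \emph{opposite} directions: the ``value in $I$'' family dominates one way, the ``value outside $I$'' family the other way.

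I would then close the argument by bouncing between the $+$ and $-$ families via the internal equivalences. Fix $p\geq 3\max(|M|,|M'|)$, so that on each side all four functions are $\approx$-equivalent. The first inequality gives $\sem{\monoid,h,\mu^{-1}(I')}^{++}_p\preccurlyeq\sem{\monoid',h',I'}^{++}_p$, while the chain
\[
\sem{\monoid',h',I'}^{++}_p\ \approx\ \sem{\monoid',h',I'}^{--}_p\ \leq\ \sem{\monoid,h,\mu^{-1}(I')}^{--}_p\ \approx\ \sem{\monoid,h,\mu^{-1}(I')}^{++}_p
\]
gives the reverse domination (composing corrections by Fact~\ref{fact:preccurlyeq-elem}). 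Hence the two $^{++}_p$ functions are $\approx$-equivalent, and since they represent the two cost functions of the statement, these are equal.

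The reason this strategy works is exactly that it never requires the converse of Remark~\ref{remark:morphism}: there is no reason a computation over $\monoid'$ should be the $\mu$-image of a computation over $\monoid$ (as $\mu$ is in general neither injective nor surjective), and I expect that to be the only tempting but harder and unnecessary route. The single point that needs care — the ``main obstacle'', such as it is — is precisely to keep track of the orientations, checking that the $+$ family and the $-$ family yield inequalities in opposite directions so that combining them with the within-side equivalences produces equivalence in both directions rather than merely a one-sided bound.
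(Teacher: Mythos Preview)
Your argument is correct and follows essentially the same approach as the paper: push $n$-computations along $\mu$ via Remark~\ref{remark:morphism}, obtain one inequality from the ``value in $I$'' family and the opposite inequality from the ``value outside $I$'' family, and close using the equivalence of the four defining functions for $p$ large enough. The only cosmetic difference is that the paper works with the $^{+}/^{-}$ families (actual computations) rather than your $^{++}/^{--}$ families (over/under-computations), and leaves the final bouncing step implicit; your version is simply more explicit.
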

\begin{proof} Let us remark first that~$I=\mu^{-1}(I')$
is an ideal of~$\monoid$.

Let~$u$ be a word in~$\alphabet^*$.
Let us consider an $n$-computation over~$\monoid$ for~$\tilde h(u)$
of value~$a\in \mu^{-1}(I')$. This computation can be transformed
by morphism into an $n$-computation over~$\monoid'$
for $(\mu\circ h)(u)$  of value~$\mu(a)\in I'$.
In a similar way, each $n$-computation over~$\tilde h(u)$
of value~$a\in M\setminus \mu^{-1}(I')$
can be transformed into an $n$-computation of value
$\mu(a)\in M'\setminus I'$.
\end{proof}

The notion of morphism is intimately related to the notion of product.
Given two stabilisation monoids $\monoid=\langle M,\cdot,\sharp,\leq\rangle$
and~$\monoid'=\langle M',\cdot',\sharp',\leq'\rangle$,
one defines their \intro{product} by:
\begin{align*}
\monoid\times \monoid'&=\langle M\times M',\cdot'',\sharp'',\leq ''\rangle
\end{align*}
where $(x,x')\cdot''(y,y')=(x\cdot y,x'\cdot' y')$, $(e,e')^{\sharp''}=(e^\sharp,{e'}^{\sharp'})$,
and~$(x,x')\leq''(y,y')$ if and only if $x\leq y$ and~$x'\leq' y'$.

As expected, the projection over the first component
({\it resp.}, second component) is a morphism  of stabilisation monoids
from $\monoid\times\monoid'$ onto $\monoid$ ({\it resp.}, onto~$\monoid'$).
It follows by Lemma~\ref{lemma:morphism}
that if $f$ is recognised by~$\monoid,h,I$ and~$g$ by
$\monoid',h',I'$, then $f$ is also recognised
by~$\monoid\times\monoid',h\times h',I\times M'$
and $g$ by~$\monoid\times\monoid',h\times h',M\times I'$, 
in which one sets 
$(h\times h')(a) = (h(a),h(a'))$ for all letters~$a$.
Thus one obtains:
\begin{lemma}\label{lemma:joint-recognition}
If~$f$ and~$g$ are recognisable cost functions
over $\alphabet^*$,
there exist a stabilisation monoid~$\monoid$,
an application $h$ from~$\alphabet$ to~$\monoid$
and two ideals~$I,J$ such that~$\monoid,h,I$ recognises~$f$
and $\monoid,h,J$ recognises~$g$.
\end{lemma}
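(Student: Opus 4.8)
The plan is to invoke the product construction together with Lemma~\ref{lemma:morphism}. Since $f$ and $g$ are recognisable cost functions over the same alphabet $\alphabet^*$, I would first fix stabilisation monoids $\monoid_1,\monoid_2$, mappings $h_1\colon\alphabet\to M_1$ and $h_2\colon\alphabet\to M_2$, and ideals $I_1$ of $\monoid_1$ and $I_2$ of $\monoid_2$ witnessing $\sem{\monoid_1,h_1,I_1}=f$ and $\sem{\monoid_2,h_2,I_2}=g$. The crucial point---exploiting that both mappings share the alphabet $\alphabet$---is that they can be combined into a single mapping into the product monoid.

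Next I would set $\monoid=\monoid_1\times\monoid_2$ and define $h\colon\alphabet\to M_1\times M_2$ by $h(a)=(h_1(a),h_2(a))$. I would take the two ideals to be $I=I_1\times M_2$ and $J=M_1\times I_2$; a one-line check shows that each is downward closed (if $(a,b)\leq(a',b')$ with $a'\in I_1$, then $a\leq a'$ forces $a\in I_1$, and symmetrically on the right), so they are genuine ideals of $\monoid$. Moreover they are exactly the preimages $I=\pi_1^{-1}(I_1)$ and $J=\pi_2^{-1}(I_2)$, where $\pi_1,\pi_2$ denote the two projections.

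The final step is to transfer recognition across the projections. As already observed, each projection $\pi_i$ is a morphism of stabilisation monoids from $\monoid$ onto $\monoid_i$, and $\pi_1\circ h=h_1$, $\pi_2\circ h=h_2$. Hence Lemma~\ref{lemma:morphism} gives
\[
\sem{\monoid,h,I}=\sem{\monoid,h,\pi_1^{-1}(I_1)}=\sem{\monoid_1,\pi_1\circ h,I_1}=\sem{\monoid_1,h_1,I_1}=f\ ,
\]
and symmetrically $\sem{\monoid,h,J}=g$, which is exactly the claim. I do not expect any genuine obstacle here: the whole argument is the routine product-and-pullback pattern, and all the real content (that $\pi_1,\pi_2$ are morphisms, and the morphism-invariance of the recognised cost function) is already supplied by Remark~\ref{remark:morphism} and Lemma~\ref{lemma:morphism}. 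The only things worth stating explicitly are that $I_1\times M_2$ and $M_1\times I_2$ are ideals, and that the common alphabet makes the product mapping $h$ well defined.
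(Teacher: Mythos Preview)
Your proposal is correct and follows essentially the same approach as the paper: take the product stabilisation monoid, the paired map $h=(h_1,h_2)$, pull back the two ideals along the projections, and invoke Lemma~\ref{lemma:morphism}. If anything, you spell out more carefully than the paper does that $I_1\times M_2$ and $M_1\times I_2$ are ideals and are precisely the preimages $\pi_1^{-1}(I_1)$, $\pi_2^{-1}(I_2)$.
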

\begin{corollary}\label{corollary:rec-min-max}
If~$f$ and~$g$ are recognisable, then so are $\max(f,g)$
and $\min(f,g)$.
\end{corollary}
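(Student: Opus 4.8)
The plan is to reduce both operations to a single stabilisation monoid and then realise $\min$ and $\max$ by taking the intersection and the union of the two recognising ideals. First I would apply Lemma~\ref{lemma:joint-recognition} to obtain one stabilisation monoid $\monoid$, a single mapping $h$ from $\alphabet$ to $M$, and two ideals $I,J$ such that $\monoid,h,I$ recognises $f$ and $\monoid,h,J$ recognises $g$. The elementary observation is that $I\cap J$ and $I\cup J$ are again ideals, being intersections and unions of downward-closed subsets of $M$. The claim I would then establish is that $\monoid,h,I\cap J$ recognises $\min(f,g)$ and $\monoid,h,I\cup J$ recognises $\max(f,g)$.

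The key point is to pick, for each operation, the representative among the four functions of Definition~\ref{definition:recognisable} that turns the claim into an \emph{exact} pointwise identity. For $\min$ I would use the computation form and prove, for every $p$ and every word $u$,
\begin{align*}
\sem{\monoid,h,I\cap J}^{-}_p &= \min\big(\sem{\monoid,h,I}^{-}_p,\ \sem{\monoid,h,J}^{-}_p\big)\ .
\end{align*}
This holds because $M\setminus(I\cap J)=(M\setminus I)\cup(M\setminus J)$, so an $n$-computation of height at most $p$ has value in $M\setminus(I\cap J)$ exactly when its value lies in $M\setminus I$ or in $M\setminus J$; hence the set of admissible thresholds $n$ is the union of the two corresponding threshold sets, and an infimum over a union is the minimum of the two infima. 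Dually, for $\max$ I would use the over-computation form and prove
\begin{align*}
\sem{\monoid,h,I\cup J}^{++}_p &= \max\big(\sem{\monoid,h,I}^{++}_p,\ \sem{\monoid,h,J}^{++}_p\big)\ ,
\end{align*}
since an over-computation of height at most $p$ has value in $I\cup J$ precisely when its value lies in $I$ or in $J$, so the admissible threshold set is again a union, and the supremum over a union is the maximum of the two suprema.

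To conclude I would first record that $\min$ and $\max$ are compatible with $\approx$: if $f_1\approx_\alpha f_2$ and $g_1\approx_\alpha g_2$ for a common $\alpha$ (available by Fact~\ref{fact:preccurlyeq-elem}), then monotonicity of $\overline\alpha$ gives $\min(f_1,g_1)\le\overline\alpha\circ\min(f_2,g_2)$, and symmetrically, so $\min$ and $\max$ descend to well-defined operations on cost functions. For $p\ge 3|M|$, each of the functions appearing above is a representative of the corresponding recognisable cost function, so the two displayed identities, together with this compatibility, yield $\min(f,g)=\sem{\monoid,h,I\cap J}$ and $\max(f,g)=\sem{\monoid,h,I\cup J}$, both of which are recognisable. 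The only real subtlety — hence the step I would flag as the crux rather than a genuine obstacle — is the matching of the right variant to the right operation: pairing $\min$ with the computation form over $I\cap J$ and $\max$ with the over-computation form over $I\cup J$. Using the same variant for both, or pairing $\min$ with $I\cup J$, breaks the clean ``union of threshold sets'' argument and leaves only an $\approx$-inequality.
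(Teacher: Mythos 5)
Your proof is correct and follows essentially the same route as the paper: apply Lemma~\ref{lemma:joint-recognition} to get a common $\monoid,h$ with ideals $I,J$, then recognise $\max(f,g)$ by $I\cup J$ and $\min(f,g)$ by $I\cap J$ via the ``threshold set is a union'' argument (the paper uses the $\sem{\cdot}^{+}_p$ representative for $\max$ where you use $\sem{\cdot}^{++}_p$, but all representatives coincide as cost functions for $p\geq 3|M|$, so this is immaterial). Your explicit check that $\min$ and $\max$ descend to cost functions, and your remark on pairing $\min$ with the complemented form over $I\cap J$ to keep the identity exact, are details the paper leaves implicit behind ``in a similar way'', and they are handled correctly.
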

\begin{proof}
According to Lemma~\ref{lemma:joint-recognition},
one assumes $f$ recognised by $\monoid,h,I$ and~$g$ by~$\monoid,h,J$.
Then (for a height fixed to at most $p=3|M|$) one has:
\begin{align*}
\max&(\sem{\monoid,h,I}^{+}_p,\sem{\monoid,h,J}^{+}_p)(u)\\
&=\max\begin{cases}
	\sup\{n+1~:~\text{there is an $n$-computation for~$\tilde h(u)$ of value in~$I$}\}&\\
	\sup\{n+1~:~\text{there is an $n$-computation for~$\tilde h(u)$ of value in~$J$}\}
	\end{cases}\\
&=\sup\{n+1~:~\text{there exists an $n$-computation over~$\tilde h(u)$ of value in~$I\cup J$}\}\\
&=\sem{\monoid,h,I\cup J}^{+}_p\ .
\end{align*}
Thus  $\max(f,g)$ is recognised by~$\monoid,h,I\cup J$.
In a similar way, $\min(f,g)$ is recognised by~$\monoid,h,I\cap J$.
\end{proof}

\subsection{Decidability of the domination relation}
\label{subsection:domination-recognizable}

We are now ready to establish the
decidability of the domination relation.
\begin{theorem}\label{theorem:rec-domination}
The domination relation ($\preccurlyeq$) is decidable over
recognisable cost functions.
\end{theorem}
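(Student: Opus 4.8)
The plan is to reduce the question to a finite, effective test on the recognising data. First I would invoke Lemma~\ref{lemma:joint-recognition} to obtain a single stabilisation monoid $\monoid$, a mapping $h$ from $\alphabet$ to $M$, and two ideals $I,J$ such that $f=\sem{\monoid,h,I}$ and $g=\sem{\monoid,h,J}$. Writing $A=\langle h(\alphabet)\rangle^\sharp$ for the set of values of $\sharp$-expressions with letters in $h(\alphabet)$ (which is the least subset of $M$ containing $h(\alphabet)$ and closed under product and stabilisation of idempotents, hence finite and computable by saturation), the heart of the proof is the criterion
$$
f\preccurlyeq g\qquad\Longleftrightarrow\qquad A\cap I\subseteq J\ .
$$
Since $A$, $I$ and $J$ are finite and computable, the right-hand side is decidable, which yields the theorem. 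It then remains to establish the two implications.

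For the direction ``$A\cap I\not\subseteq J$ implies $f\not\preccurlyeq g$'', suppose there is $a\in A\cap I$ with $a\notin J$, and let $E$ be a $\sharp$-expression over $\alphabet$ of value $a$. I would apply Proposition~\ref{proposition:sharp-witness} twice to the single family $X=\{\unf(E,\Omega n)~:~n\geq 1\}$: since $a\in I$, the values $f(\unf(E,\Omega n))$ tend to infinity, so $f$ is unbounded over $X$; since $a\notin J$, the values $g(\unf(E,\Omega n))$ are bounded, so $g$ is bounded over $X$. By the characterisation of $\preccurlyeq$ in Proposition~\ref{proposition:preccurlyeq-characterisation}, this gives $f\not\preccurlyeq g$.

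For the converse I would argue contrapositively, fixing the explicit representatives $f=\sem{\monoid,h,I}^{-}_p$ and $g=\sem{\monoid,h,J}^{+}_p$ for some $p\geq 3|M|$ (legitimate since $\preccurlyeq$ depends only on the $\approx$-class). If $f\not\preccurlyeq g$, there is a set $X$ on which $g$ is bounded, say by $N$, while $f$ is unbounded; choose $u\in X$ with $f(u)>N$. Unfolding the definitions: $g(u)\leq N$ means that no $N$-computation of $\tilde h(u)$ of height at most $p$ has value in $J$, while $f(u)>N$ means that every $N$-computation of height at most $p$ has value in $I$. By Theorem~\ref{theorem:exists-computation} such an $N$-computation $T$ exists (of height at most $3|M|\leq p$), so its value $a$ lies in $I\setminus J$. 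Reading $T$ off as a $\sharp$-expression — leaves become letters of $h(\alphabet)$, binary and idempotent nodes become products, and a stabilisation node whose children share the common idempotent value $e$ becomes $(E')^{\osharp}$ for $E'$ an expression of value $e$, evaluating to $e^\sharp$ since $e^\Omega=e$ — shows $a\in A$. Thus $a\in A\cap I$ and $a\notin J$, contradicting $A\cap I\subseteq J$.

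The main obstacle is this completeness direction, and specifically the extraction of a single finite algebraic witness from the a priori infinite and unstructured divergence set $X$. What makes it go through is that Theorem~\ref{theorem:exists-computation} supplies a bounded-height $N$-computation for \emph{every} word, so that one word $u$ with $f(u)>N$ and $g(u)\leq N$ already pins down a value $a\in I\setminus J$ realised by a genuine computation; the structural observation that every bounded-height computation reads off as a $\sharp$-expression (equivalently, that $A$ is exactly the set of values of such computations over letters of $h(\alphabet)$) then places $a$ in $A$. I would take some care with the degenerate case of the empty word and its unit computations, handled through sm-computations as in Section~\ref{subsection:monoid-computations}, and note that the interplay of $\sem{\cdot}^{-}_p$ and $\sem{\cdot}^{+}_p$ used throughout is exactly what the unicity theorem guarantees lies in one cost function.
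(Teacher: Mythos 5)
Your proposal is correct and follows essentially the same route as the paper: reduce via Lemma~\ref{lemma:joint-recognition} to a single $\monoid,h$ with two ideals, decide the finite condition $\langle h(\alphabet)\rangle^\sharp\cap I\subseteq J$, and use Proposition~\ref{proposition:sharp-witness} on a $\sharp$-expression of value $a\in I\setminus J$ for the negative direction. The only difference is cosmetic: for completeness the paper observes directly that every computation value lies in $\langle h(\alphabet)\rangle^\sharp$, so $\sem{\monoid,h,I}^{+}\leq\sem{\monoid,h,J}^{+}$ pointwise, whereas you reach the same conclusion contrapositively by extracting a single witness word and computation.
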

\begin{proof}
Let $f,g$ be recognisable cost functions.
According to Lemma~\ref{lemma:joint-recognition} 
there exist a stabilisation monoid~$\monoid$, a mapping~$h$
from the alphabet~$\alphabet$ to~$M$
and two ideals~$I,J$ such that~$\monoid,h,I$ recognises~$f$ and~$\monoid,h,J$
recognises~$g$.

We show that $f$ dominates $g$ if and only if the following (decidable) property holds:
\begin{align*}
\langle h(\alphabet)\rangle^\sharp\cap I\subseteq J\ .
\end{align*}

\noindent
{\em First direction.} Let us suppose
$\langle h(\alphabet)\rangle^\sharp\cap I\subseteq J$.
Of course, every $n$-computation over $\tilde h(u)$ for a word~$u$ over $h(\alphabet)$
has its value in $\langle h(\alphabet)\rangle^\sharp$.
It follows that (for heights at most $3|M|$):
\begin{align*}
\sem{\monoid,h,I}^{+}(u)&= \sup\{n+1~:~\text{there is an~$n$-computation for~$h(u)$ of value in~$I$}\}\\
 &\leq \sup\{n+1~:~\text{there is an~$n$-computation for~$h(u)$ of value in~$J$}\}\\
 &=\sem{\monoid,h,J}^{+}(u)\ .
\end{align*}
Thus we have $f\preccurlyeq g$.
\smallskip

\noindent
{\em Second direction.} Let us suppose the existence of
$a\in \langle h(\alphabet)\rangle^\sharp\cap I\setminus J$.
By definition of~$\langle h(\alphabet)\rangle^\sharp$,
there is a $\sharp$-expression $E$ over~$h(\alphabet)$ of value~$a$.
Let~$F$ be the $\sharp$-expression over~$\alphabet$ obtained by
substituting to each element~$x\in h(\alphabet)$ some
letter from~$c\in\alphabet$ such that~$h(c)=x$.
According to Proposition~\ref{proposition:sharp-witness}, $f$ is 
unbounded over~$\{\unf(F,\Omega n)~:~n\geq 1\}$ (for some suitable $k$).
However, still applying Proposition~\ref{proposition:sharp-witness},
$g$ is bounded over~$\{\unf(F,\Omega n)~:~n\geq 3\}$.
This witnesses that $g$ does not dominate~$f$.
\end{proof}

\subsection{Closure under $\inf$-projection}
\label{subsection:closure-inf-projection}

We establish the following theorem.
\begin{theorem}\label{theorem:inf-projection}
Recognisable cost functions are effectively closed under inf-projection.
\end{theorem}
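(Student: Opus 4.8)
The plan is to mimic the powerset construction used for the projection of regular languages, lifting it to stabilisation monoids so that an infimum (an existential choice of preimage) is recorded as a \emph{set} of achievable monoid values. Suppose $f=\sem{\monoid,\eta,I}$ with $\eta:\alphabet\to M$, and let $h:\alphabet\to\alphabetB$ be the projection. I would take as new stabilisation monoid $\monoid'$ the set of down-closed subsets of $M$, ordered by inclusion, with product $X\cdot' Y=\ideals{\{x\cdot y:x\in X,\,y\in Y\}}$, unit $\ideals{1}$, and stabilisation (on an idempotent $E$, i.e.\ $E\cdot' E=E$) given by $E^{\sharp'}=\ideals{\{a\cdot e^\sharp\cdot b:a,b\in E,\ e\in E\cap E(\monoid)\}}$; note that a nonempty idempotent $E$ is a subsemigroup of $M$, hence contains an idempotent, so this is well defined. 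The recognising data on $\alphabetB$ is $\eta'(b)=\ideals{\{\eta(a):h(a)=b\}}$ together with the ideal $I'=\{X\in M':X\subseteq I\}$. Working with down-closed sets (rather than arbitrary sets) makes $\subseteq$ a genuine partial order, and one checks immediately that $I'$ is an ideal of $\monoid'$: if $X\subseteq Y\subseteq I$ then $X\subseteq I$ since $I$ is down-closed.

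Second, I would establish that $\monoid'$ really is a stabilisation monoid. Rather than verify consistency and the other $\sharp$-axioms by hand, I would invoke Proposition~\ref{proposition:unicity-to-axioms}: it suffices to produce a polynomial $\alpha$ such that for any height-$\le 3$ $n$-under-computation of value $A$ and any height-$\le 3$ $\alpha(n)$-over-computation of value $B$ over the \emph{same} word over $\monoid'$, one has $A\subseteq B$. This is proved elementwise: a given $a\in A$ is shown to lie in $B$, and this reduces, through the definitions of $\cdot'$ and $\sharp'$, to an inequality $a\le b$ between values of bounded-height computations over the \emph{underlying} $\monoid$, which holds by Theorem~\ref{theorem:unicity-computations} applied to $\monoid$ (whose status as a stabilisation monoid is exactly the hypothesis). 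Because the heights involved are bounded by a constant, the threshold blow-up stays polynomial, as required.

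Third, I would prove that $\sem{\monoid',\eta',I'}\approx f_{\inf,h}$. The core is a correspondence lemma, proved by induction on the structure of computations: up to a polynomial change of the threshold $n$, an $n$-computation over $\monoid'$ for $\tilde{\eta'}(v)$ of value $X$ encodes exactly the set of values realisable by $\monoid$-computations for $\tilde\eta(u)$ as $u$ ranges over the preimages $u\in\tilde h^{-1}(v)$; that is, $a\in X$ iff some such $u$ admits an (over/under-)computation over $\monoid$ of value comparable to $a$. At leaves this is the identity $\eta(a)\in\eta'(h(a))$; at binary nodes it is the pointwise product; the delicate cases are the idempotent and stabilisation nodes, handled using Theorem~\ref{theorem:exists-computation} and Theorem~\ref{theorem:unicity-computations} for $\monoid$ to re-factorise the realising $\monoid$-computations with a uniform idempotent along the long block. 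Feeding this into Definition~\ref{definition:recognisable} and unravelling the semantics of $\sem{\cdot}^-$ and $\sem{\cdot}^+$ yields $X\not\subseteq I$ at threshold $\approx n$ iff some preimage $u$ is evaluated outside $I$ at threshold $\approx n$, which is precisely $f_{\inf,h}$. All data ($\monoid'$, $\eta'$, $I'$) are finite and computable, so the construction is effective.

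The main obstacle is the stabilisation-node case of this correspondence, and it is exactly what dictates the definition of $\sharp'$. Over $\monoid'$ a stabilisation node forces all of its (many) children to carry the \emph{same} subset $E$, whereas the realising computation over $\monoid$ needs all of its children to carry the \emph{same idempotent} $e$. Reconciling these---choosing one idempotent $e\in E$ uniformly across the long block of length $>n$ while retaining the freedom to realise arbitrary boundary values $a,b\in E$, and doing so with only a polynomial increase of the threshold---is where Lemma~\ref{lemma:over-computation-above-sharp}, together with Theorems~\ref{theorem:exists-computation} and~\ref{theorem:unicity-computations} for $\monoid$, does the real work; everything else is bookkeeping on the under/over variants and the height bounds.
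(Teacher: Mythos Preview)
Your proposal is correct and follows essentially the same route as the paper: the ideal (down-closed subset) monoid $\monoid_\ideal$, the same product and order, recognition via $H(b)=\ideals{\eta(h^{-1}(b))}$ and $K=\{J:J\subseteq I\}$, verification that $\monoid_\ideal$ is a stabilisation monoid via Proposition~\ref{proposition:unicity-to-axioms}, and a two-way correspondence between computations in $\monoid_\ideal$ and computations in $\monoid$ over preimages. The one cosmetic difference is your stabilisation: you set $E^{\sharp'}=\ideals{\{a\cdot e^\sharp\cdot b:a,b\in E,\ e\in E\cap E(\monoid)\}}$, whereas the paper takes $E^\sharp=\ideals{\langle E\rangle^{\sharp+}}$ (downward closure of values of \emph{strict} $\sharp$-expressions over $E$); these coincide, and indeed the paper proves your form as a lemma (every strict $\sharp$-expression over $E$ has value $\le a\cdot e^\sharp\cdot b$ for some $a,b,e\in E$), so you have simply built the normal form into the definition. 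Your closing reference to Lemma~\ref{lemma:over-computation-above-sharp} is off---that lemma plays no role here; the stabilisation-node case is handled by Theorem~\ref{theorem:exists-computation} (to get a bounded-height $\monoid$-computation over the long block, which necessarily contains a stabilisation node when the block is long enough) together with the closure properties of $E$ and $E^\sharp$.
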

The projection in the classical case (of recognisable languages of finite words) requires
a powerset construction (for monoids as for deterministic automata).
In our case, the approach is similar.
Let $z$ be a mapping from alphabet $\alphabet$ to $\alphabetB$.
The goal of a stabilisation monoid which would recognise the inf-projection by $z$ 
of a recognisable cost function is to keep track of all the values a computation
could have taken for some inverse image by $\tilde z$ of the input word.
Hence an element of the stabilisation monoid for the inf-projection of the
cost function consists naturally of a set of elements in the original monoid.

A closer inspection reveals that it is possible to close these subsets downward,
{\it i.e.}, to consider only ideals. In fact, it is not only possible, but it is even necessary
for the construction to go through. Let us describe more formally this construction.

We have to consider a construction of ideals.
Let~$M_\ideal$ be the set of ideals of $\monoid$.
One equips~$M_\ideal$ of an order simply by inclusion:
$$
I\leq J\qquad\text{if}\qquad I\subseteq J\ ,
$$
and of a product as follows:
\begin{align*}
A\cdot B\quad&=\quad\{a\cdot b~:~a\in A,~b\in B\}\ideal\ .
\end{align*}
Finally, the stabilisation is defined for an idempotent by:
\begin{align*}
E^\sharp&=\langle E \rangle^{\sharp+}\ideal\ .
\end{align*}
The resulting structure $\langle M_\ideal,\cdot,\leq,\sharp\rangle$
is denoted $\monoid_\ideal$.

It may seem a priori that our first goal would be to prove that
the structure defined in the above way is indeed a stabilisation monoid.
In fact, thanks to Proposition~\ref{proposition:unicity-to-axioms}, this
will be for free (see Lemma~\ref{lemma:ideal-is-stabilisation-monoids} below).

We now prove that $\monoid_\ideal$
can be used for recognising the inf-projection of a cost
function recognised by $\monoid$. Thus our goal is to relate
the (under)-computations in $\monoid_\ideal$ to the (under)-computations
in $\monoid$. This will provide a semantic link between the two stabilisation monoids.
This relationship takes the form of Lemmas~\ref{lemma:inf-key1}
and~\ref{lemma:inf-key2} below.

Let us first state a simple remark on the structure of idempotents.
\begin{lemma}\label{lemma:inf-idempotent}
If~$E$ is an idempotent in $M_\downarrow$, then for all~$a\in E$
there exist $b,c,e\in E$ with $e$ idempotent such that~$a\leq b\cdot e\cdot c$.
\end{lemma}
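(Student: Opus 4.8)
The plan is to reduce the statement to the classical fact that a sufficiently long product of elements of a finite monoid contains an idempotent factor, and then to absorb the two flanking pieces back into $E$. The starting observation is that, since $E$ is idempotent in $M_\ideal$, i.e.\ $E\cdot E=E$, one has $E^m=E$ for every $m\geq 1$: defining $E^{m+1}=E^m\cdot E$, an immediate induction using $E\cdot E=E$ gives $E^m=E$. Moreover, because the product of $\monoid$ is compatible with $\leq$, the downward closures occurring in iterated products flatten, so that $E^m=\{x_1\cdots x_m~:~x_1,\dots,x_m\in E\}\ideal$. In particular, for every $a\in E$ and every $m$ there are $x_1,\dots,x_m\in E$ with $a\leq\pi(x_1\cdots x_m)$.

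First I would extract an idempotent factor. Fix $a\in E$. Since $E=E^3$, write $a\leq b_0\cdot d\cdot c_0$ with $b_0,d,c_0\in E$. Applying the observation above to $d\in E=E^k$, with $k$ chosen large enough for Ramsey's theorem to apply to an $|M|$-colouring of pairs (exactly the device already used in Lemma~\ref{lemma:ordered-consistency}), I obtain $x_1,\dots,x_k\in E$ with $d\leq\pi(x_1\cdots x_k)$. Colouring each pair of cut-points $0\leq p<q\leq k$ by $\pi(x_{p+1}\cdots x_q)\in M$ and taking a monochromatic triple $p<q<r$ of common colour $e$, the relation $\pi(x_{p+1}\cdots x_r)=\pi(x_{p+1}\cdots x_q)\cdot\pi(x_{q+1}\cdots x_r)$ forces $e=e\cdot e$, so $e=\pi(x_{p+1}\cdots x_q)$ is idempotent. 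Since $p<q$, it is a nonempty product of elements of $E$, whence $e\in E^{q-p}=E$.

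Then I would reassemble, using only compatibility of the product with $\leq$. From $d\leq\pi(x_1\cdots x_p)\cdot e\cdot\pi(x_{q+1}\cdots x_k)$ (empty prefixes or suffixes being read as the unit $1$) I get
$$
a\ \leq\ b_0\cdot d\cdot c_0\ \leq\ \bigl(b_0\cdot\pi(x_1\cdots x_p)\bigr)\cdot e\cdot\bigl(\pi(x_{q+1}\cdots x_k)\cdot c_0\bigr).
$$
Setting $b=b_0\cdot\pi(x_1\cdots x_p)$ and $c=\pi(x_{q+1}\cdots x_k)\cdot c_0$, each of $b$ and $c$ is a product of at least one element of $E$ (the flank $b_0$, respectively $c_0$, being present even when the inner product is the empty product $1$), hence lies in $E^m=E$ for the appropriate $m\geq 1$. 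This yields $a\leq b\cdot e\cdot c$ with $b,c,e\in E$ and $e$ idempotent, as required.

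The only genuine subtlety—hence the step I expect to demand the most care—is the bookkeeping justifying $E^m=\{x_1\cdots x_m\}\ideal$ and keeping every intermediate element inside the downward-closed set $E$; this rests on compatibility of $\cdot$ with the order, and on the insertion of the guaranteed flanks $b_0,c_0$, which is precisely what prevents a boundary (empty-product) case from throwing $b$ or $c$ out of $E$. The idempotent-extraction itself is routine, being the same Ramsey argument already invoked for Lemma~\ref{lemma:ordered-consistency}.
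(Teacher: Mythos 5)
Your proof is correct and follows essentially the same route as the paper: expand $a\in E=E^n$ as $a\leq a_1\cdots a_n$ with the $a_i\in E$, extract an idempotent factor by Ramsey's theorem, and reabsorb the flanks into $E$ using compatibility of the product with the order. The only (cosmetic) difference is how you avoid empty flanks: the paper chooses the monochromatic cut-points with $1<i\leq j<n$, whereas you pre-peel $b_0$ and $c_0$ before applying Ramsey to the middle factor; both devices work.
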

\begin{proof}
As~$E=\overbrace{E\cdots E}^{n~\text{times}}$, for all~$n\geq 1$, there exist~$a_1,\dots,a_n\in E$
such that $a\leq a_1\cdots a_n$.
Hence using Ramsey's theorem, for~$n$ sufficiently
large, there exist~$1<i\leq j< n$ such that~$a_i\cdots a_{j}=e$ is an idempotent.
One sets $b=a_1\cdots a_{i-1}$, $c=a_{j+1}\cdots a_n$. We have $b,c,e\in E$,
and $a\leq b\cdot e\cdot c$.
\end{proof}
A similar characterisation holds for the stabilisation of idempotents.
\begin{lemma}\label{lemma:inf-idempotent-stable}
If~$E$ is a stable idempotent in $M_\downarrow$ ({\it i.e.} such that $E=E^\sharp$), then for all~$a\in E$
there exist $b,c,e\in E$ with $e$ idempotent such that~$a\leq b\cdot e^\sharp\cdot c$.
\end{lemma}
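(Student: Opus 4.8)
The plan is to exploit the explicit description of the stabilisation in $\monoid_\ideal$, namely $E^\sharp = \langle E\rangle^{\sharp+}\ideal$. Since $E$ is assumed stable, $E = E^\sharp$, so any $a \in E$ satisfies $a \le \val(F)$ for some \emph{strict} $\sharp$-expression $F$ over $E$ (that is, $\val(F)\in\langle E\rangle^{\sharp+}$ with $a\le\val(F)$). My goal is then to show, by induction on the structure of $F$, that the value of any strict $\sharp$-expression over $E$ is below some $b\cdot e^\sharp\cdot c$ with $b,c,e\in E$ and $e$ idempotent; applying this to $F$ and combining with $a\le\val(F)$ yields the statement. This is the natural strict-expression refinement of the argument used for Lemma~\ref{lemma:inf-idempotent}.

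Before the induction I would record one consequence of stability: $\langle E\rangle^\sharp\subseteq E$. Indeed, $E$ is idempotent in $\monoid_\ideal$, so $E = E^k$ for every $k\ge 1$, which means that any product of elements of $E$ taken in $\monoid$ lies again in $E$; hence the value of every $\osharp$-free expression over $E$ belongs to $E$. For strict expressions this reads $\langle E\rangle^{\sharp+}\subseteq E^\sharp = E$. Consequently \emph{every} subexpression of $F$ evaluates to an element of $E$, a fact I will use repeatedly in the absorption steps below.

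For the induction itself there are two cases, following the grammar of $\sharp$-expressions. If $F = G^\osharp$, then $\val(F) = (\val(G)^\Omega)^\sharp$; setting $e = \val(G)^\Omega$, stability gives $e\in E$ (a product of copies of $\val(G)\in E$) and $e$ is idempotent, so $\val(F) = e^\sharp = e\cdot e^\sharp\cdot e$ using the identities $e^\sharp = e\cdot e^\sharp = e^\sharp\cdot e$, and I may take $b=c=e\in E$. If $F = F_1\cdot F_2$ is strict, then at least one factor, say $F_1$, is strict; by induction $\val(F_1)\le b\cdot e^\sharp\cdot c$, and then $\val(F) = \val(F_1)\cdot\val(F_2)\le b\cdot e^\sharp\cdot(c\cdot\val(F_2))$, where $c\cdot\val(F_2)\in E$ because both factors lie in $E$. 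The symmetric case (with $F_2$ strict) instead absorbs $\val(F_1)$ into the left factor. Since each recursive call is on a strictly smaller strict subexpression grounded in the $\osharp$-case, this terminates.

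The main point requiring care is keeping the two sandwiching elements inside $E$ rather than merely in $M$. This is exactly where stability is genuinely used: it guarantees both that arbitrary products of elements of $E$ stay in $E$ and that values of strict subexpressions lie in $E$, so the absorption steps in the product case are legitimate; one checks on an unstable idempotent that the conclusion can fail, confirming that this hypothesis is essential. A secondary subtlety, easily overlooked, is that the neutral element $1$ need not belong to $E$, which forbids the naive choice $b=c=1$ in the base case; the identity $e\cdot e^\sharp\cdot e = e^\sharp$ circumvents this by letting $b=c=e$.
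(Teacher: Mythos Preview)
Your proof is correct and follows essentially the same approach as the paper: both start from $a\le\val(F)$ for a strict $\sharp$-expression $F$ over $E$ (using $E=E^\sharp=\langle E\rangle^{\sharp+}\ideal$) and prove by structural induction that $\val(F)\le b\cdot e^\sharp\cdot c$ with $b,c,e\in E$. The differences are cosmetic: in the base case the paper writes $g^\osharp\le g^\omega\cdot g^\osharp\cdot g^\omega$ where you write $e^\sharp=e\cdot e^\sharp\cdot e$, and in the product case you explicitly treat the situation where only one factor is strict (absorbing the other into the sandwiching element), whereas the paper applies the induction hypothesis to both factors and then absorbs $\val(H)\le a'\cdot f^\osharp\cdot b'\in E$ as a whole; your handling is arguably cleaner here.
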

\begin{proof}
By definition of $E^\sharp$, there is a strict $\sharp$-expression $F$ over $E$ such that $a\leq \val(F)$.
Thus it is sufficient to prove, by induction, that all strict $\sharp$-expressions $F$
is such that $\val(F)\leq b\cdot e^\osharp\cdot c$ for some $b,e,c$ in $E$.
The base case is $F=G^\osharp$ where $G$ is a non strict $\sharp$-expression.
In this case $\val(G)=g\in E$.
It follows that $\val(G^\osharp)=\val(G)^\osharp\leq g^\omega\cdot g^\osharp\cdot g^\omega$.
Thus the induction hypothesis holds. The other case is the product $F=GH$.
By induction hypothesis, $\val(G)\leq a\cdot e^\osharp\cdot b$ and
$\val(H)\leq a'\cdot f^\osharp\cdot b'$. It follows that  $\val(F)\leq a\cdot e^\osharp\cdot d$
with $d=a'\cdot f ^\osharp\cdot b'\in E$. Once more the induction hypothesis hold.
\end{proof}

\begin{lemma}\label{lemma:inf-key1}
Let~$A_1\dots A_k$ be a word over~$M_\ideal$ and let $T$ be an $n$-under-computation
over $A_1\dots A_k$ of height at most~$p$ and of value~$A$.
For all~$a\in A$, there exists an $n$-under-computation of height~$3p$ and value~$a$
for some word~$a_1\dots a_k$ such that $a_1\in A_1$,\dots,$a_k\in A_k$.
\end{lemma}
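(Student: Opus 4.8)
The plan is to prove the statement by induction on the height $p$ of the under-computation $T$, building the desired under-computation in $\monoid$ by mirroring the tree structure of $T$ while replacing each ideal-valued node by a small gadget of genuine elements of $\monoid$. Throughout, the root of $T$ decomposes $A_1\dots A_k$ into consecutive blocks treated by the subtrees $T^1,\dots,T^m$; applying the induction hypothesis to each $T^i$ and concatenating the resulting words produces a single word $a_1\dots a_k$ with $a_j\in A_j$ for every $j$, so the membership requirement is automatic. The purpose of the induction is height control: each subtree has height at most $p-1$, so the induction hypothesis yields under-computations of height at most $3(p-1)=3p-3$, and I will glue on at most three further levels, reaching exactly $3p$.

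For the base case $p=0$, $T$ is a single leaf with $A\subseteq A_1$, and for $a\in A$ the single leaf of value $a$ is a valid under-computation (since $a\le a_1=a$). For the inductive step I distinguish the node type of the root. If it is a \textbf{binary node}, then $A\subseteq B_1\cdot B_2$ for the values $B_1,B_2$ of $T^1,T^2$; hence $a\le b_1\cdot b_2$ for some $b_i\in B_i$, the induction hypothesis gives under-computations $S_1,S_2$ of values $b_1,b_2$, and a binary node of value $a$ on top of $S_1,S_2$ has height at most $3p-2$.

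The substantial cases are the \textbf{idempotent} and \textbf{stabilisation} nodes, where all children share the value $E$, an idempotent of $\monoid_\ideal$, and $a$ lies in $E$ (idempotent node) or in $E^\sharp$ (stabilisation node). Here I invoke the structural description of these ideals: Lemma~\ref{lemma:inf-idempotent} gives $b,c,e\in E$ with $e$ idempotent and $a\le b\cdot e\cdot c$ in the idempotent case, while Lemma~\ref{lemma:inf-idempotent-stable}, applied to the stable idempotent $E^\sharp$ of $\monoid_\ideal$, gives $b,c,e\in E^\sharp\subseteq E$ with $a\le b\cdot e^\sharp\cdot c$ in the stabilisation case. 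Writing $g$ for $e$ (resp. $e^\sharp$), so that $a\le b\cdot g\cdot c$ with $b,g,c\in E$, I realise the first factor as $b$, the last as $c$, and the $m-2$ middle factors as $e$, each via the induction hypothesis (all these elements lie in $E$, and $E\cdot E=E$ lets me fold $b\cdot g$ into a single factor when needed). For $m\le 3$ a couple of binary nodes assemble these into a node of value $a$; for $m\ge 4$ I place over the middle factors a single node $N$ of degree $m-2$, chosen to be an idempotent node (value $e$) when $m-2\le n$ and a stabilisation node (value $e^\sharp$) when $m-2>n$, and then cap with two binary nodes producing $a$. Two verifications make this legitimate: the degree $m-2$ always meets the admissibility constraint of the selected node type, and the value $b\cdot(\text{val }N)\cdot c$ dominates $a$ — in the idempotent node of $T$ we have $m\le n$, so $N$ is necessarily idempotent of value $e$ giving $a\le b\cdot e\cdot c$; in the stabilisation node of $T$ either $N$ is a genuine stabilisation node of value $e^\sharp$, directly giving $a\le b\cdot e^\sharp\cdot c$, or $m-2\le n$ and $N$ is idempotent of value $e$, where the coarser bound still satisfies $a\le b\cdot e^\sharp\cdot c\le b\cdot e\cdot c$ by $e^\sharp\le e$. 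In every sub-case the construction adds at most three levels, so the height is at most $3p$.

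The main obstacle is precisely this last bookkeeping in the idempotent and stabilisation cases: matching the degree $m-2$ of the gadget node against the threshold $n$ so that the chosen node type is admissible, while simultaneously ensuring that its value — possibly weakened from $e^\sharp$ to $e$ when too few middle factors survive — still lies above $a$. The interplay is resolved by noting that the degree of the original node of $T$ already fixes the right regime ($m\le n$ in the idempotent case, $m>n$ in the stabilisation case), and by the monotonicity $e^\sharp\le e$, which guarantees that dropping a stabilisation only enlarges the value and hence preserves the under-computation inequality $a\le b\cdot g\cdot c$.
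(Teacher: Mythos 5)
Your proposal is correct and follows essentially the same route as the paper's proof: induction on the height, with the binary case handled directly and the idempotent and stabilisation cases resolved via Lemmas~\ref{lemma:inf-idempotent} and~\ref{lemma:inf-idempotent-stable} followed by the same three-level gadget over the subtrees, giving height $3p$. The only (harmless) divergence is in the borderline stabilisation case $k-2\leq n$: the paper keeps the middle node labelled $e^\sharp$ and observes it is still a valid idempotent node since $e^\sharp\leq e$, whereas you relabel it $e$ and absorb the loss at the top binary node via $a\leq b\cdot e^\sharp\cdot c\leq b\cdot e\cdot c$.
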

\begin{proof}
The proof is by induction on~$p$.
\medskip

\noindent
\emph{Leaf case}, {\it i.e.}, $T=A_1$. Let~$a\in A\subseteq A_1$, then $a$ is an $n$-computation of value~$a\in A$.
\medskip

\noindent
\emph{Binary node}, {\it i.e.},  $T=A[T_1,T_2]$. Let $B_1$ and $B_2$
be the respective values of $T_1$ and $T_2$.
Let~$a\in A\subseteq B_1\cdot B_2$. By definition of the product, there exists~$b_1\in B_1$
and~$b_2\in B_2$ such that~$a\leq b_1\cdot b_2$.
By induction hypothesis, there exist $n$-under-computations~$t_1$ and~$t_2$
of respective values~$b_1$ and~$b_2$.
The $n$-under-computation $a[t_1,t_2]$ satisfies the induction hypothesis.
\medskip

\noindent
\emph{Idempotent node.}
$T=F[T_1,\dots,T_k]$ for some~$k\leq n$ where $F\subseteq E$
	for an idempotent $E$ such that the value of $T_i$ is $E$ for all~$i$.
	Let~$a\in F\subseteq E$.
	We have $a\leq b\cdot e\cdot c$ for some $b,c,e\in E$ (Lemma~\ref{lemma:inf-idempotent}).
	We then apply the induction hypothesis for~$b,e,\dots,e$ and~$c$
	on the $n$-under-computations~$T_1,\dots,T_{k-1}$ and $T_k$ respectively,
	yielding the $n$-under-computations $t_1,\dots,t_{k-1}$ and $t_k$ respectively.
	The tree $a[t_1,(e\cdot c)[e[t_2,\dots,t_{k-1}],t_k]]$ is an $n$-under-computation
	witnessing that the induction hypothesis holds.
\medskip

\noindent
\emph{Stabilisation node.}
$T=F[T_1,\dots,T_k]$ for some~$k> n$ and $F\subseteq E^\sharp$ for some idempotent $E$
	such that the value of $T_i$ is $E$ for all~$i$.
	Let~$a\in F\subseteq E^\sharp$.
	We have $a\leq b\cdot e^\sharp\cdot c$ for some $b,e,c\in E$ (Lemma~\ref{lemma:inf-idempotent-stable}).
	We then apply the induction hypothesis for~$b,e,\dots,e$ and~$c$ respectively
	and the computations~$T_1,\dots,T_k$ respectively,
	yielding the $n$-under-computations $t_1,\dots,t_k$ respectively.
	We conclude by constructing the 
	$n$-under-computation~$a[t_1,(e^\sharp\cdot c)[e^\sharp[t_2,\dots,t_{k-1}],t_k]]$
	(remark that $e^\sharp[t_2,\dots,t_{k-1}]$ is a valid under-computations since
	$e^\sharp\leq e$).
\end{proof}

\begin{lemma}\label{lemma:inf-key2}
There exists a polynomial $\alpha$ such that 
for all words~$A_1\dots A_k$ over~$M_\ideal$ and all $\alpha(n)$-over-computation $T$ 
for~$A_1\dots A_k$ of height~$p$ and value~$A$,
and all $a_1\in A_1,\dots,a_k\in A_k$,
there exists an~$n$-computation over~$a_1\dots a_k$ of value~$a\in A$ and of height at most~$3|M|p$.
\end{lemma}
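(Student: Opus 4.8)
The plan is to prove the statement by induction on the height $p$ of the $\alpha(n)$-over-computation $T$, processing $T$ top-down and mirroring the argument used for the dual Lemma~\ref{lemma:inf-key1}. At each internal node I will first apply the induction hypothesis to the children of the root of $T$ to obtain $n$-computations $t_1,\dots,t_m$ for the relevant factors of $a_1\dots a_k$, with values $f_1,\dots,f_m$; then I will recombine the $t_i$ by substituting them into the leaves of a single short computation built over the word $f_1\dots f_m\in M^+$. The factor $3|M|$ in the height bound comes precisely from the fact that each such recombination is produced by Theorem~\ref{theorem:exists-computation}, which yields a computation of height at most $3|M|$; since substitution adds heights, the $p$ levels contribute $3|M|p$.

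The leaf and binary cases are immediate: for a leaf I pick the prescribed $a_1\in A_1\subseteq A$, and for a binary node with child values $B_1,B_2$ and $A\supseteq B_1\cdot B_2$ I take $(b_1\cdot b_2)[t_1,t_2]$, noting $b_1\cdot b_2\in B_1\cdot B_2\subseteq A$. The crucial observations for the remaining cases concern an idempotent ideal $E$ of $\monoid_\ideal$. First, $E\cdot E=E$ means that $E$ is closed under the product of $\monoid$; second, $E$ is downward closed and $g^\sharp\leq g$, so $g^\sharp\in E$ whenever $g\in E$. Together these give the key fact that \emph{every} $n$-computation over a word in $E^+$ has all its node values in $E$ (by induction on the computation: leaves, binary nodes and idempotent nodes stay in $E$ by closure under product, stabilisation nodes by downward closure). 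For an idempotent node of $T$, where $A\supseteq E$ and each $f_i\in E$, I therefore build any height-$\leq 3|M|$ computation over $f_1\dots f_m$ via Theorem~\ref{theorem:exists-computation}; its value lies in $E\subseteq A$, which settles this case.

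The stabilisation node is the main obstacle, because there I must land in the strictly smaller set $E^\sharp=\langle E\rangle^{\sharp+}\ideal$ rather than merely in $E$. The plan is to exploit that $E^\sharp$ behaves as a two-sided ideal inside $E$: appending or prepending an element of $E$ to a strict $\sharp$-expression over $E$ keeps it strict, so $E\cdot E^\sharp\subseteq E^\sharp$ and $E^\sharp\cdot E\subseteq E^\sharp$. Consequently, if any node of a value-in-$E$ computation over $f_1\dots f_m$ has value in $E^\sharp$, this propagates all the way to the root: through binary nodes by the ideal property, and through idempotent and stabilisation nodes because their children share a single value. Moreover a stabilisation node of such a computation automatically has value $g^\sharp=g^\osharp\in\langle E\rangle^{\sharp+}\subseteq E^\sharp$ for its idempotent child value $g\in E$. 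It therefore suffices to guarantee that the height-$\leq 3|M|$ computation produced by Theorem~\ref{theorem:exists-computation} over $f_1\dots f_m$ contains at least one stabilisation node.

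This is where the choice of $\alpha$ enters. A computation of height at most $3|M|$ whose internal degrees are all at most $n$ (i.e. with no stabilisation node) has at most $n^{3|M|}$ leaves. Hence, setting $\alpha(n)=n^{3|M|}$ — a polynomial, since $|M|$ is a fixed constant — the condition $m>\alpha(n)$ at a stabilisation node of $T$ forces every height-$\leq 3|M|$ computation over $f_1\dots f_m$ to use a stabilisation node, so by the propagation above its root value lies in $E^\sharp\subseteq A$. The remaining work is routine bookkeeping: checking that substituting the $t_i$ (of height at most $3|M|(p-1)$ by induction) into the $3|M|$-high recombination yields height at most $3|M|p$, and that the factors of $a_1\dots a_k$ handled by the children concatenate correctly. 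I expect the stabilisation case — and in particular pinning down the polynomial bound that forces a stabilisation node while keeping the root value inside $E^\sharp$ — to be the only genuinely delicate point.
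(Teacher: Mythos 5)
Your proposal is correct and follows essentially the same route as the paper: induction on $p$ with $\alpha(n)=n^{3|M|}$, recombining the children's computations via Theorem~\ref{theorem:exists-computation}, using closure of an idempotent ideal $E$ under product and (by downward closure) stabilisation for idempotent nodes, and the leaf-counting argument forcing a stabilisation node for degree $>n^{3|M|}$. Your explicit propagation argument (that $E\cdot E^\sharp\subseteq E^\sharp$ pushes membership in $E^\sharp$ up to the root) just fleshes out the step the paper dismisses as "easy to prove by induction on the height of $t$".
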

\begin{proof}
The proof is by induction on $p$. Set $\alpha(n)=n^{3|M|}$ for all $n$.
\medskip

\noindent
\emph{Leaf case,} {\it i.e.}, $T=A$ and~$u=a_1\in A_1\subseteq A$. Hence~$a_1$ is a computation
satisfying the induction hypothesis.
\medskip

\noindent
\emph{Binary node,} {\it i.e.}, $T=A[T_1,T_2]$ where $T_1$ and $T_2$ have respective
	values $B_1$ and $B_2$ such that $B_1\cdot B_2\subseteq A$. One applies the induction
	hypothesis on $T_1$ and $T_2$, and gets computations $t_1$
	and $t_2$, of respective values $b_1\in B_1$ and $b_2\in B_2$.
	The induction hypothesis is then fulfilled with the $n$-computation
	 $(b_1\cdot b_2)[t_1,t_2]$ of value~$b_1\cdot b_2\in A$.
\medskip

\noindent
\emph{Idempotent node,} {\it i.e.},
	$T=F[T_1,\dots,T_k]$ for~$k\leq n^{3|M|}$ where $T_1,\dots,T_k$
	share the same idempotent value $E\subseteq F$.
	Let~$t_1,\dots,t_k$ be the $n$-computations of respective values $b_1,\dots,b_k$
	obtained by applying the induction hypothesis
	on $T_1,\dots,T_k$ respectively. Furthermore, according to Theorem~\ref{theorem:exists-computation},
	there exists an $n$-computation $t$ for the word $b_1\dots b_k$ of height at most $3|M|$.
	Let $a$ be the value of $t$. Since $E$ is an idempotent, it is closed under product and stabilisation
	and contains $b_1,\dots,b_k$. It follows that $a\in E$ (by induction on the height of $t$).
	The induction hypothesis holds using the witness $n$-computation $t\{t_1,\dots,t_k\}$
	where $t\{t_1,\dots,t_k\}$ is obtained from $t$ by substituting the $i$th leaf for $t_i$ for all~$i=1\dots k$.
\medskip

\noindent
\emph{Stabilisation node,} {\it i.e.},
	$T=F[T_1,\dots,T_k]$ for~$k> n^{3|M|}$ where $T_1,\dots,T_k$
	all share the same idempotent value $E$ such that $E^\sharp\subseteq F$.
	Let~$t_1,\dots,t_k$ be the $n$-computations of respective values $b_1,\dots,b_k$ obtained by applying
	the induction hypothesis on $T_1,\dots,T_k$ respectively.
	Furthermore, according to Theorem~\ref{theorem:exists-computation},
	there exists an $n$-computation $t$ for $b_1\dots b_k$ of height at most $3|M|$.
	Since $t$ has height at most $3|M|$ and has more than $n^{3|M|}$ leaves,
	it contains at least one node of degree more than $n$, namely, a  stabilisation node.
	It is then easy to prove by induction on the height of $t$ that the value of $t$
	belongs to $\langle E\rangle^{\sharp+}$.
	Thus the $n$-computation $t\{t_1,\dots,t_k\}$ (as defined in the above case) is a witness
	for the induction hypothesis.
\end{proof}

\begin{lemma}\label{lemma:ideal-is-stabilisation-monoids}
$\monoid_\ideal$ is a stabilisation monoid.
\end{lemma}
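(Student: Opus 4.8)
The plan is to invoke Proposition~\ref{proposition:unicity-to-axioms}, which spares us from checking the stabilisation-monoid axioms one by one: it suffices to exhibit a polynomial $N(n)$ such that whenever an $n$-under-computation of value $A$ and an $N(n)$-over-computation of value $B$ sit over a common word $A_1\dots A_k$ over $M_\ideal$, both of height at most $3$, we have $A\subseteq B$ (which is exactly $A\leq B$ in $M_\ideal$). The ambient data are already in place: $M$ being finite there are finitely many ideals, so $M_\ideal$ is finite; inclusion is a partial order; and $\sharp$ is defined on the idempotents of $M_\ideal$ as prescribed. So only the comparison property, together with the monoid-specific unit, remains to be addressed.

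First I would settle the comparison. Let $\alpha_2$ be the polynomial of Lemma~\ref{lemma:inf-key2}, and let $\alpha_U$ be the polynomial supplied by Theorem~\ref{theorem:unicity-computations} for the height bound $p=9$; I set $N=\alpha_2\circ\alpha_U$. Take an $n$-under-computation of value $A$ and height at most $3$, and an $N(n)$-over-computation of value $B$ and height at most $3$, both over $A_1\dots A_k$, and pick any $a\in A$. Lemma~\ref{lemma:inf-key1} converts the under-computation into an $n$-under-computation of value $a$ for some word $a_1\dots a_k$ with each $a_i\in A_i$, of height at most $3\cdot 3=9$. Feeding this same sequence $a_1,\dots,a_k$ into Lemma~\ref{lemma:inf-key2} applied to the $\alpha_2(\alpha_U(n))$-over-computation of value $B$ produces an $\alpha_U(n)$-computation over $a_1\dots a_k$ of value $b\in B$. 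The $n$-under-computation of value $a$ (height $\leq 9$) and this $\alpha_U(n)$-over-computation of value $b$ now lie over the same word $a_1\dots a_k$, so Theorem~\ref{theorem:unicity-computations} gives $a\leq b$. As $B$ is an ideal of $\monoid$, hence downward closed, $a\leq b\in B$ forces $a\in B$; since $a\in A$ was arbitrary, $A\subseteq B$. Proposition~\ref{proposition:unicity-to-axioms} then certifies that $\monoid_\ideal$ is a stabilisation semigroup.

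It remains to promote this to a stabilisation monoid by producing the unit and checking that it is $\sharp$-fixed. The candidate neutral element is $\ideals{1}=\{y:y\leq 1\}$: for any ideal $A$, compatibility of the product with $\leq$ gives $a\cdot b\leq a$ whenever $b\leq 1$, so $A\cdot\ideals{1}\subseteq A$, while $A=\{a\cdot 1:a\in A\}\subseteq A\cdot\ideals{1}$, and symmetrically on the left; thus $\ideals{1}$ is neutral. It is idempotent since a product of elements $\leq 1$ stays $\leq 1$. Finally, by definition $\ideals{1}^\sharp=\langle\ideals{1}\rangle^{\sharp+}\ideal$, and every value of a $\sharp$-expression over $\ideals{1}$ is $\leq 1$ (products and $\sharp$ preserve being $\leq 1$, using $e^\sharp\leq e$), so $\ideals{1}^\sharp\subseteq\ideals{1}$; conversely the strict expression $1^\osharp$ evaluates to $1^\sharp=1$, placing $1$—hence all of $\ideals{1}$—inside $\ideals{1}^\sharp$. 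Therefore $\ideals{1}^\sharp=\ideals{1}$, and $\monoid_\ideal$ is a stabilisation monoid.

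I expect the technical heart to be not any single computation but the bookkeeping of thresholds: the under-side must be read at threshold $n$ while the over-side is fed into Lemma~\ref{lemma:inf-key2} at the inflated threshold $\alpha_U(n)$, which in turn requires an input over-computation at threshold $\alpha_2(\alpha_U(n))$. The point to get right is that these compose into a single polynomial $N$, and that the height blow-ups from Lemmas~\ref{lemma:inf-key1} and~\ref{lemma:inf-key2} remain within the constant bounds ($9$, respectively $9|M|$) under which Theorem~\ref{theorem:unicity-computations} can be applied on the under-side.
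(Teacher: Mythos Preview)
Your proposal is correct and follows the same route as the paper: apply Lemma~\ref{lemma:inf-key1} to pass from the $M_\ideal$-under-computation to an $M$-under-computation over a concrete word $a_1\dots a_k$, apply Lemma~\ref{lemma:inf-key2} to the $M_\ideal$-over-computation to get an $M$-over-computation over the same word, invoke Theorem~\ref{theorem:unicity-computations} to compare their values in $M$, and use downward closure of $B$ to conclude $A\subseteq B$; then Proposition~\ref{proposition:unicity-to-axioms} yields the stabilisation-semigroup axioms. Your threshold bookkeeping $N=\alpha_2\circ\alpha_U$ matches the paper's $\alpha(n)=\alpha'(n)^{3|M|}$ (since $\alpha_2(m)=m^{3|M|}$), and your explicit treatment of the unit $\ideals{1}$ and of $\ideals{1}^\sharp=\ideals{1}$ fills in what the paper dismisses as ``straightforward''.
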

\begin{proof}
Consider an $n$-under-computation $T$ of value $A$ in $\monoid_\ideal$
for some word $A_1\dots A_k$ of height at most $p$,
and some $\alpha(n)$-over-computation $T'$ for the
same word of value $B$ (with $\alpha(n)=\alpha'(n)^{3|M|}$ where $\alpha'$
	is obtained from Theorem~\ref{theorem:unicity-computations} applied to $\monoid$ for height at most $3p$).
We aim at $A\leq B$. Indeed, this implies that one can use Proposition~\ref{proposition:unicity-to-axioms},
and get that $\monoid_\ideal$ is a stabilisation semigroup (it is then straightforward to
prove it a stabilisation monoid. 

Let $a\in A$, we aim at $a\in B$, thus proving $A\subseteq B$, {\it i.e.}, $A\leq B$.
By Lemma~\ref{lemma:inf-key1}, there exists
an $n$-under-computation for some word $a_1\dots a_k$ with $a_1\in A_1,\dots,a_k\in A_k$,
of height at most $3p$ and value $a$.
Applying Lemma~\ref{lemma:inf-key2} on $T'$, there exists an $\alpha'(n)$-over-computation
for the same word $a_1\dots a_k$ of value $b\in B$. Then applying Theorem~\ref{theorem:unicity-computations},
we get $a\leq b$. Hence $a\in B$ since $B$ is downward closed.
\end{proof}

We can now establish the result of this section.
\begin{proof}[Proof of Theorem~\ref{theorem:inf-projection}]
Assume a function $f$ over the alphabet $\alphabet$
is recognised by~$\monoid,h,I$,
and let~$z$ be some mapping from~$\alphabet$ to~$\alphabetB$.
Let $H$ be the mapping from $\alphabetB$ to $M_\ideal$ 
which to $b\in\alphabetB$ associates $h(z^{-1}(b))\ideal$,
and let~$K\subseteq M_\ideal$ be
$$
K=\{J\in M_\ideal~:~J\subseteq I\}\ .
$$
We shall prove that $M_\ideal,H,K$ recognise the cost function of $f_{\inf,z}$.
There are two directions.

Consider a word $u=b_1\dots b_k$ over alphabet $\alphabetB$ such that
$$\sem{\monoid_\ideal,H,K}^{--}_{3|M_\ideal|}(u)\leq n\ .$$
This means that there exists an $\alpha(n)$-under-computation over $A_1\dots A_k=\tilde H(u)$
of value $J\in M_\ideal\setminus K$, and height at most $3|M_\ideal|$. 
By definition of $K$, this means that there exists some $a\in J\setminus I$.
Let us apply now Lemma~\ref{lemma:inf-key1}, and obtain
an $n$-under-computation for some $a_1\dots a_k$ of value $a\in A$,
and of height at most $d=9|M_\ideal|$, where $a_i\in A_i$ for all $i=1\dots k$.
By definition of $H$, there exists $c_i\in\alphabet$ such that $z(c_i)=b_i$ and $h(c_i)=a_i$.
Thus, set $v=c_1\dots c_k$. The word $v$ is such that $\tilde z(v)=u$ and $\sem{\monoid,h,I}_d^{--}(v)\leq n$.
This a witness that 
$$
(\sem{\monoid,h,I}_d^{--})_{\inf,z}(u)\leq n\ .
$$

Conversely, consider a word $u$ over the alphabet $\alphabetB$
such that
$$\sem{\monoid_\ideal,H,K}^{++}_{3|M_\ideal|}(u)\geq \alpha(n)+1\ ,$$
where $\alpha$ is the polynomial obtained from Lemma~\ref{lemma:inf-key2}.
This means that there exists an $\alpha(n)$-over-computation for the word $A_1\dots A_k=\tilde H(u)$ of height
at most $3|M_\ideal|$ and value $J\in K$. 
Consider now some word $v=c_1\dots c_k$ over the alphabet $\alphabet$
such that $\tilde z(v)=u$. Let $a_1\dots a_k=\tilde h(v)$.
according to Lemma~\ref{lemma:inf-key2} there exists an $n$-computation for $a_1\dots a_k$
of height at most $d=9|M||M_\ideal|$ and value $a\in J$. Since by definition of $K$, $J\subseteq I$,
this means that $a\in I$. 
Thus, this computation witnesses that $\sem{\monoid,h,I}^{+}_{d}(v)\geq n$.
Since this this holds for all words $v$ such that $\tilde z(v)=u$, we obtain that
$$
(\sem{\monoid,h,I}^{+}_{d})_{\inf,z}(u)\geq n\ .
$$
\end{proof}

\subsection{Closure under $\sup$-projection}
\label{subsection:closure-sup-projection}
We now establish the closure under sup-projection, using a proof
very similar to the previous section.

\begin{theorem}\label{theorem:sup-projection}
Recognisable cost functions are effectively closed under sup-projection.
\end{theorem}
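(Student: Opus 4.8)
The plan is to dualise the entire construction of Section~\ref{subsection:closure-inf-projection}, replacing ideals by \emph{coideals} (upward-closed sets). I would write $M^\coideal$ for the set of coideals of $\monoid$, order it by \emph{reverse} inclusion ($A\le B$ iff $A\supseteq B$), and equip it with the product $A\cdot B=\coideals{\{a\cdot b~:~a\in A,~b\in B\}}$ and the stabilisation $E^\sharp=\coideals{\langle E\rangle^{\sharp+}}$; call the result $\monoid^\coideal$. The reversal of the order is forced rather than optional: since $\langle E\rangle^{\sharp+}$ consists of stabilised, hence \emph{smaller}, elements, the coideal $\coideals{\langle E\rangle^{\sharp+}}$ \emph{contains} $E$, so only with $\supseteq$ as the order does $E^\sharp\le E$ hold (equivalently, $e^\sharp\le e$ forces $\coideals{e^\sharp}\supseteq\coideals e$, making $a\mapsto\coideals a$ an order-preserving map $\monoid\to\monoid^\coideal$). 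As in the inf-projection case, I would not check the axioms by hand but would instead establish the two key lemmas below and invoke Proposition~\ref{proposition:unicity-to-axioms}, exactly as in Lemma~\ref{lemma:ideal-is-stabilisation-monoids}.

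For the recognizer, given $f$ recognised by $\monoid,h,I$ and a relabelling $z:\alphabet\to\alphabetB$, I would set $H(b)=\coideals{h(z^{-1}(b))}$ and take as accepting ideal $K=\{A\in M^\coideal~:~A\cap I\neq\emptyset\}$, which is downward-closed for reverse inclusion (if $A\in K$ and $B\supseteq A$ then $B\cap I\neq\emptyset$) and hence a genuine ideal of $\monoid^\coideal$. The semantic bridge is provided by the duals of Lemmas~\ref{lemma:inf-key1} and~\ref{lemma:inf-key2}, obtained by exchanging under- with over-computations and reversing every inequality. First I would dualise the structural lemmas: for an idempotent coideal $E$ and any $a\in E$ there are $b,c,e\in E$ with $e$ idempotent and $a\ge b\cdot e\cdot c$ (dual of Lemma~\ref{lemma:inf-idempotent}), and $a\ge b\cdot e^\osharp\cdot c$ when $E$ is stable (dual of Lemma~\ref{lemma:inf-idempotent-stable}); both proofs copy the originals with $\le$ and $\ideal$ replaced by $\ge$ and $\coideal$. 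The key lemmas then read: an over-computation in $\monoid^\coideal$ of value $A$ yields, for every $a\in A$, an over-computation in $\monoid$ of value $a$ over some inverse image $a_1\dots a_k$ with $a_i\in A_i$; and an under-computation in $\monoid^\coideal$ of value $A$ yields, for every choice $a_i\in A_i$, a genuine $\monoid$-computation of value in $A$, with the same polynomial threshold $n\mapsto n^{3|M|}$.

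With these in place the axioms follow from Proposition~\ref{proposition:unicity-to-axioms}: given an under-computation of value $A$ and an over-computation of value $B$ over the same word of $M^\coideal$, I would show $B\subseteq A$ (i.e. $A\le B$) by taking $b\in B$, using the dual of Lemma~\ref{lemma:inf-key1} to get a $\monoid$-over-computation of value $b$ over some $a_1\dots a_k$, using the dual of Lemma~\ref{lemma:inf-key2} for that same choice to get a $\monoid$-computation of value $a\in A$, and concluding $a\le b$ by Theorem~\ref{theorem:unicity-computations}, so that $b\in A$ since $A$ is a coideal. Recognition then splits into two directions mirroring the inf case. For $f_{\sup,z}\preccurlyeq\sem{\monoid^\coideal,H,K}^{++}$, I would lift any $\monoid$-computation over $\tilde h(v)$ of value $a\in I$ node-by-node via $a\mapsto\coideals a$ (cf. Remark~\ref{remark:morphism}); at a stabilisation node the lifted root $\coideals{e^\sharp}$ satisfies $\coideals{e^\sharp}\subseteq\coideals{\langle\coideals e\rangle^{\sharp+}}$, so the result is an over-computation over $\tilde H(z(v))$ of value $\coideals a$, and $\coideals a\cap I\ni a$ puts its value in $K$. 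For the converse $\sem{\monoid^\coideal,H,K}^{++}\preccurlyeq f_{\sup,z}$, I would feed an accepting over-computation (value $A$ with some $a\in A\cap I$) into the dual of Lemma~\ref{lemma:inf-key1}, extracting a concrete inverse image $c_1\dots c_k$ (using $a_i\ge h(c_i)$ and the over-computation leaf condition) together with a $\monoid$-over-computation of value $a\in I$, which witnesses that $\sem{\monoid,h,I}^{++}$ is large on that inverse image.

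The main obstacle, and essentially the only place demanding real care, is the reversal of the order on $M^\coideal$: one must constantly keep in mind that an ``under-'' and an ``over-computation'' of $\monoid^\coideal$ correspond \emph{at the level of set inclusion} to the opposite of their names, so that the comparison $B\subseteq A$ emerges in the right direction and $K$ is downward-closed. The technically delicate point is the stabilisation-node case of the dual key lemmas, where the upward closure hidden in $E^\sharp=\coideals{\langle E\rangle^{\sharp+}}$ must be matched against the reversed inequality through the dual of Lemma~\ref{lemma:inf-idempotent-stable}; once that is arranged, every remaining estimate — heights bounded by multiples of $3|M|$ and the polynomial $\alpha$ — transfers mechanically from Section~\ref{subsection:closure-inf-projection}.
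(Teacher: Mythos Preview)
Your overall architecture is exactly the paper's: coideals ordered by reverse inclusion, the same product and stabilisation, the two key lemmas exchanging the roles of under- and over-computations, and the appeal to Proposition~\ref{proposition:unicity-to-axioms}. But the claim that ``every remaining estimate transfers mechanically'' hides a genuine gap, precisely at the stabilisation node of your first key lemma (the dual of Lemma~\ref{lemma:inf-key1}). Your plan is to use the dual of Lemma~\ref{lemma:inf-idempotent-stable} to write any $a\in E^\sharp$ as $a\ge b\cdot e^\sharp\cdot c$ with $b,e,c\in E$. This fails. The inductive step for a product $GH$ with \emph{both} factors strict needs $c\cdot\val(H)\in E$, and for a coideal $E$ that is not stable one only has $\val(H)\in\langle E\rangle^{\sharp+}$, which can lie strictly \emph{below} $E$ (upward closure goes the wrong way). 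Equivalently: in the ideal case the argument works because $E^\sharp\subseteq E$, so applying the stable-case lemma to $E^\sharp$ still lands you in $E$; in the coideal case $E^\sharp\supseteq E$, so the same trick yields $b,e,c\in E^\sharp$, which is useless for invoking the induction hypothesis on the children of value~$E$.

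The paper's repair is a genuinely new lemma (Lemma~\ref{lemma:sup-stabilisation}): given $a\in E^\sharp$, choose a strict $\sharp$-expression for $a$ of bounded height and \emph{unfold} it; this produces, for every length $k\ge\alpha(n)$, a word over $E$ of length exactly $k$ together with an $n$-over-computation of value $a$ (padding to the exact length uses the non-stabilised Lemma~\ref{lemma:sup-idempotent}). This is what replaces the missing dual. A secondary correction: the polynomial threshold sits on the \emph{first} key lemma (over $\to$ over), not the second; your dual of Lemma~\ref{lemma:inf-key2} (under $\to$ computation) needs no blow-up at all, because at an idempotent node $k\le n$ already forces the height-$3|M|$ computation in $\monoid$ to avoid stabilisation nodes, and at a stabilisation node the target $E^\sharp=\coideals{\langle E\rangle^\sharp}$ is closed under both product and stabilisation, so any $n$-computation value lands in it.
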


The closure under $\sup$-projection follows the same principle as the
closure under $\inf$-projection. It uses also a powerset construction.
However, since everything is reversed, this is a construction of co-ideals.
A \intro{co-ideal} is a subset of a stabilisation monoid which is upward closed.
Let~$\monoid_\coideal$ be set of co-ideals over a given stabilisation
monoid~$\monoid$.
Given a set $A$, let us denote by~$A\coideal$ 
the least co-ideal containing~$A$, {\it i.e.}, $A\coideal=\{y~:~y\geq x\in A\}$.
One equips~$\monoid_\coideal$ of an order by:
$$
I\leq J\qquad\text{if}\quad I\supseteq J\ ,
$$
of a product with:
\begin{align*}
A\cdot B\quad&=\quad\{a\cdot b~:~a\in A,~b\in B\}\coideal\ ,
\end{align*}
and of a stabilisation operation by:
\begin{align*}
E^\sharp&=\langle E \rangle^{\sharp+}\coideal\ .
\end{align*}
Let us call $\monoid_\coideal$ the resulting structure. You can remark that $\langle E \rangle^{\sharp+}\coideal=\langle E \rangle^{\sharp}\coideal$. This was not the case for ideals. 
The proof is extremely close to the case of $\inf$-projection.
However, a careful inspection would show that all computations of bounds, and even 
some local arguments need to be modified.

Let us state a simple remark on the structure of idempotents.
\begin{lemma}\label{lemma:sup-idempotent}
If~$E$ is an idempotent in $M_\uparrow$, then for all~$a\in E$
there exist $b,c,e\in E$ with $e$ idempotent
such that $a\geq b\cdot e\cdot c$.
\end{lemma}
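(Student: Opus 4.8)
The statement is the exact dual of Lemma~\ref{lemma:inf-idempotent}, and the plan is to run the same Ramsey argument with every inequality and every closure operation reversed. First I would record the basic consequence of $E$ being idempotent in $\monoid_\coideal$. Since the product there is $A\cdot B=\{a\cdot b~:~a\in A,~b\in B\}\coideal$, the equality $E\cdot E=E$ reads $\{a\cdot b~:~a,b\in E\}\coideal=E$; as any set is contained in its own upward closure, this forces $a\cdot b\in E$ for all $a,b\in E$. Thus $E$ is closed under the product of $\monoid$, and by induction $E=E^n$ in $\monoid_\coideal$ for every $n\geq 1$.

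Next I would fix $a\in E$ and unfold this power. Reasoning exactly as in the computation of products of ideals, one checks that $E^n=\{a_1\cdots a_n~:~a_i\in E\}\coideal$ (monotonicity of the product propagates the inequalities through each factor). Hence $a\in E=E^n$ means that for every $n$ there exist $a_1,\dots,a_n\in E$ with $a\geq a_1\cdots a_n$. Note the inequality now points the opposite way from Lemma~\ref{lemma:inf-idempotent}; this reversal is precisely the effect of working with co-ideals in place of ideals.

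Then I would apply Ramsey's theorem to the sequence $a_1,\dots,a_n$ just as in the ideal case: for $n$ sufficiently large (a bound depending only on $|M|$) there exist indices $1<i\leq j<n$ for which the factor $e=a_i\cdots a_j$ is an idempotent of $\monoid$. Setting $b=a_1\cdots a_{i-1}$ and $c=a_{j+1}\cdots a_n$, both are nonempty products of elements of $E$ because $1<i$ and $j<n$, so by the closure established in the first step we have $b,c,e\in E$. Finally, associativity gives $a\geq a_1\cdots a_n=b\cdot e\cdot c$, which is the desired conclusion.

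There is no genuine obstacle here, since the argument is the mirror image of the one for ideals. The only point that must be checked with a little care is that the three factors $b$, $c$, $e$ really lie in $E$, and this is guaranteed by the closure of $E$ under the product of $\monoid$, which in turn is forced by the idempotency of $E$ in $\monoid_\coideal$. This is the same ingredient that made the ideal version of the lemma work, now with $\coideal$ substituted for $\ideal$ and $\geq$ for $\leq$.
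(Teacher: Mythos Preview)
Your proposal is correct and follows essentially the same approach as the paper's own proof: unfold the idempotency $E=E^n$ to write $a\geq a_1\cdots a_n$ with each $a_i\in E$, then apply Ramsey to extract an idempotent factor $e=a_i\cdots a_j$ with nonempty prefix $b$ and suffix $c$. You are more explicit than the paper in justifying why $b,c,e\in E$ (via the closure of $E$ under the product of $\monoid$, which you correctly derive from $\{x\cdot y:x,y\in E\}\subseteq\{x\cdot y:x,y\in E\}\coideal=E$), but this is just a point the paper leaves implicit.
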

\begin{proof} 
As~$E=E\cdots E$, for all~$n$, there exist~$a_1,\dots,a_n\in E$
such that $a\geq a_1\cdots a_n$. Using Ramsey's theorem, for~$n$ sufficiently
large, there exist~$1<i\leq j< n$ such that~$a_i\cdots a_{j}=e$ is an idempotent.
One sets $b=a_1\cdots a_{i-1}$, $c=a_{j+1}\cdots a_n$. We have $b,c,e\in E$,
and $a\geq b\cdot e\cdot c$.
\end{proof}

Our second preparatory lemma is used for the treatment of stabilisation nodes.
\begin{lemma}\label{lemma:sup-stabilisation}
There exists a polynomial $\alpha$
such that for all idempotents~$E$ of $M_\uparrow$, all~$a\in E^\sharp$
and all $\alpha(m)\leq n$, there exists
an $m$-over-computation of height at most $2|M|+3$ of value $a$ over some word over $E$
of length $n$.
\end{lemma}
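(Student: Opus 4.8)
The plan is to build the required $m$-over-computation directly, controlling its height through the $\gJ$-order. Recall that $a\in E^\sharp=\langle E\rangle^{\sharp+}\coideal=\langle E\rangle^\sharp\coideal$ means $a\geq\val(F)$ for some $\sharp$-expression $F$ over $E$. The essential difference with the $\inf$-case is worth isolating up front. In Lemma~\ref{lemma:inf-idempotent-stable} one could collapse \emph{all} the stabilisations occurring in $F$ into a single $e^\sharp$, because there the idempotent $E$ is \emph{downward} closed and hence absorbs the extra factors $f^\osharp\leq f^\Omega$. Here $E$ is a co-ideal, i.e.\ \emph{upward} closed, so this absorption fails: iterated stabilisations genuinely have to be witnessed by genuine stabilisation nodes (an over-computation cannot reach a value as small as $e^\sharp$ using only product nodes, whose value is forced $\geq e\geq e^\sharp$). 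This is exactly why the statement is phrased as the existence of a bounded-height over-computation rather than a one-line algebraic inequality, and why the length $n$ and the condition $\alpha(m)\leq n$ appear.

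First I would treat the building block. For an idempotent $e\in E$ and any target $a\geq e^\sharp$, the word $e^n$ admits a single stabilisation node of degree $n>m$ whose children all have value $e$ and whose value is $e^\sharp$; relabelling the root to $a$ (legitimate by Fact~\ref{fact:under-over-computation-monotonicity}, since $a\geq e^\sharp$) gives an over-computation of value $a$ and height $1$. More generally, a decomposition $a\geq b\cdot e^\sharp\cdot c$ with $b,c,e\in E$ is realised over the word $b\,e^{\,n-2}\,c$ using two further binary nodes, of height $3$. In both cases the length is \emph{exactly} $n$, and any slack between the minimal admissible length and the prescribed $n$ is absorbed into the degree of a stabilisation node, whose value is unaffected by extra equal children. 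This already fixes the shape of $\alpha$: it need only be large enough that every stabilisation node appearing in the final tree can be given degree $>m$ while the total number of leaves reaches $n$.

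Next I would run an induction on the $\gJ$-class to reduce the general case to the building block. The induction hypothesis would be stated for all $v\in\langle E\rangle^\sharp$, bounding the height by $2\,|\{J:J\leq_\gJ v\}|+3$. If $v$ lies in the $\gJ$-class of the idempotents of $E$ (no effective stabilisation), Lemma~\ref{lemma:sup-idempotent} together with the building block suffices; the stable-idempotent case is trivial by Lemma~\ref{lemma:stable}. Otherwise I would peel off the outermost stabilisation, writing $v\geq x\cdot g^\sharp\cdot y$ with $x,y\in E$ and $g$ an idempotent such that $g^\sharp$ lies in a \emph{strictly lower} $\gJ$-class (Lemma~\ref{lemma:unstable}). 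A child over-computation of value $g$ is obtained from the induction hypothesis (lower class), and $m{+}1$ or more such children are placed under a stabilisation node of value $g^\sharp$, then combined with $x,y$ by binary nodes. Since strict $\leq_\gJ$-chains have length at most $|M|$ and each descent contributes at most two tree levels, the total height stays within $2|M|+3$.

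\textbf{The main obstacle} is bounding the height uniformly although the witnessing expression $F$, and the length $n$, can be arbitrarily large. Nested stabilisations are tamed by the strict $\gJ$-descent, but a \emph{single} level of $F$ may contain many same-level stabilisation and idempotent blocks, whose naive assembly by binary nodes would give height growing with their number. Flattening such a block into bounded height forces an appeal to the factorisation forest theorem (Theorem~\ref{theorem:exists-computation}), which produces an $m$-computation of height at most $3|M|$ over the sequence of block values; the value of this computation must then be reconciled with the intended value through the unicity theorem (Theorem~\ref{theorem:unicity-computations}), which is what introduces the polynomial loss in the threshold. Making all the threshold inequalities line up — so that, after the degrees of the stabilisation nodes are inflated to reach length exactly $n$, every node of the assembled tree is still a valid $m$-over-computation — is the delicate bookkeeping that pins down $\alpha$ and the additive constant $3$.
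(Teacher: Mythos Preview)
Your proposal works toward the right object but takes a detour that creates the very obstacle you then struggle with. The paper's proof is considerably shorter and avoids both the $\gJ$-induction and the appeal to Theorems~\ref{theorem:exists-computation} and~\ref{theorem:unicity-computations}.

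The paper's argument runs as follows. First, since $M$ is finite, there are only finitely many pairs $(E,a)$ with $E$ an idempotent of $M_\coideal$ and $a\in E^\sharp$; it therefore suffices to treat one such pair and take the maximum of the resulting polynomials. With $(E,a)$ fixed, pick once and for all a $\sharp$-expression $F$ over $E$ of value $a$; because $M$ is finite, one may take $F$ of height bounded by a constant depending only on $|M|$ (the paper uses $2|M|$). Now simply unfold: $u_m=\unf(F,|M|!(m+1))$ is a word over $E$ of length some polynomial $\alpha_{E,a}(m)$, and the natural computation associated with the unfolding (as in Proposition~\ref{proposition:sharp-witness}) is an $m$-over-computation for $u_m$ of height at most $2|M|$ and value $a$. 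To reach an arbitrary length $n\geq\alpha_{E,a}(m)$, write $u_m=vb$ with $b\in E$, and replace the last letter $b$ by a word over $E$ of length $n+1-\alpha_{E,a}(m)$ carrying an $m$-over-computation of value $b$ and height at most $3$; this padding step is exactly your building block, justified by Lemma~\ref{lemma:sup-idempotent}. Splicing the two trees gives height at most $2|M|+3$.

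The step you miss is the first one: once the pair $(E,a)$ is frozen, the witnessing $\sharp$-expression is a \emph{fixed} finite object, so there is no ``arbitrarily large $F$'' to tame and no need to flatten long products via factorisation forests or to invoke unicity. Your $\gJ$-induction attempts to rebuild, from first principles, the fact that every element of $\langle E\rangle^\sharp$ admits a bounded-height $\sharp$-expression; but that fact is immediate from the finiteness of $M$. The polynomial $\alpha$ in the statement comes purely from the length of the unfolding, not from any comparison of computations.
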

\begin{proof} 
It is sufficient to prove the result for a single pair $E,a$, and construct for each such case
a polynomial $\alpha_{E,a}$. Then, since there are finitely many such pairs $(E,a)$, one can choose a polynomial $\alpha$
that is above all the $\alpha_{E,a}$. This $\alpha$ will witness the lemma for all choices of $E$ and $a$.

We first claim $(\star)$ that if $b\in E$ then for all $n\geq 1$, there exists a word $w_n$ over $E$ of length
$n$ such that for all $m\geq 1$ there exist an $m$-over-computation for $w$ of height at most~$3$ of value $b$.
Indeed, by Lemma~\ref{lemma:sup-idempotent},
$b\geq c\cdot f\cdot d$ where $c,f,d$ belong to $E$, and $f$ is idempotent.
So if $n=1$, we take $w=b$. If $n=2$, we take $w=c(f \cdot d)$.
Finally, for $n\geq 3$, there is a natural $m$-over-computation of value $c\cdot f\cdot d$ of height $2$ or $3$ over
the word $c\underbrace{f\dots f}_{n-2~\text{times}}d$, which is of length $n$.

Consider now a $\sharp$-expression $e$ of value $a\in E^\sharp$ for some idempotent $E$.
Without loss of generality, we can choose it of height at most $2|M|$.
and consider the word $u_m=\unf(e,|M|!(m+1))$ for all $m\geq 1$.
The length of this word is a polynomial $\alpha(m)$, and there is an $m$-over-computation of height at most $2|M|$
of value $a$ for this word.
Consider now some~$n\geq\alpha(m)$. The word $u_m$ can be written $vb$ for some $b\in E$. We can apply the above claim to 
$b$ and $n+1-\alpha(m)$, yielding the word $w_{n+1-\alpha(m)}$. Combining the two $m$-over-computations, we then naturally obtain an $m$-over-computation for the word $vw_{n+1-\alpha(m)}$ of height at most $2|M|+3$ and of value $a$, and this word has length $n$.
\end{proof}

\begin{lemma}\label{lemma:sup-key1}
There exists a polynomial $\alpha$ such that for all
words $A_1\dots A_k$ over~$\monoid_\coideal$ and all  $\alpha(n)$-over-computations $T$
for $A_1\dots A_k$ of height at most~$p$ of value~$A$
and all~$a\in A$, there exists an $n$-over computation of height at most $(2|M|+3)p$ and value~$a$
for some word~$a_1\dots a_k$ with $a_1\in A_1$,\dots,$a_k\in A_k$.
\end{lemma}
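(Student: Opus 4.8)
The plan is to prove the statement by induction on the height $p$ of the $\alpha(n)$-over-computation $T$, mirroring the proof of Lemma~\ref{lemma:inf-key1} but dualising every ingredient: since the order of $\monoid_\coideal$ reverses inclusion and we now manipulate over-computations, every ``$\leq$'' becomes a ``$\geq$'', ideals become co-ideals, and the structural Lemmas~\ref{lemma:sup-idempotent} and~\ref{lemma:sup-stabilisation} replace their inf-counterparts. I would fix $\alpha$ to be exactly the polynomial supplied by Lemma~\ref{lemma:sup-stabilisation}. The leaf and binary cases are then routine. For a leaf, $A\geq A_1$ in $\monoid_\coideal$ means $A\subseteq A_1$, so any $a\in A$ already lies in $A_1$ and the single-leaf tree $a$ works. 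For a binary node $A[T_1,T_2]$ we have $A\subseteq B_1\cdot B_2$, so a given $a\in A$ satisfies $a\geq b_1\cdot b_2$ for some $b_i\in B_i$; applying the induction hypothesis to $T_1,T_2$ with targets $b_1,b_2$ and joining the outputs under a root labelled $a$ produces the required over-computation, raising the height by one.

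The idempotent node is where the dualisation is most instructive. Here $T=F[T_1,\dots,T_k]$ with $2\leq k\leq\alpha(n)$, all $T_i$ of common idempotent value $E$, and $A\subseteq E$. Given $a\in A\subseteq E$, Lemma~\ref{lemma:sup-idempotent} yields $b,c,e\in E$ with $e$ idempotent and $a\geq b\cdot e\cdot c$. I apply the induction hypothesis to $T_1,T_2,\dots,T_{k-1},T_k$ with targets $b,e,\dots,e,c$, obtaining over-computations $t_1,\dots,t_k$, and assemble $a[\,t_1,\ (e\cdot c)[\,e[t_2,\dots,t_{k-1}],\,t_k\,]\,]$. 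The crucial point --- and what lets the large degree $k\leq\alpha(n)$ cause no blow-up --- is that the middle node $e[t_2,\dots,t_{k-1}]$, all of whose children carry value $e$, is a legal over-computation node of value $e$ regardless of its degree: with at most $n$ children it is an idempotent node ($e\geq e$), and with more than $n$ children it is a stabilisation node, legal precisely because $e\geq e^\sharp$. The degenerate small cases are absorbed directly (for instance, for $k=2$ one assigns the target $b\cdot e\in E$ to $T_1$ and $c$ to $T_2$).

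The stabilisation node is the main obstacle, and the source of both the polynomial $\alpha$ and the factor $(2|M|+3)$ in the height bound. Here $k>\alpha(n)$, all $T_i$ have value $E$, and $A\subseteq E^\sharp$. For $a\in A\subseteq E^\sharp$ I invoke Lemma~\ref{lemma:sup-stabilisation}: since $k>\alpha(n)$ the hypothesis $\alpha(n)\leq k$ holds (taking the target threshold to be $n$ and the length to be $k$), so there is an $n$-over-computation $S$ of value $a$ and height at most $2|M|+3$ over a word $f_1\dots f_k\in E^+$ of length exactly $k$. I then apply the induction hypothesis to each $T_i$ with target $f_i\in E$, obtaining $t_i$ of value $f_i$, and substitute $t_i$ into the $i$th leaf of $S$, giving an $n$-over-computation of value $a$ over a word $a_1\dots a_K$ with each letter in the appropriate block $A_j$. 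For the height accounting, in the leaf, binary and idempotent cases the internal scaffolding adds at most three levels above subtrees of height at most $(2|M|+3)(p-1)$, and $3\leq 2|M|+3$; in the stabilisation case the tree $S$ of height at most $2|M|+3$ sits above subtrees of height at most $(2|M|+3)(p-1)$. In every case the total is at most $(2|M|+3)p$. The whole argument therefore hinges on the stabilisation case, where realising the target $a\in E^\sharp$ as an explicit bounded-height over-computation forces the large degree that Lemma~\ref{lemma:sup-stabilisation} was designed to supply.
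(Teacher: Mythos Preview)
Your proof is correct and follows essentially the same approach as the paper's: induction on $p$, with $\alpha$ taken from Lemma~\ref{lemma:sup-stabilisation}, the same constructions in each of the four cases, and in particular the same use of Lemma~\ref{lemma:sup-idempotent} for idempotent nodes and Lemma~\ref{lemma:sup-stabilisation} for stabilisation nodes. Your write-up is in fact slightly more careful than the paper's in two places: you explain explicitly why the intermediate node $e[t_2,\dots,t_{k-1}]$ remains a valid $n$-over-computation node even when its degree exceeds $n$ (using $e\geq e^\sharp$), and you spell out the height accounting that justifies the bound $(2|M|+3)p$.
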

\begin{proof}
The proof is by induction on~$p$. We take the polynomial $\alpha$ of Lemma~\ref{lemma:sup-stabilisation}.
\medskip

\noindent
\emph{Leaf case}, {\it i.e.}, $T=A_1$. Let~$a\in A\subseteq A_1$, then $a$ is an $n$-computation of value~$a\in A$.
\medskip

\noindent
\emph{Binary node}, {\it i.e.},  $T=A[T_1,T_2]$. Let $B_1$ and $B_2$
be the respective values of $T_1$ and $T_2$.
Let~$a\in A\subseteq B_1\cdot B_2$. By definition of the product, there exists~$b_1\in B_1$
and~$b_2\in B_2$ such that~$a\geq b_1\cdot b_2$.
By induction hypothesis, there exist $n$-computations~$t_1$ and~$t_2$
of respective values~$b_1$ and~$b_2$.
The $n$-over-computation $a[t_1,t_2]$ satisfies the induction hypothesis.
\medskip

\noindent
\emph{Idempotent node.}
$T=F[T_1,\dots,T_k]$ for some~$k\leq\alpha(n)$ where $F\subseteq E$
	for an idempotent $E$ such that the value of $T_i$ is $E$ for all~$i$.
	Let~$a\in F\subseteq E$.
	We have $a\geq b\cdot e\cdot c$ for some $b,c,e\in E$ (Lemma~\ref{lemma:sup-idempotent}).
	We then apply the induction hypothesis for~$b,e,\dots,e$ and~$c$
	on the computations~$T_1,\dots,T_{k-1}$ and $T_k$ respectively,
	yielding the $n$-under-computations $t_1,\dots,t_{k-1}$ and $t_k$ respectively.
	We conclude by constructing the 
	$n$-under-computation~$a[t_1,(e\cdot c)[e[t_2,\dots,t_{k-1}],t_k]]$.
\medskip

\noindent
\emph{Stabilisation node.}
$T=F[T_1,\dots,T_k]$ for some~$k>\alpha(n)$ and $F\subseteq E^\sharp$ for some idempotent $E$
	such that the value of $T_i$ is $E$ for all~$i$.
	Let~$a\in F\subseteq E^\sharp$.
	According to Lemma~\ref{lemma:sup-stabilisation}, there exists a word $a_1\dots a_k$ over $E$
	and an $n$-over-computation~$t$ for $a_1\dots a_k$ of value $a$ and height at most $2|M|+3$.
	We then apply the induction hypothesis for each of $a_1,\dots, a_k$ with the computations~$T_1,\dots,T_k$ respectively.
	This yields $n$-over-computations $t_1,\dots,t_k$ respectively.
	We conclude by constructing the 
	$n$-over-computation obtained by substituting in $t$ the $i$th leaf with $t_i$.
\end{proof}

\begin{lemma}\label{lemma:sup-key2}
Let~$A_1\dots A_k$ be a word over~$\monoid_\coideal$ and $T$ be an $n$-under-computation~$T$
for~$A_1\dots A_k$ of height~$p$ and value~$A$.
For all words~$u=a_1\dots a_k$ with  $a_1\in A_1,\dots,a_k\in A_k$,
there exists an~$n$-computation over~$a_1\dots a_k$ of value~$a\in A$ and of height at most~$3|M|p$.
\end{lemma}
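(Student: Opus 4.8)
The plan is to prove the statement by induction on the height $p$ of the under-computation $T$, following closely the structure of the proof of Lemma~\ref{lemma:inf-key2} but reading every occurrence of the order of $\monoid_\coideal$ as reverse inclusion. Recall that in $\monoid_\coideal$ the relation $A\leq B$ means $A\supseteq B$, so an under-computation provides, at each node, a co-ideal that \emph{contains} the combination of its children: at a binary node with children values $B_1,B_2$ the root value $A$ satisfies $A\supseteq B_1\cdot B_2$, at an idempotent node $A\supseteq E$, and at a stabilisation node $A\supseteq E^\sharp$, where $E$ is the common idempotent value of the children.

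The leaf and binary cases are routine. For a leaf over $A_1$ one has $A\supseteq A_1$, so the chosen $a_1\in A_1$ already lies in $A$ and the single leaf $a_1$ is the desired $n$-computation, of height $0$. For a binary node $A[T_1,T_2]$ I would apply the induction hypothesis to $T_1$ and $T_2$, splitting $a_1\dots a_k$ accordingly, obtaining $n$-computations $t_1,t_2$ of values $b_1\in B_1$, $b_2\in B_2$; then $b_1\cdot b_2\in B_1\cdot B_2\subseteq A$ and $(b_1\cdot b_2)[t_1,t_2]$ is the required computation, its height being at most $1+3|M|(p-1)\leq 3|M|p$.

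The heart of the argument is the idempotent/stabilisation node. Here the children $T_1,\dots,T_k$ all have the same value $E$, an idempotent of $\monoid_\coideal$; applying the induction hypothesis to each yields $n$-computations $t_1,\dots,t_k$ of values $b_1,\dots,b_k\in E$, each of height at most $3|M|(p-1)$. I would then invoke Theorem~\ref{theorem:exists-computation} in the original monoid $M$ to build an $n$-computation $t$ over the word $b_1\dots b_k\in M^+$ of height at most $3|M|$, and take as final witness the tree $t$ with each $i$-th leaf replaced by $t_i$; its height is at most $3|M|+3|M|(p-1)=3|M|p$. It remains to locate the value $a$ of $t$ inside the co-ideal $A$. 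If the root of $T$ is an idempotent node then $k\leq n$, so $t$, having at most $n$ leaves, contains no stabilisation node; its value is then a product of elements of $E$, and since $E$ is closed under product (being idempotent in $\monoid_\coideal$) we get $a\in E\subseteq A$. If the root is a stabilisation node then $k>n$ and $a$ lies in $\langle E\rangle^\sharp$, because $t$ is a $\sharp$-expression value over $E\subseteq M$, whence $a\in\langle E\rangle^\sharp\subseteq\langle E\rangle^\sharp\coideal=E^\sharp\subseteq A$.

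The step I expect to be the main conceptual obstacle is precisely this last inclusion, and it is also the point where the co-ideal construction behaves better than the ideal one. In Lemma~\ref{lemma:inf-key2} the stabilisation node forced $t$ to actually \emph{contain} a stabilisation node, which is why that proof needed $k>n^{3|M|}$ and the over-computation blow-up $\alpha(n)=n^{3|M|}$. Here no such forcing is needed: thanks to the identity $\langle E\rangle^{\sharp+}\coideal=\langle E\rangle^\sharp\coideal$ recorded when $\monoid_\coideal$ was defined, mere membership $a\in\langle E\rangle^\sharp$ already places $a$ in $E^\sharp$, whether or not $t$ used stabilisation. This is exactly what lets the statement hold with the same threshold $n$ on both sides, with no polynomial correction. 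The only remaining verifications are that $E$ is closed under product and that the heights add up as claimed, both of which are immediate.
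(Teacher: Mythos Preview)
Your proposal is correct and follows essentially the same approach as the paper's proof: induction on $p$, with the leaf and binary cases handled identically, and the idempotent/stabilisation cases treated by applying the induction hypothesis to the children and then invoking Theorem~\ref{theorem:exists-computation} on the resulting word $b_1\dots b_k\in E^+$. The only cosmetic difference is in the stabilisation case: the paper phrases the conclusion as ``$E^\sharp$ is a sub-stabilisation monoid of $\monoid$ containing $b_1,\dots,b_k$, hence $a\in E^\sharp$'', whereas you argue directly that $a\in\langle E\rangle^\sharp\subseteq\langle E\rangle^\sharp\coideal=E^\sharp$ --- these are the same observation, and your remark that the identity $\langle E\rangle^{\sharp+}\coideal=\langle E\rangle^\sharp\coideal$ is precisely what removes the need for a polynomial correction (in contrast to Lemma~\ref{lemma:inf-key2}) is exactly right.
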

\begin{proof}
The proof is by induction on $p$.
\medskip

\noindent
\emph{Leaf case,} {\it i.e.}, $T=A$ and~$u=a_1\in A_1\subseteq A$. Hence~$a_1$ is a computation
satisfying the induction hypothesis.
\medskip

\noindent
\emph{Binary node,} {\it i.e.}, $T=A[T_1,T_2]$ where $T_1$ and $T_2$ have respective
	values $B_1$ and $B_2$ such that $B_1\cdot B_2\subseteq A$. One applies the induction
	hypothesis on $T_1$ and $T_2$, and get computations $t_1$
	and $t_2$, of respective values $b_1\in B_1$ and $b_2\in B_2$.
	The induction hypothesis is then fulfilled with the $n$-computation
	 $(b_1\cdot b_2)[t_1,t_2]$ of value~$b_1\cdot b_2\in A$.
\medskip

\noindent
\emph{Idempotent node,} {\it i.e.},
	$T=F[T_1,\dots,T_k]$ for~$k\leq n$ where $T_1,\dots,T_k$
	share the same idempotent value $E\subseteq F$.
	Let~$t_1,\dots,t_k$ be the $n$-computations of respective values $b_1,\dots,b_k$
	obtained by applying the induction hypothesis
	on $T_1,\dots,T_k$ respectively. Furthermore, according to Theorem~\ref{theorem:exists-computation},
	there exists an $n$-computation $t$ for the word $b_1\dots b_k$ of height at most $3|M|$.
	Let $a$ be the value of $t$. Since $E$ is an idempotent, it is closed under product
	and contains $b_1,\dots,b_k$. Since furthermore $t$ does not contain any node of stabilisation,
	we obtain that $a\in E$ (by induction on the height of $t$).
	We conclude using the $n$-computation $t\{t_1,\dots,t_k\}$ (obtained from $t$ by substituting
	the $i$th leaf of $t$ for $t_i$) which satisfies the induction hypothesis.
\medskip

\noindent
\emph{Stabilisation node,} {\it i.e.},
	$T=F[T_1,\dots,T_k]$ for~$k> n$ where $T_1,\dots,T_k$
	all share the same idempotent value $E\subseteq F$.
	Let~$t_1,\dots,t_k$ be the $n$-computations of respective values $b_1,\dots,b_k$ obtained by applying
	the induction hypothesis on $T_1,\dots,T_k$ respectively.
	Furthermore, according to Theorem~\ref{theorem:exists-computation},
	there exists an $n$-computation $t$ for $b_1\dots b_k$ of height at most $3|M|$ and value $a$.
	Since $E^\sharp$ is a sub-stabilisation monoid of $\monoid$ which contains $b_1,\dots,b_k$,
	$a$ also belongs to $E^\sharp$.
	Thus the $n$-computation $t\{t_1,\dots,t_k\}$ satisfies the induction hypothesis.
\end{proof}

\begin{lemma}\label{lemma:coideal-is-stabilisation-monoids}
$\monoid_\coideal$ is a stabilisation monoid.
\end{lemma}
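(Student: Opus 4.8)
The plan is to mirror the proof of Lemma~\ref{lemma:ideal-is-stabilisation-monoids} for the $\inf$-projection, relying once more on Proposition~\ref{proposition:unicity-to-axioms} so as to avoid verifying the stabilisation-monoid axioms by hand. Thus I would establish only the single hypothesis of that proposition: that there is a polynomial $\alpha$ such that whenever an $n$-under-computation of height at most $3$ over a word $A_1\dots A_k$ in $\monoid_\coideal$ has value $A$, and an $\alpha(n)$-over-computation of height at most $3$ over the same word has value $B$, then $A\leq B$. Recalling that the order on $\monoid_\coideal$ is reverse inclusion, this amounts to showing $B\subseteq A$, so I would fix an arbitrary $b\in B$ and prove $b\in A$.

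The two directions of transfer between $\monoid_\coideal$ and $\monoid$ are provided by Lemmas~\ref{lemma:sup-key1} and~\ref{lemma:sup-key2}, the $\sup$-counterparts of the ideal lemmas. First I would apply Lemma~\ref{lemma:sup-key1} to the over-computation of value $B$: since $b\in B$, it yields a word $a_1\dots a_k$ with $a_i\in A_i$ together with an over-computation over $\monoid$ for this word of value exactly $b$. Then I would feed this very word $a_1\dots a_k$ into Lemma~\ref{lemma:sup-key2} applied to the under-computation of value $A$, obtaining an $n$-computation over $\monoid$ for $a_1\dots a_k$ whose value $a$ lies in $A$; in particular this is an $n$-under-computation over $\monoid$.

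At this point both a short under-computation of value $a$ and an over-computation of value $b$ over the same word $a_1\dots a_k$ are available in $\monoid$, so Theorem~\ref{theorem:unicity-computations}, applied inside $\monoid$ at the bounded height produced by the two lemmas, gives $a\leq b$. Since $a\in A$ and $A$ is a co-ideal, hence upward closed, from $a\leq b$ we conclude $b\in A$, which is exactly what was needed. It is precisely here that the argument diverges from the ideal case: we exploit upward-closure of $A$ rather than downward-closure of $B$.

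The only delicate point, and the one I expect to cost some bookkeeping, is the coordination of the polynomial bounds. Lemma~\ref{lemma:sup-key1} converts an $\alpha_1(m)$-over-computation over $\monoid_\coideal$ into an $m$-over-computation over $\monoid$, while Theorem~\ref{theorem:unicity-computations}, applied to $\monoid$, requires the over-computation there to sit at level $\alpha_2(n)$ relative to the under-computation's level $n$. Choosing $\alpha$ to be essentially $\alpha_1\circ\alpha_2$ makes the levels match. The heights produced by Lemmas~\ref{lemma:sup-key1} and~\ref{lemma:sup-key2} are bounded by $3(2|M|+3)$ and $9|M|$ respectively (taking $p=3$), so the unicity theorem is invoked at a fixed height depending only on $|M|$ and causes no trouble. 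Once this hypothesis of Proposition~\ref{proposition:unicity-to-axioms} is secured, $\monoid_\coideal$ is a stabilisation semigroup, and promoting it to a stabilisation monoid, by checking the behaviour of the neutral element, is routine.
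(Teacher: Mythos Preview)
Your proposal is correct and follows essentially the same route as the paper: invoke Proposition~\ref{proposition:unicity-to-axioms}, then for $b\in B$ use Lemma~\ref{lemma:sup-key1} on the over-computation to produce an over-computation in $\monoid$ of value $b$, use Lemma~\ref{lemma:sup-key2} on the under-computation to produce a computation in $\monoid$ of value $a\in A$ over the same word, apply Theorem~\ref{theorem:unicity-computations} to get $a\leq b$, and conclude $b\in A$ by upward closure. Your choice $\alpha=\alpha_1\circ\alpha_2$ for the polynomial is in fact cleaner than the paper's stated bound.
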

\begin{proof}
Consider an $n$-under-computation $T$ of value $A$ in $\monoid_\coideal$
for some word $A_1\dots A_k$ of height at most $p$,
and some $\alpha(n)$-over-computation $T'$ for the
same word of value $B$ (with $\alpha(n)=\alpha'(n)+2$ where $\alpha'$
	is obtained from Theorem~\ref{theorem:unicity-computations} applied to $\monoid$ for height at most $3p$).
We aim at $A\leq B$, which means $A\supseteq B$. Indeed, this implies that one can use Proposition~\ref{proposition:unicity-to-axioms},
and get that $\monoid_\coideal$ is a stabilisation monoid.

Let $b\in B$, we aim at $b\in A$, thus proving $B\subseteq A$, {\it i.e.}, $A\leq B$.
By Lemma~\ref{lemma:sup-key1}, there exists
an $\alpha'(n)$-over-computation for some word $a_1\dots a_k$ with $a_1\in A_1,\dots,a_k\in A_k$,
of height at most $3p$ and value $b$.
Applying Lemma~\ref{lemma:sup-key2} on $T$, there exists an $n$-computation
for the same word $a_1\dots a_k$ of value $a\in A$. Then applying Theorem~\ref{theorem:unicity-computations},
we get $a\leq b$. Hence $b\in A$ since $A$ is upward closed.
\end{proof}

We are ready to complete the proof of closure under sup-projection.
\begin{proof}{Proof of Theorem~\ref{theorem:sup-projection}}
Assume a function $f$ over the alphabet $\alphabet$
is recognised by~$\monoid,h,I$
and let~$z$ be some mapping from~$\alphabet$ to~$\alphabetB$.
Construct $H$ from $\alphabetB$ to $M_\coideal$ that maps
$b\in\alphabetB$ to $h(z^{-1}(b))\coideal$,
and let~$K\subseteq M_\coideal$ be
$$
K=\{J\in M_\coideal~:~J\cap I\neq \emptyset\}\ .
$$
Let us prove that $\monoid_\coideal,H,K$ recognises the cost function of $f_{\sup_z}$.

Consider now a word $u=b_1\dots b_k$ over $\alphabetB$.
Assume 
$$\sem{\monoid_\coideal,H,K}^{++}_{3|M_\coideal|}(u)\geq \alpha(n)+1\ ,$$
where $\alpha$ is the polynomial from Lemma~\ref{lemma:sup-key1}.
This means that there is an $\alpha(n)$-over-computation for~$H(b_1)\dots H(b_k)$ of value~$J\in K$
of height at most $3|M_\coideal|$.
By definition of $K$, there exists $a\in J\cap I$. Hence, by Lemma~\ref{lemma:sup-key1} ,
there exists an $n$-over-computation of height at most $d=(2|M|+3)2|M_\coideal|$ of value $a$
for some $a_1\dots a_k$ with $a_1\in H(b_1),\dots,a_k\in H(b_k)$.
By definition of $H$, this means that there exist $c_i\in\alphabet$ such that $a_i=h(c_i)$ and $z(c_i)=b_i$ for all $i=1\dots k$.
The obtained word $v=c_1\dots c_k$ is such that $\tilde z(v)=u$, and $\sem{\monoid,h,I}^{++}_{d}(v)\geq n$.
This witnesses that
$$
(\sem{\monoid,h,I}^{++}_{d})_{\sup,z}(u)\geq n\ .
$$

For the converse direction, consider a word $u=b_1\dots b_k$ over $\alphabetB$ such that 
$$\sem{\monoid_\coideal,H,K}_{3|M_\coideal|}^{--}(u)\leq n\ .$$
This means that there exists an $n$-under-computation for $A_1\dots A_k=\tilde H(b_1\dots b_k)$ of value $J\not\in K$,
and height at most $3|M_\coideal|$. 
Let $v=c_1\dots c_k$ be some word over $\alphabet$ such that $\tilde z(v)=u$.
By definition of $H$, this means that $a_i=h(c_i)\in H(b_i)$ for all $i=1\dots k$.
Thus, by Lemma~\ref{lemma:sup-key2}, there exists an $n$-computation for $a_1\dots a_k$
of value in $a\in J$ of height at most $d=9|M||M_\coideal|$.
Since $J\not\in K$, this means that $J\cap I=\emptyset$. As a consequence $a\in M\setminus I$.
It follows that $\sem{\monoid,h,I}^{-}_d(v)\leq n$. Since this holds for all $v$ such that $\tilde z(v)=u$, we get
$$(\sem{\monoid,h,I}^{-}_d)_{\sup,z}(u)\leq n\ .$$
\end{proof}

\section{On the role of automata}
\label{section:conclusion}

In this paper we have developed the algebraic and logical aspects of regular cost functions over finite words. 
More precisely, we have introduced a notion of logic, cost monadic logic, that is suitable for describing functions,
and an algebraic notion of stabilisation monoid that is suitable for recognising functions up to an equivalence relation $\approx$.
We have shown that the logically defined functions could be translated into equivalent ones recognisable by
stabilisation monoids. Decision procedures for several problems involving the 
existence of upper bounds for functions are derived from this translation.

There could have been several other facets for approaching this theory, a very natural one being through automata.
The automata theoretic presentation happens to be closer to the historical developments.
Indeed, the study of distance automata \cite{Hashiguchi82a}, and then
of nested distance desert automata \cite{Kirsten05} was the original motivation.
Following ideas from \cite{LICS06:bojanczyk-colcombet}, it is convenient to consider two dual forms of automata
using counters, called $\mathtt B$ and $\mathtt S$-automata. The first model computes a minimum over all runs
of the maximal values taken by counters, and the second form computes a maximum over all runs of the
minimum value taken by some counters at some identified places in the run.
As for regular languages, these automata happen to have the same expressiveness for describing cost functions
as stabilisation monoids.

Technically, all the necessary material for proving the equivalence between automata and regular cost functions
is already present in this paper. Indeed, in one direction, as it is classical for regular languages, automata can be seen
as a special fragment of cost monadic logic. Thus, automata can only define regular cost functions. 
For the converse implication, it is easy to construct a $\mathtt B$-automaton guessing under-computations,
or an $\mathtt S$-automaton guessing over-computations, and use it for describing a recognisable 
cost function. We have seen all the necessary material for establishing the correction of these approaches.

Despite this strong connection, there are several reasons for not presenting automata in this document.

A first reason is to emphasize the difference with the theory of regular languages. In the case of languages,
the simplest way to show the decidability of monadic logic over words is to use automata. This is not the case anymore here.
Proving the important results concerning $\mathtt B$ and $\mathtt S$-automata (the central one being the
equivalence between the two models, called the duality theorem), is more complicated than developing the
theory of stabilisation monoids. In fact, the simplest way to prove the duality theorem
is to translate the (say) $\mathtt B$-automaton into a stabilisation monoid, and only then into an $\mathtt S$-automaton
(though, some other techniques are possible). One explanation for this difference between the theory of
regular languages and regular cost functions is that $\mathtt B$-automata and $\mathtt S$-automata cannot be determinised.
For these reasons stabilisation monoids form a much simpler model.

A second reason is that we could concentrate even more deeply on the model of stabilisation monoid. In particular,
we did not only develop stabilisation monoids for obtaining decision procedures (as all works using stabilisations were doing so far),
but we proved that a suitably axiomatised notion of stabilisation monoid can be used to recognise a cost function
independently of the presence of any cost monadic formula, or any automaton. This is reminiscent of the proof 
in the theory or regular languages of infinite words that
finite Wilke algebras can be translated in a unique way into $\omega$-semigroups.
If we were only interested in decidability
questions, the paper could be simplified, and the important Theorem~\ref{theorem:unicity-computations} omitted.

A third reason is that $\mathtt B$-automata and $\mathtt S$-automata, which may seem a bit useless under
the light of the previous explanations, are in fact so important that they require a deep study on their own.
The importance of automata does not stem from the question of decidability of cost monadic logic over words, but over trees (even finite)\cite{LICS10:colcombet-loeding}.
Indeed, the situation is reminiscent from the case of regular languages of infinite trees. In this case,
proving the decidability of monadic logic over infinite trees can only be achieved using infinite tree automata,
and the proof makes also use of the ability to determinise automata over infinite words (see, e.g, the survey \cite{Thomas97}).
The situation is similar here, and what is important in the study of automata is to disclose a suitable variant of the
notion of determinism, called history-determinism \cite{ICALP09:colcombet-cost-function}, and to prove that $\mathtt B$
and $\mathtt S$-automata can be made history-deterministic. These considerations are completely diverging from the
content of this paper.

\subsection*{Acknowkedgement}

This paper has gained a lot from the 
{discussions}	 with and reviewing from
Achim Blumensath, Michael Vanden Boom, Denis Kuperberg and Christof L\"oding.
I am also very grateful to the two anonymous reviewers for their constructive remarks and
their very thorough reading of the document. I also thank Wolfgang Thomas for his patient and careful editing work.

\bibliographystyle{plain}
\bibliography{../../../../../main-biblio}
\end{document}